\definecolor{mycolor}{rgb}{1, 0, 0}
\definecolor{mycolorG}{rgb}{0,0.25,0}
\definecolor{mycolorP}{rgb}{0.5,0,1}
\theoremstyle{plain}
\newtheorem{thm}{Theorem}[section]
\newtheorem{lem}[thm]{Lemma}
\newtheorem{prop}[thm]{Proposition}
\newtheorem{defn}[thm]{Definition}
\newtheorem{cor}[thm]{Corollary}
\theoremstyle{definition}
\newtheorem*{rem}{Remark}
\begin{document}
\title[Superbosonization]
{Superbosonization of\\ invariant random matrix ensembles}
\author{P.\ Littelmann}
\address{Mathematisches Institut, Universit\"at zu K\"oln, Weyertal
86-90, D-50931 K\"oln, Germany; e-mail: littelma@math.uni-koeln.de}
\author{H.-J.\ Sommers}
\address{Fachbereich Physik, Universit\"at Duisburg-Essen, Lotharstr.\
1, D-47048 Duisburg, Germany; e-mail: h.j.sommers@uni-due.de}
\author{M.R.\ Zirnbauer}
\address{Institut f\"ur Theoretische Physik, Universit\"at zu K\"oln,
Z\"ulpicher Str.\ 77, D-50937 K\"oln, Germany; e-mail:
zirn@thp.uni-koeln.de}
\date{July 18, 2007}
\begin{abstract}
  'Superbosonization' is a new variant of the method of commuting and
  anti-commuting variables as used in studying random matrix models of
  disordered and chaotic quantum systems. We here give a concise
  mathematical exposition of the key formulas of superbosonization.
  Conceived by analogy with the bosonization technique for Dirac
  fermions, the new method differs from the traditional one in that
  the superbosonization field is dual to the usual
  Hubbard-Stratonovich field. The present paper addresses invariant
  random matrix ensembles with symmetry group $\mathrm{U}_n \,$,
  $\mathrm{O}_n \,$, or $\mathrm {USp}_n\,$, giving precise
  definitions and conditions of validity in each case. The method is
  illustrated at the example of Wegner's $n$-orbital model.
  Superbosonization promises to become a powerful tool for
  investigating the universality of spectral correlation functions for
  a broad class of random matrix ensembles of non-Gaussian and/or
  non-invariant type.
\end{abstract}
\maketitle

\section{Introduction and overview}
\label{sect:intro}

The past 25 years have seen substantial progress in the physical
understanding of insulating and metallic behavior in disordered
quantum Hamiltonian systems of the random Schr\"odinger and random
band matrix type. A major role in this development, bearing
especially on the metallic regime and the metal-insulator transition,
has been played by the method of commuting and anti-commuting
variables, or supersymmetry method for short. Assuming a Gaussian
distribution for the disorder, this method proceeds by making a
Hubbard-Stratonovich transformation followed by a saddle-point
approximation (or elimination of the massive modes) to arrive at an
effective field theory of the non-linear sigma model type. This
effective description has yielded many new results including, e.g.,
the level statistics in small metallic grains, localization in thick
disordered wires, and a scaling theory of critical systems in higher
dimension \cite{efetov-book}.

While the method has been widely used and successfully so, there
exist some limitations and drawbacks. For one thing, the method works
well only for systems with normal-distributed disorder; consequently,
addressing the universality question for non-Gaussian distributions
(like the invariant ensembles studied via the orthogonal polynomial
method) has so far been beyond reach. For another, the symmetries of
the effective theory are not easy to keep manifest when using the
mathematically well-founded approach of Sch\"afer and Wegner
\cite{SW}. A problem of lesser practical relevance is that the
covariance matrix of the random variables, which is to be inverted by
the Hubbard-Stratonovich transformation, does not always have an
inverse \cite{SW}.

In this paper we introduce a new variant of the supersymmetry
approach which is complementary to the traditional one. Inspired by
the method of bosonization of Dirac fermions, and following an
appellation by Efetov and coworkers, we refer to the new method as
`superbosonization'. As we will see, in order for superbosonization
to be useful the distribution of the random Hamiltonians must be
invariant under some symmetry group, and this group cannot be `too
small' in a certain sense. We expect the method to be at its best for
random matrix ensembles with a local gauge symmetry such as Wegner's
$n$-orbital model \cite{N-orbital} with gauge group $K =
\mathrm{U}_n\,$, $\mathrm{O}_n\,$, or $\mathrm{USp}_n\,$.

Superbosonization differs in several ways from the traditional method
of Sch\"afer and Wegner \cite{SW} based on the Hubbard-Stratonovich
(HS) transformation: (i) the superbosonization field has the physical
dimension of 1/energy (whereas the HS field has the dimension of
energy). (ii) For a fixed symmetry group $K$, the target space of the
superbosonization field is always the \emph{same} product of compact
and non-compact symmetric spaces regardless of where the energy
parameters are, whereas the HS field of the Sch\"afer-Wegner method
changes as the energy parameters move across the real axis. (iii) The
method is not restricted to Gaussian disorder distributions. (iv) The
symmetries of the effective theory are manifest at all stages of the
calculation.

A brief characterization of what is meant by the physics word of
`superbosonization' is as follows. The point of departure of the
supersymmetry method (in its old variant as well as the new one) is
the Fourier transform of the probability measure of the given
ensemble of disordered Hamiltonians. This Fourier transform is
evaluated on a supermatrix built from commuting and anti-commuting
variables and thus becomes a superfunction; more precisely a
function, say $f\,$, which is defined on a complex vector space $V_0$
and takes values in the exterior algebra $\wedge (V_1^\ast)$ of
another complex vector space $V_1\,$. If the probability measure is
invariant under a group $K$, so that the function $f$ is equivariant
with respect to $K$ acting on $V_0$ and $V_1\,$, then a standard
result from invariant theory tells us that $f$ can be viewed as the
pullback of a superfunction $F$ defined on the quotient of $V = V_0
\oplus V_1$ by the group $K\,$. The heart of the superbosonization
method is a formula which reduces the integral of $f$ to an integral
of the lifted function $F\,$. Depending on how the dimension of $V$
compares with the rank of $K$, such a reduction step may or may not
be useful for further analysis of the integral. Roughly speaking,
superbosonization gets better with increasing value of $\mathrm{rank}
(K)$. From a mathematical perspective, superbosonization certainly
promises to become a powerful tool for the investigation and proof of
the universality of spectral correlations for a whole class of random
matrix ensembles that are not amenable to treatment by existing
techniques.

Let us now outline the plan of the paper. In Sects.\ \ref{sect:1.1}
and \ref{sect:1.2} we give an informal introduction to our results,
which should be accessible to physicists as well as mathematicians. A
concise summary of the motivation (driven by random matrix
applications) for the mathematical setting of this paper is given in
Sect.\ \ref{sect:susy-method}.

In Sect.\ \ref{sect:BB-sector} we present a detailed treatment of the
special situation of $V_1 \equiv 0$ (the so-called Boson-Boson
sector), where anti-commuting variables are absent. This case was
treated in an inspiring paper by Fyodorov \cite{fyodorov}, and our
final formula -- the transfer of the integral of $f$ to an integral
of $F$ -- coincides with his. The details of the derivation, however,
are different. While Fyodorov employs something he calls the
Ingham-Siegel integral, our approach proceeds directly by push
forward to the quotient $V_0 // K^\mathbb{C}$. Another difference is
that our treatment covers each of the three classical symmetry groups
$K = \mathrm{U}_n\,$, $\mathrm{O}_n\,$, and $\mathrm{USp}_n \,$, not
just the first two.

Sect.\ \ref{sect:FF-sector} handles the complementary situation $V_0
= 0$ (the so-called Fermion-Fermion sector). In this case the
starting point is the Berezin integral of $f \in \wedge(V_1^\ast)$,
i.e., one differentiates once with respect to each of the
anti-commuting variables, or projects on the top-degree component of
$\wedge(V_1^\ast)$. From a theoretical physicist's perspective, this
case is perhaps the most striking one, as it calls for the mysterious
step of transforming the Berezin integral of $f$ to an integral of
the lifted function $F$ over a compact symmetric space. The
conceptual difficulty here is that many choices of $F$ exist, and any
serious theoretical discussion of the matter has to be augmented by a
proof that the final answer does not depend on the specific choice
which is made.

Finally, Sect.\ \ref{sect:full-susy} handles the full situation where
$V_0 \not= 0$ and $V_1 \not= 0$ (i.e., both types of variable,
commuting and anti-commuting, are present). Heuristic ideas as to how
one might tackle this situation are originally due to Lehmann, Saher,
Sokolov and Sommers \cite{LSSS} and to Hackenbroich and
Weidenm\"uller \cite{HW}. These ideas have recently been pursued by
Efetov and his group \cite{efetov-bos} and by Guhr \cite{guhr}, but
their papers are short of mathematical detail -- in particular, the
domain of integration after the superbosonization step is left
unspecified -- and address only the case of unitary symmetry. In
Sect.\ \ref{sect:full-susy} we supply the details missing from these
earlier works and prove the superbosonization formula for the cases
of $K = \mathrm{U}_n \,$, $\mathrm{O}_n\,$, and $\mathrm{USp}_n\,$,
giving sufficient conditions of validity in each case. While it
should certainly be possible to construct a proof based solely on
supersymmetry and invariant-theoretic notions including Howe dual
pairs, Lie superalgebra symmetries and the existence of an invariant
Berezin measure, our approach here is different and more
constructive: we use a chain of variable transformations reducing the
general case to the cases dealt with in Sects.\ \ref{sect:BB-sector}
and \ref{sect:FF-sector}.

\textit{Acknowledgment}. This paper is the product of a
mathematics-physics research collaboration funded by the Deutsche
Forschungsgemeinschaft via SFB/TR 12.

\subsection{Basic setting}
\label{sect:1.1}

Motivated by the method of commuting and anti-commuting variables as
reviewed in Sect.\ \ref{sect:susy-method}, let there be a set of
complex variables $Z_c^i$ with complex conjugates $\tilde{Z}_i^c :=
\overline{Z_c^i}\,$, where indices are in the range $i = 1, \ldots,
n$ and $c = 1, \ldots, p\,$. Let there also be two sets of
anti-commuting variables $\zeta_e^i$ and $\tilde{\zeta}_i^e$ with
index range $i = 1, \ldots, n$ and $e = 1, \ldots, q\,$. (Borrowing
the language from the physics context where the method is to be
applied, one calls $n$ the number of orbitals and $p$ and $q$ the
number of bosonic resp.\ fermionic replicas.) It is convenient and
useful to arrange the variables $Z_c^i\,$, $\zeta_e^i$ in the form of
rectangular matrices $Z\,$, $\zeta$ with $n$ rows and $p$ resp.\ $q$
columns. A similar arrangement as rectangular matrices is made for
the variables $\tilde{Z}_i^c$, $\tilde{\zeta}_i^e$, but now with $p$
resp.\ $q$ rows and $n$ columns.

We are going to consider integrals over these variables in the sense
of Berezin. Let
\begin{equation}\label{eq:1.1}
    D_{Z,\bar{Z};\zeta,\tilde{\zeta}} := 2^{pn} \prod_{c=1}^p\,
    \prod_{i=1}^n \vert d\mathfrak{Re}(Z_c^i) \, d\mathfrak{Im}
    (Z_c^i) \vert \otimes (2\pi)^{-qn} \prod_{e=1}^q \prod_{j=1}^n
    \frac{\partial^2} {\partial\zeta_e^j \, \partial\tilde{\zeta}_j^e}
    \;,
\end{equation}
where the derivatives are left derivatives, i.e., we use the sign
convention $\frac{\partial^2} {\partial\zeta \partial\tilde{\zeta}}
\tilde{\zeta} \zeta = 1$, and the product of derivatives projects on
the component of maximum degree in the anti-commuting variables. The
other factor is Lebesgue measure for the commuting complex variables
$Z$. We here denote such integrals by
\begin{equation}\label{eq:fund-int}
    \int f \equiv \int_{\mathrm{Mat}_{n,p}(\mathbb{C})} D_{Z,\bar{Z}
    ; \zeta,\tilde{\zeta}} \, f(Z,\tilde{Z};\zeta,\tilde{\zeta})
\end{equation}
for short. The domain of integration will be the linear space of all
complex rectangular $n \times p$ matrices $Z$, with $\tilde{Z} =
Z^\dagger \in \mathrm{Mat}_{p,\,n}(\mathbb{C})$ being the Hermitian
adjoint of $Z\,$. We assume that our integrands $f$ decrease at
infinity so fast that the integral $\int f$ exists.

In the present paper we will be discussing such integrals for the
particular case where the integrand $f$ has a Lie group symmetry.
More precisely, we assume that a Lie group $K$ is acting on
$\mathbb{C}^n$ and this group is either the unitary group
$\mathrm{U}_n\,$, or the real orthogonal group $\mathrm{O}_n\,$, or
the unitary symplectic group $\mathrm{USp}_n\,$. The fundamental
$K$-action on $\mathbb{C}^n$ gives rise to a natural action by
multiplication on the left resp.\ right of our rectangular matrices:
$Z \mapsto g Z\,$, $\zeta \mapsto g \zeta$ and $\tilde{Z} \mapsto
\tilde{Z} g^{-1}$, $\tilde{\zeta} \mapsto \tilde{\zeta} g^{-1}$
(where $g \in K$). The functions $f$ to be integrated shall have the
property of being $K$-invariant:
\begin{equation}\label{eq:K-symmetry}
    f(Z,\tilde{Z};\zeta,\tilde{\zeta}) = f(g Z,\tilde{Z}g^{-1} ;
    g\zeta,\tilde{\zeta}g^{-1}) \quad (g \in K) \;.
\end{equation}
We wish to establish a reduction formula for the Berezin integral
$\int f$ of such functions. This formula will take a form that varies
slightly between the three cases of $K = \mathrm{U}_n\,$, $K =
\mathrm{O}_n\,$, and $K = \mathrm{USp}_n\,$.

\subsubsection{The case of $\,\mathrm{U}_n$-symmetry}

Let then $f$ be an analytic and $\mathrm{U}_n$-invariant function of
our basic variables $Z, \tilde{Z}, \zeta, \tilde{\zeta}$ for
$\tilde{Z} = Z^\dagger$. We now make the further assumption that $f$
extends to a $\mathrm{GL}_n(\mathbb{C})$-invariant \emph{holomorphic}
function when $Z$ and $\tilde{Z}$ are viewed as \emph{independent}
complex matrices; which means that the power series for $f$ in terms
of $Z$ and $\tilde{Z}$ converge everywhere and that the symmetry
relation (\ref{eq:K-symmetry}) for the extended function $f$ holds
for all $g \in \mathrm{GL}_n(\mathbb{C})$, the complexification of
$\mathrm{U}_n\,$. The rationale behind these assumptions about $f$ is
that they guarantee the existence of another function $F$ which lies
`over' $f$ in the following sense.

It is a result of classical invariant theory \cite{Howe1989} that the
algebra of $\mathrm{GL}_n(\mathbb{C})$-invariant polynomial functions
in $Z$, $\tilde{Z}$, $\zeta$, $\tilde{\zeta}$ is generated by the
quadratic invariants
\begin{displaymath}
    (\tilde{Z} Z)_c^{c^\prime} \equiv \tilde{Z}_i^{c^\prime} Z_c^i \;,
    \quad (\tilde{Z} \zeta)_e^{c^\prime} \equiv \tilde{Z}_i^{c^\prime}
    \zeta_e^i \;, \quad (\tilde{\zeta} Z)_c^{e^\prime} \equiv
    \tilde{\zeta}_i^{e^\prime} Z_c^i \;, \quad (\tilde{\zeta}
    \zeta)_e^{e^\prime} \equiv \tilde{\zeta}_i^{e^\prime}\zeta_e^i\;.
\end{displaymath}
Here we are introducing the summation convention: an index that
appears twice, once as a subscript and once as a superscript, is
understood to be summed over.

How does this invariant-theoretic fact bear on our situation? To
answer that, let $F$ be a holomorphic function of complex variables
$x_c^{c^\prime}$, $y_e^{e^\prime}$ and anti-commuting variables
$\sigma_e^{c^\prime}$, $\tau_c^{e^\prime}$ with index range $c,
c^\prime = 1, \ldots, p$ and $e, e^\prime = 1, \ldots, q\,$. Again,
let us organize these variables in the form of matrices, $x = (x_c^{
c^\prime})$, $y = (y_e^{e^\prime})$, etc., and write $F$ as
\begin{displaymath}
    F(x_c^{c^\prime},y_e^{e^\prime};\sigma_e^{c^\prime},\tau_c^{e^\prime})
    \equiv F \begin{pmatrix} x &\sigma \\ \tau &y \end{pmatrix} \;.
\end{displaymath}
Then the relevant statement from classical invariant theory in
conjunction with \cite{Luna1976} is this: given any $\mathrm{GL}_n
(\mathbb{C})$-invariant holomorphic function $f$ of the variables $Z,
\tilde{Z}, \zeta, \tilde{\zeta}$, it is possible to find a
holomorphic function $F$ of the variables $x, y, \sigma, \tau$ so
that
\begin{equation}\label{eq:F-over-f}
    F \begin{pmatrix} \tilde{Z} Z &\tilde{Z} \zeta \\
    \tilde{\zeta} Z &\tilde{\zeta} \zeta \end{pmatrix} =
    f(Z,\tilde{Z};\zeta,\tilde{\zeta})\;.
\end{equation}
To be sure, there exists no unique choice of such function $F$.
Indeed, since the top degree of the Grassmann algebra generated by
the anti-commuting variables $\zeta_e^i$ and $\tilde{\zeta}_i
^{e^\prime}$ is $2qn\,$, any monomial in the matrix variables $y$ of
degree higher than $qn$ vanishes identically upon making the
substitution $y_e^{e^\prime}= \tilde{\zeta}_i^{e^\prime}\zeta_e^i\,$.

In the following, we will use the abbreviated notation $F = F(Q)$
where the symbol $Q$ stands for the supermatrix built from the
matrices $x, \sigma, \tau, y:$
\begin{equation}\label{eq:def-Q}
    Q = \begin{pmatrix} x &\sigma\\ \tau &y \end{pmatrix} \;.
\end{equation}

\subsubsection{Orthogonal and symplectic symmetry}
\label{sect:orthsymp}

In the case of the symmetry group being $K = \mathrm{O}_n$ the
complex vector space $\mathbb{C}^n$ is equipped with a non-degenerate
symmetric tensor $\delta_{ij} = \delta_{j\,i}$ (which you may think
of as the Kronecker delta symbol). By definition, the elements $k$ of
the orthogonal group $\mathrm{O}_n$ satisfy the conditions $k^{-1} =
k^\dagger$ and $k^\mathrm{t} \delta k = \delta$ where $k^\mathrm{t}$
means the transpose of the matrix $k\,$. Let $\delta^{ij}$ denote the
components of the inverse tensor, $\delta^{-1}$. In addition to
$\tilde{Z} Z\,$, $\tilde{Z}\zeta\,$, $\tilde{\zeta} Z\,$,
$\tilde{\zeta}\zeta$ we now have the following independent quadratic
$K$-invariants:
\begin{eqnarray*}
    &&(Z^\mathrm{t}\delta Z)_{c^\prime c} = Z_{c^\prime}^i \delta_{ij}
    Z_c^j \;, \quad (\tilde{Z} \delta^{-1} \tilde{Z}^\mathrm{t}
    )^{c^\prime c}= \tilde{Z}_i^{c^\prime} \delta^{ij} \tilde{Z}_j^c\;,
    \\ &&(Z^\mathrm{t} \delta \zeta)_{c^\prime e} = Z_{c^\prime}^i
    \delta_{ij} \zeta_e^j \;, \quad (\tilde{\zeta} \delta^{-1}
    \tilde{Z}^\mathrm{t})^{e^\prime c} = \tilde{\zeta}_i^{e^\prime}
    \delta^{ij} \tilde{Z}_j^c \;, \\&&(\zeta^\mathrm{t} \delta \zeta
    )_{e^\prime e} = \zeta_{e^\prime}^i\delta_{ij} \zeta_e^j \;, \quad
    (\tilde{\zeta} \delta^{-1}\tilde{\zeta}^\mathrm{t})^{e^\prime e}
    = \tilde{\zeta}_i^{e^\prime}\delta^{ij} \tilde{\zeta}_j^e \;.
\end{eqnarray*}

In the case of symplectic symmetry, the dimension $n$ has to be an
even number and $\mathbb{C}^n$ is equipped with a non-degenerate
skew-symmetric tensor $\varepsilon_{ij} = - \varepsilon_{j\,i}\,$.
Elements $k$ of the unitary symplectic group $\mathrm{USp}_n$ satisfy
the conditions $k^{-1} = k^\dagger$ and $k^\mathrm{t} \varepsilon k =
\varepsilon\,$. If $\varepsilon^{ij} = - \varepsilon^{j\,i}$ are the
components of $\varepsilon^{-1}$, the extra quadratic invariants for
this case are
\begin{displaymath}
    (Z^\mathrm{t} \varepsilon Z)_{c^\prime c} = Z_{c^\prime}^i
    \varepsilon_{ij} Z_c^j \;, \quad (\tilde{Z} \varepsilon^{-1}
    \tilde{Z}^\mathrm{t})^{c^\prime c} = \tilde{Z}_i^{c^\prime}
    \varepsilon^{ij} \tilde{Z}_j^c \;, \quad \text{etc.}
\end{displaymath}

To deal with the two cases of orthogonal and symplectic symmetry in
parallel, we introduce the notation $\beta := \delta$ for $K =
\mathrm{O}_n$ and $\beta := \varepsilon$ for $K = \mathrm{USp}_n\,$,
and we organize all quadratic invariants as a supermatrix:
\begin{equation}\label{eq:invariants}
    \begin{pmatrix}
    \tilde{Z} Z &\tilde{Z}\beta^{-1}\tilde{Z}^\mathrm{t} &\tilde{Z}
    \zeta &\tilde{Z}\beta^{-1}\tilde{\zeta}^\mathrm{t} \\
    Z^\mathrm{t}\beta Z &Z^\mathrm{t} \tilde{Z}^\mathrm{t}
    &Z^\mathrm{t}\beta \zeta &Z^\mathrm{t}\tilde{\zeta}^\mathrm{t}
    \\ \tilde{\zeta}Z &\tilde{\zeta}\beta^{-1} \tilde{Z}^\mathrm{t}
    &\tilde{\zeta} \zeta &\tilde{\zeta}\beta^{-1} \tilde{\zeta}
    ^\mathrm{t} \\ -\zeta^\mathrm{t}\beta Z &-\zeta^\mathrm{t}
    \tilde{Z}^\mathrm{t} &-\zeta^\mathrm{t}\beta \zeta
    &-\zeta^\mathrm{t} \tilde{\zeta}^\mathrm{t} \end{pmatrix} \;.
\end{equation}
This particular matrix arrangement is motivated as follows.

Let $Q$ be the supermatrix (\ref{eq:def-Q}) of before, but now double
the size of each block; thus $x$ here is a matrix of size $2p \times
2p\,$, the rectangular matrix $\sigma$ is of size $2p \times 2q\,$,
and so on. Then impose on $Q$ the symmetry relation $Q = T_\beta
Q^\mathrm{st} (T_\beta)^{-1}$ where
\begin{displaymath}
    T_\delta = \begin{pmatrix}0 &1_p &{} &{}\\ 1_p &0 &{} &{}\\
    {} &{} &0 &-1_q\\{} &{} &1_q &0 \end{pmatrix}\;,
    \quad T_\varepsilon = \begin{pmatrix}
    0 &-1_p &{} &{}\\ 1_p &0 &{} &{}\\
    {} &{} &0 &1_q\\{} &{} &1_q &0 \end{pmatrix} \;,
\end{displaymath}
and $Q^\mathrm{st}$ means the supertranspose:
\begin{displaymath}
    Q^\mathrm{st} = \begin{pmatrix} x^\mathrm{t}&\tau^\mathrm{t}
    \\ - \sigma^\mathrm{t} &y^\mathrm{t} \end{pmatrix} \;.
\end{displaymath}
It is easy to check that the supermatrix (\ref{eq:invariants}) obeys
precisely this relation $Q T_\beta = T_\beta Q^\mathrm{st}$.

For the symmetry groups $K = \mathrm{O}_n$ and $K = \mathrm{USp}_n$
-- with the complexified groups being $G = \mathrm{O}_n(\mathbb{C})$
and $G = \mathrm{Sp}_n(\mathbb{C})$ -- it is still true that the
algebra of $G$-invariant holomorphic functions $f$ of $Z, \tilde{Z},
\zeta, \tilde{\zeta}$ is generated by the invariants that arise at
the quadratic level. Thus, if $f$ is any function of such kind, then
there exists (though not uniquely so) a holomorphic function $F(Q)$
which pulls back to the given function $f:$
\begin{equation}\label{eq:F-OVER-f}
    F\begin{pmatrix}
    \tilde{Z} Z &\tilde{Z}\beta^{-1}\tilde{Z}^\mathrm{t} &\tilde{Z}
    \zeta &\tilde{Z}\beta^{-1}\tilde{\zeta}^\mathrm{t} \\
    Z^\mathrm{t}\beta Z &Z^\mathrm{t} \tilde{Z}^\mathrm{t}
    &Z^\mathrm{t}\beta \zeta &Z^\mathrm{t}\tilde{\zeta}^\mathrm{t}
    \\ \tilde{\zeta}Z &\tilde{\zeta}\beta^{-1} \tilde{Z}^\mathrm{t}
    &\tilde{\zeta} \zeta &\tilde{\zeta} \beta^{-1} \tilde{\zeta}^
    \mathrm{t} \\ -\zeta^\mathrm{t}\beta Z &-\zeta^\mathrm{t}
    \tilde{Z}^\mathrm{t} &-\zeta^\mathrm{t} \beta \zeta
    &-\zeta^\mathrm{t} \tilde{\zeta}^\mathrm{t} \end{pmatrix} =
    f(Z,\tilde{Z};\zeta,\tilde{\zeta}) \;.
\end{equation}

\subsection{Superbosonization formula}
\label{sect:1.2}

A few more definitions are needed to state our main result, which
transfers the integral of $f$ to an integral of $F$.

In (\ref{eq:fund-int}) the definition of the Berezin integral $\int
f$ was given. Let us now specify how we integrate the `lifted'
function $F$, beginning with the case of $K = \mathrm{U}_n\,$. There,
the domain of integration will be $D = D_p^0 \times D_q^1$ where
$D_p^0$ is the symmetric space of positive Hermitian $p \times p$
matrices and $D_q^1$ is the group of unitary $q \times q$ matrices,
$D_q^1 = \mathrm {U}_q\,$. The Berezin superintegral form to be used
for $F(Q)$ is
\begin{equation}\label{eq:DQ-u}
    DQ := d\mu_{D_p^0}(x) \, d\mu_{D_q^1}(y)\, (2\pi)^{-pq}
    \Omega_{W_1} \circ \mathrm{Det}^q(x - \sigma y^{-1} \tau) \,
    \mathrm{Det}^p(y - \tau x^{-1} \sigma) \;,
\end{equation}
where the meaning of the various symbols is as follows. The Berezin
form $\Omega_{W_1}$ is defined as the product of all derivatives with
respect to the anti-commuting variables:
\begin{equation}\label{eq:1.10}
    \Omega_{W_1} = \prod_{c=1}^p \, \prod_{e=1}^q
    \frac{\partial^2}{\partial\sigma_e^c \, \partial\tau_c^e }\;.
\end{equation}
The symbol $d\mu_{D_q^1}$ denotes a suitably normalized Haar measure
on $D_q^1 = \mathrm{U}_q$ and $d\mu_{D_p^0}$ means a positive measure
on $D_p^0$ which is invariant with respect to the transformation $X
\mapsto g X g^\dagger$ for all invertible complex $p \times p$
matrices $g \in \mathrm{GL}_p (\mathbb{C})$. Our precise
normalization conventions for these measures are defined by the
Gaussian limits
\begin{displaymath}
    \lim_{t \to +\infty} \sqrt{t/ \pi}^{\, p^2} \int_{D_p^0}
    \mathrm{e}^{-t\, \mathrm{Tr}\, (x-\mathrm{Id})^2} d\mu_{D_p^0}(x)
    = 1 = \lim_{t \to +\infty} \sqrt{t / \pi}^{\, q^2} \int_{D_q^1}
    \mathrm{e}^{t\, \mathrm{Tr}\, (y-\mathrm{Id})^2} d\mu_{D_q^1}(y)\;.
\end{displaymath}

Now assume that $p \le n\,$. Then we assert that the
\emph{superbosonization formula}
\begin{equation}\label{bosonize}
    \int f = \frac{\mathrm{vol}(\mathrm{U}_n)}
    {\mathrm{vol}(\mathrm{U}_{n-p+q})} \int_D DQ \,\,
    \mathrm{SDet}^n(Q) \, F(Q)
\end{equation}
holds for a large class of analytic functions with suitable falloff
behavior at infinity. (In the body of the paper we state and prove
this formula for the class of Schwartz functions, i.e., functions
that decrease faster than any power. This, however, is not yet the
optimal formulation, and we expect the formula (\ref{bosonize}) to
hold in greater generality.) Here $\mathrm{vol}(\mathrm{U}_n) := \int
d\mu_{D_n^1}(y)$ is the volume of the unitary group, the integrands
$f$ and $F$ are assumed to be related by (\ref{eq:F-over-f}), and
$\mathrm{SDet}$ is the superdeterminant function,
\begin{displaymath}
    \mathrm{SDet} \begin{pmatrix} x &\sigma\\ \tau &y \end{pmatrix} =
    \frac{\mathrm{Det}(x)}{\mathrm{Det}(y - \tau x^{-1} \sigma)}\;.
\end{displaymath}

It should be mentioned at this point that ideas toward the existence
of such a formula as (\ref{bosonize}) have been vented in the recent
literature \cite{efetov-bos, guhr}. These publications, however, do
not give an answer to the important question of which integration
domain to choose for $Q\,$. Noting that the work of Efetov et al.\ is
concerned with the case of $n = 1$ and $p = q \gg 1$, let us
emphasize that the inequality $p \le n$ is in fact necessary in order
for our formula (\ref{bosonize}) to be true. (The situation for $p
> n$ is explored in a companion paper \cite{RBM-all}.) Moreover, be
advised that analogous formulas for the related cases of $K =
\mathrm{O}_n \,, \mathrm {USp}_n$ have not been discussed at all in
the published literature.

Turning to the latter two cases, we introduce two $2r \times 2r$
matrices $t_s$ and $t_a:$
\begin{displaymath}
    t_s = \begin{pmatrix} 0 &1_r \\ 1_r &0 \end{pmatrix} \;, \quad
    t_a = \begin{pmatrix} 0 &-1_r \\ 1_r &0 \end{pmatrix} \;,
\end{displaymath}
where $r = p$ or $r = q$ depending on the context. Then let a linear
space $\mathrm{Sym}_b (\mathbb{C}^{2r})$ for $b := s$ or $b := a$ be
defined by
\begin{displaymath}
    \mathrm{Sym}_b (\mathbb{C}^{2r}) := \left\{ M \in \mathrm{Mat}_{
    2r,2r}(\mathbb{C}) \mid M = t_b\,M^\mathrm{t}(t_b)^{-1}\right\}\;.
\end{displaymath}
Thus the elements of $\mathrm{Sym}_b (\mathbb{C}^{2r})$ are complex
$2r \times 2r$ matrices which are symmetric with respect to
transposition followed by conjugation with $t_b\,$. With this
notation, we can rephrase the condition $Q = T_\beta Q^\mathrm{st}
(T_\beta)^{-1}$ for the blocks $x$ resp.\ $y$ as saying they are in
$\mathrm{Sym}_s(\mathbb{C}^{2p})$ resp.\ $\mathrm{Sym}_a
(\mathbb{C}^{2q})$ for $\beta = \delta$ and in $\mathrm{Sym}_a
(\mathbb{C}^{2p})$ resp.\ $\mathrm{Sym}_s(\mathbb{C}^{2q})$ for
$\beta = \varepsilon\,$. The domain of integration for $Q$ will now
be $D_\beta := D_{\beta,p}^0 \times D_{\beta,\,q}^1$ where
\begin{displaymath}
    D_{\delta,p}^0 = \mathrm{Herm}^+ \cap \mathrm{Sym}_s
    (\mathbb{C}^{2p}) \;, \quad D_{\delta,\,q}^1 =
    \mathrm{U} \cap \mathrm{Sym}_a(\mathbb{C}^{2q}) \;,
\end{displaymath}
in the case of $\beta = \delta$ (or $K = \mathrm{O}_n$), and
\begin{displaymath}
    D_{\varepsilon,p}^0 = \mathrm{Herm}^+ \cap \mathrm{Sym}_a
    (\mathbb{C}^{2p}) \;, \quad D_{\varepsilon, \,q}^1 =
    \mathrm{U} \cap \mathrm{Sym}_s(\mathbb{C}^{2q}) \;,
\end{displaymath}
in the case of $\beta = \varepsilon$ (or $K = \mathrm{USp}_n$). Thus
in both cases, $\beta = \delta$ and $\beta = \varepsilon$, the
integration domains $D_{\beta,p}^0$ and $D_{\beta,\,q}^1$ are
constructed by taking the intersection with the positive Hermitian
matrices and the unitary matrices, respectively.

The Berezin superintegral form $DQ$ for the cases $\beta = \delta,
\varepsilon$ has the expression
\begin{equation}\label{eq:DQ-os}
    DQ := d\mu_{D_{\beta,p}^0}(x) \, d\mu_{D_{\beta,\,q}^1}(y) \,
    \Omega_{W_1} \circ \frac{\mathrm{Det}^q(x - \sigma y^{-1} \tau)\,
    \mathrm{Det}^p(y - \tau x^{-1} \sigma)}{(2\pi)^{2pq}\,
    \mathrm{Det}^{\frac{1}{2}m_\beta}(1-x^{-1}\sigma y^{-1}\tau)}\;,
\end{equation}
where $m_\delta = 1$ and $m_\varepsilon = -1$. The Berezin form
$\Omega_{W_1}$ now is simply a product of derivatives w.r.t.\ all of
the anti-commuting variables in the matrix $\sigma\,$. (The entries
of $\tau$ are determined from those of $\sigma$ by the relation $Q =
T_\beta Q^\mathrm{st} (T_\beta)^{-1}$). For $\beta = \delta$ one
defines
\begin{equation}\label{eq:1.12}
    \Omega_{W_1} = \prod_{c=1}^{p} \, \prod_{e=1}^q
    \frac{\partial^2}{\partial\sigma_e^c \, \partial\sigma_{e+q}^{c+p}}
    \; \prod_{c=1}^{p} \, \prod_{e=1}^q \frac{\partial^2}
    {\partial\sigma_e^{c+p} \, \partial\sigma_{e+q}^c}\;,
\end{equation}
while for $\beta = \varepsilon$ the definition is the same except
that the ordering of the derivatives $\partial/\partial\sigma_e^{c
+p}$ and $\partial / \partial\sigma_{e+q}^c$ in the second product
has to be reversed.

It remains to define the measures $d\mu_{D_{\beta,p}^0}$ and
$d\mu_{D_{ \beta,q}^1}\,$. To do so, we first observe that the
complex group $\mathrm{GL}_{2p}(\mathbb{C})$ acts on $\mathrm{Sym}_b
(\mathbb{C}^{2p})$ by conjugation in a twisted sense:
\begin{displaymath}
    x \mapsto g x \tau_b(g^{-1}) \;, \quad \tau_b(g^{-1})
    = t_b\, g^\mathrm{t} (t_b)^{-1} \quad (b = s, a)\;.
\end{displaymath}
A derived group action on the restriction to the positive Hermitian
matrices is then obtained by restricting to the subgroup $G^\prime
\subset \mathrm{GL}_{2p}(\mathbb{C})$ defined by the condition
\begin{displaymath}
    \tau_b(g^{-1}) = g^\dagger \;.
\end{displaymath}
This subgroup $G^\prime$ turns out to be $G^\prime \simeq
\mathrm{GL}_{2p}(\mathbb{R})$ for $b = s$ and $G^\prime \simeq
\mathrm{GL}_p(\mathbb{H})$, the invertible $p \times p$ matrices
whose entries are real quaternions, for $b = a\,$. In the sector of
$y$, the unitary group $\mathrm{U}_{2q}$ acts on $D_{\beta,\,q}^1 =
\mathrm{U} \cap \mathrm{Sym}_b(\mathbb{C}^{2q})$ by the same twisted
conjugation,
\begin{displaymath}
    y \mapsto g y \tau_b(g^{-1}) \quad (b = a, s) \;.
\end{displaymath}
Now in all cases, $d\mu_{D_{\beta,p}^0}$ and $d\mu_{D_{\beta,\,q}^1}$
are measures on $D_{\beta,p}^0$ and $D_{\beta,\,q}^1$ which are
invariant by the pertinent group action. Since the group actions at
hand are transitive, all of our invariant measures are unique up to
multiplication by a constant. As before, we consider a Gaussian limit
in order to fix the normalization constant:
\begin{displaymath}
    \lim_{t \to +\infty} \sqrt{t / \pi}^{\, p(2p + m_\beta)}
    \int_{D_{\beta,p}^0} \mathrm{e}^{-\frac{t}{2}\, \mathrm{Tr}
    \,(x-\mathrm{Id})^2} d\mu_{D_{\beta,p}^0}(x) = 1\;.
\end{displaymath}
The normalization of $d\mu_{D_{\beta,\,q}^1}$ is specified by the
corresponding formula where we make the replacements $p \to q$, and
$m_\beta \to - m_\beta$, and $- \mathrm{Tr}\,(x-\mathrm{Id})^2 \to +
\mathrm{Tr}\,(y-\mathrm{Id})^2$. An explicit expression for each of
these invariant measures is given in the Appendix.

We are now ready to state the superbosonization formula for the cases
of orthogonal and symplectic symmetry. Let the inequality of
dimensions $n \ge 2p$ be satisfied. We then assert that the following
is true.

Let the Berezin integral $\int f$ still be defined by
(\ref{eq:fund-int}), but now assume the holomorphically extended
integrand $f$ to be $G$-invariant with complexified symmetry group $G
= \mathrm{O}_n(\mathbb{C})$ for $\beta = \delta$ and $G =
\mathrm{Sp}_n (\mathbb{C})$ for $\beta = \varepsilon$. Let $K_n =
\mathrm{O}_n$ in the former case and $K_n = \mathrm{USp}_n$ in the
latter case. Then, choosing any holomorphic function $F(Q)$ related
to the given function $f$ by (\ref{eq:F-OVER-f}), the integration
formula
\begin{equation}\label{bos-other}
    \int f = 2^{(q-p)m_\beta}
    \frac{\mathrm{vol}(K_n)}{\mathrm{vol}(K_{n-2p+2q})}
    \int_{D_\beta} DQ \,\, \mathrm{SDet}^{n/2}(Q) \, F(Q)
\end{equation}
holds true, provided that $f$ falls off sufficiently fast at
infinity.

Thus the superbosonization formula takes the same form as in the
previous case $K = \mathrm{U}_n\,$, except that the exponent $n$ now
is reduced to $n/2\,$. The latter goes hand in hand with the size of
the supermatrix $Q$ having been expanded by $p \to 2p$ and $q \to
2q\,$.

Another remark is that the square root of the superdeterminant,
\begin{displaymath}
    \mathrm{SDet}^{n/2}(Q) = \sqrt{\mathrm{Det}^n(x^{\vphantom{-1}})}
    \, \big/ \, \sqrt{\mathrm{Det}^n(y - \tau x^{-1} \sigma)} \;,
\end{displaymath}
is always analytic in the sector of the matrix $y$. For the case of
orthogonal symmetry this is because $D_{\delta,\,q}^1 = \mathrm{U}
\cap \mathrm{Sym}_a(\mathbb{C}^{2q})$ is isomorphic to the unitary
skew-symmetric $2q \times 2q$ matrices and for such matrices the
determinant has an analytic square root known as the Pfaffian. (In
the language of random matrix physics, $D_{\delta,\,q}^1$ is the
domain of definition of the Circular Symplectic Ensemble, which has
the feature of Kramers degeneracy.) In the case of symplectic
symmetry, where the number $n$ is always even, no square root is
being taken in the first place.

As another remark, let us mention that each of our integration
domains is isomorphic to a symmetric space of compact or non-compact
type. These isomorphisms $D_q^1 \simeq \tilde{D}_q^1$ and $D_p^0
\simeq \tilde{D}_p^0$ are listed in Table \ref{fig:2}. Detailed
explanations are given in the main text.

\begin{table}
\begin{center}
\begin{tabular}{|p{0.8cm}|p{3.6cm}|p{2.7cm}|p{2.6cm}|p{1.8cm}|}
\hline
$K$ & $D_p^0$ & $\tilde{D}_p^0$ & $D_q^1$ & $\tilde{D}_q^1$ \\
\hline
  $\mathrm{U}_n$ \newline $\mathrm{O}_n$ \newline $\mathrm{USp}_n$
& $\mathrm{Herm}^+ \cap \mathrm{Mat}_{p,p}(\mathbb{C})$ \newline
  $\mathrm{Herm}^+ \cap \mathrm{Sym}_s(\mathbb{C}^{2p})$ \newline
  $\mathrm{Herm}^+ \cap \mathrm{Sym}_a(\mathbb{C}^{2p})$
& $\mathrm{GL}_p(\mathbb{C})/\mathrm{U}_p$ \newline
  $\mathrm{GL}_{2p}(\mathbb{R})/\mathrm{O}_{2p}$ \newline
  $\mathrm{GL}_p(\mathbb{H})/\mathrm{USp}_{2p}$
& $\mathrm{U} \cap \mathrm{Mat}_{q,\,q}(\mathbb{C})$ \newline
  $\mathrm{U} \cap \mathrm{Sym}_a(\mathbb{C}^{2q})$ \newline
  $\mathrm{U} \cap \mathrm{Sym}_s(\mathbb{C}^{2q})$
& $\mathrm{U}_{q}$ \newline
  $\mathrm{U}_{2q}/\mathrm{USp}_{2q}$ \newline
  $\mathrm{U}_{2q}/\mathrm{O}_{2q}$ \\
\hline
\end{tabular}\vskip 10pt
\caption{Isomorphisms between integration domains and symmetric
spaces.} \label{fig:2}
\end{center}
\end{table}
Let us also mention that the expressions (\ref{eq:DQ-u}) and
(\ref{eq:DQ-os}) for the Berezin integration forms $DQ$ can be found
from a supersymmetry principle: each $DQ$ is associated with one of
three Riemannian symmetric superspaces in the sense of \cite{suprev}
(to be precise, these are the supersymmetric non-linear sigma model
spaces associated with the random matrix symmetry classes $A$III,
$BD$I, and $C$II) and is in fact the Berezin integration form which
is invariant w.r.t.\ the action of the appropriate Lie superalgebra
$\mathfrak{gl}$ or $\mathfrak{osp}\,$. We will make no use of this
symmetry principle in the present paper. Instead, we will give a
direct proof of the superbosonization formulas (\ref{bosonize}) and
(\ref{bos-other}), deriving the expressions (\ref{eq:DQ-u}) and
(\ref{eq:DQ-os}) by construction, not from a supersymmetry argument.

Finally, we wish to stress that in random matrix applications, where
$n$ typically is a large number, the reduction brought about by the
superbosonization formulas (\ref{bosonize}) and (\ref{bos-other}) is
a striking advance: by conversion from its original role as the
number of integrations to do, the big integer $n$ has been turned
into an exponent, whereby asymptotic analysis of the integral by
saddle-point methods becomes possible.

\subsection{Illustration}

To finish this introduction, let us illustrate the new method at the
example of Wegner's $n$-orbital model with $n$ orbitals per site and
unitary symmetry.

The Hilbert space $V$ of that model is an orthogonal sum, $V =
\oplus_{i \in \Lambda} V_i\,$, where $i$ labels the sites (or
vertices) of a lattice $\Lambda$ and the $V_i \simeq \mathbb{C}^n$
are Hermitian vector spaces of dimension $n\,$. The Hamiltonians of
Wegner's model are random Hermitian operators $H : \, V \to V$
distributed according to a Gaussian measure $d\mu(H)$. To specify the
latter, let $\Pi_i : V \to V_i$ be the orthogonal projector on
$V_i\,$. The probability measure of Wegner's model is then given as a
Gaussian distribution $d\mu(H)$ with Fourier transform
\begin{displaymath}
    \int \mathrm{e}^{-\mathrm{i}\, \mathrm{Tr}\,(H K)} d\mu(H)
    = \mathrm{e}^{- \frac{1}{2n} \sum_{ij} C_{ij} \mathrm{Tr}\,
    (K \Pi_i K \Pi_j)} \;,
\end{displaymath}
where $K \in \mathrm{End}(V)$, and the variances $C_{ij} = C_{j\,i}$
are non-negative real numbers. We observe that $d\mu(H)$ is invariant
under conjugation $H \mapsto g H g^{-1}$ by unitary transformations
$g \in \prod_{i \in \Lambda} \mathrm{U}(V_i)$ (indeed, the Fourier
transform has the corresponding invariance under $K \mapsto g K
g^{-1}$); such an invariance is called a \emph{local gauge symmetry}
in physics.

Let us now be interested in, say, the average ratio of characteristic
polynomials:
\begin{displaymath}
    R(E_0\,,E_1) := \int \frac{\mathrm{Det}(E_1 - H)}{\mathrm{Det}
    (E_0 - H)} \, d\mu(H) \quad (\mathfrak{Im}\, E_0 > 0) \;.
\end{displaymath}
To compute $R(E_0\, , E_1)$ one traditionally uses a supersymmetry
method involving the so-called Hubbard-Stratonovich transformation.
In order for this approach to work, one needs to assume that the
positive quadratic form with matrix coefficients $C_{ij}$ has an
inverse. If it does, then the traditional approach leads to the
following result \cite{DPS}:
\begin{displaymath}
    R(E_0\,,E_1) = \int \mathrm{e}^{-\frac{n}{2} \sum_{ij}
    (C^{-1})_{ij} \, (x_i \, x_j - y_i\,  y_j)} D_\mathrm{HS}(x\,,y)
    \prod_{k \in \Lambda} \frac{(y_k-E_1)^{n-1}}{(x_k - E_0)^{n+1}}
    \, \frac{dy_k\,dx_k}{2\pi/\mathrm{i}} \;,
\end{displaymath}
where the integral is over $x_k \in \mathbb{R}$ and $y_k \in
\mathrm{i} \mathbb{R}\,$. The factor $D_\mathrm{HS} (x\, ,y)$ is a
fermion determinant resulting from integration over the
anti-commuting components of the Hubbard-Stratonovich field; it is
the determinant of the matrix with elements
\begin{displaymath}
    n\left(\delta_{ij}-(C^{-1})_{ij}\,(x_i-E_0)(y_j-E_1)\right)\;.
\end{displaymath}
Notice that the integration variables $x_k$ and $y_k$ carry the
physical dimension of energy.

In contrast, using the new approach opened up by the
superbosonization formula of the present paper, we obtain
\begin{displaymath}
    R(E_0\, , E_1) = \int \mathrm{e}^{-\frac{n}{2} \sum_{ij} C_{ij}
    \, (x_i \, x_j - y_i\, y_j)} D_\mathrm{SB}(x\, ,y) \prod_{k \in
    \Lambda}\left( \frac{x_k \,\mathrm{e}^{\mathrm{i}E_0 x_k}}{y_k\,
    \mathrm{e}^{\mathrm{i}E_1 y_k}}\right)^n \,
    \frac{dx_k \, dy_k}{2\pi\mathrm{i} \, x_k y_k} \;.
\end{displaymath}
Here the integral is over $x_k \in \mathbb{R}_+$ and $y_k \in
\mathrm{U}_1$ (the unit circle in $\mathbb{C}$). These integration
variables have the physical dimension of $(\mathrm{energy})^{-1}$.
The factor $D_\mathrm{SB}(x\, ,y)$ still is a fermion determinant,
which now arises from integration over the anti-commuting variables
of the superbosonization field; it is the determinant of the matrix
with elements
\begin{displaymath}
    n \left( \delta_{ij} + C_{ij}\, x_i \, y_j \right) \;.
\end{displaymath}
When both methods (Hubbard-Stratonovich and superbosonization) are
applicable, our two formulas for $R(E_0\, , E_1)$ are exactly
equivalent to each other. Please be warned, however, that this
equivalence is by no means easy to see directly.

From a practical viewpoint, the main difference between the two
formulas is that one of them is expressed by the quadratic form of
variances $C_{ij}$ whereas the other one is expressed by the
\emph{inverse} of that quadratic form. A rigorous analysis based on
the formula from Hubbard-Stratonovich transformation (or, rather, the
resulting formula for the density of states) for the case of
long-range $C_{ij}$ and $n = 1$, was made in \cite{DPS}. A similar
analysis based on the formula from superbosonization has not yet been
done.

\section{Motivation: supersymmetry method}
\label{sect:susy-method}\setcounter{equation}{0}

Imagine some quantum mechanical setting where the Hilbert space is
$\mathbb{C}^n$ equipped with its standard Hermitian structure. On
that finite-dimensional space, let us consider Hermitian operators
$H$ that are drawn at random from a probability distribution or
ensemble $d\mu(H)$. We might wish to compute the spectral correlation
functions of the ensemble or some other observable quantity of
interest in random matrix physics.

One approach to this problem is by the so-called supersymmetry method
\cite{efetov-book}. In that method the observables one wishes to
compute are written in terms of Green's functions, i.e., matrix
coefficients of the resolvent operator of $H$, which are then
expressed as Gaussian integrals over commuting and anti-commuting
variables.

The key object of the method is the characteristic function of
$d\mu(H)$,
\begin{displaymath}
    \mathcal{F}(K) = \int \mathrm{e}^{- \mathrm{i}\,
    \mathrm{Tr}\, (HK)} d\mu(H) \;,
\end{displaymath}
where the exact meaning of the Fourier variable $K$ depends on what
observable is to be computed. In the general case (with $p$ bosonic
and $q$ fermionic `replicas'), defining the matrix entries of $K$
with respect to the standard basis $\{ e_1 , \ldots, e_n \}$ of
$\mathbb{C}^n$ by $K e_j = e_i \, K_j^i\,$, one lets
\begin{displaymath}
    K_j^i := Z_c^i \tilde{Z}_j^c + \zeta_e^i \, \tilde{\zeta}_j^e
\end{displaymath}
where $Z_c^i\,$, $\tilde{Z}_j^c$ and $\zeta_e^i\,$, $\tilde{
\zeta}_j^e$ are the commuting and anti-commuting variables of Sect.\
\ref{sect:intro} and the summation convention is still in force. To
compute, say, the spectral correlation functions of $d\mu(H)$, one
multiplies $\mathcal{F}(K)$ by the exponential function
\begin{equation}\label{eq:energies}
    \exp \left(\mathrm{i} Z_c^i E_{c^\prime}^c \tilde{Z}_i^{c^\prime}
    + \mathrm{i}\zeta_e^j F_{e^\prime}^{e} \tilde{\zeta}_j^{e^\prime}
    \right) \;, \quad E_{c^\prime}^c = E_c \delta_{c^\prime}^c \;,
    \quad F_{e^\prime}^e = F_e \delta_{e^\prime}^e \;,
\end{equation}
where the parameters $E_c$ and $F_e$ have the physical meaning of
energies, and one integrates the product against the flat Berezin
integration form $D_{Z,\bar{Z}; \zeta, \tilde{\zeta}}$ over the real
vector space defined by $\tilde{Z}_i^c = \mathrm{sgn} (\mathfrak{Im}
\, E_c) \overline{Z_c^i}$ (for $c = 1, \ldots, p$ and $i = 1, \ldots,
n$). The desired correlation functions are then generated by a
straightforward process of taking derivatives with respect to the
energy parameters at coinciding points. Note that for all $g \in
\mathrm{GL}_n(\mathbb{C})$ the exponential (\ref{eq:energies}) is
invariant under
\begin{displaymath}
    Z_c^i \mapsto g_j^i Z_c^j \;, \quad \tilde{Z}_i^c \mapsto
    \tilde{Z}_j^c (g^{-1})_i^j \;, \quad \zeta_e^i \mapsto g_j^i
    \zeta_e^j \;, \quad \tilde{\zeta}_e^i \mapsto \tilde{\zeta}_e^j
    (g^{-1})_i^j \;.
\end{displaymath}

Let us now pass to a basis-free formulation of this setup. For that
we are going to think of the sets of complex variables $Z_c^i$ and
$\tilde{Z}_i^c$ as bases of holomorphic linear functions for the
complex vector spaces $\mathrm{Hom}(\mathbb{C}^p , \mathbb{C}^n)$
resp.\ $\mathrm{Hom}(\mathbb{C}^n , \mathbb{C}^p)$, and we interpret
the anti-commuting variables $\zeta_e^i$ and $\tilde{\zeta}_i^e$ as
generators for the exterior algebras of the vector spaces
$\mathrm{Hom}( \mathbb{C}^q , \mathbb{C}^n)^\ast$ resp.\ $
\mathrm{Hom}( \mathbb{C}^n , \mathbb{C}^q )^\ast$. Let
\begin{displaymath}
    V_0 := \mathrm{Hom}(\mathbb{C}^p , \mathbb{C}^n) \oplus
    \mathrm{Hom}(\mathbb{C}^n , \mathbb{C}^p) \;, \quad
    V_1 := \mathrm{Hom}(\mathbb{C}^q , \mathbb{C}^n) \oplus
    \mathrm{Hom}(\mathbb{C}^n , \mathbb{C}^q) \;.
\end{displaymath}
If we now choose some fixed Hermitian operator $H$ drawn from our
random matrix ensemble, the exponential $\mathrm{e}^{-\mathrm{i} \,
\mathrm{Tr}\, (HK)}$ is seen to be a holomorphic function on $V_0$
with values in the exterior algebra $\wedge( V_1^\ast)$. Under mild
assumptions on $d\mu(H)$ (e.g., bounded support, or rapid decay at
infinity) the holomorphic property carries over to the integral $\int
\exp(-\mathrm{i} \mathrm{Tr}\, HK)\, d\mu(H)$. The characteristic
function $\mathcal{F}(K)$ in that case is a holomorphic function
\begin{displaymath}
    \mathcal{F}(K)\, : \,\, V_0 \to \wedge (V_1^\ast) \;,
\end{displaymath}
and so is the function resulting from $\mathcal{F}(K)$ by
multiplication with the Gaussian factor (\ref{eq:energies}). We
denote this product of functions by $f$ for short.

Combining $V_0$ and $V_1$ to a $\mathbb{Z}_2$-graded vector space $V
:= V_0 \oplus V_1\,$, we denote the graded-commutative algebra of
holomorphic functions $V_0 \to \wedge(V_1^\ast)$ by $\mathcal{A}_V$.
The main task in the supersymmetry method is to compute the Berezin
superintegral of $f \in \mathcal{A}_V\,$.

This task is rather difficult to carry out for functions $f$
corresponding to a general probability measure $d\mu(H)$. Let us
therefore imagine that $f$ has some symmetries. Thus, let a group $G
\subset \mathrm{GL}(\mathbb{C}^n)$ be acting on $\mathbb{C}^n$, and
let
\begin{displaymath}
    g . (L \oplus \tilde{L}) = L \, g^{-1} \oplus g \tilde{L}
    \quad (g \in G) \;,
    \end{displaymath}
for $L \in \mathrm{Hom}(\mathbb{C}^n,\mathbb{C}^p)$ and $\tilde{L}
\in \mathrm{Hom}(\mathbb{C}^p,\mathbb{C}^n)$ be the induced action of
$G$ on $V_0\,$. We also have the same $G$-action on $V_1\,$, and the
latter induces a $G$-action on $\wedge(V_1^\ast)$.

Now let the given probability measure $d\mu(H)$ be in\-variant with
respect to conjugation by the elements of (a unitary form of) such a
group $G$. Via the Fourier transform, this symmetry gets transferred
to the characteristic function $\mathcal{F}(K)$, and also to the
product of $\mathcal{F}(K)$ with the exponential (\ref{eq:energies}).
Our function $f$ then satisfies the relation $f(v) = g . f(g^{-1} v)$
for all $g \in G$ and $v \in V_0$ and thus is an element of the
subalgebra $\mathcal{A}_V^G \subset \mathcal{A}_V$ of
$G$-\emph{equivariant} holomorphic functions.

Following Dyson \cite{dyson} the complex symmetry groups $G$ of prime
interest in random matrix theory are $G = \mathrm{GL}_n(\mathbb{C})$,
$\mathrm {O}_n(\mathbb{C})$, and $\mathrm{Sp}_n(\mathbb{C})$. These
are the complexifications of the compact symmetry groups
$\mathrm{U}_n\,$, $\mathrm{O}_n\,$, and $\mathrm{USp}_n\,$,
corresponding to ensembles of Hermitian matrices with unitary
symmetry, real symmetric matrices with orthogonal symmetry, and
quaternion self-dual matrices with symplectic symmetry.

To summarize, in the present paper we will be concerned with the
algebra $\mathcal{A}_V^G$ of $G$-equivariant holomorphic functions
\begin{equation}
    f \, : \,\, V_0 \to \wedge(V_1^\ast) \;, \quad
    v \mapsto f(v) = g . f(g^{-1} v) \quad (g \in G) \;,
\end{equation}
for the classical Lie groups $G = \mathrm{GL}_n(\mathbb{C})$,
$\mathrm{O}_n(\mathbb{C})$, and $\mathrm{Sp}_n(\mathbb{C})$, and the
vector spaces
\begin{eqnarray}
    &&V_0 = \mathrm{Hom}(\mathbb{C}^n , \mathbb{C}^p) \oplus
    \mathrm{Hom}(\mathbb{C}^p , \mathbb{C}^n) \;, \label{eq:2.3}\\
    &&V_1 = \mathrm{Hom}(\mathbb{C}^n , \mathbb{C}^q) \oplus
    \mathrm{Hom}(\mathbb{C}^q , \mathbb{C}^n) \;. \label{eq:2.4}
\end{eqnarray}

Our strategy will be to lift $f \in \mathcal{A}_V^G$ to another
algebra $\mathcal{A}_W$ of holomorphic functions $F : \, W_0 \to
\wedge(W_1^\ast)$, using a surjective homomorphism $\mathcal{A}_W \to
\mathcal{A}_V^G$. The thrust of the paper then is to prove a
statement of reduction -- the superbosonization formula --
transferring the Berezin superintegral of $f \in \mathcal{A}_V^G$ to
such an integral of $F \in \mathcal{A}_W\,$.

The advantage of our treatment (as compared to the orthogonal
polynomial method) is that it readily extends to the case of symmetry
groups $G \times G \times \ldots \times G$ with direct product
structure. This will make it possible in the future to treat such
models as Wegner's gauge-invariant model \cite{N-orbital} with $n$
orbitals per site and gauge group $G$.

\subsection{Notation}

We now fix some notation which will be used throughout the paper.

If $A$ and $B$ are vector spaces and $L : \, A \to B$ is a linear
mapping, we denote the canonical adjoint transformation between the
dual vector spaces $B^\ast$ and $A^\ast$ by $L^\mathrm{t} : \, B^\ast
\to A^\ast$. We call $L^\mathrm{t}$ the `transpose' of $L\,$. A
Hermitian structure $\langle \, , \, \rangle$ on a complex vector
space $A$ determines a complex anti-linear isometry $c_A : \, A \to
A^\ast$ by $v \mapsto \langle v , \cdot \rangle$. If both $A$ and $B$
carry Hermitian structure, then $L : \, A \to B$ has a Hermitian
adjoint $L^\dagger : \, B \to A$ defined by $L^\dagger = c_A^{-1}
\circ L^\mathrm{t} \circ c_B^{\vphantom{-1}}\,$. The operator
$(L^\dagger)^\mathrm{t} : \, A^\ast \to B^\ast$ is denoted by
$(L^\dagger)^\mathrm{t} \equiv \overline{L}\,$. Note $\overline{L} =
c_B^{\vphantom{-1}} \circ L \circ c_A^{-1}$. Finally, if each of $A$
and $B$ is equipped with a non-degenerate pairing $A \times A \to
\mathbb{C}$ and $B \times B \to \mathbb{C}\,$, so that we are given
complex linear isomorphisms $\alpha : \, A \to A^\ast$ and $\beta :
\, B \to B^\ast$, then there exists a transpose $L^T : \, B \to A$ by
\begin{displaymath}
    L^T \,:\,\, B \stackrel{\beta}{\to} B^\ast\stackrel{L^\mathrm{t}}
    {\to} A^\ast \stackrel{\alpha^{-1}}{\to} A \;.
\end{displaymath}

To emphasize that this is really a story about linear operators
rather than basis-dependent matrices, we use such notations as
$\mathrm{Hom}(\mathbb{C}^n , \mathbb{C}^p)$ for the vector space of
complex linear transformations from $\mathbb{C}^n$ to $\mathbb{C}^p$.
In a small number of situations we will resort to the alternative
notation $\mathrm{Mat}_{p,\,n} (\mathbb{C})$.

From here on in this article, Lie groups by default will be complex
Lie groups; thus $\mathrm{GL}_n \equiv \mathrm{GL}(\mathbb{C}^n)$,
$\mathrm{O}_n \equiv \mathrm{O}(\mathbb{C}^n)$, and $\mathrm{Sp}_n
\equiv \mathrm{Sp}(\mathbb{C}^n)$, with $n \in 2\mathbb{N}$ in the
last case.

\section{Pushing forward in the boson-boson sector}
\label{sect:BB-sector} \setcounter{equation}{0}

In this section, we shall address the special situation of $V_1 =
0\,$, or fermion replica number $q = 0\,$. Thus we are now facing the
commutative algebra $\mathcal{A}_V^G \equiv \mathcal{O}(V)^G$ of
$G$-invariant holomorphic functions on the complex vector space
\begin{displaymath}
    V \equiv V_0 = \mathrm{Hom}(\mathbb{C}^n , \mathbb{C}^p) \oplus
    \mathrm{Hom}(\mathbb{C}^p , \mathbb{C}^n) \;.
\end{displaymath}
In order to deal with this function space we will use the fact that
$\mathcal{O}(V)$ can be viewed as a completion of the symmetric
algebra $\mathrm{S}(V^\ast)$. Since the $G$-action on $\mathrm{S}
(V^\ast)$ preserves the $\mathbb{Z}$-grading
\begin{displaymath}
    \mathrm{S}(V^\ast) = \oplus_{k \ge 0}\, \mathrm{S}^k (V^\ast)
\end{displaymath}
and is reductive on each symmetric power $\mathrm{S}^k(V^\ast)$, one
has a subalgebra $\mathrm{S}^k (V^\ast)^G$ of $G$-fixed elements in
$\mathrm{S}^k(V^\ast)$ for all $k\,$.

\subsection{$G$-invariants at the quadratic level}
\label{sect:BB-invs}

It is a known fact of classical invariant theory (see, e.g.,
\cite{Howe1995}) that for each of the cases $G = \mathrm{GL}_n\,$,
$\mathrm{O}_n\,$, and $\mathrm {Sp}_n\,$, all $G$-invariants in
$\mathrm{S}(V^\ast)$ arise at the quadratic level, i.e., $\mathrm{S}
(V^\ast)^G$ is generated by $\mathrm{S}^2(V^\ast)^G$. Let us
therefore sharpen our understanding of these quadratic invariants
$\mathrm{S}^2(V^\ast)^G$.

\subsubsection{The case of $G = \mathrm{GL}_n\,$.}

All quadratic invariants are just of a single type here: they arise
by composing the elements of $\mathrm{Hom}(\mathbb{C}^p ,
\mathbb{C}^n)$ with those of $\mathrm{Hom}(\mathbb{C}^n ,
\mathbb{C}^p)$.
\begin{lem}\label{lem:BB-GL-invs}
$\mathrm{S}^2(V^\ast)^G$ is isomorphic as a complex vector space to
$W^\ast = \mathrm{End}(\mathbb{C}^p)^\ast$.
\end{lem}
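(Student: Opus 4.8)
The plan is to compute $\mathrm{S}^2(V^\ast)^G$ directly rather than invoking the first fundamental theorem of invariant theory. Write $V = A \oplus B$ with $A = \mathrm{Hom}(\mathbb{C}^n,\mathbb{C}^p)$ and $B = \mathrm{Hom}(\mathbb{C}^p,\mathbb{C}^n)$, so that the $G = \mathrm{GL}_n$-action is $L \mapsto L g^{-1}$ on $A$ and $\tilde L \mapsto g \tilde L$ on $B$. Dualizing the canonical identifications $A \cong \mathbb{C}^p \otimes (\mathbb{C}^n)^\ast$ and $B \cong \mathbb{C}^n \otimes (\mathbb{C}^p)^\ast$ gives isomorphisms of $G$-modules $A^\ast \cong (\mathbb{C}^p)^\ast \otimes \mathbb{C}^n$ and $B^\ast \cong (\mathbb{C}^n)^\ast \otimes \mathbb{C}^p$, where $\mathbb{C}^n$ carries the defining representation, $(\mathbb{C}^n)^\ast$ its dual, and the $\mathbb{C}^p$-factors are $G$-trivial. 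Together with the standard decomposition $\mathrm{S}^2(A^\ast \oplus B^\ast) = \mathrm{S}^2(A^\ast) \oplus (A^\ast \otimes B^\ast) \oplus \mathrm{S}^2(B^\ast)$, this reduces everything to finding the $G$-invariants of the three summands.

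The first step is to kill the two ``pure'' summands, and this is the one point where it genuinely matters that $G$ is $\mathrm{GL}_n$ and not $\mathrm{O}_n$ or $\mathrm{Sp}_n$: restrict the action to the scalar subgroup $\mathbb{C}^\times \cong Z(\mathrm{GL}_n) \subset G$. A scalar $z$ acts on $\mathbb{C}^n$ by $z$ and on $(\mathbb{C}^n)^\ast$ by $z^{-1}$, hence by $z$ on $A^\ast$ and by $z^{-1}$ on $B^\ast$, and therefore by $z^2$, by $z^{-2}$, and trivially on $\mathrm{S}^2(A^\ast)$, $\mathrm{S}^2(B^\ast)$, and $A^\ast \otimes B^\ast$ respectively. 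A $G$-invariant is in particular fixed by this subgroup, so $\mathrm{S}^2(A^\ast)^G = \mathrm{S}^2(B^\ast)^G = 0$, and consequently $\mathrm{S}^2(V^\ast)^G = (A^\ast \otimes B^\ast)^G$.

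It then remains to identify the mixed piece. Rearranging tensor factors gives a $G$-module isomorphism $A^\ast \otimes B^\ast \cong \bigl(\mathbb{C}^n \otimes (\mathbb{C}^n)^\ast\bigr) \otimes \bigl((\mathbb{C}^p)^\ast \otimes \mathbb{C}^p\bigr) \cong \mathrm{End}(\mathbb{C}^n) \otimes \mathrm{End}(\mathbb{C}^p)^\ast$, with $G$ acting by conjugation on the first factor and trivially on the second. Since the defining representation $\mathbb{C}^n$ is irreducible, Schur's lemma gives $\mathrm{End}(\mathbb{C}^n)^G = \mathbb{C}\,\mathrm{Id}_n$; hence $(A^\ast \otimes B^\ast)^G \cong \mathbb{C}\,\mathrm{Id}_n \otimes \mathrm{End}(\mathbb{C}^p)^\ast \cong \mathrm{End}(\mathbb{C}^p)^\ast = W^\ast$, which is the assertion. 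Concretely, the isomorphism sends a linear functional $\ell$ on $W = \mathrm{End}(\mathbb{C}^p)$ to the quadratic function $v \mapsto \ell(\mu(v))$ on $V$, where $\mu\colon (\tilde L, L) \mapsto \tilde L \circ L$ is the $G$-invariant quadratic map $V \to W$; thus $W^\ast$ is spanned inside $\mathrm{S}^2(V^\ast)$ by the matrix entries $(\tilde Z Z)_c^{c'}$ from the earlier list of quadratic generators.

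I do not expect a serious obstacle here: at the quadratic level this is an elementary representation-theoretic computation. The only place requiring care is the vanishing of $\mathrm{S}^2(A^\ast)^G$ and $\mathrm{S}^2(B^\ast)^G$ — one must use the full group $\mathrm{GL}_n$ (its center already suffices), since these summands do acquire invariants upon passing to the subgroups $\mathrm{O}_n$ or $\mathrm{Sp}_n$, which is precisely why the orthogonal and symplectic cases in Section~\ref{sect:orthsymp} produce the additional families of quadratic invariants.
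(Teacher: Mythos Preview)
Your proof is correct and follows essentially the same route as the paper's: decompose $\mathrm{S}^2(V^\ast)$ into pure and mixed pieces, kill the pure pieces (you use the center of $\mathrm{GL}_n$, the paper phrases this as the absence of invariant tensors in $\mathbb{C}^n\otimes\mathbb{C}^n$ and its dual), and identify the mixed piece via the one-dimensionality of $(\mathbb{C}^n\otimes(\mathbb{C}^n)^\ast)^G$ (your Schur's lemma, the paper's canonical pairing). The only differences are cosmetic --- you work directly with $V^\ast$, while the paper computes $\mathrm{S}^2(V)^G$ first and then dualizes using reductivity --- and one small slip: in your concrete description the map $\mu$ should be $(L,\tilde L)\mapsto L\circ\tilde L\in\mathrm{End}(\mathbb{C}^p)$, not $\tilde L\circ L$, which lands in $\mathrm{End}(\mathbb{C}^n)$.
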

\begin{proof}
Using the canonical transpose $\mathrm{Hom}(A,B) \simeq \mathrm{Hom}
(B^\ast , A^\ast)$ we have
\begin{displaymath}
  V \simeq \mathrm{Hom}(\mathbb{C}^n,\mathbb{C}^p) \oplus
  \mathrm{Hom}((\mathbb{C}^n)^\ast,(\mathbb{C}^p)^\ast) \;.
\end{displaymath}
For $G = \mathrm{GL}_n$ there exists no non-zero $G$-invariant tensor
in $\mathbb{C}^n \otimes \mathbb{C}^n$ or $(\mathbb{C}^n)^\ast
\otimes (\mathbb{C}^n)^\ast$. Therefore $\mathrm{S}^2 (\mathrm{Hom}
(\mathbb{C}^n,\mathbb{C}^p))^G = 0$ and $\mathrm{S}^2 (\mathrm{Hom}
((\mathbb{C}^n)^\ast , (\mathbb{C}^p)^\ast))^G = 0\,$, resulting in
\begin{displaymath}
    \mathrm{S}^2(V)^G \simeq \big(\mathrm{Hom}(\mathbb{C}^n ,
    \mathbb{C}^p) \otimes \mathrm{Hom}( (\mathbb{C}^n)^\ast,
    (\mathbb{C}^p)^\ast) \big)^G \;.
\end{displaymath}
The space of $G$-invariants in $(\mathbb{C}^n)^\ast \otimes
\mathbb{C}^n$ is one-dimensional (with generator $\varphi^i \otimes
e_i$ given by the canonical pairing between a vector space and its
dual). Since the action of $G$ on $\mathbb{C}^p$ is trivial, it
follows that
\begin{displaymath}
    \mathrm{S}^2 (V)^G \simeq \left( \mathbb{C}^p \otimes
    (\mathbb{C}^n)^\ast \otimes \mathbb{C}^n \otimes (\mathbb{C}^p)^\ast
    \right)^G \simeq \mathbb{C}^p \otimes (\mathbb{C}^p)^\ast
    \simeq \mathrm{End}(\mathbb{C}^p) \equiv W \;.
\end{displaymath}

The action of $G = \mathrm{GL}_n$ on $\mathrm{S}^2(V)$ and
$\mathrm{S}^2(V^\ast)$ is reductive. Therefore there exists a
canonical pairing $\mathrm{S}^2 (V)^G \otimes \mathrm{S}^2(V^\ast)^G
\to \mathbb{C}$ and the isomorphism $\mathrm{S}^2(V)^G \to W$
dualizes to an isomorphism $W^\ast \to (\mathrm{S}^2(V)^G)^\ast
\simeq \mathrm{S}^2(V^\ast)^G$.
\end{proof}

\subsubsection{The cases of $G = \mathrm{O}_n\,$, $\mathrm{Sp}_n\,$.}

Here $\mathbb{C}^n$ is equipped with a $G$-invariant non-degenerate
bilinear form or, equivalently, with a $G$-equivariant isomorphism
\begin{displaymath}
    \beta \, : \,\, \mathbb{C}^n \to (\mathbb{C}^n)^\ast \;,
\end{displaymath}
which is symmetric for $G = \mathrm{O}_n$ and alternating for $G =
\mathrm{Sp}_n\,$. To distinguish between these two, we sometimes
write $\beta = \delta$ in the former case and $\beta = \varepsilon$
in the latter case.

To describe $\mathrm{S}^2(V^\ast)^G$ for both cases, we introduce the
following notation. On $U := \mathbb{C}^p \oplus (\mathbb{C}^p)^\ast$
we have two canonical bilinear forms: the symmetric form
\begin{displaymath}
    s(v \oplus \varphi, v^\prime \oplus \varphi^\prime)
    = \varphi^\prime(v) + \varphi(v^\prime) \;,
\end{displaymath}
and the alternating form
\begin{displaymath}
    a(v \oplus \varphi, v^\prime \oplus \varphi^\prime)
    = \varphi^\prime(v) - \varphi(v^\prime) \;.
\end{displaymath}
\begin{defn}
Let $b = s$ or $b = a\,$. An endomorphism $L : \, U \to U$ of the
complex vector space $U = \mathbb{C}^p \oplus (\mathbb{C}^p)^\ast$ is
called symmetric with respect to $b$ if $L = L^T$, i.e.\ if
\begin{displaymath}
    b(L\, x\, ,y) = b(x\, ,L\,y)
\end{displaymath}
for all $x\, ,y \in U$. We denote the vector space of such
endomorphisms by $\mathrm{Sym}_b(U)$.
\end{defn}
\begin{lem}\label{lem:BB-OSP-invs}
If $U = \mathbb{C}^p \oplus (\mathbb {C}^p)^\ast$, then the space of
quadratic invariants $\mathrm{S}^2 (V^\ast)^G$ is isomorphic as a
complex vector space to $W^\ast$ where $W= \mathrm{Sym}_s(U)$ for $G=
\mathrm{O}_n$ and $W = \mathrm{Sym}_a(U)$ for $G = \mathrm{Sp}_n\,$.
\end{lem}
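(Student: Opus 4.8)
The plan is to follow the template of the proof of Lemma~\ref{lem:BB-GL-invs}, the new feature being that the $G$-equivariant isomorphism $\beta : \mathbb{C}^n \to (\mathbb{C}^n)^\ast$ lets us fold the two summands of $V$ into a single tensor product instead of keeping them apart. Using the canonical transpose $\mathrm{Hom}(\mathbb{C}^p,\mathbb{C}^n) \simeq \mathrm{Hom}((\mathbb{C}^n)^\ast,(\mathbb{C}^p)^\ast)$ followed by $\beta$, one obtains a $G$-module isomorphism
\begin{displaymath}
    V \;\simeq\; \mathbb{C}^n\otimes\mathbb{C}^p \,\oplus\, \mathbb{C}^n\otimes(\mathbb{C}^p)^\ast \;=\; \mathbb{C}^n\otimes U \;, \qquad U = \mathbb{C}^p\oplus(\mathbb{C}^p)^\ast \;,
\end{displaymath}
with $G$ acting only on the first tensor factor. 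In contrast to the $\mathrm{GL}_n$ case, $\mathbb{C}^n\otimes\mathbb{C}^n$ now carries a $G$-invariant, which is exactly why one folds rather than splits.

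Next I would compute $\mathrm{S}^2(V)^G$ by means of the standard plethysm $\mathrm{S}^2(A\otimes B)\simeq(\mathrm{S}^2 A\otimes\mathrm{S}^2 B)\oplus(\wedge^2 A\otimes\wedge^2 B)$ with $A=\mathbb{C}^n$, $B=U$, and $G$ acting trivially on $U$, which gives $\mathrm{S}^2(V)^G\simeq\big((\mathrm{S}^2\mathbb{C}^n)^G\otimes\mathrm{S}^2 U\big)\oplus\big((\wedge^2\mathbb{C}^n)^G\otimes\wedge^2 U\big)$. The only external input needed is the classical quadratic-level instance of the first fundamental theorem (see \cite{Howe1995}): for $G=\mathrm{O}_n$ and $G=\mathrm{Sp}_n$ the space $(\mathbb{C}^n\otimes\mathbb{C}^n)^G$ is one-dimensional, spanned by the inverse form $\beta^{-1}$ regarded as a tensor, which is symmetric for $\mathrm{O}_n$ and alternating for $\mathrm{Sp}_n$. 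Hence $(\mathrm{S}^2\mathbb{C}^n)^G\simeq\mathbb{C}$, $(\wedge^2\mathbb{C}^n)^G=0$ for $G=\mathrm{O}_n$ and the reverse for $G=\mathrm{Sp}_n$, so that $\mathrm{S}^2(V)^G\simeq\mathrm{S}^2 U$ in the orthogonal case and $\mathrm{S}^2(V)^G\simeq\wedge^2 U$ in the symplectic case.

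To identify this with $W$ I would use the nondegenerate form $b$ on $U$ (with $b=s$ for $\mathrm{O}_n$ and $b=a$ for $\mathrm{Sp}_n$) in two ways: first, $b$ furnishes an isomorphism $U\simeq U^\ast$; second, the assignment $L\mapsto B_L$ with $B_L(x,y):=b(Lx,y)$ identifies $\mathrm{End}(U)$ with the bilinear forms on $U$, and by the defining relation $b(Lx,y)=b(x,Ly)$ of $\mathrm{Sym}_b(U)$ together with the (anti)symmetry of $b$ it restricts to isomorphisms $\mathrm{Sym}_s(U)\simeq\mathrm{S}^2(U^\ast)$ and $\mathrm{Sym}_a(U)\simeq\wedge^2(U^\ast)$. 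Combining, $\mathrm{S}^2 U\simeq\mathrm{S}^2(U^\ast)\simeq\mathrm{Sym}_s(U)=W$ for $G=\mathrm{O}_n$ and $\wedge^2 U\simeq\wedge^2(U^\ast)\simeq\mathrm{Sym}_a(U)=W$ for $G=\mathrm{Sp}_n$, so $\mathrm{S}^2(V)^G\simeq W$ in both cases. Finally, just as in the last step of Lemma~\ref{lem:BB-GL-invs}, reductivity of the $G$-action on $\mathrm{S}^2(V)$ yields a perfect pairing $\mathrm{S}^2(V)^G\otimes\mathrm{S}^2(V^\ast)^G\to\mathbb{C}$, so the isomorphism $\mathrm{S}^2(V)^G\simeq W$ dualizes to the asserted $\mathrm{S}^2(V^\ast)^G\simeq W^\ast$.

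I expect no real obstacle here; once the folding $V\simeq\mathbb{C}^n\otimes U$ is set up, every step is routine. The one thing that needs attention is keeping the chain of dualizations straight, so that one lands on $W^\ast$ rather than on $W$ — a distinction that is in fact immaterial since $b$ makes $W$ self-dual. As a concrete check (and in the form the lemma is used later), the splitting $\mathrm{S}^2 U\simeq\mathrm{S}^2\mathbb{C}^p\oplus\mathrm{End}(\mathbb{C}^p)\oplus\mathrm{S}^2((\mathbb{C}^p)^\ast)$, with $\wedge^2$ in place of $\mathrm{S}^2$ in two slots when $G=\mathrm{Sp}_n$, corresponds exactly to the three families of independent quadratic invariants $\tilde{Z}\beta^{-1}\tilde{Z}^{\mathrm{t}}$, $\tilde{Z}Z$, $Z^{\mathrm{t}}\beta Z$ of Section~\ref{sect:orthsymp}, with matching dimensions.
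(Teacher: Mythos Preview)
Your proof is correct and follows the same overall strategy as the paper: fold $V$ into a single tensor product $\mathbb{C}^n\otimes U$ via $\beta$, exploit that the $G$-invariants in $\mathbb{C}^n\otimes\mathbb{C}^n$ are one-dimensional and of a definite symmetry type, and then match the result with $\mathrm{Sym}_b(U)$.

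The packaging differs slightly. The paper works directly with $\mathrm{S}^2(V^\ast)^G$ by identifying $\mathrm{S}^2(V^\ast)$ with $\mathrm{Sym}(V,V^\ast)$ and then reading off $\mathrm{Sym}_G(V,V^\ast)\simeq\mathrm{Sym}(U^\ast,U)$ or $\mathrm{Alt}(U^\ast,U)$ from the one-dimensionality of $\mathrm{Hom}_G(\mathbb{C}^n,(\mathbb{C}^n)^\ast)=\mathbb{C}\beta$ together with the parity of $\beta$. You instead compute $\mathrm{S}^2(V)^G$ via the plethysm $\mathrm{S}^2(\mathbb{C}^n\otimes U)\simeq(\mathrm{S}^2\mathbb{C}^n\otimes\mathrm{S}^2 U)\oplus(\wedge^2\mathbb{C}^n\otimes\wedge^2 U)$ and then dualize at the end, which is perhaps the cleaner bookkeeping device and makes the role of the symmetry type of $\beta^{-1}$ completely transparent. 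Both routes invoke the same classical input and arrive at the same identification; there is no substantive gap.
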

\begin{proof}
We still have $V \simeq \mathrm{Hom}(\mathbb{C}^p,\mathbb{C}^n)
\oplus \mathrm{Hom}((\mathbb{C}^p)^\ast,(\mathbb{C}^n)^\ast)$ but
now, via the given complex linear isomorphism $\beta : \mathbb{C}^n
\to (\mathbb{C}^n)^\ast$, we even have an identification
\begin{displaymath}
    V \simeq \mathrm{Hom}(U,\mathbb{C}^n)\simeq U^\ast \otimes
    \mathbb{C}^n\;,\quad U = \mathbb{C}^p \oplus (\mathbb{C}^p)^\ast \;.
\end{displaymath}
Also, letting $\mathrm{Sym}(V,V^\ast)$ denote the vector space of
symmetric linear transformations
\begin{displaymath}
    \sigma \,:\,\, V \to V^\ast \;, \quad \sigma(v)(v^\prime)
    = \sigma(v^\prime)(v) \;,
\end{displaymath}
there is an isomorphism $\mathrm{S}^2(V^\ast) \to
\mathrm{Sym}(V,V^\ast)$ by
\begin{displaymath}
    \varphi^\prime \varphi + \varphi \varphi^\prime
    \mapsto \left( v \mapsto \varphi^\prime(v)
    \varphi + \varphi(v) \varphi^\prime \right) \;.
\end{displaymath}
This descends to a vector space isomorphism between
$\mathrm{S}^2(V^\ast)^G$ and $\mathrm{Sym}_G(V,V^\ast)$, the
$G$-equivariant mappings in $\mathrm{Sym}(V,V^\ast)$.

Consider now $\mathrm{Hom}_G(V,V^\ast) \simeq \mathrm{Hom}_G (U^\ast
\otimes \mathbb{C}^n , U \otimes (\mathbb{C}^n)^\ast)$. As a
consequence of the $G$-action on $U$ and $U^\ast$ being trivial, one
immediately deduces that
\begin{displaymath}
    \mathrm{Hom}_G (U^\ast \otimes \mathbb{C}^n , U \otimes
    (\mathbb{C}^n)^\ast) \simeq \mathrm{Hom}(U^\ast , U) \otimes
    \mathrm{Hom}_G(\mathbb{C}^n , (\mathbb{C}^n)^\ast) \;.
\end{displaymath}
The vector space $\mathrm{Hom}_G(\mathbb{C}^n , (\mathbb{C}^n)^\ast)$
is one-dimensional with generator $\beta$. Because $\beta$ is
symmetric for $G = \mathrm{O}_n$ and alternating for $G =
\mathrm{Sp}_n\,$, it follows that
\begin{displaymath}
    \mathrm{S}^2(V^\ast)^G \simeq \mathrm{Sym}_G (U^\ast \otimes
    \mathbb{C}^n , U \otimes (\mathbb{C}^n)^\ast ) \simeq \left\{
    \begin{array}{ll} \mathrm{Sym}(U^\ast , U)\;, &\quad G =
    \mathrm{O}_n \;, \\ \mathrm{Alt}(U^\ast , U)\;, &\quad G =
    \mathrm{Sp}_n \;, \end{array} \right.
\end{displaymath}
where the notation $\mathrm{Alt}(U^\ast , U)$ means the vector space
of alternating homomorphisms $A : \, U^\ast \to U$, i.e.,
$\varphi(A(\varphi^\prime)) = - \varphi^\prime(A(\varphi))$. Note
that $\mathrm{Sym}(U^\ast,U) \simeq \mathrm{Sym}(U,U^\ast)^\ast$ and
$\mathrm{Alt}(U^\ast,U) \simeq \mathrm{Alt}(U,U^\ast)^\ast$ by the
trace form $(A,B) \mapsto \mathrm{Tr}\,(AB)$.

Now let $L\in \mathrm{Sym}_b(U)$. Since $b = s$ is symmetric and $b =
a$ is alternating, the image of $\mathrm{Sym}_b(U)$ in $\mathrm{Hom}
(U,U^\ast)$ under the mapping $L \mapsto \phi_L$ defined by
\begin{displaymath}
    \phi_L (x)(y) := b(L\, x \, , y)
\end{displaymath}
is $\mathrm{Sym}(U,U^\ast)$ for $b = s$ and $\mathrm{Alt}(U,U^\ast)$
for $b = a\,$. Moreover, since the bilinear form $b$ is
non-degenerate, this mapping is an isomorphism of $\mathbb{C}$-vector
spaces. Thus we have
\begin{displaymath}
    \mathrm{S}^2(V^\ast)^G \simeq \left\{ \begin{array}{cl}
    \mathrm{Sym}(U^\ast , U) \simeq \mathrm{Sym}(U,U^\ast)^\ast
    \simeq \mathrm{Sym}_s(U)^\ast \;, \quad &G = \mathrm{O}_n \;, \\
    \mathrm{Alt}(U^\ast , U) \simeq \mathrm{Alt}(U,U^\ast)^\ast
    \simeq \mathrm{Sym}_a(U)^\ast \;, \quad &G = \mathrm{Sp}_n \;, \\
    \end{array} \right.
\end{displaymath}
which is the statement that was to be proved.
\end{proof}

\subsection{The quadratic map $Q\,$}\label{sect:Q-map}

Summarizing the results of the previous subsection, the vector space
$W$ of quadratic $G$-invariants in $\mathrm{S}(V)$ is
 \begin{displaymath}
    W = \mathrm{S}^2(V)^G = \left\{ \begin{array}{ll}
    \mathrm{End}(\mathbb{C}^p)\;, &G = \mathrm{GL}_n \;, \\
    \mathrm{Sym}_s(U_p)\;, & G = \mathrm{O}_n \;, \\
    \mathrm{Sym}_a(U_p)\;, &G = \mathrm{Sp}_n\;, \end{array}\right.
 \end{displaymath}
where $U \equiv U_p = \mathbb{C}^p \oplus (\mathbb{C}^p)^\ast$. For
notational convenience, we will sometimes think of $W =
\mathrm{End}(\mathbb{C}^p) \hookrightarrow \mathrm{End}(\mathbb{C}^p)
\oplus \mathrm{End}((\mathbb{C}^p)^\ast)$ for the first case ($G =
\mathrm{GL}_n$) as the intersection $\mathrm{Sym}_s(U_p) \cap
\mathrm{Sym}_a(U_p)$ of the vector spaces $W$ for the last two cases
($G = \mathrm{O}_n \, , \mathrm{Sp}_n$).

In the following we will repeatedly use the decomposition of elements
$w \in W$ as
\begin{equation}\label{eq:decompose-w}
    w = \begin{pmatrix} A &B \\ C &A^\mathrm{t} \end{pmatrix}
    \in \begin{pmatrix} \mathrm{End}(\mathbb{C}^p)
    &\mathrm{Hom}( (\mathbb{C}^p)^\ast, \mathbb{C}^p) \\
    \mathrm{Hom}( \mathbb{C}^p , (\mathbb{C}^p)^\ast)
    &\mathrm{End}((\mathbb{C}^p)^\ast) \end{pmatrix} \;.
\end{equation}
Note that $B$ and $C$ are symmetric for the case of $G = \mathrm
{O}_n$ and alternating for $G= \mathrm{Sp}_n\,$. The case of $G =
\mathrm{GL}_n$ is included by setting $B = C = 0\,$. Note also the
dimensions $\mathrm{dim}\, W = p^2$, $p(2p+1)$, and $p(2p-1)$ for $G
= \mathrm{GL}_n\,$, $\mathrm{O}_n\,$, and $\mathrm{Sp}_n\,$, in this
order.

Our treatment below is based on the relationship of $\mathcal{O}
(V)^G$ with the holomorphic functions $\mathcal{O}(W)$. To make this
relation explicit, we now introduce a map
\begin{displaymath}
    Q \, : \,\, V = \mathrm{Hom}(\mathbb{C}^n , \mathbb{C}^p) \oplus
    \mathrm{Hom}(\mathbb{C}^p , \mathbb{C}^n) \to \mathrm{End}(U_p)
\end{displaymath}
by defining its blocks according to the decomposition
(\ref{eq:decompose-w}) as
\begin{displaymath}
    Q \,: \,\, L \oplus \tilde{L} \mapsto \begin{pmatrix} L\,
    \tilde{L} &L \, \beta^{-1} L^\mathrm{t}\\ \tilde{L}^\mathrm{t}
    \beta\tilde{L} &\tilde{L}^\mathrm{t} L^\mathrm{t}\end{pmatrix}\;.
\end{displaymath}
Recall that the $G$-equivariant isomorphism $\beta:\,\mathbb{C}^n \to
(\mathbb{C}^n )^\ast$ is symmetric for $G = \mathrm{O}_n\,$,
alternating for $G = \mathrm{Sp}_n\,$, and non-existent for $G =
\mathrm{GL}_n$ in which case the off-diagonal blocks $L \beta^{-1}
L^\mathrm{t}$ and $\tilde{L}^\mathrm{t} \beta \tilde{L}$ are
understood to be zero. In all three cases this mapping $Q$ is
$G$-invariant: $Q(L \oplus \tilde{L}) = Q(L g^{-1} \oplus g
\tilde{L})$ for all $g \in G$. In the last two cases this is because
$g^\mathrm{t} \beta g = \beta$ by the very notion of what it means
for $\beta$ to be $G$-equivariant.
\begin{lem}\label{lem:Q-into-W}
The $G$-invariant mapping $Q :\, V \to \mathrm{End}(U)$ is into $W$.
\end{lem}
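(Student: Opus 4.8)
The plan is to verify directly that for every $L \oplus \tilde L \in V$ the endomorphism $Q(L \oplus \tilde L)$ of $U_p = \mathbb{C}^p \oplus (\mathbb{C}^p)^\ast$ meets the defining conditions of $W\,$; this is a pure linear-algebra check, since the $G$-invariance of $Q$ has already been established. First I would recall that $W = \mathrm{Sym}_b(U_p)$ with $b = s$ for $G = \mathrm{O}_n$ and $b = a$ for $G = \mathrm{Sp}_n\,$, while for $G = \mathrm{GL}_n$ one views $W = \mathrm{End}(\mathbb{C}^p) = \mathrm{Sym}_s(U_p) \cap \mathrm{Sym}_a(U_p)\,$; and that, by the block description recorded after (\ref{eq:decompose-w}), an endomorphism of $U_p$ written in the block form (\ref{eq:decompose-w}) belongs to $\mathrm{Sym}_s(U_p)$ (resp.\ $\mathrm{Sym}_a(U_p)$) exactly when its lower-right block is the transpose of its upper-left block and its off-diagonal blocks $B \in \mathrm{Hom}((\mathbb{C}^p)^\ast,\mathbb{C}^p)$, $C \in \mathrm{Hom}(\mathbb{C}^p,(\mathbb{C}^p)^\ast)$ are symmetric (resp.\ alternating).

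Next I would read off the three blocks of
\[
    Q(L \oplus \tilde L) = \begin{pmatrix} L\tilde L & L\beta^{-1}L^\mathrm{t} \\ \tilde L^\mathrm{t}\beta\tilde L & \tilde L^\mathrm{t} L^\mathrm{t}\end{pmatrix}
\]
and check each condition. The lower-right block $\tilde L^\mathrm{t} L^\mathrm{t} = (L\tilde L)^\mathrm{t}$ is automatically the transpose of the upper-left block. For the upper-right block $B = L\beta^{-1}L^\mathrm{t}$ I would use $(\beta^{-1})^\mathrm{t} = (\beta^\mathrm{t})^{-1}$ together with the fact that $\beta^\mathrm{t} = \beta$ for $G = \mathrm{O}_n$ and $\beta^\mathrm{t} = -\beta$ for $G = \mathrm{Sp}_n$ (this being precisely the statement that the invariant form is symmetric, resp.\ alternating), obtaining $B^\mathrm{t} = L(\beta^\mathrm{t})^{-1}L^\mathrm{t} = \pm\,B$ with a plus sign in the orthogonal case and a minus sign in the symplectic case. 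The lower-left block $C = \tilde L^\mathrm{t}\beta\tilde L$ is handled identically, $C^\mathrm{t} = \tilde L^\mathrm{t}\beta^\mathrm{t}\tilde L = \pm C\,$. Thus $Q(L \oplus \tilde L) \in \mathrm{Sym}_b(U_p) = W$ in the orthogonal and symplectic cases; and in the case $G = \mathrm{GL}_n$, where $\beta$ is absent and both off-diagonal blocks vanish, $Q(L \oplus \tilde L) \in \mathrm{Sym}_s(U_p) \cap \mathrm{Sym}_a(U_p) = W\,$. This would complete the verification.

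The only thing needing care is the bookkeeping of the canonical identifications $(\mathbb{C}^p)^{\ast\ast} \simeq \mathbb{C}^p$, $(\mathbb{C}^n)^{\ast\ast} \simeq \mathbb{C}^n$ (so that the transpose of a map $\mathbb{C}^p \to (\mathbb{C}^p)^\ast$ is again of the same type) and of the sign $\beta^\mathrm{t} = -\beta$ in the symplectic case — nothing deeper. A more conceptual route, which would make the conclusion transparent in the orthogonal and symplectic cases, is to use the identification $V \simeq \mathrm{Hom}(U_p,\mathbb{C}^n)$ from the proof of Lemma~\ref{lem:BB-OSP-invs}: under it $Q$ becomes, modulo a choice of sign conventions, the map $M \mapsto M^T \circ M$, where $M^T : \mathbb{C}^n \to U_p$ is the transpose relative to the form $\beta$ on $\mathbb{C}^n$ and the form $b$ on $U_p\,$; since $(M^T)^T = M$ one has $(M^T M)^T = M^T M$, i.e.\ $Q(L \oplus \tilde L) \in \mathrm{Sym}_b(U_p)$, immediately. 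Either way, the lemma is essentially the observation that the specific quadratic combinations defining $Q$ respect the symmetry of $\beta\,$.
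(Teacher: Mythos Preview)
Your proposal is correct. Your primary argument---checking block by block that the off-diagonal pieces $L\beta^{-1}L^{\mathrm t}$ and $\tilde L^{\mathrm t}\beta\tilde L$ inherit the parity of $\beta$---is a more elementary route than the paper's. The paper instead packages $Q$ as a product $Q(v)=\psi(v)\tilde\psi(v)$ of two maps $\psi(v)\in\mathrm{Hom}(\mathbb C^n,U)$ and $\tilde\psi(v)\in\mathrm{Hom}(U,\mathbb C^n)$, observes the relation $\beta\tilde\psi(v)=\psi(v)^{\mathrm t}T_b$, and derives $Q(v)^{\mathrm t}=T_b\,Q(v)\,T_b^{-1}$ from the matching of the parities $\sigma(\beta)=\sigma(T_b)$. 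That is precisely the ``conceptual route'' you sketch at the end (your $M\mapsto M^T M$ is the same idea in slightly different dress). Your block-by-block check has the virtue of being completely explicit and requiring no new notation; the paper's factorized version makes the symmetry $Q(v)^T=Q(v)$ a one-line computation and sets up the maps $\psi,\tilde\psi$ that recur later in Section~\ref{sect:3.4}.
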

\begin{proof}
Let $G$ be one of the groups $\mathrm{O}_n$ or $\mathrm{Sp}_n$ and
denote by $L$ and $\tilde{L}$ the elements of $\mathrm{Hom}
(\mathbb{C}^n , \mathbb{C}^p)$ resp.\ $\mathrm{Hom}(\mathbb{C}^p ,
\mathbb{C}^n)$. Introducing two isomorphisms
\begin{eqnarray*}
    &&\psi \, : \,\, V \to \mathrm{Hom}(\mathbb{C}^n , U) \;,
    \quad L \oplus \tilde{L} \mapsto L \oplus \tilde{L}^\mathrm{t}
    \beta \;, \\ &&\tilde{\psi} \, : \,\, V \to \mathrm{Hom}
    (U,\mathbb{C}^n) \;, \quad L \oplus \tilde{L} \mapsto \tilde{L}
    \oplus \beta^{-1} L^\mathrm{t} \;,
\end{eqnarray*}
we have $Q(v) = \psi(v) \tilde{\psi}(v)$ for $v = L \oplus
\tilde{L}\,$. The two maps $\psi$ and $\tilde{\psi}$ are related by
\begin{displaymath}
    \beta \tilde{\psi}(v) = \psi(v)^\mathrm{t} T_b
\end{displaymath}
where $T_b : \, U \to U^\ast$ is the isomorphism given by $x \mapsto
b(x\,,\cdot)$. Note that $T_b$ is symmetric for $b = s$ and
alternating for $b = a\,$. Using the relations above, one computes
that
\begin{displaymath}
    Q(v)^\mathrm{t} = \tilde{\psi}(v)^\mathrm{t} \psi(v)^\mathrm{t} =
    T_b^\mathrm{t} \psi(v) (\beta^\mathrm{t})^{-1} \beta \tilde{\psi}(v)
    T_b^{-1} \;.
\end{displaymath}
If parities $\sigma(\beta), \sigma(T_b) \in \{ \pm 1 \}$ are assigned
to $\beta$ and $T_b$ by $\beta^\mathrm{t} = \sigma(\beta)\, \beta$
and $T_b^\mathrm{t} = \sigma (T_b)\, T_b\,$, then $\sigma(\beta) =
\sigma(T_b)$ by construction, and it follows that $Q(v)^\mathrm{t} =
T_b Q(v) (T_b)^{-1}$. This is equivalent to saying that $Q(v) =
Q(v)^T \in \mathrm{Sym}_b (U) = W$, which proves the statement for
the groups $G = \mathrm{O}_n \, , \mathrm {Sp}_n\,$. The remaining
case of $G = \mathrm{GL}_n$ is included as a subcase by the embedding
$\mathrm{End}(\mathbb{C}^p) \hookrightarrow \mathrm {Sym}_s(U_p) \cap
\mathrm{Sym}_a(U_p)$.
\end{proof}
While the map $Q : \, V \to W$ will not always be surjective, as the
rank of $L \in \mathrm{Hom}(\mathbb{C}^n , \mathbb{C}^p)$ is at most
$\mathrm{min}(n\,,p)$, there is a pullback of algebras $Q^\ast : \,
\mathcal{O}(W) \to \mathcal{O}(V)^G$ in all cases. Let us now look
more closely at $Q^\ast$ restricted to $W^\ast$, the linear functions
on $W$. For this let $\{ e_i \}$, $\{ f^i \}$, $\{ e_c \}$, and $\{
f^c \}$ be standard bases of $\mathbb{C}^n$, $(\mathbb{C}^n )^\ast$,
$\mathbb{C}^p$, and $(\mathbb{C}^p)^\ast$, respectively, and define
bases $\{ Z_c^i \}$ and $\{\tilde{Z}_i^c \}$ of $\mathrm{Hom}
(\mathbb{C}^p , \mathbb{C}^n)^\ast$ and $\mathrm{Hom} (\mathbb{C}^n ,
\mathbb{C}^p)^\ast$ by
\begin{displaymath}
    Z_c^i(\tilde{L}) = f^i(\tilde{L} \, e_c) \;, \quad
    \tilde{Z}_i^c(L) = f^c(L \, e_i) \;,
\end{displaymath}
where $i = 1, \ldots, n$ and $c = 1, \ldots, p\,$. Also, decomposing
$w \in W \subset \mathrm{End}(U)$ as in (\ref{eq:decompose-w}),
define a set of linear functions $x_c^{c^\prime}$, $y^{c^\prime c}$,
and $y_{c c^\prime}$ on $W$ by
\begin{displaymath}
    x_c^{c^\prime}(w) = f^{c^\prime}(A e_c) \;, \quad
    y^{c^\prime c}(w) = f^{c^\prime}(B f^c) \;, \quad
    y_{c c^\prime}(w) = (C e_{c^\prime})(e_c) \;.
\end{displaymath}
Notice that $y^{c^\prime c} = \pm y^{c c^\prime}$ and $y_{c c^\prime}
= \pm y_{c^\prime c}$ where the plus sign applies in the case of $G =
\mathrm{O}_n$ and the minus sign for $G = \mathrm{Sp}_n\,$. The set
of functions $\{ x_c^{c^\prime} \}$ is a basis of $W^\ast \simeq
\mathrm{End}(\mathbb{C}^p)^\ast$ for the case of $G = \mathrm{GL}_n
\,$. Expanding this set by including the set of functions $\{
y^{c^\prime c} , y_{c c^\prime} \}_{c \le c^\prime}$ we get a basis
of $W^\ast$ for $G = \mathrm{O}_n\,$. The same goes for $G =
\mathrm{Sp}_n$ if the condition on indices $c \le c^\prime$ is
replaced by $c < c^\prime$.
\begin{lem}
The pullback of algebras $Q^\ast : \, \mathcal{O}(W) \to \mathcal{O}
(V)^G$ restricted to the linear functions on $W$ realizes the
isomorphism of complex vector spaces $W^\ast \to \mathrm{S}^2
(V^\ast)^G$.
\end{lem}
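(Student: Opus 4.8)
The plan is to verify directly that the pullback $Q^\ast$ sends the linear coordinate functions on $W$ to exactly the quadratic $G$-invariants on $V$ that were identified in Lemmas~\ref{lem:BB-GL-invs} and \ref{lem:BB-OSP-invs}, and then to match dimensions. First I would compute $Q^\ast x_c^{c'}$, $Q^\ast y^{c'c}$, and $Q^\ast y_{cc'}$ explicitly in the bases $\{Z_c^i\}$, $\{\tilde Z_i^c\}$. Since $Q(L\oplus\tilde L)$ has blocks $L\tilde L$, $L\beta^{-1}L^{\mathrm t}$, $\tilde L^{\mathrm t}\beta\tilde L$, and $\tilde L^{\mathrm t}L^{\mathrm t}$, reading off the defining formulas for $x_c^{c'}$, $y^{c'c}$, $y_{cc'}$ gives $Q^\ast x_c^{c'} = \tilde Z_i^{c'} Z_c^i$, $Q^\ast y^{c'c} = \tilde Z_i^{c'}\beta^{ij}\tilde Z_j^{c}$, and $Q^\ast y_{cc'} = Z_{c'}^i\beta_{ij}Z_c^j$ (with the summation convention), i.e.\ precisely the quadratic invariants built in Section~\ref{sect:BB-invs}. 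This shows $Q^\ast(W^\ast)\subseteq \mathrm S^2(V^\ast)^G$, so $Q^\ast$ restricts to a linear map $W^\ast\to \mathrm S^2(V^\ast)^G$.

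Next I would argue injectivity of this restricted map. The cleanest route is to observe that the images $Q^\ast x_c^{c'}$, $Q^\ast y^{c'c}$, $Q^\ast y_{cc'}$ are linearly independent elements of $\mathrm S^2(V^\ast)$: a nontrivial linear relation among them, read as a polynomial identity in the independent coordinates $Z_c^i$, $\tilde Z_j^d$, would force all coefficients to vanish once one uses $n\ge p$ (so that, e.g., the bilinear forms $\tilde Z_i^{c'}Z_c^i$ in distinct index pairs $(c,c')$ are genuinely independent, and similarly for the $\beta$-contracted ones, using that $\beta$ is nondegenerate and the $Z$'s range over all of $\mathrm{Hom}(\mathbb C^p,\mathbb C^n)$). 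Since the stated $\{x_c^{c'}\}$, $\{x_c^{c'}\}\cup\{y^{c'c},y_{cc'}\}_{c\le c'}$, and $\{x_c^{c'}\}\cup\{y^{c'c},y_{cc'}\}_{c<c'}$ are bases of $W^\ast$ in the three cases $G=\mathrm{GL}_n,\mathrm O_n,\mathrm{Sp}_n$, injectivity on these basis elements gives injectivity of $Q^\ast|_{W^\ast}$.

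Finally I would close the argument by a dimension count. By Lemma~\ref{lem:BB-GL-invs} and Lemma~\ref{lem:BB-OSP-invs} one has $\dim_{\mathbb C}\mathrm S^2(V^\ast)^G = \dim_{\mathbb C}W^\ast$, equal to $p^2$, $p(2p+1)$, $p(2p-1)$ for $G=\mathrm{GL}_n$, $\mathrm O_n$, $\mathrm{Sp}_n$ respectively (these dimensions are recorded after \eqref{eq:decompose-w}). An injective linear map between finite-dimensional spaces of equal dimension is an isomorphism; hence $Q^\ast|_{W^\ast}\colon W^\ast\to\mathrm S^2(V^\ast)^G$ is the asserted isomorphism, and by construction it agrees with the explicit isomorphisms of Lemmas~\ref{lem:BB-GL-invs} and \ref{lem:BB-OSP-invs} under the identifications made there.

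I expect the main obstacle to be the linear-independence step, i.e.\ proving injectivity without circularity: one must be careful that the quadratic forms $Z^{\mathrm t}\beta Z$ and $\tilde Z\beta^{-1}\tilde Z^{\mathrm t}$ really do span a space of the expected dimension inside $\mathrm S^2(V^\ast)$, which is exactly where the hypothesis relating $n$ and $p$ (here $n\ge p$, so that $L$ can have full rank $p$) enters. An alternative that sidesteps an explicit coordinate computation is to note that $Q^\ast|_{W^\ast}$ is $\mathrm{GL}_p$-equivariant (or $\mathrm O_p$-/$\mathrm{Sp}_p$-equivariant) and nonzero on each irreducible constituent of $W^\ast$, which by Schur's lemma forces injectivity; combined with the equality of dimensions this again yields the isomorphism. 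Either way, the surjectivity is automatic from the dimension match and no separate invariant-theory input beyond the already-cited Lemmas is needed.
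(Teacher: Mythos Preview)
Your approach is essentially the same as the paper's: compute $Q^\ast$ on the coordinate basis of $W^\ast$, note that the images are the standard quadratic invariants, and combine linear independence of these images with the dimension equality $\dim W^\ast = \dim \mathrm S^2(V^\ast)^G$ coming from Lemmas~\ref{lem:BB-GL-invs} and~\ref{lem:BB-OSP-invs}.

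One correction, however: your remark that the hypothesis $n\ge p$ is what secures linear independence is mistaken, and the paper does not invoke any such condition here. The elements $\tilde Z_i^{c'}Z_c^i$, $\tilde Z_i^{c'}\beta^{ij}\tilde Z_j^c$, $Z_c^i\beta_{ij}Z_{c'}^j$ are linearly independent in $\mathrm S^2(V^\ast)$ for \emph{every} $n\ge 1$. Indeed, they lie in three distinct direct summands of $\mathrm S^2(V^\ast)$ (the mixed $\tilde Z Z$ part, the pure $\tilde Z\tilde Z$ part, and the pure $ZZ$ part), so it suffices to check independence within each summand separately; there a putative linear relation is a polynomial identity in the independent generators $Z_c^i$, $\tilde Z_j^d$, and nondegeneracy of $\beta$ forces all coefficients to vanish. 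The inequalities $n\ge p$ (resp.\ $n\ge 2p$) enter only later, in the push-forward of the integral (Prop.~\ref{prop:BB-sector}), not in this purely algebraic lemma.
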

\begin{proof}
Applying $Q^\ast$ to the chosen basis of $W^\ast$ we obtain the
expressions
\begin{displaymath}
    Q^\ast x_c^{c^\prime} = \tilde{Z}_i^{c^\prime} Z_c^i \;,
    \quad Q^\ast y^{c^\prime c} = \tilde{Z}_i^{c^\prime} \beta^{ij}
    \tilde{Z}_j^c \;, \quad
    Q^\ast y_{c c^\prime} = Z_c^i \beta_{ij} Z_{c^\prime}^j \;,
\end{displaymath}
where $\beta^{ij} = f^i(\beta^{-1} f^j)$ and $\beta_{ij} = (\beta
e_j)(e_i)$. Now the $p^2$ functions $\tilde{Z}_i^{c^\prime} Z_c^i$
are linearly independent and form a basis of $\mathrm{S}^2
(V^\ast)^{\mathrm{GL}_n}$. By including the $p(p \pm 1)$ linearly
independent functions $\tilde{Z}_i^{c^\prime} \beta^{ij}
\tilde{Z}_j^c$ and $Z_c^i \beta_{ij} Z_{c^\prime}^j\,$ we get a basis
of $\mathrm{S}^2 (V^\ast)^G$ for $G = \mathrm{O}_n$ resp.\
$\mathrm{Sp}_n\,$. Thus our linear map $Q^\ast : W^\ast \to
\mathrm{S}^2(V^\ast)^G$ is a bijection in all cases.
\end{proof}
\begin{prop}\label{lem:Q-surj}
The homomorphism $Q^\ast : \, \mathcal{O}(W) \to \mathcal{O}(V)^G$ is
surjective.
\end{prop}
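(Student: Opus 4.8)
The plan is to combine the fact from classical invariant theory (already invoked in Sect.~\ref{sect:BB-invs}) that $\mathcal{O}(V)^G = \mathrm{S}(V^\ast)^G$ is generated as an algebra by its quadratic part $\mathrm{S}^2(V^\ast)^G$ with the preceding lemma, which identifies $\mathrm{S}^2(V^\ast)^G$ precisely with the image $Q^\ast(W^\ast)$ of the linear functions on $W$. Since $Q^\ast$ is an algebra homomorphism, its image is a subalgebra of $\mathcal{O}(V)^G$ containing $\mathrm{S}^2(V^\ast)^G$; hence it contains the subalgebra generated by $\mathrm{S}^2(V^\ast)^G$, which is all of $\mathrm{S}(V^\ast)^G$. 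So $Q^\ast$ is already surjective onto the polynomial invariants $\mathrm{S}(V^\ast)^G$.

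The remaining point is to upgrade this from polynomials to the full ring of holomorphic functions $\mathcal{O}(V)^G$, i.e.\ to the completion. Here I would use that $\mathcal{O}(V)$ is the completion of $\mathrm{S}(V^\ast)$ with respect to the filtration by degree, and that the $G$-action is degreewise reductive, so that $\mathcal{O}(V)^G$ is the degreewise completion of $\mathrm{S}(V^\ast)^G$: every $G$-invariant holomorphic $f$ on $V$ has a Taylor expansion $f = \sum_{k\ge 0} f_k$ with each homogeneous component $f_k \in \mathrm{S}^k(V^\ast)^G$. By the polynomial case, each $f_k$ is $Q^\ast(P_k)$ for some polynomial $P_k \in \mathcal{O}(W)$ that we may take homogeneous of degree $k$ in the standard grading of $W$ (since $Q$ doubles degree, $f_k$ homogeneous of degree $2k$ in $V$ forces $P_k$ homogeneous of degree $k$ in $W$, after discarding any part of $P_k$ that lies in $\ker Q^\ast$). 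Then set $F := \sum_k P_k$; one checks that this series converges to a holomorphic function on $W$ because $Q:\, V \to W$ is a polynomial (in fact quadratic) map, so the growth/convergence of the $P_k$ is controlled by that of the $f_k$, and $Q^\ast F = \sum_k Q^\ast P_k = \sum_k f_k = f$.

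The step I expect to be the main obstacle is the convergence argument for $F = \sum_k P_k$ on all of $W$: one has to see that the homogeneous preimages $P_k$ do not grow too fast. The clean way to handle this is to choose the $P_k$ canonically rather than arbitrarily --- e.g.\ use the $G$-invariant Hermitian inner product to define an orthogonal projection $\mathrm{S}^k(V^\ast) \to \mathrm{S}^k(V^\ast)^G$ and let $P_k$ be the unique element of a fixed complement of $\ker(Q^\ast)\cap \mathrm{S}^k(W^\ast)$ mapping to $f_k$ --- and then bound $\|P_k\|$ by $C^k\|f_k\|$ using that $Q^\ast$, restricted to that complement, has an inverse whose operator norm grows at most geometrically in $k$ (because $Q$ is a fixed quadratic map, so $Q^\ast$ on degree $k$ is built from a fixed bilinear gadget and its norm and co-norm are at worst exponential in $k$). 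Granting such an estimate, the radius of convergence of $\sum P_k$ is infinite wherever $\sum f_k$ converges on all of $V$, which is the hypothesis; this gives $F \in \mathcal{O}(W)$ with $Q^\ast F = f$, proving surjectivity. An alternative, if one prefers to avoid norm estimates, is to invoke directly the standard fact that for a reductive group acting linearly on $V$ with categorical quotient $V/\!\!/G$, the algebra $\mathcal{O}(V)^G$ of holomorphic invariants is the ring of holomorphic functions pulled back from $V/\!\!/G$, together with the identification of $V/\!\!/G$ with a closed subvariety of $W$ cut out by the relations among the generating quadratic invariants --- but the elementary completion argument above is self-contained and matches the constructive spirit of the paper.
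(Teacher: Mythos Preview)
Your proposal follows essentially the same route as the paper: establish surjectivity on polynomial invariants via the generating set $\mathrm{S}^2(V^\ast)^G \simeq W^\ast$, then pass to the holomorphic completion. The paper's proof is terser---it simply asserts that ``the surjective property of $Q^\ast : \mathbb{C}[W] \to \mathbb{C}[V]^G$ carries over to $Q^\ast : \mathcal{O}(W) \to \mathcal{O}(V)^G$'' because holomorphic functions are expressed by everywhere-convergent power series---whereas you spell out the convergence issue for $F = \sum_k P_k$ and sketch how to control it. Your more careful treatment is not wrong, just more detailed than what the paper provides; and your alternative via the categorical quotient is in the spirit of the reference \cite{Luna1976} that the paper cites earlier in the introduction for exactly this kind of lifting statement.
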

\begin{proof}
Let $\mathbb{C}[W] = \mathrm{S}(W^\ast)$ and $\mathbb{C}[V]^G =
\mathrm{S}(V^\ast)^G$ be the rings of polynomial functions on $W$ and
$G$-invariant polynomial functions on $V$, respectively. Pulling back
functions by the $G$-invariant quadratic map $Q : \, V \to W$, we
have a homomorphism $Q^\ast : \, \mathbb{C}[W] \to \mathbb{C}[V]^G$.
This map $Q^\ast$ is surjective because $\mathbb{C}[V]^G = \mathrm{S}
(V^\ast)^G$ is generated by $\mathrm{S}^2(V^\ast)^G$ and $Q^\ast : \,
W^\ast \to \mathrm{S}^2(V^\ast)^G$ is an isomorphism.

Our holomorphic functions are expressed by power series with infinite
radius of convergence. Therefore the surjective property of $Q^\ast :
\, \mathbb{C}[W] \to \mathbb{C}[V]^G$ carries over to $Q^\ast : \,
\mathcal{O}(W) \to \mathcal{O}(V)^G$.
\end{proof}

In the sequel, we will establish a finer result, relating the
integral of an integrable function $Q^\ast F \in \mathcal{O}(V)^G$
along a real subspace of $V$ to an integral of $F \in \mathcal{O}(W)$
over a non-compact symmetric space in $W$. While Prop.\
\ref{lem:Q-surj} applies always, this relation between integrals
depends on the relative value of dimensions and will here be
developed only in the range $n \ge p$ (for $G = \mathrm{GL}_n$) or $n
\ge 2p$ (for $G = \mathrm{O}_n\,, \mathrm{Sp}_n$).

We begin by specifying the integration domain in $V$. Using the
standard Hermitian structures of $\mathbb{C}^n$ and $\mathbb{C}^p$,
let a real subspace $V_\mathbb{R} \subset V$ be defined as the graph
of
\begin{displaymath}
    \dagger : \, \mathrm{Hom}(\mathbb{C}^n , \mathbb{C}^p) \to
    \mathrm{Hom}(\mathbb{C}^p , \mathbb{C}^n) \;.
\end{displaymath}
Thus in order for $L \oplus \tilde{L} \in V$ to lie in $V_\mathbb{R}$
the linear transformation $\tilde{L} : \, \mathbb{C}^p \to
\mathbb{C}^n$ has to be the Hermitian adjoint $(\tilde{L} =
L^\dagger)$ of $L : \mathbb{C}^n \to \mathbb{C}^p$. Note that
$V_\mathbb{R} \simeq \mathrm{Hom}(\mathbb{C}^n , \mathbb{C}^p)$.

The real vector space $V_\mathbb{R}$ is endowed with a Euclidean
structure by the norm square
\begin{displaymath}
    \vert\vert L \oplus L^\dagger \vert\vert^2 :=
    \mathrm{Tr}(L\, L^\dagger) \;.
\end{displaymath}
Let then $\mathrm{dvol}_{ V_\mathbb{R}}$ denote the canonical volume
density of this Euclidean vector space $V_\mathbb{R}\,$. Our interest
will be in the integral over $V_\mathbb{R}$ of $f \, \mathrm{dvol}_{
V_\mathbb{R}}$ for $f \in \mathcal{O}(V)^G$. To make sure that the
integral exists, we will assume that $f$ is a Schwartz function along
$V_\mathbb{R}\,$.

Note that the anti-linear bijection $c_p : \, \mathbb{C}^p \to
(\mathbb{C}^p)^\ast$, $v \mapsto \langle v\, , \cdot \rangle$,
determines a Hermitian structure on $(\mathbb{C}^p)^\ast$ by $\langle
\varphi , \varphi^\prime \rangle := \langle c_p^{-1} \varphi^\prime ,
c_p^{-1} \varphi \rangle$. The canonical Hermitian structure of $U =
\mathbb{C}^p \oplus (\mathbb{C}^p)^\ast$ is then given by the sum
$\langle u \oplus \varphi , u^\prime \oplus \varphi^\prime \rangle =
\langle u , u^\prime \rangle + \langle \varphi , \varphi^\prime
\rangle$.

The following is a first step toward our goal of transferring the
integral $\int_{V_\mathbb{R}} f \, \mathrm{dvol}_{V_\mathbb{R}}$ to
an integral over a non-compact symmetric space in $W$.
\begin{lem}
The image of $V_\mathbb{R}$ under the quadratic map $Q$ lies in the
intersection of $W$ and the non-negative Hermitian operators. Thus
\begin{displaymath}
    Q(V_\mathbb{R}) \subset \left\{ \begin{array}{ll}
    \mathrm{Herm}^{\ge 0} \cap \mathrm{End}(\mathbb{C}^p) \;,
    &\quad G = \mathrm{GL}_n \;, \\
    \mathrm{Herm}^{\ge 0} \cap \mathrm{Sym}_s(\mathbb{C}^p
    \oplus (\mathbb{C}^p)^\ast) \;, &\quad G = \mathrm{O}_n \;, \\
    \mathrm{Herm}^{\ge 0} \cap \mathrm{Sym}_a(\mathbb{C}^p \oplus
    (\mathbb{C}^p)^\ast)\;, &\quad G = \mathrm{Sp}_n \;.
    \end{array} \right.
\end{displaymath}
\end{lem}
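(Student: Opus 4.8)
The plan is to compute $Q(L\oplus L^\dagger)$ directly from the definition of $Q$ and check the two asserted properties -- membership in $W$ and non-negativity as a Hermitian operator -- separately. Membership in $W$ is already handled: by Lemma \ref{lem:Q-into-W} we know $Q(v)\in W$ for every $v\in V$, so in particular $Q(V_\mathbb{R})\subset W$ with no extra work. Thus the only new content is Hermitian non-negativity, and the strategy there is to exhibit $Q(v)$, for $v\in V_\mathbb{R}$, as a product of the form $M M^\dagger$ with respect to the canonical Hermitian structure on $U = \mathbb{C}^p\oplus(\mathbb{C}^p)^\ast$ described just above the statement.

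First I would recall the factorization $Q(v) = \psi(v)\tilde\psi(v)$ from the proof of Lemma \ref{lem:Q-into-W}, where $\psi(v):\mathbb{C}^n\to U$ sends $L\oplus\tilde L\mapsto L\oplus\tilde L^\mathrm{t}\beta$ and $\tilde\psi(v):U\to\mathbb{C}^n$ sends $L\oplus\tilde L\mapsto\tilde L\oplus\beta^{-1}L^\mathrm{t}$. For $v = L\oplus L^\dagger\in V_\mathbb{R}$ I would then show that, with respect to the canonical Hermitian structures on $\mathbb{C}^n$ and on $U$, the operator $\tilde\psi(v):U\to\mathbb{C}^n$ is exactly the Hermitian adjoint of $\psi(v):\mathbb{C}^n\to U$, i.e. $\tilde\psi(L\oplus L^\dagger) = \psi(L\oplus L^\dagger)^\dagger$. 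Granting this, $Q(v) = \psi(v)\psi(v)^\dagger$ is manifestly a non-negative Hermitian operator on $U$ (for any $u\in U$ one has $\langle Q(v)u,u\rangle = \langle\psi(v)^\dagger u,\psi(v)^\dagger u\rangle\ge 0$), and combined with $Q(v)\in W$ this is precisely the claimed inclusion, in each of the three cases $G=\mathrm{GL}_n,\mathrm{O}_n,\mathrm{Sp}_n$ (the $\mathrm{GL}_n$ case being the subcase $B=C=0$, i.e. $U$ replaced by $\mathbb{C}^p$, where the identity is simply $Q(L\oplus L^\dagger) = LL^\dagger$).

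The main obstacle is the bookkeeping in the identity $\tilde\psi(L\oplus L^\dagger) = \psi(L\oplus L^\dagger)^\dagger$. The subtlety is that $U$ carries \emph{two} pieces of structure that must be reconciled: the $G$-equivariant bilinear form coming from $\beta$ (via the isomorphism $V\simeq\mathrm{Hom}(U,\mathbb{C}^n)$ used to build $Q$) and the Hermitian structure on $(\mathbb{C}^p)^\ast$ induced by $c_p$. Concretely, on the first summand $\mathbb{C}^p$ the adjoint of $L\mapsto L$ acting $\mathbb{C}^n\to\mathbb{C}^p$ is $L^\dagger$, which matches the first block of $\tilde\psi$; the work is in the second summand, where I must verify that the Hermitian adjoint of the map $L\mapsto L^\mathrm{t}\beta:\mathbb{C}^n\to(\mathbb{C}^p)^\ast$ equals $\beta^{-1}L^\mathrm{t}:(\mathbb{C}^p)^\ast\to\mathbb{C}^n$. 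Using the notation of the Notation subsection, $(L^\mathrm{t}\beta)^\dagger = c_{\mathbb{C}^n}^{-1}\circ(L^\mathrm{t}\beta)^\mathrm{t}\circ c_{(\mathbb{C}^p)^\ast}$; expanding $(L^\mathrm{t}\beta)^\mathrm{t} = \beta^\mathrm{t}L = \pm\beta L$ and tracking how $c_{(\mathbb{C}^p)^\ast}$ is defined from $c_p$ should produce $\beta^{-1}L^\mathrm{t}$ up to the sign $\sigma(\beta)$, and one then checks that this sign is absorbed correctly so that the two blocks of $Q(v)$ assemble into a genuine Hermitian operator on $U$. Once this chain of canonical identifications is carried out carefully the result follows; I expect the only real care needed is in not conflating $L^\mathrm{t}$ (the transpose, a map between dual spaces) with $L^\dagger$ (the Hermitian adjoint) at any step.
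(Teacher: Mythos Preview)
Your proposal is correct and takes essentially the same approach as the paper. The paper also invokes Lemma~\ref{lem:Q-into-W} for membership in $W$, then verifies Hermiticity of $Q(L\oplus L^\dagger)$ from $(L^\mathrm{t})^\dagger=\overline{L}$ and $\beta^\dagger=\beta^{-1}$ and computes the quadratic form directly as $\langle u\oplus\varphi,\,Q(L\oplus L^\dagger)(u\oplus\varphi)\rangle=|L^\dagger u+\beta^{-1}L^\mathrm{t}\varphi|^2$; this is exactly $|\tilde\psi(L\oplus L^\dagger)(u\oplus\varphi)|^2$, so your identity $\tilde\psi(L\oplus L^\dagger)=\psi(L\oplus L^\dagger)^\dagger$ is the same computation repackaged in a slightly more conceptual form.
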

\begin{proof}
In the first case this is immediate from $Q(L \oplus L^\dagger) = L
\, L^\dagger = (L\, L^\dagger)^\dagger \ge 0\,$. To deal with the
other two cases we recall the expression
\begin{displaymath}
    Q(L \oplus L^\dagger) = \begin{pmatrix} L\, L^\dagger
    &L \, \beta^{-1} L^\mathrm{t}\\ \overline{L} \beta
    L^\dagger &\overline{L} \, L^\mathrm{t} \end{pmatrix} \;.
\end{displaymath}
We already know from Lemma \ref{lem:Q-into-W} that $Q(L \oplus
L^\dagger)\in \mathrm{Sym}_b(\mathbb{C}^p\oplus (\mathbb{C}^p)^\ast)$
where $b = s$ or $b = a\,$. The operator $Q(L \oplus L^\dagger)$ is
self-adjoint because $(L^\mathrm{t})^\dagger = \overline{L}$ and
$\beta^\dagger = \beta^{-1}$. It is non-negative because $\langle u
\oplus \varphi , Q(L\oplus L^\dagger) (u \oplus \varphi) \rangle =
\vert L^\dagger u +\beta^{-1} L^\mathrm{t} \varphi \vert^2 \ge 0\,$.
\end{proof}
\begin{rem}
The condition $n \ge p$ resp.\ $n \ge 2p$ emerging below, can be
anticipated as the condition for the $Q$-image of a generic element
in $V_\mathbb{R}$ to have full rank.
\end{rem}

\subsection{The symmetric space of regular $K$-orbits in $V_\mathbb{R}\;$}

Recall that our groups $G$ act on $V$ by $g .(L \oplus \tilde{L})=
L\, g^{-1} \oplus g \tilde{L}\,$. By the relation $(L \, g^{-1}
)^\dagger = g L^\dagger$ for unitary transformations $g \in G\,$, the
$G$-action on $V$ restricts to an action on $V_\mathbb{R}$ by the
unitary subgroup $K = \mathrm{U}_n\,$, $\mathrm{O}_n(\mathbb{R})$, or
$\mathrm{USp}_n\,$, of $G = \mathrm{GL}_n\,$, $\mathrm{O}_n
(\mathbb{C})$, resp.\ $\mathrm{Sp}_n\,$.

In this subsection we study the regular $K$-orbit structure of
$V_\mathbb{R}\,$. For this purpose we identify $V_\mathbb{R} \simeq
\mathrm{Hom}(\mathbb{C}^n , \mathbb{C}^p)$ by the $K$-equivariant
isomorphism given by $L \oplus L^\dagger \mapsto L\,$.

\subsubsection{$K = \mathrm{U}_n\,$.}

Here and elsewhere let $\mathrm{Hom}^\prime(A,B)$ denote the space of
homomorphisms of maximal rank between two vector spaces $A$ and $B$.
\begin{lem}\label{lem:3.8}
If $n \ge p$ then $\mathrm{Hom}^\prime(\mathbb{C}^n , \mathbb{C}^p) /
\mathrm{U}_n \simeq \mathrm{GL}_p / \mathrm{U}_p\,$ (diffeomorphism).
\end{lem}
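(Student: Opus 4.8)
The plan is to exhibit the diffeomorphism explicitly via the polar (singular value) decomposition of rectangular matrices. The key observation is that for $L \in \mathrm{Hom}'(\mathbb{C}^n,\mathbb{C}^p)$ the composite $L L^\dagger \in \mathrm{End}(\mathbb{C}^p)$ is positive definite Hermitian (it is non-negative by the previous lemma, and of full rank $p$ precisely because $n \ge p$ guarantees $L$ can have rank $p$, which is the defining condition of $\mathrm{Hom}'$). So I would first define the map
\begin{displaymath}
    \Phi \, : \,\, \mathrm{Hom}'(\mathbb{C}^n,\mathbb{C}^p) \to
    \mathrm{Herm}^+(\mathbb{C}^p) \;, \quad L \mapsto L L^\dagger \;,
\end{displaymath}
and note that it is smooth, $\mathrm{U}_n$-invariant (since $(Lg^{-1})(Lg^{-1})^\dagger = L g^{-1} g L^\dagger = L L^\dagger$ for $g$ unitary), and surjective: given positive Hermitian $X$, write $X = Y Y^\dagger$ with $Y$ any $p \times p$ square root, pad $Y$ by $n-p$ zero columns to get $L \in \mathrm{Hom}(\mathbb{C}^n,\mathbb{C}^p)$ of rank $p$, and then $L L^\dagger = Y Y^\dagger = X$. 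Since $\mathrm{Herm}^+(\mathbb{C}^p) \simeq \mathrm{GL}_p / \mathrm{U}_p$ as the symmetric space of positive Hermitian matrices under $X \mapsto g X g^\dagger$, it then suffices to show $\Phi$ descends to a diffeomorphism on the orbit space.

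The second step is to show the fibers of $\Phi$ are exactly the $\mathrm{U}_n$-orbits. If $L L^\dagger = L' L'^\dagger$, I want $L' = L u$ for some $u \in \mathrm{U}_n$. This is the standard fact that two full-rank $p \times n$ matrices with the same Gram-type product $L L^\dagger$ differ by right multiplication by a unitary: one way is to use the singular value decomposition $L = a\, [\,d \mid 0\,]\, b$ with $a \in \mathrm{U}_p$, $b \in \mathrm{U}_n$, and $d$ the diagonal matrix of (strictly positive, since rank $p$) singular values; then $L L^\dagger = a\, d^2\, a^\dagger$, so $L L^\dagger$ determines $a$ and $d$ up to the residual freedom permuting/rotating within singular-value eigenspaces, and that residual freedom is precisely absorbed by adjusting $b$. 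Concretely, if $L L^\dagger = L' L'^\dagger$ one checks $v := L^\dagger (L L^\dagger)^{-1} L'$ satisfies the appropriate relations; rather than a clean formula, the SVD bookkeeping makes the $\mathrm{U}_n$-freedom transparent. So the induced map $\bar\Phi : \mathrm{Hom}'(\mathbb{C}^n,\mathbb{C}^p)/\mathrm{U}_n \to \mathrm{Herm}^+(\mathbb{C}^p)$ is a continuous bijection.

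The third step is smoothness of the structure and of $\bar\Phi^{-1}$. Since $\mathrm{Hom}'$ is an open subset of a vector space and the $\mathrm{U}_n$-action is free on it modulo the stabilizer of a full-rank $L$ (the stabilizer is $\{u \in \mathrm{U}_n : L u = L\}$, which for rank-$p$ $L$ is a copy of $\mathrm{U}_{n-p}$ acting on $\ker L$), the quotient $\mathrm{Hom}'/\mathrm{U}_n$ is a smooth manifold of dimension $2np - (n^2 - (n-p)^2) = 2np - (2np - p^2) = p^2$, matching $\dim \mathrm{Herm}^+(\mathbb{C}^p) = p^2$. A clean way to finish is to produce a global smooth section of $\Phi$ over $\mathrm{Herm}^+(\mathbb{C}^p)$: namely $X \mapsto [\,X^{1/2} \mid 0\,]$, using the smooth positive square root on positive Hermitian matrices; composing this section with the quotient map gives a smooth inverse to $\bar\Phi$, so $\bar\Phi$ is a diffeomorphism. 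Finally one identifies $\mathrm{Herm}^+(\mathbb{C}^p)$ with $\mathrm{GL}_p/\mathrm{U}_p$ via $g \mathrm{U}_p \mapsto g g^\dagger$, which is the standard Cartan-type diffeomorphism and is compatible with the $\mathrm{GL}_p$-action inherited from the $L \mapsto g L$ side of $V$.

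The main obstacle is the second step — verifying that the fibers of $L \mapsto L L^\dagger$ are single $\mathrm{U}_n$-orbits. The positivity and surjectivity are essentially immediate, but the orbit statement genuinely uses $n \ge p$ (for $n < p$ the products $L L^\dagger$ are never invertible and the dimension count fails) and requires either a careful SVD argument or an explicit construction of the connecting unitary, including checking one can always complete a partial isometry between the ($n-p$)-dimensional kernels. I would present the SVD route since it simultaneously handles existence, the fiber description, and makes the stabilizer $\mathrm{U}_{n-p}$ visible, which is also what will be needed for the volume-of-orbit computations later in the paper.
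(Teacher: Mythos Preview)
Your argument is correct, but it takes a different route from the paper. The paper proceeds structurally: for a regular $L$ the orthogonal decomposition $\mathbb{C}^n = \ker(L) \oplus \mathrm{im}(L^\dagger)$ picks out a point of the Grassmannian $(\mathrm{U}_p \times \mathrm{U}_{n-p}) \backslash \mathrm{U}_n$, and the restriction $L\vert_{\mathrm{im}(L^\dagger)} : \mathrm{im}(L^\dagger) \to \mathbb{C}^p$ gives an element of $\mathrm{GL}_p$; this yields the associated-bundle description $\mathrm{Hom}'(\mathbb{C}^n,\mathbb{C}^p) \simeq \mathrm{GL}_p \times_{\mathrm{U}_p} (\mathrm{U}_{n-p} \backslash \mathrm{U}_n)$, and the lemma follows by quotienting out the right $\mathrm{U}_n$-action. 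Your approach instead goes directly through the invariant $L \mapsto L L^\dagger$ (which is nothing but the map $Q$ restricted to $V_\mathbb{R}$), identifies its image as $\mathrm{Herm}^+(\mathbb{C}^p) \simeq \mathrm{GL}_p/\mathrm{U}_p$, and handles the fibers by polar/SVD decomposition. The paper's bundle description has the advantage that it is exactly the parametrization $\phi:(G_p/K_p)\times(K_n/K_{n,p}) \to X'_{p,n}$ used in the subsequent integration formula, and the same template carries over verbatim to the $\mathrm{O}_n$ and $\mathrm{USp}_n$ cases. Your approach is more elementary and, pleasantly, anticipates the later identification of $Q(V_\mathbb{R})$ with $\mathrm{Herm}^+ \cap W$ via the Cartan embedding; but you would still need to recover the bundle picture separately when it comes to computing the Jacobian and volume factors.
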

\begin{proof}
Since a regular transformation $L \in \mathrm{Hom}^\prime
(\mathbb{C}^n , \mathbb{C}^p)$ is surjective, the space $\mathrm{im}
(L^\dagger)$ has dimension $p\,$. Thus the decomposition
$\mathbb{C}^n = \mathrm{ker}(L) \oplus \mathrm{im}(L^\dagger)$
defines an element of the Grassmannian $(\mathrm{U}_p \times
\mathrm{U}_{n-p}) \setminus \mathrm{U}_n$ of complex $p$-planes in
$\mathbb{C}^n$. Fixing some unitary basis of $\mathrm{im}
(L^\dagger)$, we can identify the restriction $L : \mathrm{im}
(L^\dagger) \to \mathbb{C}^p$ with an element of $\mathrm{GL}_p\,$.
In other words,
\begin{displaymath}
    \mathrm{Hom}^\prime(\mathbb{C}^n , \mathbb{C}^p) \simeq
    \mathrm{GL}_p \times_{\mathrm{U}_p} (\mathrm{U}_{n-p} \backslash
    \mathrm{U}_n ) \;,
\end{displaymath}
which gives the desired statement by taking the quotient by the right
$\mathrm{U}_n$-action.
\end{proof}

\subsubsection{$K = \mathrm{O}_n\;$}

To establish a similar result for the case of orthogonal symmetry, we
need the following preparation. (Here and in the remainder of this
subsection $\mathrm{O}_n \equiv \mathrm{O}_n(\mathbb{R})$ means the
real orthogonal group.) Recalling that we are given a symmetric
isomorphism $\delta : \, \mathbb{C}^n \to (\mathbb{C}^n)^\ast$, we
associate with $L \in \mathrm{Hom} (\mathbb{C}^n , \mathbb{C}^p)$ an
extended complex linear operator $\psi(L) \in \mathrm{Hom}
(\mathbb{C}^n , \mathbb{C}^p \oplus (\mathbb{C}^p)^\ast)$ by
\begin{displaymath}
    \psi(L) v = (L\, v) \oplus \overline{L}\, \delta v \;.
\end{displaymath}
\begin{lem}
The mapping $\psi : \, L \mapsto L \oplus (\overline{L} \circ
\delta)$ determines an $\mathrm{O}_n$-equivariant isomorphism
$\mathrm{Hom}(\mathbb{C}^n , \mathbb{C}^p) \to \mathrm{Hom}
(\mathbb{R}^n , \mathbb{R}^{2p})$ of vector spaces with complex
structure.
\end{lem}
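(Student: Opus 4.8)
The plan is to make the two relevant real structures explicit, to observe that $\psi(L)$ intertwines them, and then to reduce bijectivity, $\mathbb{C}$-linearity and $\mathrm{O}_n$-equivariance to the elementary fact that a complex-linear map out of $\mathbb{C}^n = \mathbb{R}^n \otimes_{\mathbb{R}} \mathbb{C}$ is the same datum as an $\mathbb{R}$-linear map out of the real form $\mathbb{R}^n$. \emph{Step 1: the real structures.} On $\mathbb{C}^n$ I would introduce $\kappa_n := c_n^{-1} \circ \delta : \mathbb{C}^n \to \mathbb{C}^n$. Since the real orthogonal group $\mathrm{O}_n \equiv \mathrm{O}_n(\mathbb{R})$ is, by construction, the real form of $\mathrm{O}_n(\mathbb{C})$ compatible with the Hermitian structure — equivalently, in a unitary basis $\delta$ is the Kronecker symbol — the compatibility $\delta \circ c_n^{-1} \circ \delta = c_n$ holds, so $\kappa_n$ is an antilinear involution; $\mathbb{R}^n$ is precisely its fixed-point space, and every $g \in \mathrm{O}_n(\mathbb{R})$ commutes with $\kappa_n$ and hence preserves $\mathbb{R}^n$. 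On $U_p := \mathbb{C}^p \oplus (\mathbb{C}^p)^\ast$ I would introduce the antilinear involution $\gamma_p(v \oplus \varphi) := c_p^{-1}\varphi \oplus c_p v$ and set $\mathbb{R}^{2p} := \mathrm{Fix}(\gamma_p) = \{\, w \oplus c_p w : w \in \mathbb{C}^p \,\}$, a real $2p$-dimensional subspace. The content of the lemma is that this space carries the complex structure $J$ transported from $\mathbb{C}^p$ along the (real-linear) first projection, $J(w \oplus c_p w) := (\mathrm{i} w) \oplus c_p(\mathrm{i} w)$; correspondingly $\mathrm{Hom}(\mathbb{R}^n, \mathbb{R}^{2p})$ is made a complex vector space by post-composition with $J$, and it is this complex structure that is meant in the statement.

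\emph{Step 2: $\psi$ respects the structures.} Using $\overline{L} = c_p \circ L \circ c_n^{-1}$ together with $\kappa_n^2 = \mathrm{id}$, a one-line computation gives $\psi(L) \circ \kappa_n = \gamma_p \circ \psi(L)$. Hence $\psi(L)$ maps $\mathbb{R}^n = \mathrm{Fix}(\kappa_n)$ into $\mathbb{R}^{2p} = \mathrm{Fix}(\gamma_p)$ and restricts to an element $\psi(L)|_{\mathbb{R}^n} \in \mathrm{Hom}(\mathbb{R}^n, \mathbb{R}^{2p})$. More concretely: for $v \in \mathbb{R}^n$ one has $\delta v = c_n v$, so $\overline{L}\, \delta v = c_p(L v)$, and therefore $\psi(L)|_{\mathbb{R}^n} : v \mapsto (L v) \oplus c_p(L v)$. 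In other words, under the complex-linear isomorphism $\iota : \mathbb{C}^p \stackrel{\sim}{\to} \mathbb{R}^{2p}$, $w \mapsto w \oplus c_p w$ (which by the displayed formula for $J$ intertwines multiplication by $\mathrm{i}$ with $J$), the restriction $\psi(L)|_{\mathbb{R}^n}$ corresponds simply to $L|_{\mathbb{R}^n} : \mathbb{R}^n \to \mathbb{C}^p$.

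\emph{Step 3: conclusion.} From the description in Step 2 everything is immediate. The restriction map $L \mapsto L|_{\mathbb{R}^n}$ is a bijection $\mathrm{Hom}(\mathbb{C}^n, \mathbb{C}^p) \to \mathrm{Hom}_{\mathbb{R}}(\mathbb{R}^n, \mathbb{C}^p)$, because a $\mathbb{C}$-linear map on $\mathbb{C}^n$ is determined by, and may be prescribed arbitrarily as, an $\mathbb{R}$-linear map on the real form $\mathbb{R}^n$; it is $\mathbb{C}$-linear for the evident complex structures (post-composition with $\mathrm{i}$); and it intertwines $L \mapsto L g^{-1}$ on the source with $M \mapsto M g^{-1}$ on the target for $g \in \mathrm{O}_n(\mathbb{R})$, since $g^{-1}$ preserves $\mathbb{R}^n$. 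Post-composing with $\iota$ then exhibits $\psi$ as determining an $\mathrm{O}_n$-equivariant $\mathbb{C}$-linear isomorphism $\mathrm{Hom}(\mathbb{C}^n, \mathbb{C}^p) \to \mathrm{Hom}(\mathbb{R}^n, \mathbb{R}^{2p})$, as claimed.

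The bookkeeping in Step 3 uses nothing beyond linear algebra, and the only genuine computation is the involutivity $\kappa_n^2 = \mathrm{id}$ (equivalently $\delta \circ c_n^{-1} \circ \delta = c_n$), which is transparent once one works in a unitary basis where $\delta$ is the Kronecker symbol. I expect the main — if modest — obstacle to be Step 1: correctly pinning down the real structure $\gamma_p$ on $U_p$ and the induced complex structure $J$ on $\mathbb{R}^{2p}$, so that the target "$\mathrm{Hom}(\mathbb{R}^n, \mathbb{R}^{2p})$ as a vector space with complex structure" has an unambiguous meaning; this is exactly the identification that will later be used to rewrite the quadratic map $Q$ over the real form in the orthogonal case.
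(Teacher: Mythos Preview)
Your proof is correct and follows essentially the same route as the paper: your antilinear involution $\kappa_n$ is the paper's $T = c_n^{-1}\delta$, your $\gamma_p$ is the paper's $C$, and the key intertwining relation $\psi(L)\circ\kappa_n = \gamma_p\circ\psi(L)$ is exactly the computation $C\psi(L)v = \psi(L)v$ for $v = Tv$ that the paper carries out. Your Step~3 spells out the bijectivity argument (restriction to the real form of a complexification) a bit more explicitly than the paper, which simply asserts it, but otherwise the two arguments coincide.
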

\begin{proof}
Recall that $\overline{L} = c_p \circ L \circ c_n^{-1}$ where $c_p
:\, \mathbb{C}^p \to (\mathbb{C}^p)^\ast$ and $c_n :\, \mathbb{C}^n
\to (\mathbb{C}^n)^\ast$ are the canonical anti-linear isomorphisms
given by the Hermitian structures of $\mathbb{C}^p$ resp.\
$\mathbb{C}^n$. Writing $T := c_n^{-1} \delta$ we have
\begin{displaymath}
    \psi(L) = L \oplus (c_p \, L \, T) \;.
\end{displaymath}
Because $\delta$ is a symmetric isomorphism, the anti-unitary
operator $T : \mathbb{C}^n \to \mathbb{C}^n$ squares to $T^2 = 1\,$.
Let $\mathrm{Fix}(T) \subset \mathbb{C}^n$ denote the real subspace
of fixed points $v = Tv$, and define on $U = \mathrm{C}^p \oplus
(\mathbb{C}^p)^\ast$ an anti-linear involution $C$ by
\begin{displaymath}
    C (u \oplus \varphi) := (c_p^{-1} \varphi) \oplus c_p \, u \;.
\end{displaymath}
If $\mathrm{Fix}(C) \subset U$ denotes the real subspace of fixed
points of $C$, then from
\begin{displaymath}
    C\psi(L)v = L\, T v \oplus c_p\, L v \stackrel{v=Tv}{=} \psi(L)v
\end{displaymath}
we see that the $\mathbb{C}$-linear operator $\psi(L)$ maps
$\mathrm{Fix}(T) \simeq \mathbb{R}^n$ into $\mathrm{Fix}(C) \simeq
\mathbb{R}^{2p}$. Thus we may identify $\psi(L)$ with an element of
$\mathrm{Hom}(\mathbb{R}^n , \mathbb{R}^{2p})$. The correspondence $L
\mapsto \psi(L)$ is bijective and transforms multiplication by
$\sqrt{-1}$, $L \mapsto \mathrm{i} L\,$, into $\psi(L) \mapsto J
\psi(L)$ where $J : \, u + c_p\, u \mapsto \mathrm{i} u - \mathrm{i}
c_p \, u$ is the complex structure of the real vector space
$\mathbb{R}^{2p} \simeq \mathrm{Fix}(C)$.

By definition, the elements of the real orthogonal group
$\mathrm{O}_n$ commute with $T$. Thus $\psi(L) \circ k = \psi(L \,
k)$ for $k \in \mathrm{O}_n\,$, which means that $\psi$ is
$\mathrm{O}_n$-equivariant.
\end{proof}
As before, let $U = \mathbb{C}^p \oplus (\mathbb{C}^p)^\ast$ be
equipped with the Hermitian structure which is induced from that of
$\mathbb{C}^p$ by $\langle u \oplus \varphi, u^\prime \oplus
\varphi^\prime \rangle = \langle u , u^\prime \rangle + \langle
c_p^{-1} \varphi^\prime , c_p^{-1} \varphi \rangle$. Its restriction
to $\mathrm{Fix}(C) \simeq \mathbb{R}^{2p}$ is a Euclidean structure
defining the real orthogonal group $\mathrm{O}_{2p}\;$.
\begin{lem}
If $n\ge 2p$ then $\mathrm{Hom}^\prime(\mathbb{R}^n,\mathbb{R}^{2p})/
\mathrm{O}_n \simeq \mathrm{GL}_{2p}(\mathbb{R})/ \mathrm{O}_{2p}\;$.
\end{lem}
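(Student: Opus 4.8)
The plan is to imitate the proof of Lemma~\ref{lem:3.8} step for step, replacing the complex unitary geometry by the real orthogonal geometry supplied by the fixed-point spaces $\mathrm{Fix}(T)\simeq\mathbb{R}^n$ and $\mathrm{Fix}(C)\simeq\mathbb{R}^{2p}$ together with their Euclidean structures introduced above. Throughout, $L^\dagger$ will mean the adjoint taken with respect to these \emph{real} Euclidean structures, not the Hermitian adjoint on the ambient complex spaces.

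First I would note that if $n\ge 2p$ then a maximal-rank operator $L\in\mathrm{Hom}^\prime(\mathbb{R}^n,\mathbb{R}^{2p})$ is surjective, so $\mathrm{rank}(L)=2p$, the Euclidean adjoint $L^\dagger:\mathbb{R}^{2p}\to\mathbb{R}^n$ is injective, and $\mathbb{R}^n=\ker(L)\oplus\mathrm{im}(L^\dagger)$ is an orthogonal decomposition into summands of dimension $n-2p$ and $2p$. The $2p$-dimensional summand $\mathrm{im}(L^\dagger)$ is then a point of the Grassmannian $(\mathrm{O}_{2p}\times\mathrm{O}_{n-2p})\backslash\mathrm{O}_n$, while the choice of an orthonormal frame of $\mathrm{im}(L^\dagger)$ identifies the restriction $L:\mathrm{im}(L^\dagger)\to\mathbb{R}^{2p}$, which is an isomorphism, with an element of $\mathrm{GL}_{2p}(\mathbb{R})$ that is well defined up to the right action of $\mathrm{O}_{2p}$ stemming from the frame ambiguity. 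Assembling the two data yields an $\mathrm{O}_n$-equivariant diffeomorphism
\begin{displaymath}
    \mathrm{Hom}^\prime(\mathbb{R}^n,\mathbb{R}^{2p}) \simeq \mathrm{GL}_{2p}(\mathbb{R}) \times_{\mathrm{O}_{2p}} (\mathrm{O}_{n-2p}\backslash\mathrm{O}_n) \;,
\end{displaymath}
with the right $\mathrm{O}_n$-action living on the second factor, and the asserted diffeomorphism $\mathrm{Hom}^\prime(\mathbb{R}^n,\mathbb{R}^{2p})/\mathrm{O}_n\simeq\mathrm{GL}_{2p}(\mathbb{R})/\mathrm{O}_{2p}$ follows by passing to the $\mathrm{O}_n$-quotient. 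That this quotient is a genuine manifold I would justify by noting that $\mathrm{O}_n$ is compact, hence the action is proper, and that the stabilizer of a regular $L$ under precomposition by $\mathrm{O}_n$ is exactly the copy of $\mathrm{O}_{n-2p}$ acting on $\ker(L)$: the condition $Lk=L$ forces $(k-1)(\mathbb{R}^n)\subseteq\ker(L)$, and a short computation with the Euclidean structure shows that an orthogonal $k$ with this property acts as the identity on $\mathrm{im}(L^\dagger)$, hence as an element of $\mathrm{O}(\ker L)\simeq\mathrm{O}_{n-2p}$; thus the regular locus has constant orbit type.

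The step that I expect to need the most care is checking that the pair of assignments, $L\mapsto\mathrm{im}(L^\dagger)$ together with the frame-dependent element of $\mathrm{GL}_{2p}(\mathbb{R})$, really glues into a smooth, $\mathrm{O}_n$-equivariant fibre-bundle isomorphism rather than a mere set-theoretic bijection. This is precisely the content that was handled in the unitary case, and it transcribes without essential change once one keeps track of the real structures: the only genuinely new point is to verify that $\ker(L)\perp\mathrm{im}(L^\dagger)$ and $|L|_{\mathrm{im}(L^\dagger)}$ is invertible in the real sense coming from $\mathrm{Fix}(T)$ and $\mathrm{Fix}(C)$, which is immediate from the previous lemma. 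Everything else is routine linear algebra.
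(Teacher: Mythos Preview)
Your proof is correct and follows essentially the same route as the paper: the orthogonal decomposition $\mathbb{R}^n=\ker(L)\oplus\mathrm{im}(L^\dagger)$, the associated-bundle identification $\mathrm{Hom}'(\mathbb{R}^n,\mathbb{R}^{2p})\simeq \mathrm{GL}_{2p}(\mathbb{R})\times_{\mathrm{O}_{2p}}(\mathrm{O}_{n-2p}\backslash\mathrm{O}_n)$, and passage to the $\mathrm{O}_n$-quotient. You supply more detail than the paper (stabilizer computation, properness, smoothness), but the argument is the same.
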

\begin{proof}
A regular linear operator $L : \, \mathbb{R}^n \to \mathbb{R}^{2p}$
determines an orthogonal decomposition $\mathbb{R}^n = \mathrm{ker}
(L) \oplus \mathrm{im}(L^\dagger)$ into Euclidean subspaces of
dimension $n-2p$ resp.\ $2p$ and hence a point of the symmetric space
$(\mathrm{O}_{2p} \times \mathrm{O}_{n-2p}) \setminus \mathrm{O}_n
\,$. Therefore, arguing in the same way as in the proof of Lemma
\ref{lem:3.8}, we have an identification
\begin{displaymath}
    \mathrm{Hom}^\prime(\mathbb{R}^n , \mathbb{R}^{2p})
    \simeq \mathrm{GL}_{2p}(\mathbb{R}) \times_{\mathrm{O}_{2p}}
    (\mathrm{O}_{n-2p} \backslash \mathrm{O}_n ) \;.
\end{displaymath}
The desired statement follows by taking the quotient by $\mathrm{O}_n
\,$.
\end{proof}
\begin{rem}
Although each of $\mathrm{GL}_{2p}(\mathbb{R})$ and $\mathrm{O}_{2p}$
has two connected components, their quotient $\mathrm{GL}_{2p}(
\mathbb{R}) / \mathrm{O}_{2p} = \mathrm{GL}_{2p}^+ (\mathbb{R}) /
\mathrm{SO}_{2p}$ is connected.
\end{rem}
For later purposes note that the anti-unitary map $C : \, U \to U$
combines with the Hermitian structure of $U$ to give the canonical
symmetric bilinear form of $U:$
\begin{displaymath}
    \langle C (u \oplus \varphi) , u^\prime \oplus \varphi^\prime
    \rangle = \varphi^\prime(u) + \varphi(u^\prime) = s(u \oplus
    \varphi , u^\prime \oplus \varphi^\prime) \;.
\end{displaymath}

\subsubsection{$K = \mathrm{USp}_n\;$}

In the final case to be addressed, we are given an alternating
isomorphism $\varepsilon : \, \mathbb{C}^n \to (\mathbb{C}^n)^\ast$
and hence an anti-unitary operator $T := c_n^{-1} \varepsilon : \,
\mathbb{C}^n \to \mathbb{C}^n$ which squares to $T^2 = -1$. Note $n
\in 2 \mathbb{N}\,$. The Hermitian vector space $\mathbb{C}^n$ now
carries the extra structure of a complex symplectic vector space with
symplectic form
\begin{displaymath}
    \omega(v,v^\prime) := \langle Tv , v^\prime \rangle
    = \overline{\langle T^2 v , T v^\prime \rangle}
    = - \omega(v^\prime , v) \;.
\end{displaymath}
The symmetry group of the Hermitian symplectic vector space
$\mathbb{C}^n$ is $K =\mathrm{USp}_n\,$.

To do further analysis in this situation, it is convenient to fix
some decomposition
\begin{displaymath}
    \mathbb{C}^n = P \oplus T (P)
\end{displaymath}
which is orthogonal with respect to the Hermitian structure of
$\mathbb{C}^n$ and Lagrangian w.r.t.\ the symplectic structure. The
latter means that $P$ and $T(P)$ are non-degenerately paired by
$\omega\,$, so that we have an isomorphism $T(P) \stackrel{\sim}
{\to} P^\ast$ by $Tv \mapsto \omega(Tv\, , \, \cdot) = - \langle v\,
, \, \cdot \rangle$.

Writing $U = \mathbb{C}^p \oplus (\mathbb{C}^p)^\ast$ we still define
$\psi : \, \mathrm{Hom}(\mathbb{C}^n , \mathbb{C}^p) \to \mathrm{Hom}
(\mathbb{C}^n,U)$ by
\begin{displaymath}
    \psi(L) = L \oplus \overline{L} \, \varepsilon
    = L \oplus c_p\, L\, T \;,
\end{displaymath}
and invoke the canonical Hermitian structure of $U$ to determine the
adjoint $\psi(L)^\dagger$. For future reference we note that the map
$L \mapsto \psi(L)$ is $\mathrm{USp}_n$-equivariant.
\begin{lem}
The decomposition $\mathbb{C}^n = \mathrm{ker}\, \psi(L) \oplus
\mathrm{im}\, \psi(L)^\dagger$ is a decomposition into Hermitian
symplectic subspaces.
\end{lem}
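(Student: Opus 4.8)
The plan is to reduce the statement to the single observation that both summands are stable under the anti-unitary operator $T = c_n^{-1}\varepsilon$, after which all the required structure restricts automatically. First I would record the purely Hermitian fact that for any $\mathbb{C}$-linear map $A$ between finite-dimensional Hermitian vector spaces one has $\mathrm{im}(A^\dagger) = (\ker A)^\perp$ (since $\mathrm{im}(A^\dagger)^\perp = \ker A$ by $\langle A^\dagger w , v \rangle = \langle w , A v \rangle$, and one takes orthogonal complements). Applied to $A = \psi(L) : \mathbb{C}^n \to U$, this already yields that $\mathbb{C}^n = \ker\psi(L) \oplus \mathrm{im}\,\psi(L)^\dagger$ is the orthogonal direct-sum decomposition, with $\mathrm{im}\,\psi(L)^\dagger = (\ker\psi(L))^\perp$. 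In particular the restriction of the positive Hermitian form of $\mathbb{C}^n$ to each summand is again positive.

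Next I would show that $\ker\psi(L)$ is $T$-invariant. Using the explicit form $\psi(L) = L \oplus c_p\, L\, T$, a vector $v$ lies in $\ker\psi(L)$ exactly when $L v = 0$ and $L T v = 0$ (the latter because $c_p$ is a bijection). Then, using $T^2 = -1$,
\begin{displaymath}
    \psi(L)(Tv) = L(Tv) \oplus c_p\, L\, T(Tv)
    = LTv \oplus \big({-}c_p\, L v\big) = 0 \;,
\end{displaymath}
so $Tv \in \ker\psi(L)$. Since $T$ is anti-unitary with $T^{-1} = -T$, the orthogonal complement of any $T$-invariant subspace is again $T$-invariant: for $w \perp \ker\psi(L)$ and $u \in \ker\psi(L)$ one writes $u = T(T^{-1}u)$ with $T^{-1}u = -Tu \in \ker\psi(L)$, whence $\langle Tw , u \rangle = \overline{\langle w , T^{-1}u \rangle} = 0$. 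Thus $\mathrm{im}\,\psi(L)^\dagger = (\ker\psi(L))^\perp$ is $T$-invariant as well.

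Finally I would observe that a $T$-invariant subspace $S \subseteq \mathbb{C}^n$ inherits the full Hermitian symplectic structure: $T|_S$ is anti-unitary with $(T|_S)^2 = -1$, and the restricted symplectic form $\omega(v,v') = \langle Tv , v' \rangle$ is non-degenerate on $S$, since for nonzero $v \in S$ one has $Tv \in S$ and $\omega(v,Tv) = \langle Tv , Tv \rangle > 0$. Applying this to $S = \ker\psi(L)$ and $S = \mathrm{im}\,\psi(L)^\dagger$ completes the proof. The only genuine step is the $T$-invariance of $\ker\psi(L)$; everything else is formal, and the one point to be careful about is unwinding the anti-linear maps $c_n, c_p$ in the definitions of $T$, $\psi(L)$ and of $\psi(L)^\dagger$ relative to the canonical Hermitian structure of $U$, after which the kernel computation is the two-line manipulation above using $T^2 = -1$.
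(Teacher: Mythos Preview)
Your proof is correct and follows essentially the same approach as the paper: establish the orthogonal splitting via the Hermitian adjoint, show $T$-invariance of $\ker\psi(L)$ from the two conditions $Lv=0$ and $LTv=0$, pass $T$-invariance to the orthogonal complement by anti-unitarity, and conclude that $\omega$ restricts non-degenerately. You have simply made explicit a few steps (the computation of $\psi(L)(Tv)$ via $T^2=-1$ and the non-degeneracy witness $\omega(v,Tv)=\langle Tv,Tv\rangle>0$) that the paper leaves to the reader.
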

\begin{proof}
By the definition of the operation of taking the Hermitian adjoint,
the space $\mathrm{im}\, \psi(L)^\dagger$ is the orthogonal
complement of $\mathrm{ker}\, \psi(L)$ in the Hermitian vector space
$\mathbb{C}^n$. Since $U = \mathbb{C}^p \oplus (\mathbb{C}^p)^\ast$
is an orthogonal sum and $c_p : \,\mathbb{C}^p \to (\mathbb{C}^p
)^\ast$ is a bijection, the condition $0 = \psi(L) v = L\,v \oplus
c_p \,L\,T v$ implies that if $v$ is in the kernel of $\psi(L)$ then
so is $T v$. Thus $T$ preserves the subspace $\mathrm{ker}\,\psi(L)
$. Being anti-unitary, the operator $T$ then preserves also the
orthogonal complement $\mathrm{im}\, \psi(L)^\dagger$. It therefore
follows that $\omega$ restricts to a non-degenerate symplectic form
on both subspaces.
\end{proof}
Next, let an anti-unitary operator $C : \, U \to U$ with square $C^2
= -1$ be defined by
\begin{displaymath}
    C (u \oplus \varphi) = (c_p^{-1} \varphi) \oplus (- c_p\, u) \;.
\end{displaymath}
The associated symplectic structure of $U$ is given by the canonical
alternating form:
\begin{displaymath}
    - \langle C(u \oplus \varphi) , u^\prime \oplus \varphi^\prime
    \rangle = \varphi^\prime(u) - \varphi(u^\prime) = a(u \oplus
    \varphi , u^\prime \oplus \varphi^\prime) \;.
\end{displaymath}
A short computation shows that the complex linear operator $\psi(L) :
\, \mathbb{C}^n \to U$ satisfies the relation $\psi(L) = C \psi(L)
T^{-1}$. Let us therefore decompose $\psi(L)$ according to
\begin{displaymath}
    \psi(L) \, : \,\, P \oplus P^\ast \to \mathbb{C}^p \oplus
    (\mathbb{C}^p)^\ast \;.
\end{displaymath}
Recalling $T^2 = -1$ and the fact that the anti-unitary operator $T$
exchanges the subspaces $P$ and $P^\ast$, we then see that $\psi(L) =
C \psi(L) T^{-1}$ is already determined by its blocks $\alpha_1 :=
\psi(L) \vert_{P \to \mathbb{C}^p}$ and $\alpha_2 := \psi(L) \vert_{
P^\ast \to \mathbb{C}^p}:$
\begin{displaymath}
    \psi(L) = \begin{pmatrix} \alpha_1 &\alpha_2\\ - \bar\alpha_2
    &\bar\alpha_1 \end{pmatrix} \;.
\end{displaymath}
This means that the matrix expression of $\psi(L)$ with respect to
symplectic bases of $P \oplus P^\ast$ and $\mathbb{C}^p \oplus
(\mathbb{C}^p)^\ast$ consists of real quaternions $q \in \mathbb{H}:$
\begin{displaymath}
    q = q_0 \begin{pmatrix} 1 &0\\ 0 &1 \end{pmatrix} + q_1
    \begin{pmatrix} 0 &\mathrm{i}\\ \mathrm{i} &0 \end{pmatrix}
    + q_2 \begin{pmatrix} 0 &1\\ -1 &0 \end{pmatrix} + q_3
    \begin{pmatrix} \mathrm{i} &0\\ 0 &-\mathrm{i} \end{pmatrix}
    \quad (q_j \in \mathbb{R}) \;.
\end{displaymath}

Now assume that $n \ge 2p$ and $\psi(L)$ is regular. Then $\psi(L) :
\, \mathrm{im}\, \psi(L)^\dagger \to U$ is an isomorphism of
Hermitian symplectic vector spaces. On expressing this isomorphism
with respect to symplectic bases of $\mathrm{im}\, \psi(L)^\dagger$
and $U$, we can identify it with an element of $\mathrm{GL}_p(
\mathbb{H})$, the group of invertible $p \times p$ matrices with real
quaternions for their entries. Note that another characterization of
the elements $g$ of $\mathrm{GL}_p (\mathbb{H})$ as a subgroup of
$\mathrm{GL}(U)$ is by the equation $C g = g \,C\,$. The subgroup of
unitary elements in $\mathrm{GL}_p (\mathbb{H})$ is the unitary
symplectic group $\mathrm{USp}(U) \equiv \mathrm{USp}_{2p}\;$.

The rest of the argument goes the same way as before: a regular
transformation $\psi(L)$ is determined by a Hermitian symplectic
decomposition $\mathbb{C}^n = \mathrm{ker}\, \psi(L) \oplus
\mathrm{im}\, \psi(L)^\dagger$ together with a $\mathrm{GL}_p
(\mathbb{H})$-transformation from $\mathrm{im}\, \psi(L)^\dagger$ to
$U\,$; taking the quotient by the right action of $\mathrm{USp}_n$ we
directly arrive at the following statement.
\begin{lem}
If $n \ge 2p$ then the space of regular $\mathrm{USp}_n$-orbits in
the image of $\mathrm{Hom}(\mathbb{C}^n,\mathbb{C}^p)$ under $\psi$
is isomorphic to $\mathrm{GL}_{p}(\mathbb{H}) / \mathrm{USp}_{2p}\,
\,$.
\end{lem}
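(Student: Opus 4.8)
The plan is to run the same argument that proved Lemma~\ref{lem:3.8} and its orthogonal counterpart, with complex (resp.\ real) linear algebra replaced throughout by quaternionic linear algebra; indeed the paragraph immediately preceding the statement already assembles all the needed ingredients, so the task is to organize them. Write $N := \psi\bigl(\mathrm{Hom}^\prime(\mathbb{C}^n,\mathbb{C}^p)\bigr)$ for the set of regular operators in the image of $\psi$, fix $L$ with $\psi(L)$ regular, and use the preceding lemma to decompose $\mathbb{C}^n = \mathrm{ker}\,\psi(L) \oplus \mathrm{im}\,\psi(L)^\dagger$ orthogonally into Hermitian symplectic subspaces. Regularity of $\psi(L)$ forces $\dim_{\mathbb{C}} \mathrm{im}\,\psi(L)^\dagger = \dim_{\mathbb{C}} U = 2p$ and $\dim_{\mathbb{C}} \mathrm{ker}\,\psi(L) = n-2p$; this is exactly the point where $n \ge 2p$ enters (it is the condition for the regular locus to be non-empty). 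The restriction $\psi(L)\colon \mathrm{im}\,\psi(L)^\dagger \to U$ is then a $\mathbb{C}$-linear isomorphism, and the identity $\psi(L) = C\,\psi(L)\,T^{-1}$ together with $C^2 = T^2 = -1$ and the fact (from the preceding lemma) that both $T$ and $C$ preserve the relevant subspaces shows that this restriction intertwines the quaternionic structures on source and target, hence corresponds, after a choice of symplectic (equivalently quaternionic) bases, to an element of $\mathrm{GL}_p(\mathbb{H})$; the unitary such maps form $\mathrm{USp}(U) = \mathrm{USp}_{2p}$.

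Second, I would package these data into an associated-bundle description. The Hermitian symplectic $2p$-dimensional subspaces of $\mathbb{C}^n$ form the compact symmetric space $(\mathrm{USp}_{2p} \times \mathrm{USp}_{n-2p}) \backslash \mathrm{USp}_n$, since $\mathrm{USp}_n$ acts transitively on them with stabilizer $\mathrm{USp}_{2p} \times \mathrm{USp}_{n-2p}$. A regular $\psi(L)$ is recovered from the plane $\mathrm{im}\,\psi(L)^\dagger$ together with a $\mathrm{GL}_p(\mathbb{H})$-isomorphism of that plane onto $U$, and changing the chosen symplectic basis of the plane by an element of $\mathrm{USp}_{2p}$ twists this isomorphism by right multiplication. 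This yields
\[
  N \;\simeq\; \mathrm{GL}_p(\mathbb{H}) \times_{\mathrm{USp}_{2p}} \bigl(\mathrm{USp}_{n-2p} \backslash \mathrm{USp}_n\bigr),
\]
in exact parallel with the $\mathrm{U}_n$ and $\mathrm{O}_n$ cases. Taking the quotient by the residual right $\mathrm{USp}_n$-action then collapses the bundle fibre and leaves $N/\mathrm{USp}_n \simeq \mathrm{GL}_p(\mathbb{H})/\mathrm{USp}_{2p}$, which is the assertion; as before one checks this is a diffeomorphism of smooth manifolds (the homogeneous-space structures making everything smooth).

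The one place where something genuinely must be checked — the rest being bookkeeping identical to Lemma~\ref{lem:3.8} — is the quaternion-linearity claim: that $\psi(L)$ restricted to $\mathrm{im}\,\psi(L)^\dagger$ commutes with the quaternionic structures and therefore lands in $\mathrm{GL}_p(\mathbb{H})$ rather than merely in $\mathrm{GL}_{2p}(\mathbb{C})$, and that the unitary elements of $\mathrm{GL}_p(\mathbb{H}) \subset \mathrm{GL}(U)$, cut out by $Cg = gC$ and $g^\dagger g = 1$, are precisely $\mathrm{USp}_{2p}$. This follows from the block form $\psi(L) = \begin{pmatrix} \alpha_1 & \alpha_2 \\ -\bar\alpha_2 & \bar\alpha_1 \end{pmatrix}$ already derived, i.e.\ from $C\,\psi(L) = \psi(L)\,T$, once one notes that $C$ restricts to the quaternionic structure of $U$ and $T$ to that of $\mathrm{im}\,\psi(L)^\dagger$ (both subspaces being invariant under the respective anti-unitary involutions by the preceding lemma). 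With that verified, I would keep the remainder brief, merely citing the structure of the proof of Lemma~\ref{lem:3.8}.
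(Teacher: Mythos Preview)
Your proposal is correct and follows essentially the same route as the paper. The paper in fact gives no formal proof at all beyond the sentence immediately preceding the lemma, which says that the argument ``goes the same way as before'' via the Hermitian symplectic decomposition $\mathbb{C}^n = \mathrm{ker}\,\psi(L) \oplus \mathrm{im}\,\psi(L)^\dagger$ and the identification of the restricted isomorphism with an element of $\mathrm{GL}_p(\mathbb{H})$; your write-up is a faithful and somewhat more explicit rendering of exactly that.
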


\subsection{Integration formula for $K$-invariant functions}
\label{sect:3.4}

Let us now summarize the results of the previous section. To do this
in a concise way covering all three cases at once, we will employ the
notation laid down in Table \ref{fig:1}.
\begin{table}
\begin{center}
\begin{tabular}{|p{1.5cm}|p{1.5cm}|p{1.5cm}|p{1.5cm}|p{1.8cm}|}
\hline $G_n$ & $K_n$ & $G_p$ & $K_p$ & $K_{n,p}$ \\
\hline
  $\mathrm{GL}_n(\mathbb{C})$ \newline
  $\mathrm{O}_n(\mathbb{C})$ \newline
  $\mathrm{Sp}_n(\mathbb{C})$
& $\mathrm{U}_n$ \newline
  $\mathrm{O}_n(\mathbb{R})$ \newline
  $\mathrm{USp}_n$
& $\mathrm{GL}_p(\mathbb{C})$ \newline
  $\mathrm{GL}_{2p}(\mathbb{R})$ \newline
  $\mathrm{GL}_p(\mathbb{H})$
& $\mathrm{U}_p$ \newline
  $\mathrm{O}_{2p}(\mathbb{R})$ \newline
  $\mathrm{USp}_{2p}$
& $\mathrm{U}_{n-p}$ \newline
  $\mathrm{O}_{n-2p}(\mathbb{R})$ \newline
  $\mathrm{USp}_{n-2p}$ \\
\hline
\end{tabular}\vskip 10pt
\caption{Meaning of the groups $K_n\,$, $G_p\,$, $K_p\,$, $K_{n,p}\,$
for the three choices of $G_n\,$.}\label{fig:1}
\end{center}
\end{table}
\begin{prop}
If $\mathrm{rank}(K_p) \le \mathrm{rank}(K_n)$ so that $K_p \subset
K_n\,$, the space of regular $K_n$-orbits in $\mathrm{Hom}
(\mathbb{C}^n , \mathbb{C}^p)$ is isomorphic to the non-compact
symmetric space $G_p / K_p\;$.
\end{prop}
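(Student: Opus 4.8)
The plan is to read this proposition as the uniform summary of the three case analyses carried out in this subsection, and to verify that the rank hypothesis is precisely the condition under which those analyses apply.

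First I would rewrite the hypothesis $\mathrm{rank}(K_p)\le\mathrm{rank}(K_n)$ as an inequality between $n$ and $p$ using the entries of Table~\ref{fig:1}. Since $\mathrm{rank}(\mathrm{U}_p)=p$ and $\mathrm{rank}(\mathrm{U}_n)=n$, the hypothesis reads $n\ge p$ when $G_n=\mathrm{GL}_n$; since $\mathrm{rank}(\mathrm{O}_{2p})=\mathrm{rank}(\mathrm{USp}_{2p})=p$ while $\mathrm{rank}(\mathrm{O}_n)=\lfloor n/2\rfloor$ and $\mathrm{rank}(\mathrm{USp}_n)=n/2$, it reads $n\ge 2p$ when $G_n=\mathrm{O}_n(\mathbb{C})$ or $G_n=\mathrm{Sp}_n$; in each case the inclusion $K_p\subset K_n$ is then immediate. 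Next I would observe that under these inequalities a \emph{regular} element of $\mathrm{Hom}(\mathbb{C}^n,\mathbb{C}^p)$ attains the maximal rank $\min(n,p)=p$, hence is surjective, and that after applying the equivariant maps $\psi$ of the preceding paragraphs a regular element of $\mathrm{Hom}(\mathbb{R}^n,\mathbb{R}^{2p})$ (orthogonal case) is likewise surjective while a regular $\mathbb{H}$-linear operator $\mathbb{C}^n\to U$ (symplectic case) has maximal rank $2p$. This is the one spot deserving care: it guarantees that the orbit spaces occurring in the case lemmas genuinely coincide with the space of regular $K_n$-orbits named in the proposition, so that nothing is lost or gained when I quote those lemmas.

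For $G_n=\mathrm{GL}_n$ the assertion is then exactly Lemma~\ref{lem:3.8}: with $K_n=\mathrm{U}_n$, $G_p=\mathrm{GL}_p(\mathbb{C})$, $K_p=\mathrm{U}_p$, the regular $K_n$-orbit space is $\mathrm{GL}_p/\mathrm{U}_p=G_p/K_p$. For $G_n=\mathrm{O}_n(\mathbb{C})$ I would first invoke the $\mathrm{O}_n(\mathbb{R})$-equivariant linear isomorphism $\psi:\mathrm{Hom}(\mathbb{C}^n,\mathbb{C}^p)\to\mathrm{Hom}(\mathbb{R}^n,\mathbb{R}^{2p})$ proved above; being a linear isomorphism intertwining the $K_n$-actions, it carries regular orbits to regular orbits and induces a bijection of the two regular orbit spaces, and the subsequent lemma identifies this space with $\mathrm{GL}_{2p}(\mathbb{R})/\mathrm{O}_{2p}=G_p/K_p$. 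For $G_n=\mathrm{Sp}_n$ I would argue identically, using that $\psi$ is $\mathrm{USp}_n$-equivariant and identifies $\mathrm{Hom}(\mathbb{C}^n,\mathbb{C}^p)$ with the $\mathbb{H}$-linear maps $\mathbb{C}^n\to U$, so that the last lemma of the subsection gives $\mathrm{GL}_p(\mathbb{H})/\mathrm{USp}_{2p}=G_p/K_p$. Reading off Table~\ref{fig:1} in each case, and recalling that $G_p/K_p$ with $K_p$ a maximal compact subgroup of the non-compact group $G_p$ is a non-compact symmetric space, finishes the proof. Since the substantive computations already reside in the three case lemmas, there is no genuine obstacle here; the work is purely one of bookkeeping, the only subtle point being the consistency check on the meaning of ``regular'' noted above.
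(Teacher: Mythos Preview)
Your proposal is correct and matches the paper's treatment: the proposition is stated there explicitly as a summary of the three case lemmas (Lemma~\ref{lem:3.8} and its orthogonal and symplectic analogues), with no separate proof given. Your translation of the rank hypothesis into the numerical inequalities $n\ge p$ resp.\ $n\ge 2p$ and your use of the $K_n$-equivariant isomorphisms $\psi$ to transport regular orbits are exactly the bookkeeping the paper leaves implicit.
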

Motivated by this result, our next goal is to reduce the integral of
a $K_n$-invariant function on $V_\mathbb{R} \simeq \mathrm{Hom}
(\mathbb{C}^n , \mathbb{C}^p)$ to an integral over $G_p / K_p\,$. To
prepare this step we introduce some further notations and definitions
as follows.

First of all, let $U_p$ denote the Hermitian vector space
\begin{displaymath}
    U_p = \left\{ \begin{array}{ll} \mathbb{C}^p &\quad G_n =
    \mathrm{GL}_n \;,\\ \mathbb{C}^p \oplus (\mathbb{C}^p)^\ast
    ;\, s &\quad G_n = \mathrm{O}_n \;,\\ \mathbb{C}^p \oplus
    (\mathbb{C}^p)^\ast ; \, a &\quad G_n = \mathrm{Sp}_n \;.
    \end{array} \right.
\end{displaymath}
In the second case $U_p$ carries a Euclidean structure (on
$\mathbb{R}^{2p} \simeq \mathrm{Fix}(C) \subset U_p$) by the
symmetric form $s\,$, in the third case it carries a symplectic
structure by the alternating form $a\,$. Then let us regard $U_p$
(assuming that, depending on the case, the inequality $p \le n$ or
$2p \le n$ is satisfied) as a subspace of $\mathbb{C}^n$ with
orthogonal complement $U_{n,p}\,$, thereby fixing an orthogonal
decomposition $\mathbb{C}^n = U_p \oplus U_{n,p}\,$. This
decomposition is Hermitian, Euclidean, or Hermitian symplectic,
respectively.

Let now $X_{p,\,n}$ denote the vector space of structure-preserving
linear transformations $U_p \oplus U_{n,p} \to U_p\,$. Using the
language of matrices one would say that
\begin{displaymath}
    X_{p,\,n} \simeq \left\{ \begin{array}{ll}
    \mathrm{Mat}_{p,\,n}(\mathbb{C})\;,&\quad G_n = \mathrm{GL}_n\;,\\
    \mathrm{Mat}_{2p,\,n}(\mathbb{R})\;,&\quad G_n = \mathrm{O}_n\;,\\
    \mathrm{Mat}_{p,\, n/2}(\mathbb{H})\;, &\quad
    G_n = \mathrm{Sp}_n\;. \end{array} \right.
\end{displaymath}
A special element of $X_{p,\,n}$ is the projector $\Pi : \, U_p
\oplus U_{n,p} \to U_p$ on the first summand. By construction, the
symmetry group of the kernel space $\mathrm{ker} (\Pi) = U_{n,p}$ is
$K_{n,p} \;$.

Next, we specify our normalization conventions for invariant measures
on the Lie groups and symmetric spaces at hand. For that purpose, let
$\psi$ denote the $K_n$-equivariant isomorphism discussed in the
previous subsection:
\begin{displaymath}
    \psi \, : \,\, \mathrm{Hom}(\mathbb{C}^n , \mathbb{C}^p)
    \to X_{p,\,n}\;, \quad L \mapsto \left\{ \begin{array}{ll}
    L\;, &\quad G_n = \mathrm{GL}_n \;, \\ L \oplus \overline{L}\,
    \delta \;, &\quad G_n = \mathrm{O}_n \;,\\ L \oplus \overline{L}
    \,\varepsilon\;, &\quad G_n = \mathrm{Sp}_n\;.\end{array}\right.
\end{displaymath}
To avoid making case distinctions, we introduce an integer $m$ taking
the value $m = 0, +1, -1$ for $G = \mathrm{GL}_n\,$, $\mathrm{O}_n
\,$, $\mathrm{Sp}_n\,$, respectively. Then from $\mathrm{Tr}_{
\mathbb{C}^p }\, L L^\dagger = \mathrm{Tr}_{ (\mathbb{C}^p)^\ast}\,
\overline{L} L^\mathrm{t}$ we have the relation
\begin{displaymath}
    \mathrm{Tr}_{\mathbb{C}^p}\, L L^\dagger = (1+|m|)^{-1}
    \mathrm{Tr}_{U_p}\, \psi(L) \psi(L)^\dagger \;,
\end{displaymath}
which transfers the Euclidean norm of the vector space $V_\mathbb{R}
\simeq \mathrm{Hom}(\mathbb{C}^n , \mathbb{C}^p)$ to a corresponding
norm on $X_{p,\,n\,}$. In view of the scaling implied by this
transfer, we equip the Lie algebra $\mathrm{Lie}(K_p) = T_e K_p$ with
the following trace form (or Euclidean structure):
\begin{displaymath}
    \mathrm{Lie}(K_p) \to \mathbb{R} \;, \quad A \mapsto \parallel A
    \parallel^2 := - (1+|m|)^{-1} \mathrm{Tr}_{U_p} \, A^2 \ge 0 \;.
\end{displaymath}
The compact Lie group $K_p$ is then understood to carry the invariant
metric tensor and invariant volume density given by this Euclidean
structure on $\mathrm{Lie}(K_p)$. The same convention applies to the
compact Lie groups $K_n$ and $K_{n,p}\,$. Please note that these
conventions are standard and natural in that they imply, e.g.,
$\mathrm{vol} (\mathrm{U}_1) = \mathrm{vol}(\mathrm{SO}_2) = 2\pi$.

By the symbol $dg_{K_p}$ we will denote the $G_p$-invariant measure
on the non-compact symmetric space $G_p / K_p\,$. In keeping with the
normalization convention we have just defined, the restriction of
$dg_{K_p}$ to the tangent space $T_o (G_p / K_p)$ at $o := K_p$ is
the Euclidean volume density determined by the trace form $B \mapsto
\parallel B \parallel^2 = (1+|m|)^{-1} \mathrm{Tr}_{U_p}\, B^2\,$,
which is positive for Hermitian matrices $B = B^\dagger$.

As a final preparation, we observe that the principal bundle $G_p \to
G_p / K_p$ is trivial in all cases. Recall also that the Euclidean
vector space $V_\mathbb{R} \simeq \mathrm{Hom}(\mathbb{C}^n ,
\mathbb{C}^p)$ comes with a canonical volume form (actually, a
density) $\mathrm{dvol}_{V_\mathbb{R}}\,$.
\begin{prop}\label{prop:BB-sector}
For $V = \mathrm{Hom}( \mathbb{C}^n , \mathbb{C}^p ) \oplus
\mathrm{Hom}( \mathbb{C}^p , \mathbb{C}^n )$ let $f \in \mathcal{O}
(V)^{G_n}$ be a holomorphic function on $V$ with the symmetry $f(L
\oplus \tilde{L}) = f(L\,h \oplus h^{-1} \tilde{L})$ for all $h \in
G_n\,$. Restrict $f$ to a $K_n$-invariant function $f_r$ on the real
vector subspace $V_\mathbb{R} \simeq \mathrm{Hom} (\mathbb{C}^n ,
\mathbb{C}^p)$ by $f_r(L) := f(L \oplus L^\dagger)$. If $f_r$ is a
Schwartz function, then
\begin{displaymath}
    \int\limits_{\mathrm{Hom}(\mathbb{C}^n,\mathbb{C}^p)} f_r(L)\,
    \mathrm{dvol}_{V_\mathbb{R}}(L) = \frac{\mathrm{vol}(K_n)}
    {\mathrm{vol}(K_{n,p})} \int\limits_{G_p / K_p}
    f_r \circ \psi^{-1}(g \Pi) \,J(g) \, dg_{K_p}\;,
\end{displaymath}
where the Jacobian function $J : \, G_p / K_p \to \mathbb{R}$ is
given by $J(g) = 2^{p^2 - pn} \vert \mathrm{Det}(g) \vert^{2n}$ for
$G_n = \mathrm{GL}_n$ and $J(g) = 2^{2p^2 - pn} | \mathrm{Det}(g)
|^n$ for $G_n = \mathrm{O}_n\, , \mathrm{USp}_n\;$.
\end{prop}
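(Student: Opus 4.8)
The plan is to compute the integral $\int_{\mathrm{Hom}(\mathbb C^n,\mathbb C^p)} f_r\,\mathrm{dvol}_{V_\mathbb R}$ by decomposing the domain according to the $K_n$-orbit structure established in the previous subsections, and then tracking the Jacobian of this decomposition carefully. Concretely, recall that (in the regular locus, which has full measure) we have the $K_n$-equivariant isomorphism $\psi:\mathrm{Hom}(\mathbb C^n,\mathbb C^p)\xrightarrow{\sim} X_{p,n}$, and that $\psi(L)$ is determined by a structure-preserving decomposition $\mathbb C^n=\ker\psi(L)\oplus\mathrm{im}\,\psi(L)^\dagger$ together with an isomorphism $\mathrm{im}\,\psi(L)^\dagger\to U_p$, i.e.\ by a point of $G_p/K_p$ (via $g\mapsto g\Pi$) modulo the residual $K_n$-action. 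So I would parametrize a generic $L$ by the pair $(k,g)\in K_n\times (G_p/K_p)$ through $\psi(L)=g\,\Pi\, k^{-1}$ (more precisely $\psi^{-1}$ of this), with the fibres of this parametrization being the coset $K_{n,p}=$ stabilizer of $\ker\Pi$. This realizes $X_{p,n}^{\mathrm{reg}}\simeq K_n\times_{K_{n,p}} (G_p/K_p)$, so that any integrable function splits as
\begin{displaymath}
  \int_{X_{p,n}} \phi \;=\; \frac{\mathrm{vol}(K_n)}{\mathrm{vol}(K_{n,p})}\int_{G_p/K_p} \phi(g\Pi)\, \mathcal J(g)\, dg_{K_p},
\end{displaymath}
where $\mathcal J$ is the Jacobian of the change of variables relative to the chosen invariant measures, and the $K_n$-invariance of $f_r\circ\psi^{-1}$ lets us evaluate the integrand at the slice representative $g\Pi$.

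The substantive work is the Jacobian computation, which I would organize in two stages. First, transfer $\mathrm{dvol}_{V_\mathbb R}$ from $\mathrm{Hom}(\mathbb C^n,\mathbb C^p)$ to $X_{p,n}$: since $\psi$ rescales the norm by $\mathrm{Tr}_{\mathbb C^p}LL^\dagger=(1+|m|)^{-1}\mathrm{Tr}_{U_p}\psi(L)\psi(L)^\dagger$, and $\dim_{\mathbb R}V_\mathbb R=2pn$ for $G_n=\mathrm{GL}_n$ and $=2pn$ as well in the other two cases (where the real dimension counts are $2p\cdot n$ and $4p\cdot(n/2)$), $\psi$ contributes a constant factor of $(1+|m|)^{\mp pn/ \text{(something)}}$ — here I must be careful: the factor is $\sqrt{1+|m|}^{\,-\dim}$ where $\dim$ is the appropriate real dimension, and this is exactly where the powers of $2$ in $J$ originate since $1+|m|\in\{1,2\}$. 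Second — and this is the main obstacle — compute the Jacobian of the map $(g,\text{fibre})\mapsto g\Pi$ at a point, i.e.\ differentiate $g\mapsto g\Pi$ along $G_p/K_p$ and along the $K_n$-directions transverse to $K_{n,p}$, and compare the resulting volume form on $X_{p,n}$ with the Euclidean one. Writing $g=g^\dagger>0$ (using that $G_p/K_p$ is represented by positive operators), the tangent map scales the "column space" part by $g$; a block computation (splitting $X_{p,n}=\mathrm{Hom}(U_p,U_p)\oplus\mathrm{Hom}(U_{n,p},U_p)$, with $\Pi$ picking out the first block) shows the Jacobian is a power of $|\mathrm{Det}(g)|$, with the exponent counting how many "copies of $U_p$-worth of target" sit in $X_{p,n}$ — namely $n$ copies over $\mathbb C$ (giving $|\mathrm{Det}(g)|^{2n}$, as $\mathrm{Det}$ over $\mathbb C$ is complex) in the $\mathrm{GL}$ case, and $n$ copies over $\mathbb R$ or $n/2$ over $\mathbb H$ (giving $|\mathrm{Det}(g)|^{n}$) in the orthogonal/symplectic cases, where now $\mathrm{Det}$ means the determinant of $g$ as an operator on $U_p$.

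To pin down the constants $2^{p^2-pn}$ and $2^{2p^2-pn}$ precisely rather than up to a scalar, I would use the stated Gaussian normalizations: apply the formula-to-be-proved to the specific test function $f(L\oplus\tilde L)=\exp(-\tfrac12\mathrm{Tr}\,(Q(L\oplus\tilde L)-\mathrm{Id})^2)$ or, more simply, $f_r(L)=\mathrm{e}^{-c\,\mathrm{Tr}\,LL^\dagger}$, for which both sides are elementary Gaussian integrals — the left side over $V_\mathbb R$ directly, the right side over $G_p/K_p$ using the normalization of $dg_{K_p}$ fixed via its trace form. Matching these fixes the overall power of $2$. The factor $\mathrm{vol}(K_n)/\mathrm{vol}(K_{n,p})$ then falls out automatically from the fibration once $K_p$, $K_n$, $K_{n,p}$ all carry the compatible trace-form normalizations specified before the proposition, since $\mathrm{vol}(K_p)$ cancels between the volume of a $K_n$-orbit and the measure $dg_{K_p}$ on the base. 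The remaining care is bookkeeping: making sure the decomposition $\mathbb C^n = U_p\oplus U_{n,p}$, the slice $g\mapsto g\Pi$, and the ambient Euclidean structures on $X_{p,n}$ are all mutually consistent with the $\psi$-transfer, so that no stray factor of $2$ or power of $\pi$ is lost. I expect the differential-geometric Jacobian of $g\mapsto g\Pi$ to be the step most prone to sign/exponent errors, and I would double-check it against the $p=1$, $n$ arbitrary case where everything is explicit.
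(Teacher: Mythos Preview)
Your approach is essentially the paper's: parametrize the regular locus via $(G_p/K_p)\times(K_n/K_{n,p})\to X'_{p,n}$, $(gK_p,kK_{n,p})\mapsto g\Pi k^{-1}$, then compute the Jacobian of left multiplication by $g$ on $X_{p,n}$ to get the $|\mathrm{Det}(g)|$-power and collect the volume factor $\mathrm{vol}(K_n)/\mathrm{vol}(K_{n,p})$ from the $K_n/K_{n,p}$ fibre.

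One point needs correction. You attribute the powers of $2$ in $J(g)$ to the norm rescaling under $\psi$, but for $G_n=\mathrm{GL}_n$ the map $\psi$ is the identity ($m=0$, so $1+|m|=1$) and yet $J$ still carries the factor $2^{p^2-pn}$. The actual source is the mismatch between the trace-form normalization on $\mathrm{Lie}(K_n)$ and the Euclidean norm on $X_{p,n}$: writing $\xi\in\mathrm{Lie}(K_n)$ in block form with off-diagonal piece $B\in\mathrm{Hom}(U_{n,p},U_p)$, one has $\|\xi\|^2=(1+|m|)^{-1}(-\mathrm{Tr}\,A^2+2\,\mathrm{Tr}\,BB^\dagger-\mathrm{Tr}\,D^2)$, whereas the image $(H+A)\oplus B$ under $d\phi$ has $X_{p,n}$-norm $(1+|m|)^{-1}(\mathrm{Tr}\,H^2-\mathrm{Tr}\,A^2+\mathrm{Tr}\,BB^\dagger)$. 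The factor $2$ on the $B$-block, raised to half the real dimension of $\mathrm{Hom}(U_{n,p},U_p)$, gives exactly $2^{-p(n-p)}$ (GL) or $2^{-p(n-2p)}$ (O, Sp). Your Gaussian calibration would recover the correct constant regardless, so the argument ultimately succeeds, but the paper pins it down directly by this block comparison rather than by matching a test function.
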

\begin{proof}
Convergence of the integral on both sides of the equation is
guaranteed by the requirement that the integrand $f_r$ be a Schwartz
function.

The first step is to transform the integral on the left-hand side to
the domain $X_{p,\,n}$ with integrand $(\psi^{-1})^\ast (f\, \mathrm
{dvol}_{ V_\mathbb{R}})$. Of course the space $X_{p,\,n }^\prime$ of
regular elements in $X_{p,\,n}$ has full measure with respect to
$(\psi^{-1})^\ast (\mathrm{dvol}_{ V_\mathbb{R}})$. Now choose a
section $s$ of the trivial principal bundle $\pi : \, G_p \to G_p /
K_p$ and parameterize $X_{p,\,n}^\prime$ by the diffeomorphism
\begin{displaymath}
    \phi\,:\,\, (G_p/K_p) \times (K_n / K_{n,p})\to X_{p,\,n}^\prime
    \;,\quad (x , k K_{n,p})\mapsto s(x) \Pi\, k^{-1} .
\end{displaymath}
Using $\phi\,$, transform the integral from $X_{p,\,n}^\prime$ to
$(G_p / K_p) \times (K_n / K_{n,p})$. Right $K_n$-trans\-lations (as
well as left $K_p$-translations) are isometries of $(\psi^{-1})^\ast
(\mathrm{dvol}_{V_\mathbb{R}})$, and varying $\psi(L) = s(x) \Pi\,
k^{-1}$ we get $\delta \psi(L) = \delta s(x) \Pi \, k^{-1} - s(x) \Pi
\, k^{-1} \delta k \, k^{-1}$. Therefore, the pullback of $\mathrm
{dvol}_{V_\mathbb{R} }$ by $\psi^{-1} \circ \phi$ is proportional to
the product of invariant measures of $G_p / K_p$ and $K_n / K_{n,p}$
times a Jacobian $j(x)$ which can be computed as the Jacobian of the
map
\begin{displaymath}
    \mathcal{L}_{s(x)} : \, X_{p,\,n} \to X_{p,\,n}\;, \quad
    L \mapsto s(x) L \;.
\end{displaymath}
In the case of $X_{p,\,n} = \mathrm{Hom}(\mathbb{C}^n,\mathbb{C}^p)$
this gives $j(x) = \vert \mathrm{Det}(s(x)) \vert^{2n}$. In the other
two cases the dimension of $U_p$ is doubled while the (real)
dimension of $X_{p,\,n}$ stays the same; hence $j(x) = \vert
\mathrm{Det}(s(x)) \vert^n$. In all cases we may replace $\vert
\mathrm{Det}(s(x)) \vert$ by $\vert \mathrm{Det}(g) \vert$ where $g$
is any point in the fiber $\pi^{-1}(x)$. Also, by the
$K_n$-invariance of the integrand $f$ one has $f \circ \psi^{-1}
(s(x) \Pi \, k^{-1}) = f \circ \psi^{-1}(g \Pi)$ independent of the
choice of $g \in \pi^{-1}(x)$. This already proves that the two
integrals on the left-hand and right-hand side are pro\-portional to
each other, with the constant of proportionality being independent of
$f$.

It remains to ascertain the precise value of this constant. Doing the
invariant integral over $K_n / K_{n,p}$ one just picks up the
normalization factor of volumes $\mathrm {vol}(K_n) / \mathrm{vol}
(K_{n,p})$. The remaining factor $2^{p^2 - pn}$ or $2^{2p^2 - pn}$ in
$J(g)$ is determined by the following consideration. Decomposing the
elements $\xi \in \mathfrak{k} \equiv \mathrm{Lie}(K_n)$ as
\begin{displaymath}
    \mathfrak{k} \ni \xi = \begin{pmatrix} A &B \\ -B^\dagger &D
    \end{pmatrix} \in \begin{pmatrix} \mathfrak{k} \cap
    \mathrm{End}(U_p) &\mathfrak{k} \cap \mathrm{Hom}(U_{n,p}\, ,U_p)
    \\ \mathfrak{k} \cap \mathrm{Hom}(U_p\, ,U_{n,p}) &\mathfrak{k}
    \cap \mathrm{End}(U_{n,p}) \end{pmatrix} \;.
\end{displaymath}
we have the norm square $\parallel \xi \parallel^2 = (1 + |m|)^{-1}
(- \mathrm{Tr}\, A^2 + 2\, \mathrm{Tr}\, B B^\dagger - \mathrm{Tr}\,
D^2)$. On the other hand, the differential of the mapping $\phi$ at
$(o,eK_{n,p}) \in (G_p/K_p) \times (K_n/K_{n,p})$ is
\begin{displaymath}
    H, \begin{pmatrix} A &B \\ - B^\dagger &0 \end{pmatrix}
    \mapsto (H+A) \oplus B \in X_{p,\,n} \cap \mathrm{End}(U_p)
    \oplus X_{p,\,n} \cap \mathrm{Hom}(U_{n,p}\,,U_p) \;,
\end{displaymath}
which gives the norm square $\parallel (H+A) \oplus B \parallel^2 =
(1 + |m|)^{-1} (\mathrm{Tr}\, H^2 - \mathrm{Tr}\, A^2 + \mathrm{Tr}\,
B B^\dagger)$. Thus the term $\mathrm{Tr}\, B B^\dagger$ gets scaled
by a factor of two, and by counting the number of independent
freedoms in $B \in \mathrm{Hom}(U_{n,p}\,, U_p)$ we see that the
Jacobian $J(g)$ receives an extra factor of $2^{p(n-p)}$ for the case
of $G_n = \mathrm{GL}_n$ and $2^{2p(n-2p)/2}$ for $G_n = \mathrm{O}_n
\,$, $\mathrm{Sp}_n\,$.
\end{proof}
\begin{rem}
For $n = 2p$ and $G_n = \mathrm{O}_n$ the space $K_n / K_{n,p} =
\mathrm{O}_n(\mathbb{R})$ consists of two connected components and
the volume factor means $\mathrm{vol} (K_n) / \mathrm{vol}(K_{n,p}) =
\mathrm{vol}(\mathrm{O}_n(\mathbb{R})) = 2\, \mathrm{vol}(\mathrm
{SO}_n (\mathbb{R}))$. On the other hand, for $n > 2p$ and the same
case the volume factor is that of the connected space $K_n / K_{n,p}
= \mathrm{O}_n(\mathbb{R})/\mathrm{O}_{n-2p} (\mathbb{R}) =
\mathrm{SO}_n(\mathbb{R})/\mathrm{SO}_{n-2p} (\mathbb{R})$.
\end{rem}
To finish this section, we cast Prop.\ \ref{prop:BB-sector} in a form
closer in spirit to the rest of paper.

Recall that either we have $G_p = \mathrm{GL}(U_p)$, or else $G_p
\subset \mathrm{GL}(U_p)$ is characterized by the commutation rule $g
\, C = C g\,$. Since $C^\dagger = C^{-1} = \pm C$, all our groups
$G_p$ are stabilized by the dagger operation. Thus there exists an
involution
\begin{displaymath}
    \theta \,:\,\, G_p \to G_p\;,\quad g \mapsto (g^{-1})^\dagger\;,
\end{displaymath}
which is actually a Cartan involution fixing the elements of the
maximal compact subgroup $K_p\,$. The mapping
\begin{displaymath}
    \gamma\, : \,\, G_p / K_p \to G_p \;, \quad g \mapsto g
    \theta(g^{-1}) = g g^\dagger \;,
\end{displaymath}
embeds the symmetric space into the group. To clarify the connection
with the setting of Sect.\ \ref{sect:Q-map}, let us understand the
image of this embedding as a subspace of $W$.
\begin{lem}
The Cartan embedding $\gamma : \, G_p / K_p \to G_p \subset
\mathrm{End}(U_p)$ projected to the positive Hermitian operators in
$W \subset \mathrm {End}(U_p)$ is a bijection.
\end{lem}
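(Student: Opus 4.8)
The plan is to show that $\gamma$ is a well-defined bijection from $G_p/K_p$ onto the set $W\cap\mathrm{Herm}^+$ of positive Hermitian operators in $W$. The image of $\gamma$ will already be seen to lie in $W\cap\mathrm{Herm}^+$, so the content is simply that the corestricted map $G_p/K_p\to W\cap\mathrm{Herm}^+$, $gK_p\mapsto g\theta(g^{-1})=gg^\dagger$, is onto and one-to-one. The key observation is that $W\cap\mathrm{Herm}^+$ can be described intrinsically as the set of positive Hermitian \emph{elements of the group} $G_p$, after which the claim is a form of the classical Cartan (polar) decomposition for $G_p=\mathrm{GL}_p(\mathbb{C})$, $\mathrm{GL}_{2p}(\mathbb{R})$, $\mathrm{GL}_p(\mathbb{H})$ with the Cartan involution $\theta(g)=(g^{-1})^\dagger$. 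I would give a self-contained argument rather than invoke the general theory.

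First I would identify $W\cap\mathrm{Herm}^+$ with the positive Hermitian elements of $G_p$. For $G_n=\mathrm{O}_n$ or $\mathrm{Sp}_n$, unwinding the definition of $\mathrm{Sym}_b(U_p)$ through the relation $b(x,y)=\pm\langle Cx,y\rangle$ shows that $W=\mathrm{Sym}_b(U_p)=\{M\in\mathrm{End}(U_p):CM=M^\dagger C\}$; for a \emph{Hermitian} $M$ this condition collapses to $CM=MC$, i.e. $M\in G_p$ (recall $G_p=\{g:gC=Cg\}$). In the remaining case $G_n=\mathrm{GL}_n$ there is no constraint, $W=\mathrm{End}(U_p)$ and $G_p=\mathrm{GL}(U_p)$. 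Hence in all three cases $W\cap\mathrm{Herm}^+$ is exactly the set of positive Hermitian elements of $G_p$.

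Next, well-definedness: $K_p$ is the subgroup of unitary elements of $G_p$ (the $\theta$-fixed points), so $(gk)(gk)^\dagger=gg^\dagger$ for $k\in K_p$ and $\gamma$ descends to cosets. Since $G_p$ is stable under the dagger operation (noted already in the excerpt, as $C^\dagger=\pm C$), we have $g^\dagger\in G_p$, hence $gg^\dagger\in G_p$; being manifestly Hermitian and positive (as $g$ is invertible), it lies in $W\cap\mathrm{Herm}^+$ by the previous paragraph. Injectivity is the usual polar argument: $gg^\dagger=hh^\dagger$ forces $(h^{-1}g)(h^{-1}g)^\dagger=1$, so $h^{-1}g$ is a unitary element of $G_p$, i.e. an element of $K_p$, whence $gK_p=hK_p$. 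For surjectivity, take $L\in W\cap\mathrm{Herm}^+$ and let $L^{1/2}$ be its unique positive Hermitian square root. Since $C$ is anti-unitary and commutes with $L$, and the eigenvalues of the Hermitian operator $L$ are real, $C$ maps each eigenspace of $L$ to itself; on such an eigenspace $L^{1/2}$ acts as multiplication by a positive scalar, so $L^{1/2}$ also commutes with $C$ and therefore $L^{1/2}\in G_p$. Then $\gamma(L^{1/2}K_p)=L^{1/2}(L^{1/2})^\dagger=L$, which proves surjectivity; the case $G_n=\mathrm{GL}_n$ is just the classical fact that $g\mapsto gg^\dagger$ identifies $\mathrm{GL}_p(\mathbb{C})/\mathrm{U}_p$ with the positive Hermitian $p\times p$ matrices.

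I expect no serious obstacle here: the proof is bookkeeping around the involution $C$ together with the polar decomposition. The only points requiring a little care are confirming that the stated description of $W$ matches $\mathrm{Sym}_b$ (handling the sign difference between $C^2=+1$ and $C^2=-1$) and the eigenspace argument that keeps the Hermitian square root inside $G_p$. As an alternative to the square-root argument one could simply invoke the Cartan/polar decomposition for the self-adjoint reductive matrix groups $\mathrm{GL}_p(\mathbb{C})$, $\mathrm{GL}_{2p}(\mathbb{R})$, $\mathrm{GL}_p(\mathbb{H})$, but the elementary version above keeps the exposition self-contained.
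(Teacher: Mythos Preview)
Your proposal is correct and follows essentially the same route as the paper: both identify $W\cap\mathrm{Herm}$ (in the $\mathrm{O}$ and $\mathrm{Sp}$ cases) with the operators commuting with $C_b$ via the relation $C_b w = w^\dagger C_b$, then use the positive Hermitian square root to establish bijectivity. Your eigenspace argument for why $L^{1/2}$ commutes with $C$ spells out what the paper phrases more tersely as ``the squaring map on $\mathrm{Herm}^+\cap\mathrm{GL}(U_p)$ remains a bijection when restricted to the set of fixed points of the involution $w\mapsto C_b\,w\,(C_b)^{-1}$''; the paper also establishes that the image lies in $W$ by the alternative device of recognizing $gg^\dagger$ as $Q(L\oplus L^\dagger)\in Q(V_{\mathbb{R}})$, but this is a cosmetic difference.
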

\begin{proof}
From $\psi(L) = L$ in the first case, and $\psi(L) = L \oplus
\overline{L}\, \beta$ in the last two cases, we immediately see that
the composition of mappings
\begin{displaymath}
    L\oplus L^\dagger \mapsto \psi(L)\mapsto \psi(L)\psi(L)^\dagger
    = Q(L \oplus L^\dagger) \in Q(V_\mathbb{R})
\end{displaymath}
is the quadratic map $Q : \, V \to W$ (Sect.\ \ref{sect:Q-map})
restricted to $V_\mathbb{R} \,$, and since $g \in G_p$ arises from
decomposing $\psi(L) = g \Pi \, k^{-1}$, the positive Hermitian
operator $g g^\dagger = \psi(L) \psi(L)^\dagger$ lies in
$Q(V_\mathbb{R}) \subset W$. Thus the embedding $G_p / K_p \to G_p$
is into $\mathrm{Herm}^+ \cap W$.

It remains to be shown that $\gamma :\, G_p / K_p \to \mathrm{Herm}^+
\cap W$ is one-to-one. In the case of $G_p = \mathrm{GL}(U_p)$, every
positive Hermitian operator $h \in \mathrm{Herm}^+ \cap W$ has a
unique positive Hermitian square root $\sqrt{h}\,$, and $h = \sqrt{h}
\, \theta (\sqrt{h})^{-1} = \sqrt{h}k\, \theta(\sqrt{h} k)^{-1}$ ($k
\in K_p$). Thus there exists a unique inverse $\gamma^{-1}(h) =
\sqrt{h} K_p \in G_p / K_p\,$.

To deal with the other two cases we recall the relation $\langle C_b
\, \cdot , \cdot^\prime \rangle = \pm b(\cdot , \cdot^\prime)$, i.e.,
$C \equiv C_b$ combines with the Hermitian structure of $U_p$ to give
the bilinear form $b\,$, where $b = s$ or $b = a\,$. This implies
that the symmetric transformations $w \in W = \mathrm{Sym}_b(U_p)$
are characterized by the commutation rule $C_b w = w^\dagger C_b\,$.
Indeed,
\begin{displaymath}
    \langle C_b w \, \cdot , \cdot^\prime \rangle = \pm b(w \,
    \cdot , \cdot^\prime) = \pm b(\cdot , w \, \cdot^\prime) =
    \langle C_b \, \cdot , w \, \cdot^\prime \rangle =
    \langle w^\dagger C_b \, \cdot , \cdot^\prime \rangle \;,
\end{displaymath}
and hence $W \cap \mathrm{Herm}$ are exactly the elements of
$\mathrm{End}(U_p)$ that commute with $C_b\,$. The desired statement
now follows from the definition $G_p = \{ g \in \mathrm{GL}(U_p) \, |
\, g\, C_b = C_b\, g \}$ because the squaring map on $\mathrm{Herm}^+
\cap \mathrm{GL}(U_p)$ remains a bijection when restricted to the set
of fixed points $\mathrm{Herm}^+ \cap G_p$ of the involution $w
\mapsto C_b \, w \, (C_b)^{-1}\,$.
\end{proof}
In the sequel we will often use the abbreviations
\begin{displaymath}
    n^\prime := (1 + |m|)^{-1} n \;, \quad \mathrm{Tr}^{\,\prime} :=
    (1 + |m|)^{-1} \mathrm{Tr}_{U_p} \quad (m = 0, 1, -1 \text{~for~}
    G= \mathrm{GL}, \mathrm{O}, \mathrm{Sp})\;.
\end{displaymath}
Let now $D_p := \mathrm{Herm}^+ \cap W$ denote the set of positive
Hermitian operators in $W$:
\begin{displaymath}
    D_p = \left\{ \begin{array}{ll} \mathrm{Herm}^+ \cap
    \mathrm{End}(\mathbb{C}^p)\;, &\quad G_n = \mathrm{GL}_n\;,
    \\ \mathrm{Herm}^+ \cap \mathrm{Sym}_s(\mathbb{C}^p \oplus
    (\mathbb{C}^p)^\ast)\;, &\quad G_n = \mathrm{O}_n \;, \\
    \mathrm{Herm}^+ \cap \mathrm{Sym}_a(\mathbb{C}^p \oplus
    (\mathbb{C}^p)^\ast)\;, &\quad G_n = \mathrm{Sp}_n \;.
    \end{array}\right.
\end{displaymath}
$D_p$ is equipped with a $G_p$-invariant measure $d\mu_{D_p}\,$. In
keeping with our general conventions, we normalize $d\mu_{D_p}$ so
that $(d\mu_{D_p})_o$ agrees with the Euclidean volume density of the
Euclidean vector space of Hermitian operators $H \in \mathrm{Lie}
(G_p)$ with norm square $\parallel H \parallel^2 = (1 + |m|)^{-1}
\mathrm{Tr}\, H^2 = \mathrm{Tr}^{\,\prime} H^2$. The Cartan embedding
$g \mapsto g \theta(g^{-1})$ on the Hermitian operators $g =
\mathrm{e}^H$ is the squaring map $\mathrm{e}^H \mapsto
\mathrm{e}^{2H}$. Thus, pushing the $G_p$-invariant measure
$dg_{K_p}$ forward by the Cartan embedding we get
$2^{-\mathrm{dim}(G_p / K_p)} d\mu_{D_p}\,$. Now since
\begin{displaymath}
    \mathrm{dim}(G_p / K_p) = \left\{ \begin{array}{ll} p^2 &\quad
    G_n = \mathrm{GL}_n \;,\\ p(2p+1) &\quad G_n =\mathrm{O}_n \;,
    \\ p(2p-1) &\quad G_n = \mathrm{Sp}_n \;, \end{array} \right.
\end{displaymath}
the following statement is a straightforward reformulation of Prop.\
\ref{prop:BB-sector}.
\begin{cor}\label{cor:BB-sector}
Given $f \in \mathcal{O}(V)^{G_n}$, and retaining the setup and the
conditions of Prop.\ \ref{prop:BB-sector}, define $F \in
\mathcal{O}(W)$ by $Q^\ast F = f\,$. Then
\begin{displaymath}
    \int_{V_\mathbb{R}} f \, \mathrm{dvol}_{V_\mathbb{R}} =
    2^{-p(n+m)} \frac{\mathrm{vol}(K_n)}{\mathrm{vol}(K_{n,p})}
    \int_{D_p} F(x)\, \mathrm{Det}^{n^\prime}(x)\, d\mu_{D_p}(x)\;.
\end{displaymath}
\end{cor}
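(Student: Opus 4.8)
The plan is to derive Corollary \ref{cor:BB-sector} from Proposition \ref{prop:BB-sector} by a single change of variables along the Cartan embedding $\gamma : G_p/K_p \to D_p$, $gK_p \mapsto g\theta(g^{-1}) = gg^\dagger$. Recall that by the Lemma preceding the corollary $\gamma$ is a bijection onto $D_p = \mathrm{Herm}^+ \cap W$, and that, as recorded in the text, the pushforward of the invariant measure $dg_{K_p}$ by $\gamma$ equals $2^{-\mathrm{dim}(G_p/K_p)}\, d\mu_{D_p}$. All convergence issues are inherited from Proposition \ref{prop:BB-sector} via the Schwartz hypothesis, and the left-hand integral $\int_{V_\mathbb{R}} f\,\mathrm{dvol}_{V_\mathbb{R}}$ is, under the identification $V_\mathbb{R} \simeq \mathrm{Hom}(\mathbb{C}^n,\mathbb{C}^p)$, $L \oplus L^\dagger \mapsto L$, the left-hand integral of Proposition \ref{prop:BB-sector} with $f_r(L) = f(L\oplus L^\dagger)$.

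First I would rewrite the right-hand integrand of Proposition \ref{prop:BB-sector}. Using $Q^\ast F = f$ together with the factorization $Q|_{V_\mathbb{R}}(L\oplus L^\dagger) = \psi(L)\psi(L)^\dagger$, one gets $f_r(L) = F(\psi(L)\psi(L)^\dagger)$; taking $L = \psi^{-1}(g\Pi)$ and using $\Pi\Pi^\dagger = \mathrm{Id}_{U_p}$ then yields $f_r \circ \psi^{-1}(g\Pi) = F(g\Pi\Pi^\dagger g^\dagger) = F(gg^\dagger) = F(\gamma(gK_p))$. Next I would rewrite the Jacobian $J$: putting $x := gg^\dagger = \gamma(gK_p)$ we have $\vert\mathrm{Det}(g)\vert^2 = \mathrm{Det}(gg^\dagger) = \mathrm{Det}(x)$, so $J(g) = 2^{p^2-pn}\,\mathrm{Det}^{n}(x)$ for $G_n = \mathrm{GL}_n$ and $J(g) = 2^{2p^2-pn}\,\mathrm{Det}^{n/2}(x)$ for $G_n = \mathrm{O}_n, \mathrm{Sp}_n$; in all cases the exponent of $\mathrm{Det}(x)$ is $n' = (1+\vert m\vert)^{-1}n$. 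Substituting these into Proposition \ref{prop:BB-sector} and then transporting the integral from $G_p/K_p$ to $D_p$ along $\gamma$ turns its right-hand side into
\begin{displaymath}
  \frac{\mathrm{vol}(K_n)}{\mathrm{vol}(K_{n,p})}\, 2^{c-\mathrm{dim}(G_p/K_p)}
  \int_{D_p} F(x)\,\mathrm{Det}^{n'}(x)\, d\mu_{D_p}(x) \;,
\end{displaymath}
with $c = p^2-pn$ for $G_n = \mathrm{GL}_n$ and $c = 2p^2-pn$ for $G_n = \mathrm{O}_n, \mathrm{Sp}_n$.

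The only step requiring any care is the bookkeeping of the power of two, and this is where I would be careful to treat the three cases uniformly. Using $\mathrm{dim}(G_p/K_p) = p^2,\, p(2p+1),\, p(2p-1)$ and $m = 0,\, +1,\, -1$ for $G_n = \mathrm{GL}_n,\, \mathrm{O}_n,\, \mathrm{Sp}_n$, a direct computation gives $c - \mathrm{dim}(G_p/K_p) = -pn,\, -p(n+1),\, -p(n-1)$, that is $c - \mathrm{dim}(G_p/K_p) = -p(n+m)$ in every case. This produces exactly the prefactor $2^{-p(n+m)}\,\mathrm{vol}(K_n)/\mathrm{vol}(K_{n,p})$ of the corollary, finishing the proof. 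So the main obstacle is just this arithmetic identity; the rest is a routine substitution into Proposition \ref{prop:BB-sector}, given the identification of $f_r \circ \psi^{-1}(g\Pi)$ with $F(\gamma(gK_p))$ and the measure-pushforward statement for the Cartan embedding established above.
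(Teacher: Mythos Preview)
Your argument is correct and follows exactly the route the paper takes: the corollary is stated there as a ``straightforward reformulation'' of Proposition~\ref{prop:BB-sector} via the Cartan embedding, using the measure pushforward $\gamma_\ast dg_{K_p} = 2^{-\dim(G_p/K_p)} d\mu_{D_p}$ and the listed values of $\dim(G_p/K_p)$. You have simply supplied the details---the identification $f_r\circ\psi^{-1}(g\Pi)=F(gg^\dagger)$ and the case-by-case arithmetic $c-\dim(G_p/K_p)=-p(n+m)$---that the paper leaves implicit.
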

In particular, since the function $x \mapsto F(x) = \mathrm{e}^{-
\mathrm{Tr}^{\,\prime} x}$ pulls back to $L \mapsto f(L) =
\mathrm{e}^{- \mathrm{Tr}\, L L^\dagger}$ and the Gaussian integral
$\int \mathrm{e}^{-\mathrm{Tr}\, L L^\dagger} d\mathrm{vol}_{
V_\mathbb{R}}(L)$ has the value $\pi^{pn}$, we infer the formula
\begin{equation}\label{eq:3.2}
    \int_{D_p} \mathrm{e}^{- \mathrm{Tr}^{\,\prime} x}\,
    \mathrm{Det}^{n^\prime}(x) \, d\mu_{D_p}(x) = (2\pi)^{p n} 2^{pm}
    \mathrm{vol}(K_{n,p}) / \mathrm{vol}(K_n) \;.
\end{equation}

\section{Lifting in the fermion-fermion sector}
\label{sect:FF-sector}\setcounter{equation}{0}

Having settled the case of $V_1 = 0$ (or $q = 0$) we now turn to the
complementary case where $V_0 = 0$ (or $p = 0$). Thus, in the present
section we consider
\begin{displaymath}
    V \equiv V_1 = \mathrm{Hom}(\mathbb{C}^n , \mathbb{C}^q) \oplus
    \mathrm{Hom}(\mathbb{C}^q ,\mathbb{C}^n) \simeq\mathbb{C}^{2qn}\;,
\end{displaymath}
in which case our basic algebra $\mathcal{A}_V$ becomes an exterior
algebra of dimension $2^{2qn}$:
\begin{displaymath}
    \mathcal{A}_V = \wedge (V^\ast)\simeq \wedge(\mathbb{C}^{2N}) \;,
    \quad N = q n \;.
\end{displaymath}
In the sequel, we will prove an analog of Prop.\ \ref{prop:BB-sector}
for this situation.

\subsection{Quadratic $G$-invariants}
\label{sect:FF-invs}

This subsection is closely analogous to Sect.\ \ref{sect:BB-invs},
the main difference being that the role of the symmetric algebra
$\mathrm{S} (V^\ast)$ is now taken by the exterior algebra $\wedge
(V^\ast)$. It remains true \cite{Howe1995} that for each of the
classical reductive complex Lie groups $G = \mathrm{GL}_n\,$,
$\mathrm{O}_n\,$, and $\mathrm {Sp}_n\,$, a basis of $\wedge^2
(V^\ast)^G$ is a generating system for $\wedge (V^\ast)^G$. Let us
therefore make another study of these quadratic invariants.

Recall that on the direct sum $U_q := \mathbb{C}^q \oplus
(\mathbb{C}^q)^\ast$ we have the canonical symmetric bilinear form
$s$ and the canonical alternating bilinear form $a\,$.
\begin{lem}\label{lem:FF-invs}
Let $V = \mathrm{Hom}(\mathbb{C}^n , \mathbb{C}^q) \oplus
\mathrm{Hom}(\mathbb{C}^q , \mathbb{C}^n)$ carry the $G$-action
induced from the fundamental action on $\mathbb{C}^n$ of $G =
\mathrm{GL}_n\,$, $\mathrm{O}_n\,$, or $\mathrm{Sp}_n\,$. If $U_q =
\mathbb{C}^q \oplus (\mathbb {C}^q)^\ast$, then the space of
quadratic invariants $\wedge^2 (V^\ast)^G$ is isomorphic as a complex
vector space to $W^\ast$ where $W = \mathrm{End}(\mathbb{C}^q)$ for
$G = \mathrm{GL}_n\,$, $W = \mathrm{Sym}_a(U_q)$ for $G = \mathrm
{O}_n\,$, and $W = \mathrm{Sym}_s (U_q)$ for $G = \mathrm{Sp}_n \,$.
\end{lem}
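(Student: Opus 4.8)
The plan is to transcribe the proof of Lemma~\ref{lem:BB-OSP-invs} (and, for the first case, Lemma~\ref{lem:BB-GL-invs}) almost verbatim, with the exterior square $\wedge^2(V^\ast)$ playing the role of the symmetric square $\mathrm{S}^2(V^\ast)$. The one place the two arguments genuinely diverge is a sign, and this sign is exactly what swaps $\mathrm{O}_n\leftrightarrow\mathrm{Sym}_a$ and $\mathrm{Sp}_n\leftrightarrow\mathrm{Sym}_s$ relative to the boson-boson statements. Throughout I would use that, $G$ being reductive, $\wedge^2(V^\ast)^G$ is canonically dual to $\wedge^2(V)^G$, so it suffices to produce the asserted isomorphism for either one.

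For $G=\mathrm{GL}_n$ I would follow Lemma~\ref{lem:BB-GL-invs}. Write $V=A\oplus B$ with $A=\mathrm{Hom}(\mathbb{C}^n,\mathbb{C}^q)$ and $B=\mathrm{Hom}(\mathbb{C}^q,\mathbb{C}^n)$, so that $\wedge^2(V^\ast)=\wedge^2(A^\ast)\oplus(A^\ast\otimes B^\ast)\oplus\wedge^2(B^\ast)$. As $\mathrm{GL}_n$-modules $A^\ast$ and $B^\ast$ are multiples of the standard resp.\ dual representation, so neither $\mathbb{C}^n\otimes\mathbb{C}^n$, $(\mathbb{C}^n)^\ast\otimes(\mathbb{C}^n)^\ast$ nor their exterior parts carry a nonzero invariant, whence $\wedge^2(A^\ast)^G=\wedge^2(B^\ast)^G=0$. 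There remains $(A^\ast\otimes B^\ast)^G\simeq\mathrm{End}(\mathbb{C}^n)^G\otimes\mathrm{End}(\mathbb{C}^q)\simeq\mathrm{End}(\mathbb{C}^q)$, of dimension $q^2$; dualizing with the trace form gives $\wedge^2(V^\ast)^{\mathrm{GL}_n}\simeq\mathrm{End}(\mathbb{C}^q)^\ast=W^\ast$. (Concretely the resulting generators are the $q^2$ contractions over the index $i$ pairing the two factors of $V$.)

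For $G=\mathrm{O}_n$ and $G=\mathrm{Sp}_n$, let $\beta\colon\mathbb{C}^n\to(\mathbb{C}^n)^\ast$ be the invariant isomorphism, symmetric resp.\ alternating. I would run the chain of Lemma~\ref{lem:BB-OSP-invs} with $p$ replaced by $q$ and $\mathrm{S}^2$ by $\wedge^2$: the canonical transpose together with $\beta$ gives $V\simeq\mathrm{Hom}(U_q,\mathbb{C}^n)\simeq U_q^\ast\otimes\mathbb{C}^n$ with $U_q=\mathbb{C}^q\oplus(\mathbb{C}^q)^\ast$, and $\wedge^2(V^\ast)\simeq\mathrm{Alt}(V,V^\ast)$ via $\varphi'\wedge\varphi\mapsto(v\mapsto\varphi'(v)\varphi-\varphi(v)\varphi')$, restricting to $\wedge^2(V^\ast)^G\simeq\mathrm{Alt}_G(V,V^\ast)$. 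Since $G$ acts trivially on $U_q$ one has $\mathrm{Hom}_G(V,V^\ast)\simeq\mathrm{Hom}(U_q^\ast,U_q)\otimes\mathrm{Hom}_G(\mathbb{C}^n,(\mathbb{C}^n)^\ast)$ and the second factor is $\mathbb{C}\beta$, so every $G$-equivariant map $V\to V^\ast$ has the form $A\otimes\beta$ with $A\in\mathrm{Hom}(U_q^\ast,U_q)$, and the bilinear form on $V$ it carries is the \emph{product} of the form of $A$ on $U_q^\ast$ with the form $\beta$ on $\mathbb{C}^n$. This is the single computation to be done by hand: that product is alternating precisely when $A$ has parity \emph{opposite} to that of $\beta$ (the extra minus sign coming from $\wedge^2$ rather than $\mathrm{S}^2$), so $\mathrm{Alt}_G(V,V^\ast)\simeq\mathrm{Alt}(U_q^\ast,U_q)$ for $\mathrm{O}_n$ and $\simeq\mathrm{Sym}(U_q^\ast,U_q)$ for $\mathrm{Sp}_n$. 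Finally the trace form gives $\mathrm{Alt}(U_q^\ast,U_q)\simeq\mathrm{Alt}(U_q,U_q^\ast)^\ast$ and $\mathrm{Sym}(U_q^\ast,U_q)\simeq\mathrm{Sym}(U_q,U_q^\ast)^\ast$, while $L\mapsto\phi_L$, $\phi_L(x)(y)=b(Lx,y)$, identifies $\mathrm{Sym}_s(U_q)$ with $\mathrm{Sym}(U_q,U_q^\ast)$ and $\mathrm{Sym}_a(U_q)$ with $\mathrm{Alt}(U_q,U_q^\ast)$, exactly as in Lemma~\ref{lem:BB-OSP-invs}; chaining these isomorphisms yields $\wedge^2(V^\ast)^{\mathrm{O}_n}\simeq\mathrm{Sym}_a(U_q)^\ast$ and $\wedge^2(V^\ast)^{\mathrm{Sp}_n}\simeq\mathrm{Sym}_s(U_q)^\ast$, which is the assertion.

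I expect the only real obstacle to be the sign bookkeeping in the third paragraph: once it is checked that passing from $\mathrm{S}^2$ to $\wedge^2$ inserts the minus sign that toggles the symmetry type of $A$ relative to $\beta$, every other step is a literal copy of the boson-boson argument, and the rest of the work -- making the various isomorphisms canonical and confirming the dimensions $q^2$, $q(2q-1)$, $q(2q+1)$ for $\mathrm{GL}_n$, $\mathrm{O}_n$, $\mathrm{Sp}_n$ -- is routine.
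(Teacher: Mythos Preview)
Your proposal is correct and follows essentially the same route as the paper: the paper's proof is explicitly labeled a sketch, says there is ``no conceptual difference'' from Lemmas~\ref{lem:BB-GL-invs} and~\ref{lem:BB-OSP-invs}, and then runs exactly the chain you describe (the $\wedge^2(V^\ast)\to\mathrm{Alt}(V,V^\ast)$ identification, the factorization through $\mathrm{Hom}(U_q^\ast,U_q)\otimes\mathbb{C}\beta$, and the $L\mapsto\phi_L$ isomorphism), with the sign swap you highlight being precisely the point where $\mathrm{O}_n$ pairs with $\mathrm{Sym}_a$ and $\mathrm{Sp}_n$ with $\mathrm{Sym}_s$.
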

\begin{proof} (Sketch).
There is no conceptual difference from the proofs of Lemma
\ref{lem:BB-GL-invs} and Lemma \ref{lem:BB-OSP-invs}, and we
therefore give only a summary of the changes.

In the case of $G = \mathrm{GL}_n\,$, all quadratic invariants still
arise by composing the elements of $\mathrm{Hom}(\mathbb{C}^q ,
\mathbb{C}^n)$ with those of $\mathrm{Hom}(\mathbb{C}^n, \mathbb{C}^q
)$. Thus $\wedge^2(V)^G \simeq \mathrm{End} (\mathbb{C}^q)$ and,
since $G$ acts reductively and $\wedge^2(V)^G$ is paired with
$\wedge^2(V^\ast)^G$, we have $\wedge^2(V^\ast)^G \simeq \mathrm{End}
(\mathbb{C}^q)^\ast$.

In the other cases we use the isomorphism $\wedge^2 (V^\ast) \to
\mathrm{Alt}(V,V^\ast)$ given by
\begin{displaymath}
    \varphi^\prime \varphi - \varphi \varphi^\prime
    \mapsto \left( v \mapsto \varphi^\prime(\cdot)
    \varphi(v) - \varphi(\cdot) \varphi^\prime(v) \right) \;,
\end{displaymath}
which descends to an isomorphism $\wedge^2(V^\ast)^G \to
\mathrm{Alt}_G (V,V^\ast)$. Then, writing $U \equiv U_q$ we make the
$G$-equivariant identification $V \simeq U^\ast \otimes \mathbb{C}^n$
and have
\begin{displaymath}
    \wedge^2(V^\ast)^G \simeq \mathrm{Alt}_G (U^\ast \otimes
    \mathbb{C}^n , U \otimes (\mathbb{C}^n)^\ast ) \simeq \left\{
    \begin{array}{ll} \mathrm{Alt}(U^\ast , U)\;, &\quad G =
    \mathrm{O}_n \;, \\ \mathrm{Sym}(U^\ast , U)\;, &\quad G =
    \mathrm{Sp}_n \;. \end{array} \right.
\end{displaymath}
This leads to the desired statement by the isomorphisms
$\mathrm{Alt}(U^\ast , U) \simeq \mathrm{Alt}(U, U^\ast)^\ast \simeq
\mathrm{Sym}_a(U)^\ast$ and $\mathrm{Sym}(U^\ast , U) \simeq
\mathrm{Sym}(U, U^\ast)^\ast \simeq \mathrm{Sym}_s(U)^\ast$.
\end{proof}
As before, let the elements $w \in W$ be decomposed as
\begin{equation*}
    w = \begin{pmatrix} A &B \\ C &A^\mathrm{t} \end{pmatrix}
    \in \begin{pmatrix} \mathrm{End}(\mathbb{C}^q)
    &\mathrm{Hom}( (\mathbb{C}^q)^\ast, \mathbb{C}^q) \\
    \mathrm{Hom}( \mathbb{C}^q , (\mathbb{C}^q)^\ast)
    &\mathrm{End}((\mathbb{C}^q)^\ast) \end{pmatrix} \;.
\end{equation*}
Here $B = - B^\mathrm{t}$ and $C = - C^\mathrm{t}$ for the case of $G
= \mathrm {O}_n\,$, while $B = + B^\mathrm{t}$ and $C = +
C^\mathrm{t}$ for $G = \mathrm{Sp}_n\,$, and $B = C = 0$ for $G =
\mathrm{GL}_n\,$. By simple counting, the dimensions of $W$ are
$\mathrm{dim}\, W = q^2$, $q(2q-1)$, and $q(2q+1)$ for $G =
\mathrm{GL}_n\,$, $\mathrm{O}_n\,$, and $\mathrm{Sp}_n\,$,
respectively.

One can now reconsider the quadratic mapping $Q : \, V \to W$ defined
in Sect.\ \ref{sect:Q-map}, with the twist that the elements of $V$
in the present context are to be multiplied with each other in the
alternating sense of exterior algebras. However, what matters for our
purposes is not the mapping $Q$ but the pullback of algebras $Q^\ast
: \, \mathcal{O}(W) \to \wedge (V^\ast)^G$. Let us now specify the
latter at the level of the isomorphism $Q^\ast : \, W^\ast \to
\wedge^2(V^\ast)^G$.

For this let $\{ e_i \}$, $\{ f^i \}$, $\{ e_c \}$, and $\{ f^c \}$
be standard bases of $\mathbb{C}^n$, $(\mathbb{C}^n )^\ast$,
$\mathbb{C}^q$, and $(\mathbb{C}^q)^\ast$, respectively, and define
bases $\{ \zeta_c^i \}$ and $\{\tilde{\zeta}_i^c \}$ of $\mathrm{Hom}
(\mathbb{C}^q , \mathbb{C}^n)^\ast$ and $\mathrm{Hom} (\mathbb{C}^n ,
\mathbb{C}^q)^\ast$ by
\begin{displaymath}
     \zeta_c^i(\tilde{L}) = f^i(\tilde{L} \, e_c) \;, \quad
    \tilde{\zeta}_i^c(L) = f^c(L \, e_i) \;,
\end{displaymath}
where $L \in \mathrm{Hom}(\mathbb{C}^n , \mathbb{C}^q)$ and
$\tilde{L} \in \mathrm{Hom}(\mathbb{C}^q , \mathbb{C}^n)$. Of course
the index ranges are $i = 1, \ldots, n$ and $c = 1, \ldots, q\,$.
Then, decomposing $w \in W \subset \mathrm{End}(U)$ into blocks $A,
B, C$ as above, define a set of linear functions $x_c^{c^\prime},
y^{c^\prime c}, y_{c c^\prime} : \, W \to \mathbb{C}$ by
\begin{displaymath}
    x_c^{c^\prime}(w) = f^{c^\prime}(A e_c) \;, \quad
    y^{c^\prime c}(w) = f^{c^\prime}(B f^c) \;, \quad
    y_{c c^\prime}(w) = (C e_{c^\prime})(e_c) \;.
\end{displaymath}
Notice the symmetry relations $y^{c^\prime c} = \mp y^{c c^\prime}$
and $y_{c c^\prime} = \mp y_{c^\prime c}$ where the upper sign
applies in the case of $G = \mathrm{O}_n$ and the lower sign for $G =
\mathrm{Sp}_n\,$. The functions $\{ x_c^{c^\prime} \}$ constitute a
basis of $W^\ast \simeq \mathrm{End}(\mathbb{C}^q)^\ast$ for the case
of $G = \mathrm{GL}_n \,$. Inclusion of the set of functions $\{
y^{c^\prime c} , y_{c c^\prime} \}_{c < c^\prime}$ gives a basis of
$W^\ast$ for $G = \mathrm{O}_n\,$. By including also the functions
$y^{c^\prime c}$ and $y_{c c^\prime}$ for $c = c^\prime$ we get a
basis of $W^\ast$ for $G = \mathrm{Sp}_n\,$.

Recall that we are given a $G$-equivariant isomorphism $\beta : \,
\mathbb{C}^n \to (\mathbb{C}^n)^\ast$ which is symmetric ($\beta =
\delta$) for $G = \mathrm{O}_n$ and alternating $(\beta =
\varepsilon)$ for $G = \mathrm{Sp}_n\,$.
\begin{lem}
The isomorphism $W^\ast \to \wedge^2 (V^\ast)^G$ has a realization by
\begin{displaymath}
    Q^\ast x_c^{c^\prime} = \tilde{\zeta}_i^{c^\prime} \zeta_c^i
    \;, \quad Q^\ast y^{c^\prime c} = \tilde{\zeta}_i^{c^\prime}
    \beta^{ij} \tilde{\zeta}_j^c \;, \quad Q^\ast y_{c c^\prime}
    = \zeta_c^i \beta_{ij} \zeta_{c^\prime}^j \;,
\end{displaymath}
where $\beta_{ij} = (\beta e_j)(e_i)$ are the matrix entries of
$\beta : \, \mathbb{C}^n \to (\mathbb{C}^n)^\ast$, and $\beta^{ij} =
f^i(\beta^{-1} f^j)$.
\end{lem}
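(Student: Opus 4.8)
The plan is to run, essentially verbatim, the computation used for the analogous statement in the boson-boson sector (the lemma preceding Proposition~\ref{lem:Q-surj}), with the symmetric algebra $\mathrm{S}(V^\ast)$ replaced throughout by the exterior algebra $\wedge(V^\ast)$ and the commuting coordinates $Z,\tilde Z$ replaced by the Grassmann generators $\zeta,\tilde\zeta$; the one point that genuinely needs care is the sign bookkeeping forced by anticommutativity. Recall that $Q^\ast$ is the algebra homomorphism $\mathcal{O}(W)\to\wedge(V^\ast)$ obtained by substituting the matrix entries of $Q(L\oplus\tilde L)$ into the polynomial generators of $\mathcal{O}(W)$, with the convention that products of generators are ordered as the matrix factors appear from left to right (the map $Q$ itself being only formal in the exterior setting, as was noted above).

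First I would compute the images under $Q^\ast$ of the chosen basis of $W^\ast$. Using $\tilde\zeta_i^c(L)=f^c(Le_i)$ and $\zeta_c^i(\tilde L)=f^i(\tilde L e_c)$, the $(c',c)$-entry of the upper-left block $L\tilde L$ of $Q(L\oplus\tilde L)$ expands as $\sum_i\tilde\zeta_i^{c'}(L)\,\zeta_c^i(\tilde L)$, so with the ordering convention above one gets $Q^\ast x_c^{c'}=\tilde\zeta_i^{c'}\zeta_c^i$ (summation convention), now a degree-two element of $\wedge(V^\ast)$. The same bookkeeping applied to the off-diagonal blocks $L\beta^{-1}L^{\mathrm t}$ and $\tilde L^{\mathrm t}\beta\tilde L$ yields $Q^\ast y^{c'c}=\tilde\zeta_i^{c'}\beta^{ij}\tilde\zeta_j^c$ and $Q^\ast y_{cc'}=\zeta_c^i\beta_{ij}\zeta_{c'}^j$, with $\beta^{ij}=f^i(\beta^{-1}f^j)$ and $\beta_{ij}=(\beta e_j)(e_i)$ exactly as stated.

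Next I would confirm that these quadratic Grassmann expressions carry the same symmetry relations as the coordinates on $W$, so that $Q^\ast$ is consistent. Relabelling the summation indices $i\leftrightarrow j$ in $\tilde\zeta_i^{c'}\beta^{ij}\tilde\zeta_j^c$ and using $\tilde\zeta_j^{c'}\tilde\zeta_i^c=-\tilde\zeta_i^c\tilde\zeta_j^{c'}$ together with $\beta^{ji}=\sigma(\beta)\,\beta^{ij}$ (with $\sigma(\delta)=+1$ and $\sigma(\varepsilon)=-1$) gives $Q^\ast y^{c'c}=-\sigma(\beta)\,Q^\ast y^{cc'}$, and likewise $Q^\ast y_{cc'}=-\sigma(\beta)\,Q^\ast y_{c'c}$; this is precisely the relation $y^{c'c}=\mp y^{cc'}$ with the upper sign for $G=\mathrm{O}_n$, i.e.\ the \emph{opposite} sign from the boson-boson sector. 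This sign flip is exactly what makes $W=\mathrm{Sym}_a(U_q)$ for $G=\mathrm{O}_n$ and $W=\mathrm{Sym}_s(U_q)$ for $G=\mathrm{Sp}_n$ (rather than the reverse), in agreement with Lemma~\ref{lem:FF-invs}, and it reconciles the index ranges $c<c'$ resp.\ $c\le c'$ for the $y$'s with $\dim W=q(2q-1)$ resp.\ $q(2q+1)$.

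Finally I would argue that the images listed above are linearly independent and hence form a basis of $\wedge^2(V^\ast)^G$. Under the decomposition $\wedge^2(V^\ast)=\wedge^2(\mathrm{Hom}(\mathbb{C}^n,\mathbb{C}^q)^\ast)\oplus\bigl(\mathrm{Hom}(\mathbb{C}^n,\mathbb{C}^q)^\ast\wedge\mathrm{Hom}(\mathbb{C}^q,\mathbb{C}^n)^\ast\bigr)\oplus\wedge^2(\mathrm{Hom}(\mathbb{C}^q,\mathbb{C}^n)^\ast)$ the three families $Q^\ast y^{c'c}$, $Q^\ast x_c^{c'}$, $Q^\ast y_{cc'}$ lie in complementary summands, so independence reduces to independence within each family; and within each family elements attached to distinct index pairs involve disjoint sets of wedge monomials (distinguished by their $\mathbb{C}^q$-indices), while invertibility of $\beta$ makes each element nonzero. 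As the total number of these elements equals $\dim W=\dim W^\ast=\dim\wedge^2(V^\ast)^G$ in each case $G=\mathrm{GL}_n,\mathrm{O}_n,\mathrm{Sp}_n$ by Lemma~\ref{lem:FF-invs}, it follows that $Q^\ast$ restricts to the asserted isomorphism $W^\ast\to\wedge^2(V^\ast)^G$. The only delicate step is the sign accounting of the third paragraph; everything else transcribes the boson-boson argument.
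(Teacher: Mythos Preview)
Your proof is correct and follows essentially the same approach as the paper's: both verify that the listed quadratic Grassmann expressions are $G$-invariant, count them, and conclude by matching $\dim W^\ast = \dim \wedge^2(V^\ast)^G$ from Lemma~\ref{lem:FF-invs}. You supply more detail than the paper does---in particular the explicit sign check that $Q^\ast y^{c'c} = -\sigma(\beta)\,Q^\ast y^{cc'}$ and the linear-independence argument via the three-summand decomposition of $\wedge^2(V^\ast)$---but these are elaborations of steps the paper simply asserts, not a different route.
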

\begin{proof}
All functions $\tilde{\zeta}_i^{c^\prime} \zeta_c^i$, $\tilde
{\zeta}_i^{c^\prime} \beta^{ij} \tilde{\zeta}_j^c$ and $\zeta_c^i
\beta_{ij} \zeta_{c^\prime}^j$ are $G$-invariant in the pertinent
cases. The $q^2$ functions $\tilde{\zeta}_i^{c^\prime} \zeta_c^i$
form a basis of $\wedge^2 (V^\ast)^{\mathrm{GL}_n}$. By including the
$q(q-1)$ functions $\tilde{\zeta}_i^{c^\prime} \delta^{ij} \tilde{
\zeta}_j^c$ and $\zeta_c^i \delta_{ij} \zeta_{c^\prime}^j$ for $c <
c^\prime$, we get a basis of $\wedge^2 (V^\ast)^{\mathrm{O}_n\,}$.
Replacing $\beta = \delta$ by $\beta = \varepsilon$ and expanding the
index range to $c \le c^\prime$ we get a basis of $\wedge^2 (V^\ast)
^{\mathrm{Sp}_n}$. Thus the linear operator $Q^\ast$ takes one basis
to another one and hence is an isomorphism.
\end{proof}
Next, we review some useful representation-theoretic facts about
$\wedge(V^\ast)^G$.

\subsection{$G^\prime$-irreducibility of $\mathcal{A}_V^G$}

Taking $V \oplus V^\ast \simeq \mathbb{C}^{4N}$ to be equipped with
the canonical symmetric form $s(v \oplus \varphi , v^\prime \oplus
\varphi^\prime) = \varphi^\prime(v) + \varphi(v^\prime)$, one defines
the Clifford algebra $\mathrm{Cl}(V \oplus V^\ast)$ to be the
associative algebra generated by $V \oplus V^\ast \oplus \mathbb{C}$
with relations
\begin{displaymath}
    w w^\prime + w^\prime w = s(w,w^\prime) 1
    \qquad (w, w^\prime \in V \oplus V^\ast) \;.
\end{displaymath}
The linear span of the skew-symmetric quadratic elements $(w w^\prime
- w^\prime w)$ is closed under the commutator in $\mathrm{Cl}(V
\oplus V^\ast)$ and is canonically isomorphic to the Lie algebra of
the orthogonal group of the vector space $V \oplus V^\ast$ with
symmetric bilinear form $s\,$. By exponentiating this Lie algebra
inside the Clifford algebra, one obtains the spin group, a connected
and simply connected Lie group denoted by $\mathrm{Spin}(V\oplus
V^\ast) = \mathrm{Spin}_{4N}\,$.

Via their actions on $V$, the complex Lie groups $G = \mathrm{GL}_n
\,$, $\mathrm{O}_n\,$, and $\mathrm{Sp}_n\,$, are realized as
subgroups of $\mathrm{Spin}_{4N}\,$. The centralizer of $G$ in
$\mathrm{Spin}_{4N}$ is another complex Lie group, $G^\prime$, called
the Howe dual partner of $G$ \cite{Howe1995}. The list of such Howe
dual pairs is
\begin{displaymath}
    G \times G^\prime : \quad \mathrm{GL}_n \times \widetilde{
    \mathrm{GL}}_{2q} \;, \quad \mathrm{O}_n \times \mathrm{Spin}_{4q}
    \;, \quad \mathrm{Sp}_n \times \mathrm{Sp}_{4q} \;.
\end{displaymath}
Note that from $\mathrm{O}_n \subset \mathrm{GL}_n$ and
$\mathrm{Sp}_n \subset \mathrm{GL}_n$ one has $\widetilde{
\mathrm{GL}}_{2q} \subset \mathrm{Spin}_{4q}$ and $\widetilde{
\mathrm{GL}}_{2q} \subset \mathrm{Sp}_{4q}\,$. In the case of $n$
being odd, $\widetilde{\mathrm{GL}}_{2q}$ is a double covering of
$\mathrm{GL}_{2q}$ (see below).

A few words of explanation concerning the pairs $G \times G^\prime$
are in order. In the case of the first pair one regards the vector
space $V \oplus V^\ast$ as
\begin{displaymath}
    V \oplus V^\ast \simeq U_q \otimes \mathbb{C}^n \oplus (U_q)^\ast
    \otimes (\mathbb{C}^n)^\ast \;, \quad U_q = \mathbb{C}^q \oplus
    (\mathbb{C}^q)^\ast \;,
\end{displaymath}
and the centralizer of $G = \mathrm{GL}_n$ in $\mathrm{Spin}(V \oplus
V^\ast)$ is then seen to be $G^\prime = \mathrm{GL}(U_q) \equiv
\mathrm{GL}_{2q}$ (or a double cover thereof if $n$ is odd). In the
last two cases the $G$-equivariant isomorphism $\beta : \,
\mathbb{C}^n \to (\mathbb{C}^n)^\ast$ leads to an identification
\begin{displaymath}
    V \oplus V^\ast \simeq (U_q^{\vphantom{\ast}} \oplus U_q^\ast)
    \otimes \mathbb{C}^n \;.
\end{displaymath}
The symmetric bilinear form $s$ on $V \oplus V^\ast$ in conjunction
with $\beta$ induces a bilinear form on $U_q \oplus (U_q)^\ast$. For
$G = \mathrm{O}_n$ this form is the canonical symmetric bilinear form
$s$ and one has the centralizer $G^\prime = \mathrm{Spin}(
U_q^{\vphantom{\ast}} \oplus U_q^\ast ; s) \equiv \mathrm{Spin}_{4q}
\,$. For $G = \mathrm{Sp}_n$ the induced form is the canonical
alternating form $a$ and one has $G^\prime = \mathrm{Sp} (
U_q^{\vphantom{\ast}} \oplus U_q^\ast; a) \equiv \mathrm{Sp}_{4q}\,$.

Now the exterior algebra $\wedge (V^\ast)$ carries the spinor
representation of the Clifford algebra $\mathrm{Cl}(V \oplus
V^\ast)$. This is the representation which is obtained by letting
vectors $v \in V$ and linear forms $\varphi \in V^\ast$ operate by
contraction $\iota(v) : \, \wedge^k (V^\ast) \to \wedge^{k-1}
(V^\ast)$ and exterior multiplication $\varepsilon (\varphi) : \,
\wedge^k(V^\ast) \to \wedge^{k+1}(V^\ast)$.

By the inclusion $G \times G^\prime \subset \mathrm{Spin}_{4N}
\subset \mathrm{Cl}(V \oplus V^\ast)$ the spinor representation of
the Clifford algebra gives rise to a representation on $\mathcal{A}_V
= \wedge(V^\ast)$ of each Howe dual pair $G \times G^\prime$. It is
known \cite{Howe1995} that $\mathcal{A}_V$ decomposes as a direct sum
$\oplus_i \, (U_i \otimes U'_i)$ of irreducible $G \times G^\prime$
representations such that $U_i \not\simeq U_j$ and $U'_i \not\simeq
U'_j$ for $i \not= j\,$. In particular, the representation of
$G^\prime$ on the algebra of $G$-invariants $\mathcal{A}_V^G =
\wedge(V^\ast)^G$ is irreducible.

Next we observe that each of our Howe dual groups $G^\prime$ has rank
$2q\,$. Moreover, one can arrange for all of them to share the same
maximal torus. This is the Abelian group $T = (\mathbb{C}^\times)^q
\times (\mathbb{C}^\times)^q$ acting on $V = V_1 = \mathrm{Hom}(
\mathbb{C}^n , \mathbb{C}^q) \oplus \mathrm{Hom} (\mathbb{C}^q ,
\mathbb{C}^n)$ by diagonal transformations
\begin{displaymath}
    (t_1 , t_2) . (L \oplus \tilde{L}) \mapsto (t_1 L) \oplus
    (\tilde{L} \, t_2) \;.
\end{displaymath}
The induced action of elements $H = (H_1 \, , H_2)$ of the Cartan
algebra $\mathfrak{t} = \mathrm{Lie}(T) = \mathbb{C}^q \oplus
\mathbb{C}^q$ on the spinor module is by operators
\begin{equation}\label{eq:hat-H}
    \hat{H} = {\textstyle{\frac{1}{2}}} \sum\nolimits_c \left(
    (H_1)_c [ \iota(e_i^c) , \varepsilon(f_c^i) ] + (H_2)_c [
    \iota(e_c^i) , \varepsilon(f_i^c) ] \right) \;.
\end{equation}
Here $\{ e_i^c \}$ means the standard basis of $\mathrm{Hom}
(\mathbb{C}^n , \mathbb{C}^q)$, and $\{ e_c^i \}$ means the standard
basis of $\mathrm{Hom}(\mathbb{C}^q , \mathbb{C}^n)$, while $\{ f_c^i
\}$ and $\{ f_i^c \}$ are the corresponding dual bases. The factor of
$1/2$ in front of the sum reflects the fact that the spinor
representation is a ``square root'' representation.

The zero-degree component $\wedge^0(V^\ast) = \mathbb{C}$ -- the
`vacuum' in physics language -- is stabilized by the action of these
operators $\hat{H}\,$. Applying $H$ as $\hat{H}$ to $1 \in
\wedge^0(V^\ast)$ we get
\begin{displaymath}
    H . 1 = \lambda(H) 1 \;, \quad \lambda(H) = \frac{n}{2}
    \sum_{c=1}^q \big( (H_1)_c + (H_2)_c \big) \;.
\end{displaymath}
Note that the weight $\lambda$ is integral for even $n$, but
half-integral for odd $n$. (This is why the latter case calls for the
group $\mathrm{GL}_{2q}$ to be replaced by a double cover $G^\prime =
\widetilde{\mathrm{GL}}_{2q}\,$.) We will denote the integrated
weight or character by $\chi := \mathrm{e}^{\lambda \circ \ln}$.

\subsection{Berezin integral and lowest weight space}
\label{sect:4.3}

We are now going to think of the irreducible
$G^\prime$-representation space $\mathcal{A}_V^G$ as an irreducible
highest-weight module for the Lie algebra of $G^\prime$. To keep the
notation simple we omit the prime and denote this Lie algebra by
$\mathfrak{g} := \mathrm{Lie}(G^\prime)$. Thus
\begin{displaymath}
    \mathfrak{g} = \left\{ \begin{array}{ll}
    \mathfrak{gl}_{2q} \;, &\quad G = \mathrm{GL}_n \;, \\
    \mathfrak{o}_{4q} \;, &\quad G = \mathrm{O}_n \;, \\
    \mathfrak{sp}_{4q} \;, &\quad G = \mathrm{Sp}_n \;.
    \end{array} \right.
\end{displaymath}
The vacuum weight $\lambda$ is a highest weight for the $\mathfrak{g}
$-representation $\mathcal{A}_V^G$. We emphasize this fact by making
a change of notation $\mathcal{A}_V^G \equiv \mathcal{V} (\lambda)$.

The spinor module comes with a $\mathbb{Z}$-grading by the degree,
$\wedge (V^\ast) = \oplus_{k=0}^{2N} \wedge^k(V^\ast)$ where $N =
\frac{1}{2} \mathrm{dim}\, V = qn\,$. Since $G$ is defined on
$\mathbb{C}^n$ and acts on $V$, this grading carries over to the
algebra $\mathcal{A}_V^G = \mathcal{V} (\lambda)$:
\begin{displaymath}
    \mathcal{V}(\lambda) = \bigoplus\nolimits_{k \ge 0}
    \mathcal{V}(\lambda)_k \;.
\end{displaymath}
We denote the highest degree part by $\mathcal{V}(\lambda)
_\mathrm{top}\,$. The highest degree part $\wedge^{2N}(V^\ast)$ of
the spinor module is a complex line stable under the symmetry group
$G$. It is easy to check that $G$ in fact acts trivially on
$\wedge^{2N} (V^\ast)$, so $\mathcal{V}(\lambda)_\mathrm{top} =
\mathcal{V}(\lambda)_{2N} = \wedge^{2N} (V^\ast)$.

Now there exists a canonical generator $\Omega_V \in \wedge^{2N}(V)$
by the following principle. Since the trace form
\begin{displaymath}
    \mathrm{Hom}( \mathbb{C}^n , \mathbb{C}^q) \otimes
    \mathrm{Hom}(\mathbb{C}^q , \mathbb{C}^n) \to \mathbb{C} \;,
    \quad A \otimes B \mapsto \mathrm{Tr}\, AB \;,
\end{displaymath}
is non-degenerate, the vector spaces $\mathrm{Hom}( \mathbb{C}^n ,
\mathbb{C}^q)$ and $\mathrm{Hom}(\mathbb{C}^q , \mathbb{C}^n)$ are
canonically dual to each other. If $\{ e_1 , \ldots, e_N \}$ is any
basis of $\mathrm{Hom}(\mathbb{C}^n , \mathbb{C}^q)$, let $\{ f_1 ,
\ldots, f_N \}$ be the corresponding dual basis of $\mathrm{Hom}
(\mathbb{C}^q , \mathbb{C}^n )$. The exterior product
\begin{displaymath}
    \Omega_V = f_N \wedge e_N \wedge \cdots \wedge f_1 \wedge e_1
\end{displaymath}
then is independent of the choice of basis and only depends on how we
order the two summands in $V = \mathrm{Hom}(\mathbb{C}^n ,
\mathbb{C}^q) \oplus \mathrm{Hom}(\mathbb{C}^q , \mathbb{C}^n)$. For
definiteness, let us say that $\mathrm{Hom}(\mathbb{C}^n ,
\mathbb{C}^q)$ is the first summand. We then have a canonical element
$\Omega_V \in \wedge^{2N} (V)$, and by evaluating the canonical
pairing $\wedge^{2N}(V) \otimes \wedge^{2N} (V^\ast) \to \mathbb{C}$
with fixed argument $\Omega_V$ in the first factor, we get an
identification $\mathcal{V} (\lambda)_{2N} = \wedge^{2N} (V^\ast)
\simeq \mathbb{C}\,$.
\begin{defn}
The projection $\pi : \, \mathcal{V}(\lambda) \to \mathcal{V}
(\lambda)_{2N} \simeq \mathbb{C}$ is called the Berezin integral, and
is here denoted by $f \mapsto \Omega_V[f]$.
\end{defn}
Another way to view this projection is as follows. The vacuum
$\wedge^0(V^\ast)$ is the space of highest-weight vectors for
$\mathfrak{g}\,$, whereas the top degree part $\wedge^{2N}(V^\ast)$
is the space of \emph{lowest-weight vectors}. The latter are the
weight vectors of weight $-\lambda\,$. Indeed, going from zero to top
degree amounts to exchanging the operators $\varepsilon$ and $\iota$,
and since the expression (\ref{eq:hat-H}) is skew-symmetric in these,
the weight changes sign.

Now define the subgroup $H \subset G^\prime$ to be the intersection
of the stabilizer of $\mathcal{V}(\lambda)_0 = \wedge^0(V^\ast)$ with
the stabilizer of $\mathcal{V}(\lambda)_{2N} = \wedge^{2N} (V^\ast)$.
For $n \in 2\mathbb{N}$ these are the groups
\begin{displaymath}
    H = \left\{ \begin{array}{ll} \mathrm{GL}_q \times
    \mathrm{GL}_q \subset G^\prime =\mathrm{GL}_{2q} \;, &\quad
    G = \mathrm{GL}_n \;, \\ \mathrm{GL}_{2q} \subset G^\prime
    = \mathrm{Spin}_{4q} \;, &\quad G = \mathrm{O}_n\;, \\
    \mathrm{GL}_{2q} \subset G^\prime = \mathrm{Sp}_{4q} \;,
    &\quad G = \mathrm{Sp}_n \;. \end{array} \right.
\end{displaymath}
If $n$ is odd, we replace $H$ by the double cover forced on us by the
square root nature of the spinor representation or the highest weight
$\lambda$ being half-integral.

Let us now specify how the Lie algebra $\mathrm{Lie}(H)$ acts on the
spinor module $\wedge(V^\ast)$. (This will do as a temporary
substitute for the more detailed description of the $H$-action on
$W^\ast = \wedge^2 (V^\ast)^G$ given below.) In the first case, one
has $H = \mathrm{GL}(\mathbb{C}^q) \times \mathrm{GL}( (\mathbb{C}^q)
^\ast) \equiv \mathrm{GL}_q \times \mathrm{GL}_q$ and $X = (A , D)
\in \mathrm{Lie}(H)$ acts on $\wedge(V^\ast)$ as
\begin{displaymath}
    \hat{X} = {\textstyle{\frac{1}{2}}}[ \iota(A e_i^c) ,
    \varepsilon(f_c^i) ] + {\textstyle{\frac{1}{2}}}
    [\iota(e_c^i) , \varepsilon(D f_i^c) ] \;,
\end{displaymath}
where the notation of (\ref{eq:hat-H}) is being used. Note that
$\hat{X} . 1 = \frac{n}{2} (\mathrm{Tr}\, A + \mathrm{Tr}\, D)$. In
the last two cases, one uses $V \simeq \mathrm{Hom}( \mathbb{C}^n ,
U_q)$ and fixes a basis $\{ e_i^b \}_{i = 1, \ldots, n}^{b = 1,
\ldots, 2q}$ of $\mathrm{Hom}( \mathbb{C}^n , U_q)$, with dual basis
$\{ f_b^i \}$. With these conventions, an element $X$ of the Lie
algebra of $H = \mathrm{GL}(U_q) \equiv \mathrm{GL}_{2q}$ acts on
$\wedge(V^\ast)$ as $\hat{X} = \frac{1}{2}[\iota(X e_i^b),
\varepsilon(f_b^i)]$. Note $\hat{X}.1 = \frac{n}{2}\mathrm{Tr}\,X\,$.

By definition, the roots of $H$ are the roots of $\mathfrak{g}$ which
are orthogonal to $\lambda\,$. Since all groups $H$ are connected
subgroups of $G^\prime$ of maximal rank, they are in fact
characterized by their root systems. Note also that all of our groups
$H$ are reductive. Furthermore, the character $\chi : \, T \to
\mathbb{C}^\times$ extends to the character $\chi : \, H \to
\mathbb{C}^\times$, $h \mapsto \mathrm{Det}^{n/2}(h)$.

Being orthogonal to the highest weight $\lambda\,$, the root system
of $H$ is orthogonal also to the lowest weight $-\lambda\,$. It
follows that the space of lowest-weight vectors $\mathcal{V}
(\lambda)_{2N}$ is stable with respect to $H$: it is the
one-dimensional representation of $H$ corresponding to the reciprocal
character $\chi^{-1}(h) = \mathrm{Det}^{-n/2}(h)$. Since $H$ is
reductive and the $T$-weight space $\mathcal{V} (\lambda)_{2N} =
\wedge^\mathrm{top} (V^\ast)$ has dimension one, $\mathcal{V}
(\lambda)$ decomposes canonically as a $H$-representation space:
\begin{displaymath}
    \mathcal{V}(\lambda) = \mathcal{V}(\lambda)_{2N} \oplus U \;,
\end{displaymath}
where $U$ is the sum of all other $H$-subrepresentations in
$\mathcal{V}(\lambda)$. From $\mathrm{dim}\, \mathcal{V}
(\lambda)_{2N} = 1$ we then infer that the space of $H$-equivariant
homomorphisms $\mathrm{Hom}_H (\mathcal{V}(\lambda), \mathcal{V}
(\lambda)_{2N})$ is one-dimensional. Now the Berezin integral $\pi :
\, \mathcal{V} (\lambda) \to \mathcal{V}(\lambda)_{2N}$ is a non-zero
element of that space, and we therefore have the following result.
\begin{lem}
$\mathrm{Hom}_H (\mathcal{V}(\lambda),\mathcal{V}(\lambda)_{2N}) =
\mathbb{C}\, \pi\,$.
\end{lem}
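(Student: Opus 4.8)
The plan is to read off the claim from the fact that $\mathcal{V}(\lambda)$ is a \emph{multiplicity-free} $H$-module in the relevant degree, together with the one-dimensionality of $\mathcal{V}(\lambda)_{2N}$. First I would recall from the previous subsection that $\mathcal{V}(\lambda)_{2N} = \wedge^{2N}(V^\ast)$ is the space of lowest-weight vectors, carrying the one-dimensional $H$-representation with character $\chi^{-1} = \mathrm{Det}^{-n/2}$, and that, since $H$ is reductive and this $T$-weight space is one-dimensional, we have a canonical $H$-decomposition $\mathcal{V}(\lambda) = \mathcal{V}(\lambda)_{2N} \oplus U$ with $U$ the sum of all the other $H$-isotypic pieces. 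This is exactly the decomposition already displayed in the excerpt, so nothing new is needed here.

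Next I would invoke Schur's lemma in the following form: for any $H$-module of the shape $M \oplus U$ with $M$ irreducible (here one-dimensional) and $U$ not containing a copy of $M$, the space $\mathrm{Hom}_H(M \oplus U, M)$ is one-dimensional, spanned by the projection onto $M$ along $U$. Applying this with $M = \mathcal{V}(\lambda)_{2N}$, I conclude $\mathrm{Hom}_H(\mathcal{V}(\lambda), \mathcal{V}(\lambda)_{2N})$ is one-dimensional. Finally I would note that the Berezin integral $\pi$ is by its very definition the projection onto $\mathcal{V}(\lambda)_{2N}$ along the complementary graded pieces $\bigoplus_{k < 2N} \mathcal{V}(\lambda)_k$; since the $H$-decomposition refines this degree decomposition (each $\mathcal{V}(\lambda)_k$ is $H$-stable because $H$ acts through $G^\prime \subset \mathrm{Spin}_{4N}$ by operators preserving the grading up to the fact that $\hat{X}$ from $[\iota,\varepsilon]$ is degree-preserving), the map $\pi$ agrees with the projection along $U$. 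Hence $\pi$ is a non-zero element of a one-dimensional space, which gives $\mathrm{Hom}_H(\mathcal{V}(\lambda), \mathcal{V}(\lambda)_{2N}) = \mathbb{C}\,\pi$.

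The one genuine point requiring care — the potential obstacle — is verifying that $U$ really contains \emph{no} copy of the character $\chi^{-1}$, equivalently that $\chi^{-1}$ occurs with multiplicity exactly one in $\mathcal{V}(\lambda)|_H$. This is where the structure theory enters: $\mathcal{V}(\lambda)$ is an irreducible highest-weight $\mathfrak{g}$-module, the roots of $H$ are precisely the $\mathfrak{g}$-roots orthogonal to $\lambda$, and $-\lambda$ is the lowest weight; one must argue that $-\lambda$ is the only weight of $\mathcal{V}(\lambda)$ lying in the $H$-Weyl-orbit of $-\lambda$, and that its weight space is one-dimensional. The cleanest route is to observe that a weight $\mu$ of $\mathcal{V}(\lambda)$ with $\mu$ in the same $H$-orbit as $-\lambda$ must satisfy $\langle \mu, \lambda\rangle = \langle -\lambda,\lambda\rangle$ (as $H$-Weyl reflections fix $\lambda$), while every weight $\mu$ of the irreducible module satisfies $\mu = -\lambda + \sum (\text{positive roots})$ read from the bottom, forcing $\langle \mu,\lambda\rangle \ge \langle -\lambda,\lambda\rangle$ with equality only for $\mu = -\lambda$; combined with $\dim \mathcal{V}(\lambda)_{-\lambda} = \dim \wedge^{2N}(V^\ast) = 1$ this settles it. Once this multiplicity-one statement is in hand, the rest of the argument is the formal Schur-lemma manipulation sketched above.
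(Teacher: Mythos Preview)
Your proposal is correct and follows essentially the same route as the paper: use reductivity of $H$ to split $\mathcal{V}(\lambda) = \mathcal{V}(\lambda)_{2N} \oplus U$, observe that $\pi$ is a nonzero $H$-equivariant projection, and conclude one-dimensionality from the fact that the one-dimensional character $\chi^{-1}$ occurs with multiplicity one. The only difference is that the paper dispatches the multiplicity-one check in a single stroke---the entire $T$-weight space of weight $-\lambda$ in $\mathcal{V}(\lambda)$ is $\wedge^{2N}(V^\ast)$, which is one-dimensional, so the one-dimensional $H$-character $\chi^{-1}$ (whose unique $T$-weight is $-\lambda$) can occur at most once---whereas your detour through $H$-Weyl orbits and the inequality $\langle\mu,\lambda\rangle \ge \langle -\lambda,\lambda\rangle$ is unnecessary (and the claim that equality forces $\mu=-\lambda$ is slightly off, since equality also holds when $\mu+\lambda$ is a sum of $H$-roots; fortunately you don't actually need this, as you correctly fall back on $\dim\mathcal{V}(\lambda)_{-\lambda}=1$).
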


\subsection{Parabolic induction}
\label{sect:4.4}

The $\mathfrak{g} $-representation $\mathcal{V}(\lambda)$ can be
constructed in another way, as follows. Decompose $\mathfrak{g}$ as
$\mathfrak{g} = \mathfrak{g}^- \oplus \mathfrak{h} \oplus
\mathfrak{g}^+$ where $\mathfrak{h} = \mathrm{Lie}(H)$ and
$\mathfrak{g}^\pm$ is the direct sum of the root subspaces of
$\mathfrak{g}$ corresponding to positive resp.\ negative roots not
orthogonal to $\lambda$. Since the highest weight $\lambda$ is the
weight of the vacuum with generator $1 \in \mathcal{V}(\lambda)_0\,$,
this implies that $\mathfrak{g}^+ . 1 = 0$ and $\mathfrak{g}^- . 1 =
\mathcal{V}(\lambda)_2\,$. Or, to put it in yet another way,
$\mathfrak{g}^+ \subset \mathfrak{g}$ is the subspace of elements
represented on the spinor module by operators of type $\iota \iota$,
while $\mathfrak{g}^- \subset \mathfrak{g}$ is the subspace of
operators of type $\varepsilon \varepsilon$.

Let $\mathfrak{p} := \mathfrak{h} \oplus \mathfrak{g}^+$. (The
notation is to convey that $\mathfrak{p}$ can be viewed as the Lie
algebra of a parabolic subgroup of $G^\prime\,$.) Since all roots of
$\mathfrak{h}$ are orthogonal to $\lambda$, the weight $\lambda : \,
\mathfrak{t} \to \mathbb{C}$ extends in the trivial way to a linear
function $\lambda : \, \mathfrak{h} \to \mathbb{C}\,$; the latter is
the function $\lambda(X) = (n/2) \mathrm{Tr}_{\mathbb{C}^{2q}} X$. We
further extend $\lambda$ trivially to all of $\mathfrak{p} =
\mathfrak{h} \oplus \mathfrak{g}^+$.

Let $\mathcal{U}(\mathfrak{p})$ be the universal enveloping algebra
of $\mathfrak{p}$ and denote by $V_\lambda := \mathcal{V}(\lambda)_0$
the one-dimensional $\mathcal{U}(\mathfrak{p})$-representation
defined by $X . v_\lambda = \lambda(X) v_\lambda$ for a generator
$v_\lambda \in V_\lambda$ and elements $X \in \mathfrak{p}\,$. Then
by the canonical left action of $\mathfrak{g}$ on $\mathcal{U}(
\mathfrak{g})$, the tensor product
\begin{displaymath}
    M(\lambda) := \mathcal{U}(\mathfrak{g}) \otimes_{\mathcal{U}
    (\mathfrak{p})} V_\lambda
\end{displaymath}
is a $\mathcal{U}(\mathfrak{g})$-representation of highest weight
$\lambda$ and highest-weight vector
\begin{displaymath}
    m_\lambda = 1 \otimes v_\lambda \;.
\end{displaymath}
This representation is called a generalized Verma module or the
universal highest-weight $\mathfrak{g}$-representation which is given
by parabolic induction from the one-dimensional representation
$V_\lambda$ of $\mathfrak{p}\,$. The module $M(\lambda)$ has the
following universal property.
\begin{lem}
Let $W$ be any $\mathfrak{g}$-module with a vector $w \not= 0$ such
that i) $X.w = \lambda(X) w$ for all $X \in \mathfrak{p}\,$, and ii)
$\mathcal{U}(\mathfrak{g}).w = W$. Then there exists a surjective
$\mathfrak{g}$-equivariant linear map $M(\lambda) \to W$ such that
$m_\lambda \mapsto w\,$.
\end{lem}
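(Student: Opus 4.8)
The plan is to invoke the standard universal property of induced modules (Frobenius reciprocity / the adjunction between restriction and induction for universal enveloping algebras). First I would observe that the data $(W,w)$ in the hypotheses gives a $\mathfrak{p}$-module map $V_\lambda \to W$, $v_\lambda \mapsto w$: condition (i) says precisely that $X \cdot w = \lambda(X) w$ for $X \in \mathfrak{p}$, so the line $\mathbb{C}w \subset W$ is a $\mathfrak{p}$-submodule isomorphic to $V_\lambda$, and the map sending the generator $v_\lambda$ to $w$ is $\mathfrak{p}$-equivariant. Extending scalars along $\mathcal{U}(\mathfrak{p}) \hookrightarrow \mathcal{U}(\mathfrak{g})$, this induces a $\mathcal{U}(\mathfrak{g})$-module homomorphism
\[
  \Phi \,:\, M(\lambda) = \mathcal{U}(\mathfrak{g}) \otimes_{\mathcal{U}(\mathfrak{p})} V_\lambda \;\longrightarrow\; W,
  \qquad u \otimes v_\lambda \mapsto u \cdot w \quad (u \in \mathcal{U}(\mathfrak{g})).
\]
Concretely, $\Phi$ is well-defined because the assignment $(u, v) \mapsto u \cdot (\text{image of } v)$ is $\mathcal{U}(\mathfrak{p})$-balanced, using condition (i) to absorb elements of $\mathcal{U}(\mathfrak{p})$; and it is $\mathfrak{g}$-equivariant since left multiplication by $\mathcal{U}(\mathfrak{g})$ on the first tensor factor is intertwined with the $\mathfrak{g}$-action on $W$.

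Next I would check the two required properties of $\Phi$. It sends the highest-weight vector $m_\lambda = 1 \otimes v_\lambda$ to $1 \cdot w = w$, as claimed. For surjectivity, the image $\Phi(M(\lambda))$ is a $\mathfrak{g}$-submodule of $W$ containing $w$, hence containing $\mathcal{U}(\mathfrak{g}) \cdot w$; by hypothesis (ii) this is all of $W$. Therefore $\Phi$ is the desired surjective $\mathfrak{g}$-equivariant map.

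There is essentially no serious obstacle here — the statement is a routine instance of the tensor-induction adjunction — so the only point requiring a little care is the verification that $\Phi$ is well-defined on the balanced tensor product, i.e.\ that $u p \otimes v_\lambda$ and $u \otimes p v_\lambda$ have the same image for $p \in \mathcal{U}(\mathfrak{p})$; this is exactly where hypothesis (i), extended multiplicatively from $\mathfrak{p}$ to $\mathcal{U}(\mathfrak{p})$, is used. (One may also note, though it is not needed for this lemma, that by the Poincaré–Birkhoff–Witt theorem $M(\lambda) \cong \mathcal{U}(\mathfrak{g}^-) \otimes V_\lambda$ as a vector space, which makes the surjectivity transparent: $W$ is spanned by $\mathcal{U}(\mathfrak{g}^-) \cdot w$ once one knows $\mathfrak{p} \cdot w \subseteq \mathbb{C}w$.)
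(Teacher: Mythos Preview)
Your proof is correct and is exactly the standard argument. The paper does not actually supply a proof of this lemma; it is stated without proof as the familiar universal property of the generalized Verma module, so your write-up fills in precisely what a reader would reconstruct.
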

In particular, our irreducible $\mathfrak{g}$-representation
$\mathcal{V}(\lambda)$ is of this kind. Thus there exists a
surjective $\mathfrak{g}$-equivariant map
\begin{displaymath}
    p \, : \,\, M(\lambda) \to \mathcal{V}(\lambda) \;.
\end{displaymath}

\subsection{$H$-structure of $M(\lambda)$}
\label{sect:4.5}

The $\mathcal{U}(\mathfrak{g})$-representation $M(\lambda)$ has
infinite dimension and cannot be integrated to a representation of
$G^\prime$. As we shall now explain, however, the situation is more
benign for the subgroup $H \subset G^\prime$.

From $\mathfrak{g} / \mathfrak{p} \simeq \mathfrak{g}^-$ we have an
isomorphism of vector spaces $\mathcal{U}( \mathfrak{g})
\otimes_{\mathcal{U}(\mathfrak{p})} V_\lambda \simeq \mathcal{U}(
\mathfrak{g}^-)$. By making the identification as
\begin{displaymath}
    \mathcal{U}(\mathfrak{g}^-) \otimes V_\lambda \simeq M(\lambda)
    \;, \quad n \otimes v_\lambda \mapsto n m_\lambda \;,
\end{displaymath}
we will actually get more, as follows. If $\alpha$, $\beta$ are any
two roots such that $\alpha$ is orthogonal to $\lambda$ and $\beta$
is not, then $\alpha + \beta$ is not orthogonal to $\lambda\,$. From
this one directly concludes that if $h \in \mathfrak{h}$ and $n \in
\mathfrak{g}^-$, then $[h,n] \in \mathfrak{g}^-$, i.e.,
$\mathfrak{g}^-$ (or $\mathfrak{g}^+$, for that matter) is normalized
by $\mathfrak{h}\,$. This action of $\mathfrak{h}$ on
$\mathfrak{g}^-$ extends to an $\mathfrak{h}$-action on
$\mathcal{U}(\mathfrak{g}^-)$: supposing that $n = n_1 \cdots n_r \in
\mathcal{U}(\mathfrak{g}^-)$ where $n_i \in \mathfrak{g}^-$ ($i = 1,
\ldots, r$), we let
\begin{displaymath}
    \mathrm{ad}(h) n := \sum\nolimits_j n_1 \cdots n_{j-1}
    [ h , n_j ] n_{j+1} \cdots n_r \;,
\end{displaymath}
and by this definition we have the following commutation rule of
operators in $\mathcal{U}(\mathfrak{g})$:
\begin{displaymath}
    h n = \mathrm{ad}(h) n + n h \;.
\end{displaymath}
If we now let $\mathcal {U} (\mathfrak{h})$ act on $M(\lambda)$ by
the canonical left action and on $\mathcal{U}( \mathfrak{g}^-)
\otimes V_\lambda$ by
\begin{displaymath}
    h . (n \otimes v_\lambda) := (\mathrm{ad}(h) n) \otimes
    v_\lambda + \lambda(h) n \otimes v_\lambda \;,
\end{displaymath}
then we see that the identification $\mathcal{U}(\mathfrak{g}^-)
\otimes V_\lambda \stackrel{\sim}{\to} M(\lambda)$ by $n \otimes
v_\lambda \mapsto n m_\lambda$ is an isomorphism of $\mathcal{U}
(\mathfrak{h})$-representations.

Now every element in $\mathcal{U} (\mathfrak{g}^-)$ lies in an
$\mathfrak{h}$-representation of finite dimension. Basic principles
therefore entail the following consequence.
\begin{lem} The representation of the Lie algebra $\mathfrak{h}$
on $M(\lambda)$ can be integrated to a representation of the Lie
group $H$.
\end{lem}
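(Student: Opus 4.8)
The plan is to build the $H$-action on $M(\lambda)$ out of the adjoint action of $G^\prime$, using the $\mathcal{U}(\mathfrak{h})$-module isomorphism $\mathcal{U}(\mathfrak{g}^-)\otimes V_\lambda \stackrel{\sim}{\to} M(\lambda)$ established just above. The first step I would carry out is to check that $\mathrm{Ad}(H)$ stabilizes the subspace $\mathfrak{g}^-\subset\mathfrak{g}$. Since $\lambda$ is dominant, the roots not orthogonal to $\lambda$ are precisely those $\alpha$ with $\langle\alpha,\lambda\rangle\neq 0$, and $\mathfrak{g}^\pm$ is the sum of the root spaces with $\pm\langle\alpha,\lambda\rangle>0$. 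Now $H$ is connected reductive of maximal rank, hence generated by the maximal torus $T$ together with the root subgroups $U_\beta$ for the roots $\beta$ of $H$, which are exactly the roots of $\mathfrak{g}$ orthogonal to $\lambda$. Conjugation by $T$ preserves every root space, and $\mathrm{Ad}(U_\beta)$ maps $\mathfrak{g}_\alpha$ into $\bigoplus_k \mathfrak{g}_{\alpha+k\beta}$; since $\langle\alpha+k\beta,\lambda\rangle=\langle\alpha,\lambda\rangle$ keeps its sign, $\mathrm{Ad}(U_\beta)$ preserves $\mathfrak{g}^-$. Thus $\mathrm{Ad}|_{\mathfrak{g}^-}$ is a finite-dimensional representation $H\to\mathrm{GL}(\mathfrak{g}^-)$ whose differential is the $\mathfrak{h}$-action $\mathrm{ad}|_{\mathfrak{g}^-}$ used above. (For odd $n$, where $H$ is a double cover, $\mathrm{Ad}$ simply factors through the quotient.)

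Next I would exploit the fact that $\mathfrak{g}^-$ consists of operators of type $\varepsilon\varepsilon$ on the spinor module and is therefore abelian, so that $\mathcal{U}(\mathfrak{g}^-)=\mathrm{S}(\mathfrak{g}^-)=\bigoplus_{k\ge 0}\mathrm{S}^k(\mathfrak{g}^-)$. The $H$-representation on $\mathfrak{g}^-$ extends canonically to a representation of $H$ on $\mathrm{S}(\mathfrak{g}^-)$ by graded algebra automorphisms, with each $\mathrm{S}^k(\mathfrak{g}^-)$ a finite-dimensional $H$-submodule, and its differential is exactly the derivation extension $\mathrm{ad}(h)$ of $\mathrm{ad}(h)|_{\mathfrak{g}^-}$ appearing in the $\mathfrak{h}$-action described earlier. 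On the remaining factor, the one-dimensional $\mathfrak{h}$-representation $X.v_\lambda=\lambda(X)v_\lambda$ with $\lambda(X)=\frac{n}{2}\mathrm{Tr}_{\mathbb{C}^{2q}}X$ is the differential of the character $\chi:H\to\mathbb{C}^\times$, $h\mapsto\mathrm{Det}^{n/2}(h)$ — this is precisely where the double cover is needed when $n$ is odd. Tensoring the two, I obtain a representation of $H$ on $\bigoplus_{k\ge 0}\mathrm{S}^k(\mathfrak{g}^-)\otimes V_\lambda$, locally finite since it is a direct sum of finite-dimensional $H$-modules. Transporting it along the isomorphism with $M(\lambda)$ and differentiating, the product rule for tensor products returns $h.(n\otimes v_\lambda)=(\mathrm{ad}(h)n)\otimes v_\lambda+\lambda(h)\,n\otimes v_\lambda$, i.e.\ the given $\mathfrak{h}$-action. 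This proves the lemma.

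The step I expect to be the main obstacle is not any computation but the conceptual point of producing an \emph{honest} (rather than merely projective) representation of $H$: a general finite-dimensional representation of $\mathfrak{h}$ need not integrate over the non-simply-connected group $H$. What rescues the argument is that the single building block $\mathfrak{g}^-$ is not an abstract $\mathfrak{h}$-module but the restriction to $H$ of a genuine $G^\prime$-module — a summand of the adjoint representation — so that everything constructed from it (symmetric powers, tensor product with the character $\chi$) stays automatically within the category of honest $H$-modules. I would therefore present the verification of $\mathrm{Ad}(H)$-stability of $\mathfrak{g}^-$ and the identification $d\chi=\lambda$ carefully, since those are the two places where the geometry of $G^\prime$ (and the need for the double cover) really enters.

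Finally, I would add a brief remark that no analytic difficulty arises from $M(\lambda)$ being infinite-dimensional: by construction the $H$-action is locally finite, the algebraic direct sum of its actions on the finite-dimensional summands $\mathrm{S}^k(\mathfrak{g}^-)\otimes V_\lambda$, so it patches together to a bona fide representation of the group $H$ without any completion or topology being involved.
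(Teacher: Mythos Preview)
Your proposal is correct and follows essentially the same route as the paper: both use the $\mathcal{U}(\mathfrak{h})$-isomorphism $M(\lambda)\simeq\mathcal{U}(\mathfrak{g}^-)\otimes V_\lambda$ and the local finiteness of the $\mathfrak{h}$-action on $\mathcal{U}(\mathfrak{g}^-)$ to pass from the Lie algebra to the group. The paper's proof is a single sentence---``every element in $\mathcal{U}(\mathfrak{g}^-)$ lies in an $\mathfrak{h}$-representation of finite dimension; basic principles therefore entail the following consequence''---whereas you spell out those basic principles explicitly: you identify the finite-dimensional pieces as $\mathrm{S}^k(\mathfrak{g}^-)$, realize the $\mathfrak{h}$-action on them as the differential of the genuine $H$-representation $\mathrm{Ad}|_{\mathfrak{g}^-}$ (restricted from $G^\prime$), integrate $V_\lambda$ via the character $\chi$, and tensor. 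Your care about the non-simple-connectedness of $H$---noting that integrability is not automatic for an arbitrary finite-dimensional $\mathfrak{h}$-module and is rescued here because $\mathfrak{g}^-$ sits inside the adjoint representation of $G^\prime$ while $V_\lambda$ integrates to $\chi$ (with the double cover for odd $n$)---is a genuine addition that the paper leaves implicit.
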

In each of the three cases under consideration, $\mathfrak{g}^-$ is
commutative and we can identify $\mathcal{U}(\mathfrak{g}^-)$ with
the ring of polynomial functions $\mathbb{C}[W]$ on a suitable
representation $W \simeq (\mathfrak{g}^-)^\ast$ of $H\,$. From
$\mathfrak{g}^- \simeq \mathfrak{g}^- . 1 = \mathcal{V}(\lambda)_2 =
\wedge^2(V^\ast)^G$ we have $W = \wedge^2(V)^G$, the subspace of
$G$-fixed vectors in $\wedge^2(V)$. The space $W$ was described in
Lemma \ref{lem:FF-invs} where we saw that $W = \mathrm{End} (\mathbb
{C}^q)$, $\mathrm{Sym}_a(U_q)$, and $\mathrm{Sym}_s(U_q)$ for $G =
\mathrm {GL}$, $\mathrm{O}$, and $\mathrm{Sp}$, respectively. Note
that in all cases $W$ contains the identity element $\mathrm{Id} =
\mathrm {Id}_{\mathbb {C}^q}$ or $\mathrm{Id} = \mathrm{Id}_{U_q}\,$.

We now describe how the group $H$ acts on $W$. In the last two cases,
where $H = \mathrm{GL}(U_q) \equiv \mathrm{GL}_{2q}$ (or a double
cover thereof), we associate with each of the two bilinear forms $b =
s$ or $b = a$ an involution $\tau_b : \, H \to H$ by the equation
\begin{displaymath}
    b(\tau_b(g)x\, ,y) = b(x\, ,g^{-1}y) \qquad (x\, ,y \in U_q)\;.
\end{displaymath}
The action of $H$ on $W = \mathrm{Sym}_b (U_q)$ is then by twisted
conjugation, $g . w = g w \tau_b(g^{-1})$. In the notation of Sect.\
\ref{sect:orthsymp} we have $\tau_a(g^{-1}) = t_a \, g^\mathrm{t}
(t_a)^{-1}$ and $\tau_s(g^{-1}) = t_s\, g^\mathrm{t} (t_s)^{-1}$.
Note that the group of fixed points of $\tau_b$ in $H$ is the
symplectic group $\mathrm{Sp}(U_q) = \mathrm{Sp}_{2q}$ for $b = a$
and the orthogonal group $\mathrm{O}(U_q) = \mathrm{O}_{2q}$ for $b =
s\,$.

In the first case ($G = \mathrm{GL}_n$) the group $H$ is the subgroup
of $\mathrm{GL}(U_q)$ preserving the decomposition $U_q =\mathbb{C}^q
\oplus (\mathbb{C}^q)^\ast$. Here again it will be best to think of
the vector space $W$ as the intersection of the vector spaces for the
other two cases:
\begin{displaymath}
    W = \mathrm{End}(\mathbb{C}^q) \simeq
    \mathrm{Sym}_a(U_q)\cap \mathrm{Sym}_s(U_q) \;.
\end{displaymath}
This means that we think of $\mathrm{End}(\mathbb{C}^q)$ as being
embedded into $\mathrm{End}(U_q)$ as
\begin{displaymath}
    \mathrm{End}(\mathbb{C}^q) \to \begin{pmatrix} \mathrm{End}
    (\mathbb{C}^q) &\mathrm{Hom}((\mathbb{C}^q)^\ast,\mathbb{C}^q)\\
    \mathrm{Hom}(\mathbb{C}^q , (\mathbb{C}^q)^\ast) &\mathrm{End}(
    (\mathbb{C}^q)^\ast) \end{pmatrix} \;, \quad z \mapsto
    \begin{pmatrix} z &0\\ 0 &z^\mathrm{t} \end{pmatrix} =: w \;.
\end{displaymath}
For an element $g = (g_1 , g_2) \in H = \mathrm{GL}(\mathbb{C}^q)
\times \mathrm{GL}((\mathbb{C}^q)^\ast)$ one now has $\tau_a(g_1,
g_2)^{-1} = \tau_s(g_1 ,g_2)^{-1} = (g_2^\mathrm{t},g_1^\mathrm{t})$,
and the action of $H$ on $W$ by twisted conjugation is given by
\begin{displaymath}
    g.w = (g_1,g_2) . \begin{pmatrix} z &0\\ 0 &z^\mathrm{t}
    \end{pmatrix} = \begin{pmatrix} g_1^{\vphantom{\mathrm{t}}}
    z \, g_2^\mathrm{t} &0\\ 0 &g_2^{\vphantom{\mathrm{t}}}
    z^\mathrm{t} g_1^\mathrm{t} \end{pmatrix} \;.
\end{displaymath}
If we now define an involution $\tau_0$ by $\tau_0(g_1^{-1} ,
g_2^{-1}) = (g_2^\mathrm{t},g_1^\mathrm{t})$, then this action can be
written in the short form $g.w = g w \tau_0(g^{-1})$.

To sum up the situation, let $\tau = \tau_0$ for $G = \mathrm{GL}\,$,
$\tau = \tau_a$ for $G = \mathrm{O}\,$, and $\tau = \tau_s$ for $G =
\mathrm{Sp}\,$. Then the $H$-action on $W$ always takes the form
\begin{displaymath}
    g.w = g w \tau(g^{-1}) \;.
\end{displaymath}
In all three cases it is a well-known fact (see for example
\cite{Howe1995}) that the ring $\mathbb{C}[W]$ is multiplicity-free
as a representation space for $H$. It then follows that the universal
highest-weight representation $M(\lambda) \simeq \mathbb{C}[W]
\otimes V_\lambda$ is multiplicity-free.
\begin{lem}
Let $V_{-\lambda} = \mathcal{V}(\lambda)_{2N}$ be the one-dimensional
$H$-representation associated to the character $\chi^{-1} = \exp
\circ (-\lambda) \circ \ln\,$. Then the space
\begin{displaymath}
    \mathrm{Hom}_H( M(\lambda) , V_{-\lambda} )
\end{displaymath}
of $H$-equivariant homomorphisms from $M(\lambda)$ to $V_{-\lambda}$
has dimension one and is generated by $\pi\circ p\,$, the composition
of the projection $p : \, M(\lambda) \to \mathcal{V}(\lambda)$ with
the Berezin integral $\pi : \, \mathcal{V}(\lambda) \to V_{-\lambda}
= \mathcal{V}(\lambda)_{2N}\,$.
\end{lem}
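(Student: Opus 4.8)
The plan is to combine two facts: first, that $M(\lambda)$ is multiplicity-free as an $H$-module, which forces $\dim \mathrm{Hom}_H(M(\lambda),V_{-\lambda}) \le 1$; and second, that $\pi \circ p$ is a nonzero element of this Hom-space, which forces the reverse inequality.

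I would begin by verifying that $\pi \circ p$ is genuinely $H$-equivariant. The surjection $p : M(\lambda) \to \mathcal{V}(\lambda)$ of Sect.\ \ref{sect:4.4} is $\mathfrak{g}$-equivariant by construction, hence in particular $\mathfrak{h}$-equivariant for the canonical left action of $\mathfrak{h} \subset \mathfrak{g}$ on both sides. Since this $\mathfrak{h}$-action on $M(\lambda)$ integrates to an action of the \emph{connected} group $H$ (Lemma of Sect.\ \ref{sect:4.5}) and $\mathcal{V}(\lambda)$ is a bona fide $G^\prime$-module, connectedness of $H$ upgrades $\mathfrak{h}$-equivariance of $p$ to $H$-equivariance. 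The Berezin integral $\pi : \mathcal{V}(\lambda) \to V_{-\lambda} = \mathcal{V}(\lambda)_{2N}$ is $H$-equivariant by the Lemma of Sect.\ \ref{sect:4.3} (it spans the one-dimensional space $\mathrm{Hom}_H(\mathcal{V}(\lambda),\mathcal{V}(\lambda)_{2N})$). Hence $\pi \circ p \in \mathrm{Hom}_H(M(\lambda),V_{-\lambda})$. It is nonzero: $p$ is surjective and $\pi$ maps onto the nonzero one-dimensional space $V_{-\lambda}$, so the composite is surjective, in particular not the zero map.

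Next I would pin down the $H$-module structure of $M(\lambda)$. By the discussion of Sect.\ \ref{sect:4.5} there is an isomorphism of $H$-modules $M(\lambda) \simeq \mathcal{U}(\mathfrak{g}^-) \otimes V_\lambda \simeq \mathbb{C}[W] \otimes V_\lambda$, that is, $\mathbb{C}[W]$ twisted by the character $\chi : h \mapsto \mathrm{Det}^{n/2}(h)$. Since $\mathbb{C}[W]$ is multiplicity-free for $H$ (the cited fact at the end of Sect.\ \ref{sect:4.5}) and twisting by a one-dimensional character is a self-inverse operation on the set of isomorphism classes of irreducibles, $M(\lambda)$ is again multiplicity-free. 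Because $H$ is reductive and $M(\lambda)$ is a locally finite $H$-module — it is the increasing union of the finite-dimensional $H$-submodules $\bigoplus_{k \le d}\mathrm{S}^k(W^\ast)\otimes V_\lambda$ — it is the direct sum of its isotypic components, each a finite multiple of a single irreducible. Consequently $\dim \mathrm{Hom}_H(M(\lambda),V_{-\lambda})$ equals the multiplicity with which the one-dimensional module $V_{-\lambda}$ occurs in $M(\lambda)$, which is at most $1$ by multiplicity-freeness.

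Putting the two steps together, the Hom-space has dimension at most one and contains the nonzero element $\pi \circ p$, hence has dimension exactly one and is spanned by $\pi \circ p$. The only point demanding a little care — the main, and rather mild, obstacle — is the bookkeeping caused by $M(\lambda)$ being infinite-dimensional: one must invoke local finiteness of the $H$-action and reductivity of $H$ to turn the statement ``$\mathbb{C}[W]$ is multiplicity-free'' into the bound $\dim \mathrm{Hom}_H(M(\lambda),V_{-\lambda}) \le 1$. Everything else in the argument is formal.
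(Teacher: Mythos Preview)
Your argument is correct and follows exactly the same strategy as the paper's proof: the surjectivity of $p$ makes $\pi\circ p$ nonzero, giving $\dim\ge 1$, while the multiplicity-freeness of $M(\lambda)$ as an $H$-module gives $\dim\le 1$. You have simply supplied more detail (the $H$-equivariance of $\pi\circ p$, the local-finiteness/reductivity bookkeeping behind the multiplicity bound) than the paper's two-sentence proof, but the substance is identical.
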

\begin{proof}
Since $p$ is surjective, $\pi \circ p$ is non-trivial and the space
$\mathrm{Hom}_H (M(\lambda),V_{-\lambda})$ has at least dimension
one. On the other hand, since $M(\lambda)$ is multiplicity-free as an
$H$-representation, the dimension of $\mathrm{Hom}_H( M(\lambda),
V^\prime)$ cannot be greater than one for any irreducible
representation $V^\prime$ of $H\,$.
\end{proof}
\begin{cor}\label{cor:one-dim}
Let $P : \, M(\lambda) = \mathbb{C}[W] \otimes V_\lambda \to
V_{-\lambda}$ be any non-trivial $H$-equivariant linear mapping. Then
there exists a non-zero constant $c_P$ such that for every $f \in
\mathcal{V}(\lambda)$ and any lift $F \in p^{-1}(f) \subset
\mathbb{C}[W] \otimes V_\lambda$ one has
\begin{displaymath}
    P[F] = c_P \, \Omega[f] \;.
\end{displaymath}
\end{cor}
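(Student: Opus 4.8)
The plan is to deduce the statement immediately from the preceding lemma, which identifies $\mathrm{Hom}_H(M(\lambda),V_{-\lambda})$ as the one-dimensional space $\mathbb{C}\,(\pi\circ p)$. Recall that this identification rests on two facts established in Sect.\ \ref{sect:4.5}: that $\mathbb{C}[W]$, and hence $M(\lambda)\simeq\mathbb{C}[W]\otimes V_\lambda$, is multiplicity-free as an $H$-representation, so that $\dim\mathrm{Hom}_H(M(\lambda),V')\le 1$ for any irreducible $H$-module $V'$ (in particular for the one-dimensional $V_{-\lambda}$); and that $p:M(\lambda)\to\mathcal{V}(\lambda)$ is surjective, so that $\pi\circ p$ is a non-zero element of this $\mathrm{Hom}$-space and the dimension is exactly one.

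First I would observe that the map $P$ in the hypothesis is, by assumption, a non-trivial element of $\mathrm{Hom}_H(M(\lambda),V_{-\lambda})$; hence there is a unique scalar $c_P$ with $P=c_P\,(\pi\circ p)$, and the non-triviality of $P$ forces $c_P\neq 0$. Next, for $f\in\mathcal{V}(\lambda)$ and any lift $F\in p^{-1}(f)$ one simply evaluates: $P[F]=c_P\,\pi\bigl(p(F)\bigr)=c_P\,\pi(f)$. Since $\pi$ is by definition the Berezin integral $f\mapsto\Omega_V[f]$, which in the notation of the corollary is written $\Omega[f]$, this gives $P[F]=c_P\,\Omega[f]$. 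In particular the right-hand side is independent of the choice of lift $F$, which is exactly the well-definedness built into the statement.

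There is no genuine obstacle here: all the content is carried by the previous lemma, and the corollary is essentially a restatement of it in a form convenient for later use (where $P$ will be realized concretely by integration over a compact symmetric space). The one point worth checking carefully is that the domain $\mathbb{C}[W]\otimes V_\lambda$ of $P$ really coincides with $M(\lambda)$ as an $H$-module, so that the lemma applies verbatim; but this is precisely the chain of $H$-equivariant isomorphisms $M(\lambda)=\mathcal{U}(\mathfrak{g})\otimes_{\mathcal{U}(\mathfrak{p})}V_\lambda\simeq\mathcal{U}(\mathfrak{g}^-)\otimes V_\lambda\simeq\mathbb{C}[W]\otimes V_\lambda$ from Sects.\ \ref{sect:4.4} and \ref{sect:4.5}.
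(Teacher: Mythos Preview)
Your proof is correct and follows exactly the approach the paper intends: the corollary is stated in the paper without proof because it is an immediate consequence of the preceding lemma identifying $\mathrm{Hom}_H(M(\lambda),V_{-\lambda})=\mathbb{C}\,(\pi\circ p)$. Your argument simply unpacks this implication, which is all that is needed.
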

We are now going to realize $P$ by integration over a real domain in
$W$.

\subsection{Construction of $H$-equivariant homomorphisms}
\label{sect:4.6}

Let $o \equiv \mathrm{Id}$ denote the identity element of $W$. Then
in all three cases the $H$-orbit $H . o$ is open and dense in $W$ and
can be characterized as the complement of the zero set of a
polynomial $D:$
\begin{displaymath}
    H . o = \{w \in W \mid D(w)\not = 0 \} \;,
\end{displaymath}
where $D$ will be the Pfaffian function for the case of $G = \mathrm
{O}_n$ and will be the determinant function for $G = \mathrm{GL}_n$
and $G = \mathrm{Sp}_n\,$. Since $D$ does not vanish on $H.o\,$, the
map
\begin{displaymath}
    H.o \hookrightarrow W \oplus \mathbb{C} \;,
    \quad w \mapsto (w , D(w)^{-1})
\end{displaymath}
defines an inclusion, and one can view $H . o$ as the zero set of a
function on $W \oplus \mathbb{C}:$
\begin{displaymath}
    H.o = \{(w,t)\in W \oplus\mathbb{C} \mid D(w)\, t - 1 = 0 \} \;.
\end{displaymath}
Hence the ring of algebraic functions on $H.o\,$, namely $\mathbb{C}
[H.o]$, is the same as the ring $\mathbb{C}[W\oplus \mathbb{C}]$
factored by the ideal generated by the function $(w,t) \mapsto D(w)\,
t - 1$.

Let $\mathbb{C}(W)$ be the field of rational functions on $W$. Thus
an element $r \in \mathbb{C}(W)$ can be expressed as a quotient $r =
f / g$ of polynomial functions $f , g \in \mathbb{C}[W]$. Denote by
$\mathbb{C}[W]_D \subset \mathbb{C}(W)$ the subring of elements which
can be written as a quotient $r = f / D^n$,
\begin{displaymath}
    \mathbb{C}[W]_D = \left\{ w \mapsto f(w) / D^n(w) \mid f \in
    \mathbb{C}[W] \;, \, n \ge 0 \right\} \;.
\end{displaymath}
We now identify $\mathbb{C}[H.o]$ with $\mathbb{C}[W]_D:$ an element
in $\mathbb{C}[H.o]$ can be represented as a polynomial of the form
$(w,t) \mapsto f_0(w) + f_1(w)\, t + \ldots + f_s(w)\, t^s$, where
the $f_i \in \mathbb{C}[W]$, and the following map $\mathbb{C}[H .
o]\rightarrow \mathbb{C}[W]_D$ then defines an isomorphism of rings:
\begin{displaymath}
    \Big( (w,t) \mapsto f_0(w) + f_1(w)\, t + \ldots + f_s(w)\,t^s \Big)
    \mapsto \frac{f_0}{1} + \frac{f_1}{D} + \ldots + \frac{f_s}{D^s} \;.
\end{displaymath}

Our aim here is to construct an $H$-equivariant homomorphism
$M(\lambda) \to V_{-\lambda}$. By the isomorphisms of
$H$-representations: $(V_{-\lambda})^\ast \simeq V_\lambda$ and
$V_\lambda \otimes V_\lambda \simeq V_{2\lambda}$, this is the same
as constructing an $H$-invariant homomorphism
\begin{displaymath}
    M(\lambda) \otimes (V_{-\lambda})^\ast =
    \mathbb{C}[W] \otimes V_{2\lambda} \to \mathbb{C}
\end{displaymath}
to the trivial representation. For this purpose we identify the
representation $\mathbb{C}[W] \otimes V_{2\lambda}$ with the
following subspaces of $\mathbb{C}[W]_D\,$:
\begin{displaymath}
    \mathbb{C}[W] \otimes V_{2\lambda} \simeq \left\{
    \begin{array}{ll} \mathrm{Det}^{-n} \, \mathbb{C}[W] \subset
    \mathbb{C}[W]_\mathrm{Det}\;, &\quad G = \mathrm{GL}_n \;, \\
    \mathrm{Pfaff}^{\, -n} \, \mathbb{C}[W] \subset
    \mathbb{C}[W]_\mathrm{Pfaff}\;, &\quad G = \mathrm{O}_n \;,\\
    \mathrm{Det}^{-n/2} \, \mathbb{C}[W] \subset
    \mathbb{C}[W]_\mathrm{Det} \;, &\quad G = \mathrm{Sp}_n \;.
    \end{array} \right.
\end{displaymath}
In the first case, this identification is correct because $g = (g_1 ,
g_2) \in H$ operates on the determinant function $\mathrm{Det}^{-n} :
\, W = \mathrm{End}(\mathbb{C}^q) \to \mathbb{C}$ as
\begin{displaymath}
    (g . \, \mathrm{Det}^{-n})(z) = \mathrm{Det}^{-n} (g_1^{-1} z
    \, (g_2^\mathrm{t})^{-1}) = \mathrm{Det}^n(g_1) \mathrm{Det}^n(g_2)
    \mathrm{Det}^{-n}(z) \;,
\end{displaymath}
which is the desired behavior since $ \mathrm{Det}^n( g_1^{
\vphantom{\mathrm{t}}} g_2^\mathrm{t})$ agrees with the character
$\chi(g_1,g_2)^2$ associated with the $H$-representation
$V_{2\lambda}\, $. In the other two cases we have
\begin{displaymath}
    (g . \, \mathrm{Det}^{-n/2})(w) = \mathrm{Det}^{-n/2} (g^{-1} w
    \, \tau_b(g)^{-1}) = \mathrm{Det}^n(g) \mathrm{Det}^{-n/2}(w)
    \quad (b = a, s)\;.
\end{displaymath}
Again, this is as it should be since $\mathrm{Det}^n(g) = \chi(g)^2$
is the desired character for $V_{2\lambda}$. Here one should bear in
mind that $n$ is always an even number in the third case, and that
$\sqrt{\mathrm{Det}} = \mathrm{Pfaff}$ in the second case.

Recall now that $o \equiv \mathrm{Id}$ denotes the identity element
in $W$, let $H_o = \{ h \in H \mid h . o = o \}$ be the isotropy
group of $o\,$, and observe that $H . o$ is isomorphic to $H /
H_o\,$,
\begin{displaymath}
    H/H_o = \left\{ \begin{array}{ll} (\mathrm{GL}_q \times
    \mathrm{GL}_q ) / \mathrm{GL}_q \;, &\quad G =
    \mathrm{GL}_n \;, \\ \mathrm{GL}_{2q} / \mathrm{Sp}_{2q}\;,
    &\quad G = \mathrm{O}_n \;, \\ \mathrm{GL}_{2q} / \mathrm{O}_{2q}
    \;, &\quad G = \mathrm{Sp}_n \;. \end{array} \right.
\end{displaymath}
Then fix some maximal compact subgroup $K \subset H$ such that the
isotropy group $K_o \subset K$ is a maximal compact subgroup of the
stabilizer $H_o \subset H$. Since $H_o$ is a reductive group, we have
natural isomorphisms $ \mathbb{C}[W]_D = \mathbb{C}[H / H_o] =
\mathbb{C}[H]^{H_o}$. The maximal compact subgroup $K\subset H$ is
Zariski dense, i.e., a polynomial function that vanishes on $K$ also
vanishes on $H$, so $\mathbb{C}[H]= \mathbb{C}[K]$. For the same
reason, given any locally finite-dimensional $H$-representation, the
subspace of $H_o$-fixed points coincides with the $K_o$-fixed points.
Summarizing, we have
\begin{displaymath}
    \mathbb{C}[W] \otimes V_{2\lambda} \stackrel{i}{\hookrightarrow}
    \mathbb{C}[W]_D = \mathbb{C}[H / H_o] = \mathbb{C}[H]^{H_o} =
     \mathbb{C}[H]^{K_o} =\mathbb{C}[K]^{K_o} \;.
\end{displaymath}
The benefit from this sequence of identifications is that on the
space $\mathbb{C}[K]$ there exists a natural and non-trivial
$K$-invariant projection. Indeed, letting $dk$ be a Haar measure on
$K$, we may view $f \in \mathbb{C}[W]_D$ as a function on $K$ and
integrate:
\begin{displaymath}
    \mathbb{C}[W]_D \to \mathbb{C} \;, \quad
    f \mapsto \int_K f(k . o) \, dk \;.
\end{displaymath}
This projection is $K$-invariant, and its restriction to our space
$\mathbb{C}[W] \otimes V_{2\lambda}$ is still non-trivial. In the
first case this is because $\mathrm{Det}^n \in \mathbb{C}[W]$ and
hence $\mathbb{C}\, 1 \subset \mathrm{Det}^{-n} \, \mathbb{C}[W]$; in
the last two cases $\mathrm{Det}^{n/2} \in \mathbb{C}[W]$ and hence
$\mathbb{C}\, 1 \subset \mathrm{Det}^{-n/2} \, \mathbb{C}[W]$.

We can now reformulate Cor.\ \ref{cor:one-dim} to make the statement
more concrete.
\begin{prop}\label{thm:4.9}
For each case $G = \mathrm{GL}_n\,$, $\mathrm{O}_n\,$, or
$\mathrm{Sp}_n\,$, there is a choice of normalized Haar measure $dk$
so that the following holds for all $f \in \wedge(V^\ast)^G$. If $F
\in \mathbb{C}[W]$ is any lift w.r.t.\ the identification and
projection $\mathbb{C}[W] \simeq \mathbb{C}[W] \otimes V_\lambda \to
\wedge(V^\ast)^G$ then
\begin{displaymath}
    \Omega_V[f] = \int_K F(k.o) \mathrm{Det}^{-n}(k)\, dk \;.
\end{displaymath}
\end{prop}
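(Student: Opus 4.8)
The plan is to recognize the assertion as the concrete incarnation of Cor.~\ref{cor:one-dim}, which already tells us that \emph{any} non-trivial $H$-equivariant linear map $P : M(\lambda) \to V_{-\lambda}$ obeys $P[F] = c_P\,\Omega_V[f]$ for a fixed nonzero constant $c_P$ (independent of $f$ and of the chosen lift $F$). So it suffices to produce one such $P$ realized by integration over the compact group $K$ and to work out its explicit form. I would take $P$ to be the composition of the inclusion $i : \mathbb{C}[W]\otimes V_{2\lambda} \hookrightarrow \mathbb{C}[W]_D$ constructed in Sect.~\ref{sect:4.6} with the Haar-integral projection $\mathbb{C}[W]_D \to \mathbb{C}$, $g \mapsto \int_K g(k.o)\,dk$; under the canonical identifications of $H$-modules $V_{-\lambda}^\ast \simeq V_\lambda$ and $V_\lambda \otimes V_\lambda \simeq V_{2\lambda}$, an $H$-invariant functional on $\mathbb{C}[W]\otimes V_{2\lambda}$ is the same datum as an element of $\mathrm{Hom}_H(M(\lambda),V_{-\lambda})$, so this $P$ is a legitimate candidate for Cor.~\ref{cor:one-dim} once two points are checked.

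First, $P$ must be $H$-invariant, not merely $K$-invariant; this follows from the chain of identifications $\mathbb{C}[W]_D = \mathbb{C}[H/H_o] = \mathbb{C}[H]^{H_o} = \mathbb{C}[H]^{K_o} = \mathbb{C}[K]^{K_o}$ of Sect.~\ref{sect:4.6} together with the Zariski density of $K$ in $H$: a $K$-invariant functional on an algebraic $H$-module is automatically $H$-invariant. Second, $P$ must be non-trivial on $\mathbb{C}[W]\otimes V_{2\lambda}$; this holds because the constant function $1$ lies in the image of $i$ — indeed $\mathrm{Det}^n$ (resp.\ $\mathrm{Pfaff}^{\,n}$, resp.\ $\mathrm{Det}^{n/2}$) belongs to $\mathbb{C}[W]$ for $G = \mathrm{GL}_n$ (resp.\ $\mathrm{O}_n$, resp.\ $\mathrm{Sp}_n$) — and $\int_K 1\,dk = \mathrm{vol}(K)\neq 0$. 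It then remains to make $P$ explicit: to the lift $F\in\mathbb{C}[W]$ of $f\in\wedge(V^\ast)^G$ the identification $i$ associates the rational function $w\mapsto F(w)\,D(w)^{-e}$, where $(D,e) = (\mathrm{Det},n)$, $(\mathrm{Pfaff},n)$, $(\mathrm{Det},n/2)$ in the three cases. Evaluating at $w = k.o$ and using the transformation of $D$ under the twisted conjugation $w\mapsto g\,w\,\tau(g^{-1})$ — namely $\mathrm{Det}(g\,w\,\tau_b(g^{-1})) = \mathrm{Det}(g)^2\,\mathrm{Det}(w)$ and $\mathrm{Pfaff}(g\,w\,\tau_a(g^{-1})) = \mathrm{Det}(g)\,\mathrm{Pfaff}(w)$, with the factorization $\mathrm{Det}(k.o) = \mathrm{Det}(k_1)\mathrm{Det}(k_2)$ in the $\mathrm{GL}_n$ case since there $k.o = k_1 k_2^{\mathrm{t}}$ — together with $D(o) = D(\mathrm{Id}) = 1$, one finds $D(k.o)^{-e} = \mathrm{Det}^{-n}(k)$ in all three cases, with $\mathrm{Det}$ on $H$ read as the determinant of $H\subset\mathrm{GL}(U_q)$ (block-diagonal for $G = \mathrm{GL}_n$) and, for odd $n$, the half-power read off the character $\chi = \mathrm{Det}^{n/2}$ of the relevant double cover. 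Hence $P[F] = \int_K F(k.o)\,\mathrm{Det}^{-n}(k)\,dk$, and Cor.~\ref{cor:one-dim} gives $\int_K F(k.o)\,\mathrm{Det}^{-n}(k)\,dk = c_P\,\Omega_V[f]$ for all $f$ and all lifts $F$; replacing the Haar measure $dk$ by $c_P^{-1}\,dk$ — which is what ``there is a choice of normalized Haar measure'' means here — upgrades this proportionality to the stated equality.

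The one genuinely delicate point is the exponent bookkeeping in the penultimate display: one has to check that the single character of $H$ which governs simultaneously the action on $V_{2\lambda}$, the transformation of $D^{-e}$, and the factor $\mathrm{Det}^{-n}(k.o)$ is one and the same. In the orthogonal case this hinges on $\sqrt{\mathrm{Det}} = \mathrm{Pfaff}$ transforming by $\mathrm{Det}(g)$ rather than $\mathrm{Det}(g)^2$ under twisted conjugation, and in the odd-$n$ cases on having passed to the double cover so that $\mathrm{Det}^{n/2}$ is a bona fide character; both have already been set up in Sect.~\ref{sect:4.6}. Everything else — that $P$ is an admissible choice in Cor.~\ref{cor:one-dim}, and that the proportionality constant is harmlessly absorbed into the normalization of $dk$ (no independent evaluation of $c_P$ being needed) — is routine.
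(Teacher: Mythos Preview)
Your proposal is correct and follows essentially the same route as the paper. The paper does not write out a separate formal proof of this proposition; it presents it as the concrete reformulation of Cor.~\ref{cor:one-dim} obtained from the construction carried out in Sect.~\ref{sect:4.6}, and your proposal accurately reconstructs that argument --- the identification $\mathbb{C}[W]\otimes V_{2\lambda}\hookrightarrow\mathbb{C}[W]_D$, the $K$-integration as the $H$-invariant projection (via Zariski density), the nontriviality check via $1\in\mathrm{Det}^{-n}\mathbb{C}[W]$ (resp.\ $\mathrm{Pfaff}^{-n}\mathbb{C}[W]$, $\mathrm{Det}^{-n/2}\mathbb{C}[W]$), and the verification that $D(k.o)^{-e}=\mathrm{Det}^{-n}(k)$ in all three cases --- including the careful handling of the Pfaffian transformation law and the double cover for odd $n$, which the paper also flags.
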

Next, let us give another version of the `bosonization' formula of
Prop.\ \ref{thm:4.9} in order to get a better match with the
supersymmetric formula to be developed below. For that, notice that
$K_o = \mathrm{USp}_{2q}$ for $G = \mathrm{O}_n\,$, $K_o =
\mathrm{O}_{2q}(\mathbb{R})$ for $G = \mathrm{Sp}_n\,$, and $K_o =
\mathrm{U}_q$ acting by elements $(k,(k^{-1} )^\mathrm{t})$ for $G =
\mathrm{GL}_n\,$. From this we see that $\mathrm{Det}^{-n}(k) = 1$
for $k \in K_o$ in all cases. We can therefore push down the integral
over $K$ to an integral over the orbit $K.o \simeq K/K_o\,$. We
henceforth denote this orbit by $D_q := K.o$. Writing $y := k
\tau(k^{-1})$ for $G = \mathrm{O}$, $\mathrm {Sp}$ we have
$\mathrm{Det}^{-n}(k) = \mathrm{Det}^{-n/2} (k\tau(k)^{ -1}) =
\mathrm{Det}^{-n/2}(y)$. Similarly, letting $y := k_1 (k_2)
^\mathrm{t}$ for $G = \mathrm{GL}$ we have $\mathrm {Det}^{-n} (k) =
\mathrm{Det}^{-n}(k_1) \mathrm{Det}^{-n}(k_2)= \mathrm{Det}^{-n}
(y)$. Thus the relation $\mathrm{Det}^{-n}(k) = \mathrm{Det}^{-
n^\prime}(y)$ always holds if we set $n^\prime = (1+|m|)^{-1} n$,
i.e., $n^\prime = n$ for $G = \mathrm{GL}$ and $n^\prime = n/2$ for
$G = \mathrm{O}$, $\mathrm {Sp}$.

Let now $d\mu_{D_q}$ denote a $K$-invariant measure on the $K$-orbit
$D_q\,$. According to our general conventions, $d\mu_{D_q}$ is
normalized in such a way that $(d\mu_{D_q})_o$ coincides with the
Euclidean volume density on $T_o D_q = W \cap \mathrm{Lie}(K)$ which
is induced by the quadratic form $A \mapsto - \mathrm{Tr}
_{\mathbb{C}^p}\, A^2$ for $G = \mathrm{GL}$ and $A\mapsto - \frac{1}
{2} \mathrm{Tr}_{U_p}\, A^2$ for $G = \mathrm{O}$, $\mathrm{Sp}\,$.
We denote this by $A \mapsto - \mathrm{Tr}^{\,\prime} A^2$ for short.
Pushing the Haar measure $dk$ forward by $K \to K.o = D_q$ we obtain
$d\mu_{D_q}$ times a constant. Thus the formula of Prop.\
\ref{thm:4.9} becomes
\begin{equation}\label{eq:4.2}
    \Omega_V[f] = \mathrm{const} \times \int_{D_q} F(y)\,
    \mathrm{Det}^{-n^\prime}(y)\, d\mu_{D_q}(y) \;.
\end{equation}

To determine the unknown constant of proportionality, it suffices to
compute both sides of the equation for some special choice of $f$
(and a corresponding function $F$). If we choose $F(y) = \mathrm{e}^{
\mathrm{Tr}^{\,\prime} y}$, then $f$ is simply a Gaussian with
Berezin integral $\Omega_V[f] = 1$. However, the integral on the
right-hand side is not quite so easy to do. We postpone this
computation until the end of the paper, where we will carry it out
using a supersymmetric reduction technique based on relations
developed below. To state the outcome, we recall the definition of
the groups $K_{n,p}$ and let the sign of the positive integer $p$ now
be reversed; according to Table \ref{fig:1} of Sect.\ \ref{sect:3.4}
this means that $K_{n,-p} = \mathrm{U}_{n+p}\,$, $\mathrm{O}_{n+2p}
(\mathbb{R})$, and $\mathrm {USp}_{n+2p}$ for $G = \mathrm{GL}$,
$\mathrm{O}$, and $\mathrm{Sp}\,$.
\begin{lem}\label{lem:4.10}
$\int_{D_q} \mathrm{e}^{\mathrm{Tr}^{\,\prime} y}\, \mathrm{Det}^{
-n^\prime}(y) \, d\mu_{D_q}(y) = (2\pi)^{-qn} 2^{-qm} \mathrm{vol}
(K_{n,-q}) / \mathrm{vol}(K_n)$.
\end{lem}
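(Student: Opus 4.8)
The plan is to obtain this identity as a by-product of the full superbosonization formula~(\ref{bosonize})/(\ref{bos-other}) established in Section~\ref{sect:full-susy}, by testing that formula against a Gaussian and comparing with the boson--boson result~(\ref{eq:3.2}). This is not circular: the constructive proof of~(\ref{bosonize})/(\ref{bos-other}) uses only the lifting isomorphism, the boson--boson identity of Section~\ref{sect:BB-sector}, and the fermion--fermion statement~(\ref{eq:4.2}) \emph{with its constant $\mathrm{const}$ left undetermined}; feeding a Gaussian into the resulting formula then produces one numerical equation that fixes $\mathrm{const}$, which is precisely the assertion of the lemma.

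Concretely: adjoin $p\ge 1$ bosonic replicas to the $q$ fermionic ones for the same group $K$, with $p\le n$ (resp.\ $2p\le n$) so that Section~\ref{sect:BB-sector} applies, and take the Gaussian test function $f$ whose lift is $F(Q)=\mathrm{e}^{-\mathrm{Str}(Q)}$ in the $\mathrm{U}_n$ case and $F(Q)=\mathrm{e}^{-\frac12\mathrm{Str}(Q)}$ in the orthogonal/symplectic cases, with $Q$ the (doubled) supermatrix of quadratic invariants; this $F$ pulls back to $f=\exp\bigl(-\mathrm{Tr}(\tilde{Z}Z)+\mathrm{Tr}(\tilde\zeta\zeta)\bigr)$, which is entire, $G$-invariant, and Schwartz along the contour $\tilde{Z}=Z^\dagger$. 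First I would evaluate $\int f$ directly: it factorizes into an elementary bosonic Gaussian, of value $(2\pi)^{pn}$, times an elementary Berezin--Gaussian over the anticommuting variables, of value $(2\pi)^{-qn}$ (the sign is $+1$ by the convention in~(\ref{eq:1.1})), so $\int f=(2\pi)^{(p-q)n}$. Next I would evaluate the superbosonized side by carrying out the Berezin integration over the off-diagonal blocks $\sigma,\tau$ in $DQ\,\mathrm{SDet}^{n}(Q)F(Q)$ (resp.\ with exponent $n/2$): since $\mathrm{Str}(Q)$ involves only $x$ and $y$, this step turns the integrand into a factorized expression $\mathrm{const}'\cdot\bigl(\mathrm{e}^{-\mathrm{Tr}'x}\,\mathrm{Det}^{n'}(x)\bigr)\,\bigl(\mathrm{e}^{\mathrm{Tr}'y}\,\mathrm{Det}^{-n'}(y)\bigr)$ on $D_p^0\times D_q^1$, where $\mathrm{const}'$ is the value at $x=y=\mathrm{Id}$ of the Berezin integral of the determinantal prefactor of $DQ$ times $\mathrm{SDet}$---a combinatorial number. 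Hence the $D$-integral factorizes into the boson--boson integral of~(\ref{eq:3.2}) and exactly the integral $\int_{D_q}\mathrm{e}^{\mathrm{Tr}'y}\,\mathrm{Det}^{-n'}(y)\,d\mu_{D_q}(y)$ of the lemma. Equating the two evaluations of $\int f$, cancelling the known factors, and using that $K_{n,p}$ at $p\mapsto -q$ equals $K_{n,-q}$ (Table~\ref{fig:1}), one solves for the wanted integral: the powers of $2$ and $2\pi$ combine into $(2\pi)^{-qn}2^{-qm}$ and the volumes into $\mathrm{vol}(K_{n,-q})/\mathrm{vol}(K_n)$.

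The main obstacle is the constant $\mathrm{const}'$ and, with it, the entire normalization bookkeeping---the Gaussian normalizations defining $d\mu_{D_p^0}$ and $d\mu_{D_q^1}$, the sign conventions of the Berezin forms $\Omega_{W_1}$ (which differ in the orthogonal and symplectic cases), the powers of $2$ generated by the doubling $p\mapsto 2p$, $q\mapsto 2q$ and by the factor $2^{(q-p)m_\beta}$ in~(\ref{bos-other}), and the compact-group volume ratios $\mathrm{vol}(K_n)/\mathrm{vol}(K_{n,p})$ and $\mathrm{vol}(K_n)/\mathrm{vol}(K_{n-p+q})$---all of which have to be carried through the chain of variable transformations of Section~\ref{sect:full-susy} and shown to assemble into the stated right-hand side with the correct sign. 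Two self-contained cross-checks are worth keeping in mind. Specializing~(\ref{eq:4.2}) to $F(y)=\mathrm{Det}^{n'}(y)$---so that $f$ is a $G$-invariant element of top exterior degree and the right-hand integral is simply $\mathrm{vol}(D_q)$---reduces the lemma to evaluating $\mathrm{vol}(D_q)$ and the Berezin integral $\Omega_V[f]$, a determinant of a Grassmann-bilinear matrix, which by Wick's theorem is a product of factorials. And for $G=\mathrm{GL}_n$, where $D_q=\mathrm{U}_q$, one may instead compute $\int_{\mathrm{U}_q}\mathrm{e}^{\mathrm{Tr}(y)}\,\mathrm{Det}^{-n}(y)\,d\mu(y)$ head-on: the expansion $\mathrm{e}^{\mathrm{Tr}(y)}=\sum_\lambda s_\lambda(y)/H_\lambda$ combined with Schur orthogonality over $\mathrm{U}_q$ leaves only the rectangular partition $\lambda=(n^q)$, and the hook-length product $H_{(n^q)}=\prod_{a=0}^{q-1}(a+n)!/a!$ together with the known value of $\mathrm{vol}(\mathrm{U}_q)$ reproduces $(2\pi)^{-qn}\,\mathrm{vol}(\mathrm{U}_{n+q})/\mathrm{vol}(\mathrm{U}_n)$; the remaining cases follow likewise via the zonal spherical functions of $\mathrm{U}_{2q}/\mathrm{USp}_{2q}$ and $\mathrm{U}_{2q}/\mathrm{O}_{2q}$.
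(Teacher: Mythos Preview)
Your primary approach—testing the superbosonization formula on the Gaussian $F(Q)=e^{-\mathrm{Str}\,Q}$—collapses to a tautology and does not determine the constant. The reason is structural. When you run the constructive derivation of Sect.~\ref{sect:full-susy} with the constant in~(\ref{eq:4.2}) left undetermined, the unknown that actually enters is $A(N_\perp,q)$ with $N_\perp = n - p(1+|m|)$, because the fermion--fermion step~(\ref{eq:Om-perp}) is applied to $V_\perp$, not to all of $V_1$. By the very definition of the constant in~(\ref{eq:4.2}), this $A(N_\perp,q)$ equals the reciprocal of $\int_{D_q^1} e^{\mathrm{Tr}'y}\,\mathrm{Det}^{-(n'-p)}(y)\,d\mu_{D_q^1}$. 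Your factorization of the right-hand side is correct—after the substitution $\sigma\mapsto x\,u$, $\tau\mapsto y\,v$ the Berezin integral of $J\cdot\mathrm{SDet}^{n'}$ does split as a constant $\mathrm{const}'$ times $\mathrm{Det}^{n'}(x)\,\mathrm{Det}^{-n'}(y)$—so the superbosonized side becomes $(\text{known})\cdot A(N_\perp,q)\cdot\mathrm{const}'\cdot I_p(n)\cdot I_q(n)$, with $I_q(n)$ the integral you want. But the same right-hand side, computed in the equivalent form~(\ref{eq:5.8mrz}), equals $(\text{known})\cdot A(N_\perp,q)\cdot I_p(n)\cdot I_q(N_\perp)$; since these are the same number, all you learn is the shift-lemma relation $I_q(N_\perp)=\mathrm{const}'\cdot I_q(n)$. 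Substituting $A(N_\perp,q)=1/I_q(N_\perp)$ then makes $I_q(n)$ cancel entirely, and your equation $\int f=(\text{RHS})$ reduces to the boson--boson identity~(\ref{eq:3.2}) alone. Varying $p$ only gives a recursion in $n$, never an anchor value.

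The paper's proof (Sect.~\ref{sect:norm-int}) avoids this trap by evaluating an independent quantity. It sets $p=q$ and defines the integral $C_{n,q}$ of~(\ref{eq:Konst}) directly on $D_q^0\times D_q^1$—not as a pullback from $V$. One manipulation (reversing Lemma~\ref{lem:shift} and performing the resulting Gaussian Berezin integral) factors $C_{n,q}$ into the known $D_q^0$ integral of~(\ref{eq:3.2}) times the wanted $D_q^1$ integral. A second, logically independent evaluation of $C_{n,q}$ comes from supersymmetric localization: the diagonal odd vector field $\mathcal D=\sum_a\mathcal D_a^a$ (available precisely because $p=q$) is a symmetry of the Berezin form $DQ\cdot\mathrm{SDet}^{n'+q}$, its numerical part vanishes only at the single point $(\mathrm{Id},\mathrm{Id})$, and a Gaussian limit at that point gives $C_{n,q}=(2\pi)^{(1+|m|)q^2}$. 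Equating the two evaluations yields the Lemma. The key difference from your plan is that localization computes something on the $W$-side \emph{without} passing back through a $V$-integral, so the defining normalization of~(\ref{eq:4.2}) is not invoked twice.

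Your Schur-orthogonality cross-check for $G=\mathrm{GL}_n$ is a legitimate independent route and would prove the Lemma in that case; it is not, however, the argument you led with, and it is not the paper's method.
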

\begin{rem}
The similarity of this formula with Eq.\ (\ref{eq:3.2}) is not an
accident; in fact, in Sect.\ \ref{sect:norm-int} we will establish
Lemma \ref{lem:4.10} by reduction to the latter result.
\end{rem}
Using Lemma \ref{lem:4.10} we can now eliminate the unknown constant
of proportionality from (\ref{eq:4.2}). To state the resulting
reformulation of Prop.\ \ref{thm:4.9}, we will use the surjective
mapping $Q^\ast:\, \mathcal{O}(W) \to \wedge(V^\ast)^G$ defined in
Sect.\ \ref{sect:FF-invs}.
\begin{thm}\label{thm:4.10}
For $f \in \wedge(V^\ast)^{G_n}$, if $F \in (Q^\ast)^{-1}(f) \in
\mathcal{O}(W)$ is any holomorphic function in the inverse image of
$f$, the Berezin integral $f \mapsto \Omega_V[f]$ can be computed as
an integral over the compact symmetric space $D_q \simeq K / K_o :$
\begin{displaymath}
    \Omega_V[f] = (2\pi)^{qn} 2^{qm} \frac{\mathrm{vol}(K_n)}
    {\mathrm{vol}(K_{n,-q})}\int_{D_q} F(y)\, \mathrm{Det}^{-n^\prime}
    (y)\,d\mu_{D_q}(y) \;,
\end{displaymath}
where $n^\prime = n$ for $G_n = \mathrm{GL}_n$ and $n^\prime = n/2$
for $G_n = \mathrm{O}_n\,$, $\mathrm{Sp}_n \,$.
\end{thm}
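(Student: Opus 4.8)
The plan is to deduce the statement from formula (\ref{eq:4.2}) --- which, being obtained from Prop.\ \ref{thm:4.9} by pushing the Haar measure $dk$ forward along $K\to K.o=D_q$, already asserts $\Omega_V[f]=C\int_{D_q}F(y)\,\mathrm{Det}^{-n^\prime}(y)\,d\mu_{D_q}(y)$ for a constant $C$ independent of $f$ and of the lift $F$ --- by pinning down $C$ through a Gaussian test and Lemma \ref{lem:4.10}. So there are three points to settle: (i) that ``lift'' in Prop.\ \ref{thm:4.9} means the same as a preimage under $Q^\ast$; (ii) that the identity survives the passage from polynomial to holomorphic $F$; (iii) the value of $C$.

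For (i) I would recall from Sects.\ \ref{sect:4.4}--\ref{sect:4.5} that the map $\mathbb{C}[W]\simeq\mathbb{C}[W]\otimes V_\lambda\to\wedge(V^\ast)^G$ of Prop.\ \ref{thm:4.9} is induced by the parabolic-induction surjection $p\colon M(\lambda)\to\mathcal{V}(\lambda)$ and, under $\mathcal{U}(\mathfrak{g}^-)\simeq\mathbb{C}[W]$, is a ring homomorphism (the elements of $\mathfrak{g}^-$ act by exterior-multiplication operators) whose restriction to $W^\ast\simeq\mathfrak{g}^-$ is the canonical isomorphism onto $\wedge^2(V^\ast)^G$. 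Since $Q^\ast$ is also a ring homomorphism realizing the same isomorphism on $W^\ast$ (the lemma of Sect.\ \ref{sect:FF-invs}) and $\wedge(V^\ast)^G$ is generated in degree two, the two maps agree on $\mathbb{C}[W]$, so ``$F$ is a lift'' means precisely $Q^\ast F=f$. For (ii), given a holomorphic $F=\sum_{k\ge0}F_k$ with $F_k$ homogeneous of degree $k$ and $Q^\ast F=f$, I would use that $Q^\ast F_k$ has Grassmann degree $2k$, hence $Q^\ast F_k=0$ for $k>N:=qn$ while $\Omega_V$ annihilates it for $k<N$; since $D_q$ is compact and $\mathrm{Det}^{-n^\prime}$ --- read as a power of the Pfaffian when $G=\mathrm{O}_n$ --- is continuous there, the Taylor series can be integrated termwise, and the polynomial case of (\ref{eq:4.2}) gives $\int_{D_q}F_k\,\mathrm{Det}^{-n^\prime}d\mu_{D_q}=C^{-1}\Omega_V[Q^\ast F_k]$ for each $k$. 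Summing collapses the right-hand side to $C^{-1}\Omega_V[Q^\ast F]=C^{-1}\Omega_V[f]$; in particular the integral depends only on $f$, not on the holomorphic lift chosen.

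For (iii) I would test on $F(y)=\mathrm{e}^{\mathrm{Tr}^{\,\prime}y}$. In each of the three cases $\mathrm{Tr}^{\,\prime}w=\sum_c x_c^c(w)$, so by $Q^\ast x_c^{c^\prime}=\tilde{\zeta}_i^{c^\prime}\zeta_c^i$ the pullback is $f=\exp\big(\sum_{c,i}\tilde{\zeta}_i^c\zeta_c^i\big)$, whose top-degree component is $\prod_{c,i}\tilde{\zeta}_i^c\zeta_c^i$, giving $\Omega_V[f]=1$ by the definition of $\Omega_V$. Lemma \ref{lem:4.10} then evaluates the other side to $(2\pi)^{-qn}2^{-qm}\,\mathrm{vol}(K_{n,-q})/\mathrm{vol}(K_n)$, which forces $C=(2\pi)^{qn}2^{qm}\,\mathrm{vol}(K_n)/\mathrm{vol}(K_{n,-q})$; inserting this into (\ref{eq:4.2}) yields the asserted identity. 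I expect no real difficulty in (i)--(iii), which are all normalization-tracking; the genuine obstacles sit upstream --- in the one-dimensionality of $\mathrm{Hom}_H(M(\lambda),V_{-\lambda})$ underlying Prop.\ \ref{thm:4.9}, and in Lemma \ref{lem:4.10}, which is proved only later (Sect.\ \ref{sect:norm-int}) by a supersymmetric reduction to the boson--boson normalization (\ref{eq:3.2}).
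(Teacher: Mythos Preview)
Your proposal is correct and follows essentially the same route as the paper: start from (\ref{eq:4.2}), test on the Gaussian $F(y)=\mathrm{e}^{\mathrm{Tr}^{\,\prime}y}$ to find $\Omega_V[f]=1$, and invoke Lemma \ref{lem:4.10} to fix the constant. The paper treats the theorem as an immediate reformulation of Prop.\ \ref{thm:4.9} once Lemma \ref{lem:4.10} is in hand; your points (i) and (ii) make explicit two identifications that the paper leaves tacit (that the ``lift'' of Prop.\ \ref{thm:4.9} coincides with a $Q^\ast$-preimage, and that the passage from polynomial to holomorphic $F$ is harmless because only the degree-$N$ Taylor component survives on both sides), but these are indeed routine and do not constitute a different argument.
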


\subsection{Shifting by nilpotents}

In this last subsection, we derive a result which will be needed in
the supersymmetric context of Sects.\ \ref{sect:thm} and
\ref{sect:norm-int}. Let $\mathfrak {p} := T_o D_q$ be the tangent
space of $D_q = K . o$ at the identity $\mathrm{Id} = o\,$. Since $H
= K^\mathbb{C}$, $H_o = K_o^\mathbb {C}$, $D_q \simeq K/K_o\,$, and
$W$ is the closure of $H/H_o\,$, the tangent space $\mathfrak{p}$ is
a real form of $W$. More precisely,
\begin{displaymath}
    \mathfrak{p} = W \cap \mathrm{Lie}(K) \;.
\end{displaymath}
Linearizing $\tau(y) = \tau(k \tau(k)^{-1}) = \tau(k) k^{-1} =
y^{-1}$ at $y = o\,$, we note that $\xi \in \mathfrak{p}$ satisfies
$\tau_\ast(\xi) = - \xi$, where $\tau_\ast$ is the differential
$\tau_\ast := d\tau\vert_{o}\,$.

Let now $D_q$ be equipped with the canonical Riemannian geometry in
which $K$ acts by isometries and which is induced by the trace form
$-\mathrm{Tr}$ on $\mathfrak{p}\,$. With each $w \in W$ associate a
complex vector field $t_w \in \Gamma(D_q\, , \mathbb{C} \otimes T
D_q)$ by
\begin{displaymath}
    (t_w f)(y) = \frac{d}{dt} f(y + tw) \big\vert_{t = 0} \;.
\end{displaymath}
\begin{lem}\label{lem:divergence}
The vector field $t_w$ has the divergence
\begin{displaymath}
    \mathrm{div}(t_w)(y) = - (q-m/2)\, \mathrm{Tr}\,(y^{-1} w)\;,
\end{displaymath}
where $m = 0,+1,-1$ for $G = \mathrm{GL}\,$, $\mathrm{O}\,$,
$\mathrm{Sp}\,$.
\end{lem}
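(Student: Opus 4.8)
The plan is to reduce the identity to the single base point $o := \mathrm{Id}$, prove it there by a coordinate computation, and recover the general statement by $K$-equivariance. For $g \in K$ the twisted-conjugation action $y \mapsto g.y = g\,y\,\tau(g^{-1})$ on $D_q = K.o$ is the restriction of the linear map $A_g \in \mathrm{GL}(W)$, $v \mapsto g\,v\,\tau(g^{-1})$, so the pushforward of $t_w$ under $A_g$ is $t_{A_g w} = t_{g.w}$; since $K$ acts by isometries of the chosen metric, $\mathrm{div}(t_{g.w})(g.y) = \mathrm{div}(t_w)(y)$. On the other hand, writing $y = g.o$ one has $y^{-1} = \tau(g)\,g^{-1}$ and $g.w = g\,w\,\tau(g^{-1})$, whence $\mathrm{Tr}\big((g.y)^{-1}(g.w)\big) = \mathrm{Tr}(y^{-1}w)$ by cyclicity of the trace. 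Thus both sides of the asserted formula transform in the same way under $K$, and since $K.o = D_q$ it suffices to show $\mathrm{div}(t_w)(o) = -(q - m/2)\,\mathrm{Tr}(w)$ for all $w \in W$ (both sides being $\mathbb{C}$-linear in $w$, no reality restriction is needed).

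Next I would introduce the chart $\Psi : \mathfrak{p} \to D_q$, $\Psi(Y) = \mathrm{e}^Y$, where $\mathfrak{p} = W \cap \mathrm{Lie}(K) = T_o D_q$. Because $\mathfrak{p}$ consists of the anti-Hermitian elements of the complex vector space $W$, it is a totally real real-form, $\mathfrak{p} \oplus \mathrm{i}\,\mathfrak{p} = W$, and the same holds for $T_y D_q \subset W$ at every $y$; this is precisely what makes $(t_w f)(y) = \tfrac{d}{dt} f(y+tw)\big|_{t=0}$ a well-defined section of $\mathbb{C}\otimes TD_q$, its value at $y$ being the element of $\mathbb{C}\otimes T_y D_q$ corresponding to $w$ under the canonical $\mathbb{C}\otimes T_y D_q \cong W$. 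Since $Y^k \in W$ for $Y \in W$, one has $\mathrm{e}^Y \in D_q$, and $\Psi$ is a constant rescaling of the Riemannian exponential chart of the symmetric space $D_q$, so in $\Psi$-coordinates $\partial_i\ln\sqrt{\det g}\,(0) = 0$ and hence $\mathrm{div}(t_w)(o) = \sum_i \partial_i (t_w)^i(0)$. Applying $t_w = \sum_i (t_w)^i(Y)\,\partial/\partial Y^i$ to the linear functions on $W$ and using $\partial_{Y^i}\mathrm{e}^Y = d\exp_Y(e_i)$, the components are characterized by $w = d\exp_Y\big(c(Y)\big)$ with $c(Y) := \sum_i (t_w)^i(Y)\, e_i$, i.e.\ $c(Y) = (d\exp_Y)^{-1}(w)$, and therefore $\mathrm{div}(t_w)(o) = \mathrm{tr}_W\big(Dc|_0\big)$, the trace of the $\mathbb{C}$-linear extension of the derivative of $c$ at $0$.

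The remaining computation has two ingredients. From $d\exp_Y = \mathrm{e}^Y\,\dfrac{1-\mathrm{e}^{-\mathrm{ad}_Y}}{\mathrm{ad}_Y}$ one gets $c(Y) = w - \tfrac12(Yw + wY) + O(|Y|^2)$, so $Dc|_0$ is the map $H \mapsto -\tfrac12(Hw + wH)$ on $W$ (which indeed preserves $W$). It then remains to evaluate $\mathrm{tr}_W(H \mapsto Hw + wH)$. For $G = \mathrm{GL}_n$, where $W = \mathrm{End}(\mathbb{C}^q)$, this equals $2q\,\mathrm{Tr}(w)$, giving $\mathrm{div}(t_w)(o) = -q\,\mathrm{Tr}(w)$ as required ($m = 0$). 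For $G = \mathrm{O}_n$ and $G = \mathrm{Sp}_n$ I would use the isomorphism $W = \mathrm{Sym}_b(U_q) \xrightarrow{\ \sim\ } \{S \in \mathrm{Mat}_{2q,2q}(\mathbb{C}) : S^{\mathrm{t}} = \varepsilon_b S\}$, $M \mapsto M t_b$ (with $\varepsilon_s = +1$, $\varepsilon_a = -1$), under which $H \mapsto wH + Hw$ becomes $S \mapsto wS + S w^{\mathrm{t}}$; splitting the trace over $\mathrm{End}(\mathbb{C}^{2q})$ by the transposition involution and using $\mathrm{tr}(S\mapsto wS^{\mathrm{t}}) = \mathrm{tr}(S\mapsto S^{\mathrm{t}} w^{\mathrm{t}}) = \mathrm{Tr}(w)$, one finds the trace over the symmetric (resp.\ skew) $2q \times 2q$ matrices to be $(2q+1)\mathrm{Tr}(w)$ (resp.\ $(2q-1)\mathrm{Tr}(w)$). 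Since $b = s$ for $G = \mathrm{Sp}_n$ ($m = -1$) and $b = a$ for $G = \mathrm{O}_n$ ($m = +1$), this gives $\mathrm{div}(t_w)(o) = -\tfrac12(2q - m)\mathrm{Tr}(w) = -(q - m/2)\mathrm{Tr}(w)$, as claimed.

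The main obstacle is not any single calculation but the conceptual first step: pinning down the sense in which $t_w$ is a complex vector field on $D_q$ (which rests on $D_q$ being maximally totally real in $W$) and deriving that its coordinate components near $o$ are governed by $(d\exp_Y)^{-1}$. The trace identity $\mathrm{tr}_W(H \mapsto Hw + wH) = (2q - m)\mathrm{Tr}(w)$ is the computational core and is cleanest in the $M \mapsto M t_b$ picture above. As a consistency check, the formula is equivalent to saying that the vector field $\mathrm{Det}^{\,q - m/2}\, t_w$ is divergence-free on $D_q$, since $t_w(\log\mathrm{Det})(y) = \mathrm{Tr}(y^{-1}w)$ and $\mathrm{div}(\phi\,t_w) = \phi\,\mathrm{div}(t_w) + t_w(\phi)$.
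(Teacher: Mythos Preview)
Your proof is correct and rests on the same underlying idea as the paper's: in Riemann normal coordinates the divergence at the center reduces to the trace of a linear map on $\mathfrak{p}$ (equivalently on $W$). You differ in two organizational respects. First, you reduce to the base point $o$ by an explicit $K$-equivariance argument and then work in the chart $Y\mapsto e^{Y}$ via the $d\exp$ formula; the paper instead sets up normal coordinates $y' = k\,e^{x^\alpha e_\alpha}\tau(k^{-1})$ directly at a general point $y=k\tau(k^{-1})$ and expands there. Second, the constant $(q-m/2)$ is extracted differently: the paper invokes the Casimir-type identity $-\sum_\alpha e_\alpha^{2}=(q-m/2)\,\mathrm{Id}$ for an orthonormal basis of $\mathfrak{p}$, while you compute $\mathrm{tr}_W(H\mapsto Hw+wH)$ via the model $M\mapsto Mt_b$ identifying $W$ with the (skew-)symmetric $2q\times 2q$ matrices. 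The two computations are equivalent --- by cyclicity, $\sum_\alpha \mathrm{Tr}(w e_\alpha^{2}) = -\mathrm{tr}_{\mathfrak{p}}\big(H\mapsto\tfrac12(Hw+wH)\big)$ --- but your matrix-model evaluation is a bit more explicit than the paper's ``a short computation shows''.
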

\begin{proof}
For $y \in D_q$ choose some fixed element $k \in K$ so that $y = k
\tau(k^{-1})$. Fixing an orthonormal basis $\{ e_\alpha \}$ of
$\mathfrak{p}\,$, define local coordinate functions $x^\alpha$ in a
neighborhood of the point $y$ by the equation
\begin{displaymath}
    y^\prime = k\, \mathrm{e}^{x^\alpha(y^\prime) e_\alpha}
    \tau(k^{-1})\;.
\end{displaymath}
By their construction via the exponential mapping, these are Riemann
normal coordinates centered at $y\,$. The Riemannian metric expands
around $y$ as $\sum_\alpha dx^\alpha \otimes dx^\alpha + \ldots$,
with vanishing corrections of linear order in the coordinates
$x^\alpha$.

Let $\partial_\alpha = \partial/\partial x^\alpha$ and express the
vector field $t_w$ in this basis as $t_w = t_w^\alpha \partial_\alpha
\,$. Differentiating the equation $k^{-1} (y^\prime + t w) \tau(k) =
\mathrm{e}^{x^\alpha(y^\prime + tw) e_\alpha}$ with the help of the
relation
\begin{displaymath}
    \frac{d}{dt} x^\alpha(y^\prime + tw) \big\vert_{t = 0} =
    (\mathcal{L}_{t_w} x^\alpha)(y^\prime) = t_w^\alpha (y^\prime)\;,
\end{displaymath}
where $\mathcal{L}_{t_w}$ is the Lie derivative w.r.t.\ the vector
field $t_w\,$, and then solving the resulting equation for
$t_w^\alpha\,$, one obtains the following expansion of $t_w^\alpha$
in powers of the $x^\beta$:
\begin{displaymath}
    t_w^\alpha(\cdot) = - \mathrm{Tr}\, (k^{-1} w \tau(k) e_\alpha)
    + \frac{1}{2} x^\beta(\cdot) \mathrm{Tr}\, \big( k^{-1} w \tau(k)
    (e_\alpha e_\beta + e_\beta e_\alpha) \big) + \ldots \;.
\end{displaymath}
Since the metric tensor is of the locally Euclidean form $\sum_\alpha
dx^\alpha \otimes dx^\alpha + \ldots$, the divergence is now readily
computed to be
\begin{displaymath}
    \mathrm{div}(t_w)(k \tau(k)^{-1}) = (\partial_\alpha t_w^\alpha)
    (k \tau(k)^{-1}) = \sum\nolimits_\alpha \mathrm{Tr}\, (k^{-1}
    w \tau(k) e_\alpha^2 ) \;.
\end{displaymath}
The sum of squares $e_\alpha^2$ is independent of the choice of basis
$e_\alpha\,$. Making any convenient choice, a short computation shows
that
\begin{displaymath}
    - \sum\nolimits_\alpha e_\alpha^2 = (q - m/2)\, \mathrm{Id} \;,
\end{displaymath}
where $m = 0,+1,-1$ for $G = \mathrm{GL}$, $\mathrm{O}$, $\mathrm
{Sp}\,$. The statement of the lemma now follows by inserting this
result in the previous formula and recalling $y = k \tau(k^{-1})$.
\end{proof}
\begin{rem}
A check on the formula for the sum of squares $- \sum_\alpha
e_\alpha^2$ is afforded by the relations $- \mathrm{Tr}\, e_\alpha^2
= 1$ and $\mathrm{dim}_\mathbb{R} \, \mathfrak{p} =
\mathrm{dim}_\mathbb{C} \, W = q^2$, $2q(q-1/2)$, $2q (q+1/2)$ for
$G= \mathrm{GL}$, $\mathrm{O}$, $\mathrm{Sp}\,$. Note also this:
defined by the equation $\mathrm{div}(t_w) d\mu_{D_q} =
\mathcal{L}_{t_w} d\mu_{D_q}$, the operation of taking the divergence
does not depend on the choice of scale for the metric tensor.
Therefore we were free to use a normalization convention for the
metric which differs from that used elsewhere in this paper.
\end{rem}
\begin{lem}\label{lem:shift}
Let $F : \, D_q \to \mathbb{C}$ be an analytic function, and let $N_0
= \oplus_{k \ge 1} \wedge^{2k}(\mathbb{C}^\bullet)$ be the nilpotent
even part of a (parameter) Grassmann algebra $\wedge
(\mathbb{C}^\bullet)$. Then for any $w \in N_0 \otimes W$ one has
\begin{displaymath}
    \int_{D_q} F(y+w) \, d\mu_{D_q}(y) = \int_{D_q} \frac{F(y)\,
    d\mu_{D_q}(y)}
    {\mathrm{Det}^{q+m/2}(\mathrm{Id}-y^{-1}w)} \;.
\end{displaymath}
\end{lem}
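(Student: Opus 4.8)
The plan is to obtain the identity by ``exponentiating'' its infinitesimal form, which is precisely the integration-by-parts formula provided by Lemma~\ref{lem:divergence}. The decisive structural fact is that $w\in N_0\otimes W$ is nilpotent and even: hence $F(y+w)=\sum_{k\ge 0}\frac{1}{k!}\,(t_w^kF)(y)$ is a \emph{finite} sum of analytic functions on $D_q$, the factor $\mathrm{Det}^{-(q+m/2)}(\mathrm{Id}-y^{-1}w)=\exp\!\big((q+m/2)\sum_{j\ge 1}\frac1j\,\mathrm{Tr}\,((y^{-1}w)^{j})\big)$ is a finite sum of nilpotent even elements, and the ``flow'' $\Psi_t:y\mapsto y+tw$ of the nilpotent vector field $t_w$ is a well-defined formal map carrying $D_q$ into its nilpotent thickening. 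Both sides of the asserted equality are therefore polynomials in the generators of $\wedge(\mathbb{C}^\bullet)$, and it is enough to verify equality degree by degree.

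The computation itself would run as follows. Because $D_q$ is compact without boundary, Cartan's formula together with Stokes' theorem gives $\int_{D_q}\mathcal{L}_{t_w}(g\,d\mu_{D_q})=\int_{D_q}d\big(\iota_{t_w}(g\,d\mu_{D_q})\big)=0$ for every analytic $g$, whence $\int_{D_q}(t_w g)\,d\mu_{D_q}=-\int_{D_q}g\,\mathrm{div}(t_w)\,d\mu_{D_q}$; substituting the value of $\mathrm{div}(t_w)$ from Lemma~\ref{lem:divergence} turns this into an explicit trace identity. Iterating it --- equivalently, integrating the transport equation $\frac{d}{dt}\Psi_t^{*}d\mu_{D_q}=(\mathrm{div}(t_w)\circ\Psi_t)\,\Psi_t^{*}d\mu_{D_q}$ over $t\in[-1,0]$ --- yields $\Psi_{-1}^{*}d\mu_{D_q}=\rho(y)\,d\mu_{D_q}(y)$ with $\rho(y)=\exp\!\big(\int_0^{-1}\mathrm{div}(t_w)(y+\sigma w)\,d\sigma\big)$. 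Now $\mathrm{div}(t_w)(y+\sigma w)$ is a multiple of $\mathrm{Tr}\big((y+\sigma w)^{-1}w\big)=\mathrm{Tr}\big((\mathrm{Id}-\sigma y^{-1}w)^{-1}y^{-1}w\big)$, and the elementary antiderivative $\frac{d}{d\sigma}\mathrm{Tr}\,\log(\mathrm{Id}-\sigma y^{-1}w)=-\mathrm{Tr}\big((\mathrm{Id}-\sigma y^{-1}w)^{-1}y^{-1}w\big)$ collapses $\rho$ to a power of $\mathrm{Det}(\mathrm{Id}-y^{-1}w)$, the exponent being the one recorded in the statement. Combining this with $\int_{D_q}F(y+w)\,d\mu_{D_q}(y)=\int_{D_q}(\Psi_1^{*}F)\,d\mu_{D_q}=\int_{D_q}F\cdot\Psi_{-1}^{*}d\mu_{D_q}$ finishes the argument.

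The main obstacle is making rigorous the step that treats $y\mapsto y+w$ as a change of variables even though $y+w\notin D_q$. For this I would observe that $F$, analytic on $D_q$, extends holomorphically to an ambient neighborhood in $W$, that $d\mu_{D_q}$ is the restriction of a holomorphic volume form on the open $H$-orbit $\{\,\mathrm{Det}\neq 0\,\}$ --- namely a suitable power $\mathrm{Det}^{-\kappa}\omega_0$ of the determinant times the flat form $\omega_0$ on $W$, whose $H$-invariance fixes $\kappa$ and which restricts to a constant multiple of $d\mu_{D_q}$ --- and that the homotopy $y\mapsto y+tw$ moves the cycle $D_q$ only within the nilpotent thickening, so that the integral of the closed top-degree form $F\cdot\mathrm{Det}^{-\kappa}\omega_0$ is unchanged; all of this is legitimate precisely because $w$ is nilpotent. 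An alternative and wholly elementary route is to expand both sides in the Grassmann generators and prove the resulting finite list of identities by induction on the nilpotency order, the inductive step invoking only the integration-by-parts identity above and the chain rule $t_w[\,y\mapsto F(y+sw)\,](y)=(t_wF)(y+sw)$.
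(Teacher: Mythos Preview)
Your proposal is correct and follows essentially the same route as the paper: both exploit the nilpotency of $w$ to make all sums finite, use Stokes' theorem on the closed manifold $D_q$ to transfer the exponentiated Lie derivative from $F$ onto $d\mu_{D_q}$, set up and solve the resulting first-order ODE for the Jacobian using the divergence formula of Lemma~\ref{lem:divergence}, and then recognize the answer via the antiderivative $\frac{d}{ds}\mathrm{Tr}\log(y-sw)$. The paper's presentation is marginally more direct in that it works with $J_s$ defined by $e^{-s\mathcal{L}_{t_w}}d\mu_{D_q}=J_s\,d\mu_{D_q}$ rather than framing the computation as a pullback $\Psi_{-1}^{*}d\mu_{D_q}$, but this is purely cosmetic; your alternative suggestions (holomorphic-form/cycle-deformation argument, induction on nilpotency degree) are valid extras not pursued in the paper.
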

\begin{proof}
Let $t_w$ be the vector field generating translations $y \mapsto y +
sw$ ($s \in \mathbb{R}$). Since $w$ is nilpotent, the exponential
$\exp(s \mathcal{L}_{t_w})$ of the Lie derivative $\mathcal{L}_{t_w}$
is a differential operator of finite order. Applying it to the
function $F$ one has $(\mathrm{e}^{s \mathcal{L}_{t_w}} F)(y) =
F(y+sw)$.

Now for any density $\Omega$ on $D_q$ the integral $\int_{D_q}
\mathcal{L}_{t_w} \Omega$ vanishes by Stokes' theorem for the closed
manifold $D_q\,$. Therefore, partial integration gives
\begin{displaymath}
    \int_{D_q} F(y+sw)\, d\mu_{D_q}(y) = \int_{D_q} F(y)\,
    \mathrm{e}^{-s \mathcal{L}_{t_w}}\, d\mu_{D_q}(y) =
    \int_{D_q} F(y)\, J_s(y) \, d\mu_{D_q}(y) \;,
\end{displaymath}
where $J_s : \, D_q \to \mathbb{C} \oplus N_0$ is the function
defined by $\mathrm{e}^{-s\mathcal{L}_{t_w}} d\mu_{D_q} = J_s \,
d\mu_{D_q}\,$.

We now set up a differential equation for $J_s\,$. For this we
consider the derivative
\begin{displaymath}
    \frac{d}{ds} \left( \mathrm{e}^{-s\mathcal{L}_{t_w}} d\mu_{D_q}
    \right) = \mathrm{e}^{-s\mathcal{L}_{t_w}} \left( -
    \mathcal{L}_{t_w} d\mu_{D_q} \right) \;.
\end{displaymath}
By the relation $\mathcal{L}_{t_w} d\mu_{D_q} = \mathrm{div}(t_w)
d\mu_{D_q}$ we then get
\begin{displaymath}
    \frac{d}{ds} J_s\, d\mu_{D_q} = - \mathrm{e}^{-s\mathcal{L}_{t_w}}
    \left( \mathrm{div}(t_w) d\mu_{D_q} \right) = -
    \mathrm{e}^{-s\mathcal{L}_{t_w}}
    \left( \mathrm{div}(t_w) \right) J_s\, d\mu_{D_q} \;.
\end{displaymath}
Using the expression for $\mathrm{div}(t_w)$ from Lemma
\ref{lem:divergence} we obtain the differential equation
\begin{displaymath}
    \frac{d}{ds} \log J_s(y) = (q-m/2) \mathrm{Tr}\,(w(y-sw)^{-1})
    = - (q-m/2) \frac{d}{ds} \mathrm{Tr}\, \log(y-sw) \;.
\end{displaymath}
The solution of this differential equation with initial condition
$J_{s=0} = 1$ is
\begin{displaymath}
    J_s(y) = \frac{ \mathrm{Det}^{q-m/2}(y)}
    {\mathrm{Det}^{q-m/2}(y-sw)} \;,
\end{displaymath}
and setting $s = 1$ yields the statement of the lemma.
\end{proof}

\section{Full supersymmetric situation}
\label{sect:full-susy}
\setcounter{equation}{0}

We finally tackle the general situation of $V = V_0 \oplus V_1$ where
both $V_0$ and $V_1$ are non-trivial. The superbosonization formulas
(\ref{bosonize}, \ref{bos-other}) in this situation will be proved by
a chain of variable transformations resulting in reduction to the
cases treated in the two preceding sections. This proof has the
advantage of being constructive.

\subsection{More notation}

To continue the discussion in the supersymmetric context we need some
more notation. If $V = V_0 \oplus V_1$ is a $\mathbb{Z}_2$-graded
vector space, one calls $(V_0 \cup V_1) \setminus \{ 0 \}$ the subset
of homogeneous elements of $V$. A vector $v \in V_0 \setminus \{ 0
\}$ is called even and $v \in V_1 \setminus \{ 0 \}$ is called odd.
On the subset of homogeneous elements of $V$ one defines a parity
function $\vert \cdot \vert$ by $\vert v \vert = 0$ for $v$ even and
$\vert v \vert = 1$ for $v$ odd. Whenever the parity function $v
\mapsto \vert v \vert$ appears in formulas and expressions, the
vector $v$ is understood to be homogeneous even without explicit
mention.

There exist two graded-commutative algebras that are canonically
associated with $V = V_0 \oplus V_1\,$. To define them, let $T(V) =
\oplus_{k=0}^\infty T^k(V)$ be the tensor algebra of $V$, and let
$I_\pm(V) \subset T(V)$ be the two-sided ideal generated by
multiplication of $T(V)$ with all combinations $v \otimes v^\prime
\pm (-1)^{|v| |v^\prime|} v^\prime \otimes v$ for homogeneous vectors
$v, v^\prime \in V$. Then the \emph{graded-symmetric} algebra of $V =
V_0 \oplus V_1$ is the quotient
\begin{displaymath}
  \mathrm{S}(V) := T(V)/I_-(V) \simeq
  \mathrm{S}(V_0)\otimes \wedge(V_1)\;,
\end{displaymath}
which is isomorphic to the tensor product of the symmetric algebra of
$V_0$ with the exterior algebra of $V_1\,$. The
\emph{graded-exterior} algebra of $V$ is the quotient
\begin{displaymath}
  \wedge(V) := T(V)/I_+(V) \simeq
  \wedge(V_0)\otimes \mathrm{S}(V_1)\;.
\end{displaymath}
Here we have adopted Kostant's language and notation \cite{kostant}.

Recall that our goal is to prove an integration formula for
integrands in $\mathcal{A}_V^G$, the graded-commutative algebra of
$G$-equivariant holomorphic functions $f : \, V_0 \to \wedge
(V_1^\ast)$ with $V_0$ and $V_1$ given in (\ref{eq:2.3},
\ref{eq:2.4}). For that purpose we will view the basic algebra
$\mathcal{A}_V$ as a completion of the graded-symmetric algebra
\begin{displaymath}
  \mathrm{S}(V^\ast) = T(V^\ast)/I_-(V^\ast) \simeq
  \mathrm{S}(V_0^\ast)\otimes \wedge(V_1^\ast)\;.
\end{displaymath}
The latter algebra is $\mathbb{Z}$-graded by $\mathrm{S}(V^\ast) =
\oplus_{k \ge 0}\, \mathrm{S}^k (V^\ast)$ where
\begin{displaymath}
    \mathrm{S}^k(V^\ast) \simeq \bigoplus\nolimits_{l=0}^k \left(
    \mathrm{S}^l (V_0^\ast)\otimes \wedge^{k-l}(V_1^\ast)\right)\;.
\end{displaymath}
The action of $G$ on $V$ preserves the $\mathbb{Z}_2$-grading $V =
V_0 \oplus V_1\,$. Thus $G$ acts on $T(V^\ast)$ while leaving the
two-sided ideal $I_-(V^\ast)$ invariant, and it therefore makes sense
to speak of the subalgebra $\mathrm{S}(V^\ast)^G$ of $G$-fixed
elements in $\mathrm{S}(V^\ast)$.

It is a result of R.\ Howe -- see Theorem 2 of \cite{Howe1989} --
that for each of the cases $G = \mathrm{GL}_n\,$, $\mathrm{O}_n\,$,
and $\mathrm {Sp}_n\,$, the graded-commutative algebra $\mathrm{S}
(V^\ast)^G$ is generated by $\mathrm{S}^2(V^\ast)^G$. Hence, our
attention once again turns to the subspace $\mathrm{S}^2(V^\ast)^G$
of quadratic invariants.

\subsection{Quadratic invariants}

It follows from the definition of the graded-symmetric algebra
$\mathrm{S}(V^\ast)$ that the subspace of quadratic elements
decomposes as $\mathrm{S}^2(V_0^\ast \oplus V_1^\ast) = \mathrm{S}^2
(V_0^\ast) \oplus \wedge^2(V_1^\ast) \oplus (V_0^\ast \otimes
V_1^\ast)$. So, since $G$ acts on $V_0^\ast$ and $V_1^\ast$ we have a
decomposition
\begin{displaymath}
    \mathrm{S}^2(V_0^\ast \oplus V_1^\ast)^G = \mathrm{S}^2(V_0^\ast)^G
    \oplus \wedge^2(V_1^\ast)^G \oplus (V_0^\ast \otimes V_1^\ast)^G \;.
\end{displaymath}
To describe the components let us recall the notation $U_r =
\mathbb{C}^r \oplus (\mathbb{C}^r)^\ast$ for $r = p , q\,$.
\begin{lem}\label{lem:susy-invs}
$\mathrm{S}^2(V^\ast)^G$ is isomorphic as a $\mathbb{Z}_2$-graded
complex vector space to $W^\ast$ where the even and odd components of
$W = W_0 \oplus W_1$ are
\begin{eqnarray*}
    &&W_0 = W_{00} \times W_{11} = \left\{ \begin{array}{ll}
    \mathrm{End}(\mathbb{C}^p) \times \mathrm{End}(\mathbb{C}^q)
    \;, &\quad G = \mathrm{GL}_n \;, \\ \mathrm{Sym}_s(U_p) \times
    \mathrm{Sym}_a(U_q) \;, &\quad G = \mathrm{O}_n \;, \\
    \mathrm{Sym}_a(U_p) \times \mathrm{Sym}_s(U_q)\;, &\quad G =
    \mathrm{Sp}_n \;, \end{array} \right. \\ &&W_1 = \left\{
    \begin{array}{ll} \mathrm{Hom}(\mathbb{C}^q,\mathbb{C}^p)
    \oplus \mathrm{Hom}(\mathbb{C}^p,\mathbb{C}^q) \;, &\quad G =
    \mathrm{GL}_n \;, \\ \mathrm{Hom}(U_q \, , U_p) \;, &\quad G =
    \mathrm{O}_n \;, \mathrm{Sp}_n \;. \end{array} \right.
\end{eqnarray*}
\end{lem}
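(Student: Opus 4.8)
The plan is to elaborate the argument of Lemmas \ref{lem:BB-GL-invs}, \ref{lem:BB-OSP-invs}, and \ref{lem:FF-invs}, now applied to the three-term decomposition
\[
\mathrm{S}^2(V^\ast)^G = \mathrm{S}^2(V_0^\ast)^G \oplus \wedge^2(V_1^\ast)^G \oplus (V_0^\ast \otimes V_1^\ast)^G
\]
of quadratic invariants noted just above. Two of the three summands need no new work: $\mathrm{S}^2(V_0^\ast)^G \simeq W_{00}^\ast$ is Lemma \ref{lem:BB-GL-invs} for $G = \mathrm{GL}_n$ and Lemma \ref{lem:BB-OSP-invs} for $G = \mathrm{O}_n, \mathrm{Sp}_n$, while $\wedge^2(V_1^\ast)^G \simeq W_{11}^\ast$ is Lemma \ref{lem:FF-invs}; the interchange of $\mathrm{Sym}_s$ and $\mathrm{Sym}_a$ between the $W_{00}$- and $W_{11}$-blocks in the orthogonal and symplectic cases is exactly the effect of replacing $\mathrm{S}^2$ by $\wedge^2$ against the fixed symmetry type of $\beta$. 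So these contribute the even part $W_0^\ast = (W_{00} \times W_{11})^\ast$, and the only genuinely new computation is the mixed summand.

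For the mixed term I would first compute $(V_0 \otimes V_1)^G$ (the space of invariant $2$-tensors) and then dualize, using reductivity of $G$, to get $(V_0^\ast \otimes V_1^\ast)^G \simeq ((V_0 \otimes V_1)^G)^\ast$. For $G = \mathrm{GL}_n$, writing $\mathrm{Hom}(\mathbb{C}^r,\mathbb{C}^n) = (\mathbb{C}^r)^\ast \otimes \mathbb{C}^n$ and so on, the only combinations of summands of $V_0$ and $V_1$ carrying $G$-invariants are those pairing $\mathbb{C}^n$ against $(\mathbb{C}^n)^\ast$ (since $(\mathbb{C}^n \otimes \mathbb{C}^n)^{\mathrm{GL}_n} = 0$), and composition — recall $G$ acts trivially on $\mathbb{C}^p$, $\mathbb{C}^q$ — identifies them with $\mathrm{Hom}(\mathbb{C}^q,\mathbb{C}^p)$ and $\mathrm{Hom}(\mathbb{C}^p,\mathbb{C}^q)$, so $(V_0 \otimes V_1)^G \simeq W_1$. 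For $G = \mathrm{O}_n, \mathrm{Sp}_n$ I would use the $G$-equivariant isomorphism $\beta : \mathbb{C}^n \to (\mathbb{C}^n)^\ast$ to identify $V_0 \simeq U_p^\ast \otimes \mathbb{C}^n$ and $V_1 \simeq U_q^\ast \otimes \mathbb{C}^n$; since $(\mathbb{C}^n \otimes \mathbb{C}^n)^G$ is the one-dimensional space spanned by the tensor corresponding to $\beta^{-1}$, this gives $(V_0 \otimes V_1)^G \simeq U_p^\ast \otimes U_q^\ast \simeq \mathrm{Hom}(U_q, U_p) = W_1$ once a nondegenerate bilinear form on $U_p$ is used to identify $U_p^\ast$ with $U_p$. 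Dualizing yields $(V_0^\ast \otimes V_1^\ast)^G \simeq W_1^\ast$, and assembling the three summands gives $\mathrm{S}^2(V^\ast)^G \simeq W^\ast = W_0^\ast \oplus W_1^\ast$ as $\mathbb{Z}_2$-graded vector spaces.

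I do not expect any step to be a real obstacle: as in the earlier lemmas the work is bookkeeping of the classical first fundamental theorem, together with the standard fact that $\mathrm{S}^2$ and $\wedge^2$ of a reductive module are again reductive so that the graded pieces of $\mathrm{S}^2(V)^G$ and $\mathrm{S}^2(V^\ast)^G$ pair perfectly. The one place demanding care is tracking the symmetry types — symmetric versus alternating, and $U_r$ versus $(\mathbb{C}^r)^\ast$ — so that each block of $W$ ends up with the correct decoration in all three cases, in particular so that the $s \leftrightarrow a$ swap between $W_{00}$ and $W_{11}$ for $\mathrm{O}_n$ and $\mathrm{Sp}_n$ is reproduced faithfully. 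Only the vector-space structure is claimed here; the finer $H$-module structure of $W_1$ that is needed later is not part of this statement.
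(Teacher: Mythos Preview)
Your proposal is correct and follows essentially the same approach as the paper: both cite the earlier lemmas for the even summands $W_{00}$ and $W_{11}$, and for the mixed term both establish the dual statement $(V_0 \otimes V_1)^G \simeq W_1$ by composition of linear maps (equivalently, by invoking the one-dimensionality of the relevant $G$-invariant tensor space in $\mathbb{C}^n \otimes \mathbb{C}^n$ or $\mathbb{C}^n \otimes (\mathbb{C}^n)^\ast$) and then dualize. The only cosmetic difference is that the paper writes $V_0 \simeq \mathrm{Hom}(\mathbb{C}^n, U_p)$ and $V_1 \simeq \mathrm{Hom}(U_q, \mathbb{C}^n)$ in the $\mathrm{O}_n$/$\mathrm{Sp}_n$ cases so that the invariant is literally a composite $L\,\tilde{K}$, whereas you write $V_s \simeq U_r^\ast \otimes \mathbb{C}^n$ and contract against $\beta^{-1}$; these are the same identification viewed through different glasses.
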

\begin{proof}
Writing $\mathrm{S}^2(V_0^\ast)^G \simeq W_{00}$ and $\wedge^2
(V_1^\ast)^G \simeq W_{11}\,$, the statement concerning the even part
$W_0 \subset W$ is just a summary of Lemmas \ref{lem:BB-GL-invs},
\ref{lem:BB-OSP-invs}, and \ref{lem:FF-invs}. Thus what remains to be
done is to prove the isomorphism $(V_0^\ast\otimes V_1^\ast)^G \simeq
W_1^\ast$ between odd components. Let us prove the equivalent
statement $(V_0 \otimes V_1)^G \simeq W_1\,$.

In the case of $G = \mathrm{GL}_n$ there are two types of invariant:
we can compose an element $\tilde{L} \in \mathrm{Hom}(\mathbb{C}^p ,
\mathbb{C}^n)$ with an element $K \in \mathrm{Hom}(\mathbb{C}^n ,
\mathbb{C}^q)$ to form $K \tilde{L} \in \mathrm{Hom}(\mathbb{C}^p ,
\mathbb{C}^q)$, or else compose $\tilde{K} \in \mathrm{Hom}
(\mathbb{C}^q , \mathbb{C}^n)$ with $L \in \mathrm{Hom}(\mathbb{C}^n
, \mathbb{C}^p)$ to form $L\, \tilde{K} \in \mathrm{Hom}(\mathbb{C}^q
, \mathbb{C}^p)$. This already gives the desired statement $(V_0
\otimes V_1)^{ \mathrm{GL}_n} \simeq \mathrm{Hom}(\mathbb{C}^q ,
\mathbb{C}^p) \oplus \mathrm{Hom}(\mathbb{C}^p , \mathbb{C}^q)$.

In the cases of $G = \mathrm{O}_n\,$, $\mathrm{Sp}_n$ we use
$\mathrm{Hom}(\mathbb{C}^n , \mathbb{C}^r) \simeq \mathrm{Hom}
((\mathbb{C}^r)^\ast , (\mathbb{C}^n)^\ast)$ and the $G$-equivariant
isomorphism $\beta : \, \mathbb{C}^n \to (\mathbb{C}^n)^\ast$ to make
the identifications
\begin{displaymath}
    V_0 \simeq \mathrm{Hom}(\mathbb{C}^n , U_p) \;, \quad
    V_1 \simeq \mathrm{Hom}(U_q \, , \mathbb{C}^n) \;.
\end{displaymath}
After this, the $G$-invariants in $V_0 \otimes V_1$ are seen to be in
one-to-one correspondence with composites $L\, \tilde{K} \in
\mathrm{Hom}(U_q \, , U_p)$ where $\tilde{K} \in \mathrm{Hom}(U_q \,
, \mathbb{C}^n)$ and $L \in \mathrm{Hom}(\mathbb{C}^n, U_p)$.
\end{proof}
\begin{rem}
Defining $\mathbb{Z}_2$-graded vector spaces $\mathbb{C}^{p|q} :=
\mathbb{C}^p \oplus \mathbb{C}^q$ and $U_{p|q} := U_p \oplus U_q$ we
could say that $\mathrm{S}^2(V)^{\mathrm{GL}_n} \simeq \mathrm{End}
(\mathbb{C}^{p|q})$, while $\mathrm{S}^2 (V)^ {\mathrm{O}_n} \simeq
\mathrm{S}^2(U_{p|q})$ and $\mathrm{S}^2 (V)^{\mathrm{Sp}_n} \simeq
\wedge^2(U_{p|q})$. We will not use these identifications here.
\end{rem}

\subsection{Pullback from $\mathcal{A}_W$ to $\mathcal{A}_V^G$}
\label{sect:pullback}

With $W = W_0 \oplus W_1$ as specified in Lemma \ref{lem:susy-invs},
consider now the algebra $\mathcal{A}_W$ of holomorphic functions
\begin{displaymath}
    F \, : \,\, W_0 \to \wedge(W_1^\ast) \;.
\end{displaymath}
At the linear level we have the isomorphism of Lemma
\ref{lem:susy-invs}, which we here denote by
\begin{displaymath}
    Q_2^\ast \, : \,\, W^\ast \to \mathrm{S}^2(V^\ast)^G \;.
\end{displaymath}
This extends in the natural way to an isomorphism of tensor algebras
\begin{displaymath}
    Q_T^\ast \,:\,\, T(W^\ast)\to T\big(\mathrm{S}^2(V^\ast)^G \big)\;.
\end{displaymath}
Since $Q_2^\ast$ is an isomorphism of $\mathbb{Z}_2$-graded vector
spaces, $Q_T^\ast$ sends the ideal $I_-(W^\ast) \subset T(W^\ast)$
generated by the graded-skew elements $w \otimes w^\prime - (-1)^{|w|
|w^\prime|} w^\prime \otimes w$ into the ideal $I_-(V^\ast) \subset
T(V^\ast)$ generated by the same type of element $v \otimes v ^\prime
- (-1)^{|v| |v^\prime|} v^\prime \otimes v$.

Now, taking the quotient of $T(V^\ast)$ by $I_-(V^\ast)$ is
compatible with the reductive action of $G$,
%
%
and it therefore follows that $Q_T^\ast$ descends to a mapping
\begin{displaymath}
    Q^\ast \, : \,\, \mathrm{S}(W^\ast) \to \mathrm{S}(V^\ast)^G \;.
\end{displaymath}
Because $Q^\ast(W^\ast) = Q_2^\ast(W^\ast) = \mathrm{S}^2(V^\ast)^G$
and $\mathrm{S}(V^\ast)^G$ is generated by $\mathrm{S}^2(V^\ast)^G$,
the map $Q^\ast \, : \,\, \mathrm{S}(W^\ast) \to \mathrm{S}
(V^\ast)^G$ is surjective.

The same holds true \cite{GSchwarz} at the level of our holomorphic
functions $\mathcal{A}_W$ and $\mathcal{A}_V^G$:
\begin{prop}
The homomorphism of algebras $Q^\ast : \, \mathcal{A}_W \to
\mathcal{A}_V^G$ is surjective.
\end{prop}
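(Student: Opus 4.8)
The plan is to lift the surjectivity statement from the polynomial level, which has already been established, to the level of holomorphic functions, exactly mirroring the argument used for the boson-boson sector in Proposition~\ref{lem:Q-surj} but now keeping careful track of the $\mathbb{Z}_2$-grading. First I would recall the commutative diagram implicit in the preceding discussion: we have the surjection $Q^\ast : \mathrm{S}(W^\ast) \to \mathrm{S}(V^\ast)^G$ on polynomial rings, and we want to extend it to the completions $\mathcal{A}_W$ and $\mathcal{A}_V^G$. Writing an element $F \in \mathcal{A}_W$ in its homogeneous expansion $F = \sum_{k\ge 0} F_k$ with $F_k \in \mathrm{S}^k(W^\ast)$ (using $\mathrm{S}(W^\ast) \simeq \mathrm{S}(W_0^\ast)\otimes \wedge(W_1^\ast)$, so that the sum in the odd directions is finite in each degree), the pullback $Q^\ast F$ should be defined termwise as $\sum_k Q^\ast F_k$; the point is that this series again has infinite radius of convergence because $Q$ is a fixed polynomial (indeed quadratic) map, so $Q^\ast$ does not worsen convergence — each monomial of degree $k$ in $W^\ast$ pulls back to something of degree $2k$ in $V^\ast$, and the growth estimates are controlled.

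Next I would address surjectivity itself. Given $f \in \mathcal{A}_V^G$, expand $f = \sum_k f_k$ in $\mathbb{Z}$-graded homogeneous components $f_k \in \mathrm{S}^k(V^\ast)^G$ (the $G$-invariance is inherited degree by degree because $G$ acts reductively, preserving the grading). By the polynomial-level result, each $f_k$ equals $Q^\ast F_k$ for some $F_k \in \mathrm{S}(W^\ast)$, and since $Q^\ast$ roughly doubles degree one can take $F_k$ to be supported in degrees $\le k$ — more precisely, in the boson directions $F_k$ can be chosen of degree $\le k$ and in the fermion directions it is automatically bounded by $\dim W_1$. One then sets $F := \sum_k F_k$; the remaining task is to check that $F$ is a genuine element of $\mathcal{A}_W$, i.e.\ that the boson part of the series converges with infinite radius. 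This is where I would appeal to \cite{GSchwarz}: the relevant fact is that for the Howe-dual quadratic map $Q$, pullback admits a bounded linear section (or, equivalently, $Q$ is "flat" / the invariant ring is a polynomial-normal situation), so that a convergent target series lifts to a convergent source series with comparable radius of convergence.

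The main obstacle, and the step I would spend the most care on, is precisely this last convergence/section statement — the passage from "the polynomial map $Q^\ast$ on rings is surjective" to "the induced map on the completions (rings of everywhere-convergent power series) is surjective." Surjectivity on polynomials does not automatically give surjectivity on the completion unless one has some control over how the preimage grows with degree, or unless one exhibits a continuous right inverse. The cleanest route is to invoke the result of G.\ Schwarz \cite{GSchwarz} that a polynomial map to a quotient by a reductive group action admits a continuous (indeed, in suitable topologies, tame) linear section at the level of smooth or holomorphic functions; applied here with the source being the $G$-equivariant holomorphic functions and the target $\mathcal{A}_W$, this immediately yields the claim. I would also note the purely formal but necessary check that all of this is compatible with the $\mathbb{Z}_2$-grading: since $Q_2^\ast$ is an isomorphism of $\mathbb{Z}_2$-graded vector spaces and the ideals $I_-$ match up (as already observed in Sect.~\ref{sect:pullback}), the section can be chosen graded, so no odd/even mixing is introduced. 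With that in hand the proof is essentially complete: $Q^\ast(F) = f$ and $F \in \mathcal{A}_W$, so $Q^\ast : \mathcal{A}_W \to \mathcal{A}_V^G$ is onto.
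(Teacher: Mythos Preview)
Your proposal is correct and follows exactly the route the paper takes: establish surjectivity of $Q^\ast$ at the polynomial level (which the paper does in the paragraph preceding the proposition) and then invoke the result of G.~Schwarz \cite{GSchwarz} to pass from polynomial to holomorphic functions. In fact the paper gives no further argument beyond the bare citation, so your write-up, with its discussion of the convergence issue and the need for a continuous section, is more detailed than what appears in the paper itself.
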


\subsection{Berezin superintegral form}

For a $\mathbb{Z}_2$-graded complex vector space such as our space
$V= V_0 \oplus V_1$ with dimensions $\mathrm{dim}\, V_0 = 2pn$ and
$\mathrm{dim}\, V_1 = 2qn\,$, we denote by $\mathrm{Ber}(V)$ the
complex one-dimensional space
\begin{displaymath}
    \mathrm{Ber}(V) = \wedge^{2p n} (V_0^\ast)
    \otimes \wedge^{2qn} (V_1) \;.
\end{displaymath}
Let now each of the Hermitian vector spaces $V_0$ and $V_1$ be
endowed with an orientation. Then there is a canonical top-form
$\tilde{\Omega}_{V_0} \in \wedge^{2pn} (V_0^\ast)$ and a canonical
generator $\Omega_{V_1} \in \wedge^{2qn} (V_1)$. Their tensor product
$\Omega_V := \tilde{\Omega}_{V_0} \otimes \Omega_{V_1} \in \mathrm
{Ber} (V)$ is called the (flat) Berezin superintegral form of $V$.
Such a form $\Omega_V$ determines a linear mapping
\begin{displaymath}
    \Omega_V \, : \,\, \mathcal{A}_V \to \Gamma ( V_0 \, ,
    \wedge^{2pn} (V_0^\ast)) \;, \quad f \mapsto \Omega_V[f] \;,
\end{displaymath}
from the algebra of holomorphic functions $f : \, V_0 \to \wedge
(V_1^\ast)$ to the space of top-degree holomorphic differential forms
on $V_0\,$. Indeed, if $v$ is any element of $V_0\,$, then by pairing
$f(v) \in \wedge(V_1^\ast)$ with the second factor $\Omega_{V_1}$ of
$\Omega_V$ we get a complex number, and subsequent multiplication by
the first factor $\tilde{\Omega}_{V_0}$ results in an element of
$\wedge^{2pn} (V_0^\ast)$.

In keeping with the approach taken in Sect.\ \ref{sect:BB-sector}, we
want to integrate over the real vector space $V_{0,\mathbb{R}}$
defined as the graph of $\dagger : \, \mathrm{Hom}(\mathbb{C}^n,
\mathbb{C}^p)\to\mathrm{Hom} (\mathbb{C}^p , \mathbb{C}^n)$. For
this, let $\mathrm{dvol}_ {V_{0,\mathbb{R}}}$ denote the positive
density $\mathrm{dvol}_{V_{0,\mathbb{R}}} := \vert \tilde{\Omega}
_{V_0} \vert$ restricted to $V_{0, \mathbb{R}}\,$. (This change from
top-degree forms to densities is made in anticipation of the fact
that we will transfer the integral to a symmetric space which in
certain cases is non-orientable; see the Appendix for more discussion
of this issue.)

The Berezin superintegral of $f \in \mathcal{A}_V$ over the
integration domain $V_{0,\mathbb{R}}$ is now defined as the two-step
process of first converting the integrand $f \in \mathcal{A}_V$ into
a holomorphic function $\Omega_{V_1}[f] : \, V_0 \to \mathbb{C}$ and
then integrating this function against $\mathrm{dvol}_{V_{0,
\mathbb{R}}}$ over the real subspace $V_{0,\mathbb{R}}:$
\begin{displaymath}
    f \mapsto \int_{V_{0,\mathbb{R}}} \Omega_{V_1}[f] \,
    \mathrm{dvol}_{V_{0,\mathbb{R}}} \;.
\end{displaymath}
Our interest in the following will be in this kind of integral for
the particular case of $G$-equivariant holomorphic functions $f : \,
V_0 \to \wedge(V_1^\ast)$ (i.e., for $f \in \mathcal{A}_V^G)$.

\subsection{Exploiting equivariance}

Recall from Sect.\ \ref{sect:BB-sector} the definition of the groups
$K_n\,$, $K_p\,$, $K_{n,p}\,$, and $G_p\,$. Recall also that $X_{p,
\,n} = \psi(\mathrm{Hom}(\mathbb{C}^n , \mathbb{C}^p))$ denotes the
vector space of structure-preserving linear transformations $\mathbb
{C}^n \to U_p\,$. To simplify the notation, let the isomorphism
$\psi$ now be understood, i.e., write $\psi(L) \equiv L\,$.

The subset of regular elements in $X_{p,\,n}$ is denoted by
$X_{p,\,n}^\prime\,$. Taking $\Pi \in X_{p,\,n}^\prime$ to be the
orthogonal projector $\mathbb{C}^n = U_p \oplus U_{n,p} \to U_p$ we
have the isomorphism
\begin{displaymath}
    G_p \times_{(K_p \times K_{n,p})} K_n \stackrel{\sim}{\to}
    X_{p,\,n}^\prime \;, \quad (g,k) \mapsto g\Pi \, k \;.
\end{displaymath}
Note that $X_{p,\,n}^\prime$ is a left $G_p$-space and a right
$K_n$-space. Note also the relations $\Pi \, k = k \Pi$ for $k \in
K_p$ and $\Pi \, k = 0$ for $k \in K_{n,p}\,$.

Since the compact subgroup $K_n \subset G$ acts on $V_{0, \mathbb
{R}}\,$, the given integrand $f \in \mathcal{A}_V^G$ restricts to a
function $f : \, X_{p,\,n}^\prime \to \wedge(V_1^\ast)$ which has the
property of being $K_n$-equivariant:
\begin{displaymath}
    f(L) = f(g\Pi\, k) = k^{-1} . f(g\Pi) \quad (k \in K_n)\;.
\end{displaymath}
Now notice that since the action of $G$ on $\wedge^{2qn}(V_1)$ is
trivial, the Berezin form $\Omega_{V_1}$ is invariant under $G$ and
hence invariant under the subgroup $K_n:$
\begin{displaymath}
    k (\Omega_{V_1}) = \Omega_{V_1} \circ k^\ast = \Omega_{V_1}
    \quad (k \in K_n) \;.
\end{displaymath}
Consequently, applying $\Omega_{V_1}$ to the $K_n$-equivariant
function $f$ we obtain
\begin{displaymath}
    \Omega_{V_1}[ f(g\Pi \, k)] = \Omega_{V_1}[ k^{-1} . f(g\Pi)]
    = \Omega_{V_1} [f(g\Pi)] \;,
\end{displaymath}
and this gives the following formula for the integral of $f$,
\begin{equation}\label{eq:5.1mrz}
    \int \Omega_{V_1}[f(L)]\, \mathrm{dvol}_{V_{0,\mathbb{R}}}(L)
    = \frac{\mathrm{vol}(K_n)}{\mathrm{vol}(K_{n,p})} \int\limits_{
    G_p / K_p} \Omega_{V_1} [f(g\Pi)]\, J(g) dg_{K_p} \;,
\end{equation}
as an immediate consequence of Prop.\ \ref{prop:BB-sector}.

Based on this formula, our next step is to process the integrand
$\Omega_{V_1}[f(g\Pi)]$.

\subsection{Transforming the Berezin integral}

It will now be convenient to regard the odd vector space $V_1 =
\mathrm{Hom}(\mathbb{C}^n , \mathbb{C}^q) \oplus \mathrm{Hom}
(\mathbb{C}^q , \mathbb{C}^n)$ for the case of $G = \mathrm{GL}_n$ as
\begin{displaymath}
    V_1 \simeq \mathrm{Hom}(\mathbb{C}^q , \mathbb{C}^n) \oplus
    \mathrm{Hom}((\mathbb{C}^q)^\ast , (\mathbb{C}^n)^\ast)
    \qquad (G = \mathrm{GL}) \;.
\end{displaymath}
In the other cases, using the isomorphism $\beta : \, \mathbb{C}^n
\to (\mathbb{C}^n)^\ast$ we make the identification
\begin{displaymath}
    V_1 \simeq \mathrm{Hom}(\mathbb{C}^q \oplus (\mathbb{C}^q)^\ast ,
    \mathbb{C}^n) = \mathrm{Hom}(U_q\, , \mathbb{C}^n) \;,
    \qquad (G = \mathrm{O} , \mathrm{Sp}) \;.
\end{displaymath}

Following Sect.\ \ref{sect:3.4} we fix an orthogonal decomposition
$\mathbb{C}^n = U_p \oplus U_{n,p}$ for $G = \mathrm{O}_n\,$,
$\mathrm{Sp}_n$ and $\mathbb{C}^n = \mathbb{C}^p \oplus
\mathbb{C}^{n-p}$ for $G = \mathrm{GL}_n\,$, which is Euclidean,
Hermitian symplectic, and Hermitian, respectively, and let this
induce a vector space decomposition $V_1 = V_\parallel \oplus
V_\perp$ in the natural way. For $G = \mathrm{O}_n\,, \mathrm{Sp}_n$
the summands are
\begin{displaymath}
    V_\parallel = \mathrm{Hom}(U_q \, , U_p) \;, \quad V_\perp =
    \mathrm{Hom}(U_q\,, U_{n,p}) \qquad (G =\mathrm{O},\mathrm{Sp})\;,
\end{displaymath}
and in the case of $G = \mathrm{GL}_n$ we have
\begin{eqnarray*}
    &&V_\parallel = \mathrm{Hom}(\mathbb{C}^q , \mathbb{C}^p)
    \oplus \mathrm{Hom}((\mathbb{C}^q)^\ast , (\mathbb{C}^p)^\ast)\;,
    \\ &&V_\perp = \mathrm{Hom}(\mathbb{C}^q,\mathbb{C}^{n-p})\oplus
    \mathrm{Hom}((\mathbb{C}^q)^\ast , (\mathbb{C}^{n-p})^\ast)
    \qquad (G = \mathrm{GL}) \;.
\end{eqnarray*}
From the statement of Lemma \ref{lem:susy-invs} we see that
$V_\parallel$ is isomorphic to $W_1$ as a complex vector space
(though not as a $K_p$-space) in all three cases.

The decomposition $V_1 = V_\parallel \oplus V_\perp$ induces a
factorization
\begin{equation}\label{eq:splitting}
    \wedge(V_1^\ast) \simeq \wedge(V_\parallel^\ast) \otimes
    \wedge(V_\perp^\ast)
\end{equation}
of the exterior algebra of $V_1^\ast$. In all three cases
($\mathrm{GL}$, $\mathrm{O}$, $\mathrm{Sp}$) the decomposition of
$V_1$ and that of $\wedge(V_1^\ast)$ is stabilized by the group $K_p
\times K_{n,p}\,$. We further note that $K_p \hookrightarrow K_n$
acts trivially on $\wedge(V_\perp^\ast)$ while $K_{n,p}
\hookrightarrow K_n$ acts trivially on $\wedge (V_\parallel^\ast)$.

To compute $\Omega_{V_1}[ f(g\Pi)]$, we are going to disect the
Berezin form $\Omega_{V_1}$ according to the decomposition
(\ref{eq:splitting}). For this, recall that if $V = A \oplus A^\ast$
is the direct sum of an $N$-dimensional vector space $A$ and its dual
$A^\ast$, then there exists a canonical generator $\Omega_V =
\Omega_{A \oplus A^\ast} \in \wedge^{2N}(A \oplus A^\ast)$ which is
given by $\Omega_{A \oplus A^\ast} = f_N \wedge e_N \wedge \ldots
\wedge f_1 \wedge e_1$ for any basis $\{ e_j \}$ of $A$ with dual
basis $\{ f_j \}$ of $A^\ast$. Note that $\Omega_{A \oplus A^\ast} =
(-1)^N \Omega_{A^\ast \oplus A}\,$.

The following statement is an immediate consequence of the properties
of $\wedge\,$.
\begin{lem}\label{lem:ABC}
If $A, B, C$ are vector spaces and $A = B \oplus C$ then
\begin{displaymath}
    \Omega_{A \oplus A^\ast} = \Omega_{B \oplus B^\ast} \wedge
    \Omega_{C \oplus C^\ast} \;.
\end{displaymath}
\end{lem}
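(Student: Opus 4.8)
The plan is to reduce the identity to a bookkeeping statement about an ordered wedge product, by choosing a basis of $A$ adapted to the decomposition $A = B \oplus C$. First I would fix a basis $\{e_1,\ldots,e_k\}$ of $B$ together with a basis $\{e_{k+1},\ldots,e_N\}$ of $C$, where $k = \dim B$ and $N = \dim A$, so that their union is a basis of $A$; let $\{f_1,\ldots,f_N\}$ be the dual basis of $A^\ast$. The first point to record is that, under the canonical identification $A^\ast \simeq B^\ast \oplus C^\ast$, one has $f_1,\ldots,f_k \in B^\ast$ and $f_{k+1},\ldots,f_N \in C^\ast$, and moreover $\{f_1,\ldots,f_k\}$ is precisely the basis of $B^\ast$ dual to $\{e_1,\ldots,e_k\}$ while $\{f_{k+1},\ldots,f_N\}$ is the basis of $C^\ast$ dual to $\{e_{k+1},\ldots,e_N\}$. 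This is immediate from $f_i(e_j) = \delta_{ij}$: for $i \le k$ the functional $f_i$ annihilates every $e_j$ with $j > k$, hence annihilates $C$, hence lies in $B^\ast$, and its restriction to $B$ is the corresponding dual basis vector; the statement for $C^\ast$ is symmetric.

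Next I would observe that each two-form $f_j \wedge e_j \in \wedge^2(A \oplus A^\ast)$ has even degree, so these two-forms commute pairwise in $\wedge(A \oplus A^\ast)$. Consequently the canonical generator
\[
    \Omega_{A \oplus A^\ast} = f_N \wedge e_N \wedge \cdots \wedge f_1 \wedge e_1
    = \prod_{j=1}^N (f_j \wedge e_j)
\]
is invariant under any reordering of the index set $\{1,\ldots,N\}$ (this re-proves, en passant, the asserted independence of the chosen ordering of the basis). Splitting the product as $\prod_{j=1}^N = \bigl(\prod_{j=1}^k\bigr)\bigl(\prod_{j=k+1}^N\bigr)$ and using the identification of dual bases from the previous step, the first factor equals $f_k \wedge e_k \wedge \cdots \wedge f_1 \wedge e_1 = \Omega_{B \oplus B^\ast}$ and the second equals $f_N \wedge e_N \wedge \cdots \wedge f_{k+1} \wedge e_{k+1} = \Omega_{C \oplus C^\ast}$. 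Since $\Omega_{B \oplus B^\ast}$ and $\Omega_{C \oplus C^\ast}$ are themselves independent of the chosen bases, by the same argument applied to $B$ and to $C$, this establishes $\Omega_{A \oplus A^\ast} = \Omega_{B \oplus B^\ast} \wedge \Omega_{C \oplus C^\ast}$ and finishes the proof.

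If one would rather manipulate the ordered expression directly, the alternative is to regroup $\Omega_{A \oplus A^\ast}$ as $\Omega_{C \oplus C^\ast} \wedge \Omega_{B \oplus B^\ast}$, simply by reading off the first $2(N-k)$ factors and then the last $2k$ factors, and then to interchange these two homogeneous blocks, of degrees $2(N-k)$ and $2k$ respectively; the Koszul sign incurred is $(-1)^{(2(N-k))(2k)} = +1$, so the order of the blocks is immaterial. Either way, the only steps demanding attention are the identification of the sub-families of $\{f_j\}$ with the dual bases of $B^\ast$ and $C^\ast$ and the (trivial) sign count; I do not anticipate any real obstacle here, the lemma being a short exercise in multilinear algebra.
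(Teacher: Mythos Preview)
Your proof is correct. The paper does not actually give a proof of this lemma, stating only that it ``is an immediate consequence of the properties of $\wedge$''; your argument is precisely the natural elaboration of that remark, so there is nothing to compare.
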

Now $V_1$ and all of our spaces $V_\perp$ and $V_\parallel$ are the
direct sum of a vector space and its dual. Recall from Sect.\
\ref{sect:4.3} that $\Omega_{V_1} = \Omega_{\mathrm{Hom}(
(\mathbb{C}^q )^\ast , (\mathbb{C}^n)^\ast) \oplus \mathrm{Hom}
(\mathbb{C}^q , \mathbb{C}^n)}$, and let
\begin{displaymath}
    \Omega_{V_\parallel} = \Omega_{\mathrm{Hom}((\mathbb{C}^q)^\ast,
    (\mathbb{C}^p)^\ast)\oplus \mathrm{Hom}(\mathbb{C}^q,\mathbb{C}^p)}
    \;, \quad \Omega_{V_\perp} = \Omega_{
    \mathrm{Hom}((\mathbb{C}^q)^\ast , (\mathbb{C}^{n-p})^\ast)
    \oplus \mathrm{Hom}(\mathbb{C}^q,\mathbb{C}^{n-p})}
\end{displaymath}
for $G = \mathrm{GL}$, while in the case of $G = \mathrm{O},
\mathrm{Sp}$ the corresponding definitions are
\begin{displaymath}
    \Omega_{V_\parallel} = \Omega_{\mathrm{Hom}((\mathbb{C}^q)^\ast,
    \, U_p)\oplus \mathrm{Hom}(\mathbb{C}^q,\, U_p)} \;, \quad
    \Omega_{V_\perp} = \Omega_{\mathrm{Hom}((\mathbb{C}^q)^\ast ,\,
    U_{n,p}) \oplus \mathrm{Hom}(\mathbb{C}^q,\, U_{n,p})} \;.
\end{displaymath}
Here the vector spaces $\mathrm{Hom}( (\mathbb{C}^q)^\ast,\, U_p)$
and $\mathrm{Hom}(\mathbb{C}^q,\, U_p)$ are regarded as dual to each
other by the symmetric bilinear form $s : \, U_p \times U_p \to
\mathbb{C}$ for $G = \mathrm{O}$ and the alternating bilinear form $a
: \, U_p \times U_p \to \mathbb{C}$ for $G = \mathrm{Sp}\,$. This
means that for $G = \mathrm{O}$ we have
\begin{displaymath}
    \Omega_{V_\parallel} = \Omega_{\mathrm{Hom}((\mathbb{C}^q)^\ast,
    (\mathbb{C}^p)^\ast)\oplus\mathrm{Hom}(\mathbb{C}^q,\mathbb{C}^p)}
    \wedge \Omega_{\mathrm{Hom}((\mathbb{C}^q)^\ast,\mathbb{C}^p)
    \oplus \mathrm{Hom}(\mathbb{C}^q,(\mathbb{C}^p)^\ast)} \;,
\end{displaymath}
while the same Berezin form $\Omega_{V_\parallel}$ for $G =
\mathrm{Sp}$ has an extra sign factor $(-1)^{pq}$ due to the
alternating property of $a$ (cf.\ the sentence after the definition
of $\Omega_{W_1}$ in Eq.\ (\ref{eq:1.12})). The same conventions hold
good in the case of $\Omega_{V_\perp}$ -- we need only observe that
the given symmetric or alternating bilinear form on $U_{n,p}$ induces
such a form on $V_\perp\,$.

Now, applying Lemma \ref{lem:ABC} to the present situation we always
have
\begin{displaymath}
    \Omega_{V_1} = \Omega_{V_\parallel} \wedge \Omega_{V_\perp} \;.
\end{displaymath}

\subsubsection{Transformation of $\Omega_{V_\parallel}$}

Recall the isomorphism of vector spaces $V_\parallel \simeq W_1\,$,
which we now realize as follows. Using the identifications
$V_\parallel \simeq \mathrm{Hom}(U_q\, , U_p)$ for the case of $G =
\mathrm{O}$, $\mathrm{Sp}$ and $V_\parallel \simeq \mathrm{Hom}
(\mathbb{C}^q , \mathbb{C}^p) \oplus \mathrm{Hom}( (\mathbb{C}^q
)^\ast , (\mathbb{C}^p)^\ast)$ for $G = \mathrm{GL}\,$, we apply $g
\in G_p$ to $v \in V_\parallel$ to form $gv\,$, where $gv$ for $G =
\mathrm{GL}$ means $gv = g.(\tilde{L} \oplus L^\mathrm{t}) = (g
\tilde{L}) \oplus (\overline{g} L^\mathrm{t})$. Note that the mapping
$(g,v) \mapsto gv$ has the property of being $K_p$-invariant.

Given this isomorphism $g : \, V_\parallel \to W_1\,$, let $(g^{-1}
)^\ast : \, \wedge(V_\parallel^\ast) \to \wedge(W_1^\ast)$, $f
\mapsto g.f$, be the induced isomorphism preserving the pairing
between vectors and forms. Then
\begin{displaymath}
    \Omega_{V_1}[ f(g\Pi)] = \left( g(\Omega_{V_\parallel}) \wedge
    \Omega_{V_\perp}\right) [g . f(g\Pi)] \;,
\end{displaymath}
so our next step is to compute $g(\Omega_{V_\parallel})$. Here it
should be stressed that we define the Berezin form $\Omega_{W_1}$ by
the same ordering conventions we used to define
$\Omega_{V_\parallel}$ above.
\begin{lem}\label{lem:Om-para}
Under the isomorphism $V_\parallel \to W_1$ by $v \mapsto gv$ the
Berezin forms $\Omega_{W_1}$ and $\Omega_{V_\parallel}$ are related
by $g(\Omega_{V_\parallel}) = \mathrm{Det}^q(g g^\dagger)
\Omega_{W_1}\,$.
\end{lem}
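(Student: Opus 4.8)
The plan is to reduce the identity to a Jacobian computation. Both $\Omega_{V_\parallel}$ and $\Omega_{W_1}$ are the canonical generators of the one-dimensional top exterior powers $\wedge^{\mathrm{top}}(V_\parallel)$ and $\wedge^{\mathrm{top}}(W_1)$ of doubled spaces $A\oplus A^\ast$, each built as $f_N\wedge e_N\wedge\cdots\wedge f_1\wedge e_1$ from a basis $\{e_j\}$ of the relevant ``$A$-factor'' and its dual basis $\{f_j\}$; and by construction (the stipulation immediately preceding the lemma) $\Omega_{W_1}$ uses the very same ordering conventions as $\Omega_{V_\parallel}$. Hence $\wedge^{\mathrm{top}}(g):\wedge^{\mathrm{top}}(V_\parallel)\to\wedge^{\mathrm{top}}(W_1)$ carries $\Omega_{V_\parallel}$ to $\det(g)\,\Omega_{W_1}$, where $\det(g)$ is the determinant of the linear isomorphism $g:V_\parallel\to W_1$, $v\mapsto gv$, taken with respect to these two bases, and the matched conventions guarantee that no spurious sign enters. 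Everything thus reduces to computing this determinant and recognizing it as $\mathrm{Det}^q(gg^\dagger)$.

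The first case is $G=\mathrm{GL}_n$, where $V_\parallel=\mathrm{Hom}(\mathbb{C}^q,\mathbb{C}^p)\oplus\mathrm{Hom}((\mathbb{C}^q)^\ast,(\mathbb{C}^p)^\ast)$ and $W_1=\mathrm{Hom}(\mathbb{C}^q,\mathbb{C}^p)\oplus\mathrm{Hom}(\mathbb{C}^p,\mathbb{C}^q)$, the second summands being identified by transposition. By the explicit recipe $gv=g.(\tilde L\oplus L^{\mathrm t})=(g\tilde L)\oplus(\overline g\,L^{\mathrm t})$, the map $g$ acts on the first summand $\mathrm{Hom}(\mathbb{C}^q,\mathbb{C}^p)=\mathbb{C}^p\otimes(\mathbb{C}^q)^\ast$ as $g\otimes\mathrm{id}$, with determinant $\mathrm{Det}^q(g)$, and on the second summand as $\overline g\otimes\mathrm{id}$, with determinant $\mathrm{Det}^q(\overline g)=\overline{\mathrm{Det}(g)}^{\,q}$. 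Multiplying, $\det(g)=\mathrm{Det}^q(g)\,\overline{\mathrm{Det}(g)}^{\,q}=|\mathrm{Det}(g)|^{2q}=\mathrm{Det}^q(g)\,\mathrm{Det}^q(g^\dagger)=\mathrm{Det}^q(gg^\dagger)$, as wanted.

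For $G=\mathrm{O}_n$ and $G=\mathrm{Sp}_n$ one has literally $V_\parallel=W_1=\mathrm{Hom}(U_q,U_p)$ with coinciding Berezin forms, and $v\mapsto gv$ is plain left multiplication by $g\in G_p\subset\mathrm{GL}(U_p)$ on $\mathrm{Hom}(U_q,U_p)=U_p\otimes U_q^\ast$, so its determinant is $\mathrm{Det}(g)^{\dim U_q}=\mathrm{Det}^{2q}(g)$. Since $G_p$ is the subgroup $\mathrm{GL}_{2p}(\mathbb{R})$ (resp.\ $\mathrm{GL}_p(\mathbb{H})$) of $\mathrm{GL}(U_p)\simeq\mathrm{GL}_{2p}(\mathbb{C})$ cut out by commutation with the antilinear involution $C$ (with $C^2=1$ resp.\ $C^2=-1$), the complex determinant $\mathrm{Det}(g)$ is real --- and positive in the quaternionic case --- so $\mathrm{Det}^{2q}(g)=\mathrm{Det}^q(g)\,\overline{\mathrm{Det}(g)}^{\,q}=\mathrm{Det}^q(g)\,\mathrm{Det}^q(g^\dagger)=\mathrm{Det}^q(gg^\dagger)$, which finishes all three cases.

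The determinant arithmetic above is routine; the points that demand care are, in the $\mathrm{GL}$ case, verifying that the transpose identification between $\mathrm{Hom}((\mathbb{C}^q)^\ast,(\mathbb{C}^p)^\ast)$ and $\mathrm{Hom}(\mathbb{C}^p,\mathbb{C}^q)$ is compatible with the ordering conventions fixed for $\Omega_{V_\parallel}$ and $\Omega_{W_1}$, so that no stray $\pm1$ survives, and, in the other two cases, the appeal to the reality (positivity) of the complex determinant of an element of the real orthogonal (unitary symplectic) complexified group $G_p$. Both follow once the definitions of Sect.~\ref{sect:4.3} and Sect.~\ref{sect:3.4} are unwound, and I expect the sign bookkeeping in the $\mathrm{GL}$ case to be the only genuinely delicate step.
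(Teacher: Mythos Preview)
Your proof is correct and follows essentially the same route as the paper: compute the Jacobian of $v\mapsto gv$ block by block, obtaining $\mathrm{Det}^q(g)\,\mathrm{Det}^q(\overline g)$ in the $\mathrm{GL}$ case and $\mathrm{Det}^{2q}(g)$ in the $\mathrm{O}/\mathrm{Sp}$ cases, and then use the reality of $\mathrm{Det}(g)$ for $G_p=\mathrm{GL}_{2p}(\mathbb{R})$ or $\mathrm{GL}_p(\mathbb{H})$ to rewrite the answer as $\mathrm{Det}^q(gg^\dagger)$. The paper's argument is the same, with the sign bookkeeping handled by the stipulation that $\Omega_{W_1}$ is defined with the same ordering conventions as $\Omega_{V_\parallel}$.
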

\begin{proof}
Consider first the case of $G = \mathrm{O}$, $\mathrm{Sp}$, where
$V_\parallel = \mathrm{Hom}(U_q\,, U_p)$ and the same choice of
polarization $\mathrm{Hom}(U_q\,, U_p) = \mathrm{Hom}(
(\mathbb{C}^q)^\ast , U_p) \oplus \mathrm{Hom}(\mathbb{C}^q , U_p)$
determines both $\Omega_{V_\parallel}$ and $\Omega_{W_1}$. Applying
$g \in G_p$ to $\Omega_{V_\parallel} \in \wedge^{4pq}(V_\parallel)$
we obtain
\begin{displaymath}
    g (\Omega_{V_\parallel}) = \mathrm{Det}^{2q}(g) \Omega_{W_1} \;.
\end{displaymath}
The groups at hand are $G_p = \mathrm{GL}_{2p}(\mathbb{R})$, $\mathrm
{GL}_p(\mathbb{H})$, and $\mathrm{Det}(g)$ is real for these. Hence
\begin{displaymath}
    \mathrm{Det}^{2q}(g) = \big( \mathrm{Det}(g)
    \overline{\mathrm{Det}(g)}\, \big)^q = \mathrm{Det}^q
    (g g^\dagger) \;.
\end{displaymath}

In the case of $G = \mathrm{GL}$ we have $\Omega_{V_\parallel} =
\Omega_{\mathrm{Hom}((\mathbb{C}^q)^\ast , (\mathbb{C}^p)^\ast)
\oplus \mathrm{Hom}(\mathbb{C}^q , \mathbb{C}^p) }$. Transforming the
second summand by $\tilde{L} \mapsto g \tilde{L}$ for $g \in
\mathrm{GL}_p(\mathbb{C})$ we get the Jacobian $\mathrm{Det}^q(g)$,
transforming the first summand by $L^\mathrm{t} \mapsto \overline{g}
L^\mathrm{t}$ we get $\mathrm{Det}^q(\overline{g})$. Thus, altogether
we obtain again $g(\Omega_{V_\parallel}) = \mathrm{Det}^q(g)
\mathrm{Det}^q(\overline{g}) \Omega_{W_1} = \mathrm{Det}^q(g
g^\dagger) \Omega_{W_1}\,$.
\end{proof}

\subsubsection{Bosonization of $\Omega_{V_\perp}$}

We turn to the Berezin form $\Omega_{V_\perp}$ for the factor
$\wedge(V_\perp^\ast)$ of the decomposition (\ref{eq:splitting}).
Recall that the elements of this exterior algebra $\wedge(
V_\perp^\ast)$ are fixed under the action of $K_p\,$. Since $\Pi \, k
= 0$ for $k \in K_{n,p}\,$, the $K_n$-equivariance of $f \in
\mathcal{A}_V^G$ implies that $g . f(g\Pi)$ (for any fixed $g \in
G_p$) lies in $\wedge(W_1^\ast) \otimes \wedge(V_\perp^\ast)^{K_{n,
p}}$.

For future reference, we are now going to record a (bosonization)
formula for the Berezin integral $\Omega_{V_\perp}: \, \wedge
(V_\perp^\ast)^{K_{n,p}} \to \mathbb{C}\,$. For this notice that,
since the action of $K_{n,p}$ is complex linear we have $\wedge(
V_\perp^\ast)^{K_{n,p}} = \wedge(V_\perp^\ast)^{G_{n,p}}$, where
$G_{n,p}$ is the complexification of $K_{n,p}\,$. From Table
\ref{fig:1} of Sect.\ \ref{sect:3.4} we read off that $G_{n,p} =
\mathrm{GL}_{n-p} (\mathbb{C})$, $\mathrm{O}_{n-2p} (\mathbb{C})$,
and $\mathrm {Sp}_{n-2p}(\mathbb{C})$ for our three cases of
$\mathrm{GL}$, $\mathrm{O}$, and $\mathrm{Sp}$, respectively.

The subalgebra $\wedge(V_\perp^\ast )^{G_{n,p}}$ is generated, once
again, by $\wedge^2 (V_\perp^\ast)^{ G_{n,p}}$, the quadratic
invariants. Applying Lemma \ref{lem:FF-invs} with $V^\ast \equiv
V_\perp^\ast\,$, $G \equiv G_{n,p}\,$, and $W \equiv W_{11}$ we get
\begin{displaymath}
    \wedge^2(V_\perp^\ast)^{K_{n,p}} =
    \wedge^2(V_\perp^\ast)^{G_{n,p}} \simeq W_{11}^\ast \;.
\end{displaymath}

Now by the principles expounded in Sect.\ \ref{sect:FF-sector} we
lift a given element $f_\perp \in \wedge(V_\perp^\ast)^{G_{n,p}}$ to
a holomorphic function $F : \, W_{11} \to \mathbb{C}\,$. To formalize
this step, let
\begin{displaymath}
    P_\perp^\ast \,:\,\, \mathcal{O}(W_{11}) \to
    \wedge(V_\perp^\ast)^{K_{n,p}}
\end{displaymath}
be the surjective mapping which was introduced in Sect.\
\ref{sect:FF-invs} and denoted by the generic symbol $Q^\ast$ there.
For $F \in \mathcal{O}(W_{11})$ we then have with $n^\prime = n /
(1+|m|)$ the result
\begin{equation}\label{eq:Om-perp}
    \Omega_{V_\perp} [P_\perp^\ast F] = (2\pi)^{qn(1 - p/
    n^\prime)} 2^{qm} \frac{\mathrm{vol}(K_{n,p})}{\mathrm{vol}
    (K_{n,p-q})} \int_{D_q^1} \frac{F(y)\,d\mu_{D_q^1}(y)}
    {\mathrm{Det}^{n^\prime - p}(y)}\;,
\end{equation}
as an immediate consequence of the formula of Thm.\ \ref{thm:4.10}.
Here we refined our notation by writing $D_q^1$ for the compact
symmetric spaces $D_q$ of Sect.\ \ref{sect:4.6}. The non-compact
symmetric spaces $D_p$ introduced in Sect.\ \ref{sect:3.4} will
henceforth be denoted by $D_p^0\,$.

\subsection{Decomposition of pullback}

Recall from Sect. \ref{sect:pullback} that we have a pullback of
graded-commutative algebras $Q^\ast : \, \mathcal{A}_W \to
\mathcal{A}_V^G\,$. To go further, we should decompose $Q^\ast$
according to the manipulations carried out in the previous two
subsections. This, however, will only be possible in a restricted
sense, as some of our transformations require that the even part of
$w \in W$ be invertible.

We start with a summary of the sequence of operations we have carried
out so far. Recall that the elements of $\mathcal{A}_W$ are
holomorphic functions $F :\, W_0 \to \wedge(W_1^\ast)$, where $W_1$
and $W_0 = W_{00} \times W_{11}$ were described in Lemma
\ref{lem:susy-invs}. Since our domain of integration will be $D_p
\equiv D_p^0 \simeq G_p / K_p\,$, given $F\in \mathcal{A}_W$ let
$F_1$ denote $F$ restricted to $D_p^0 \subset W_{00}:$
\begin{equation}
    F_1 :\, D_p^0 \times W_{11} \to \wedge(W_1^\ast) \;.
\end{equation}
Now we use the Cartan embedding $G_p / K_p \to D_p^0 \subset G_p$ by
$g \mapsto g \theta(g^{-1}) = g g^\dagger$ to pull back $F_1$ in its
first argument from $D_p^0$ to $G_p / K_p\,$. Applying also the
mapping $P_\perp^\ast : \, \mathcal{O}(W_{11}) \to \wedge
(V_\perp^\ast )^{K_{n,p}}$ we go to the second function
\begin{equation}
    F_2 :\, G_p / K_p \to \wedge(W_1^\ast) \otimes
    \wedge(V_\perp^\ast)^{K_{n,p}} \;.
\end{equation}
In the next step, employing the isomorphism $V_\parallel \to W_1$, $v
\mapsto gv$ (pointwise for each coset $g K_p \in G_p / K_p$) we pull
back $F_2$ to a $K_p$-equivariant function
\begin{equation}
    F_3 \,:\,\, G_p \to \wedge(V_\parallel^\ast) \otimes
    \wedge(V_\perp^\ast)^{K_{n,p}} \;.
\end{equation}
Be advised that we are now at the level of the integrand $F_3(g) =
f(g\Pi)$ of (\ref{eq:5.1mrz}). In the final step, we pass to the
unique extension of $F_3$ to a $K_n$-equivariant function
\begin{equation}
    F_4 \,:\,\, X_{p,\,n}^\prime \stackrel{K_n-\textrm{eqvt}}
    {\longrightarrow} \wedge(V_1^\ast)
\end{equation}
by $F_4(L) = F_4(g\Pi k) := F_3(g)$. Let us give a name to this
sequence of steps.
\begin{defn}
We denote by $P^\ast$ the homomorphism of graded-commutative algebras
taking $F_1 :\, D_p^0 \times W_{11} \to \wedge(W_1^\ast)$ to the
$K_n$-equivariant function $F_4 : \, X_{p,\,n}^\prime \to \wedge
(V_1^\ast)$.
\end{defn}
The main point of this subsection will be to show that $Q^\ast$
(restricted to $D_p^0 \times W_{11}$) is the composition of $P^\ast$
with another homomorphism, $S^\ast$, which we describe next.

Consider first the case of $G = \mathrm{GL}$ where $W = W_0 \oplus
W_1$ and
\begin{eqnarray*}
    &&W_0 = W_{00} \oplus W_{11} = \mathrm{End}(\mathbb{C}^p)
    \oplus \mathrm{End}(\mathbb{C}^q) \;, \\ &&W_1 = W_{01}
    \oplus W_{10} = \mathrm{Hom}(\mathbb{C}^q,\mathbb{C}^p)
    \oplus \mathrm{Hom}(\mathbb{C}^p , \mathbb{C}^q) \;,
\end{eqnarray*}
and let $W_{00}^\prime$ denote the subset of regular elements in
$W_{00}\,$.
%
%
On $W^\prime := (W_{00}^\prime \times W_{11}) \times (W_{01} \oplus
W_{10})$ define a non-linear mapping $S :\, W^\prime \to W^\prime$ by
\begin{displaymath}
    S(x\, , y\, ;\sigma,\tau) =
    (x\, , y + \tau x^{-1} \sigma; \sigma,\tau)\;.
\end{displaymath}
This mapping is compatible with the structure of the
graded-commutative algebra $\mathcal{A}_W$ which is induced from the
$\mathbb{Z}_2$-grading $W = W_0 \oplus W_1$. Therefore, viewing the
entries of $\sigma$ and $\tau$ as anti-commuting generators, $S$
determines an automorphism $S^\ast : \, \mathcal{A}_W^\prime \to
\mathcal{A}_W^\prime$ of the superalgebra $\mathcal{A}_W^\prime$ of
holomorphic functions from $W_{00}^\prime \times W_{11}$ to $\wedge
(W_1^\ast)$. Adopting the supermatrix notation commonly used in
physics one would write
\begin{displaymath}
    (S^\ast F )\begin{pmatrix} x &\sigma\\ \tau &y \end{pmatrix} =
    F \begin{pmatrix} x &\sigma\\ \tau &y + \tau\, x^{-1} \sigma
    \end{pmatrix} \;.
\end{displaymath}

Next consider the case of $G = \mathrm{O}_n\,$, where $W_{00} =
\mathrm{Sym}_s(U_p)$, $W_{11} = \mathrm{Sym}_a(U_q)$, and $W_1 =
\mathrm{Hom}(U_q \, , U_p)$. Here we have the bilinear forms $s$ on
$U_p$ and $a$ on $U_q\,$, and these determine an isomorphism
$\mathrm{Hom}(U_q\, , U_p) \to \mathrm{Hom}(U_p\, , U_q)$, $\sigma
\mapsto \sigma^T$ by
\begin{displaymath}
    s(\sigma v\, , u) = a(v\, ,\sigma^T u) \quad
    (u \in U_p\, ,\,\, v\in U_q) \;.
\end{displaymath}
However, since $\sigma$ and $v$ in this definition are to be
considered as odd and $\sigma$ moves past $v\,$, the good isomorphism
to use (the 'supertranspose') has an extra minus sign:
\begin{displaymath}
    \sigma \mapsto \sigma^{\mathrm{s}T} := - \sigma^T \;.
\end{displaymath}
Restricting again to the regular elements $W_{00}^\prime$ of
$W_{00}\,$, define a mapping $S : \, W^\prime \to W^\prime$ on
$W^\prime = (W_{00}^\prime \times W_{11}) \times W_1$ by
\begin{displaymath}
    S(x\, , y\, ;\sigma) =
    (x\, , y + \sigma^{\mathrm{s}T} x^{-1} \sigma; \sigma)\;.
\end{displaymath}
From $x^{-1} \in \mathrm{Sym}_s(U_p)$ and the definition of the
transposition operation $\sigma \mapsto \sigma^{\mathrm{s}T}$ via the
bilinear forms $s$ and $a\,$, it is immediate that $\sigma^{
\mathrm{s}T} x^{-1} \sigma \in W_{11}\,$. Now for the same reasons as
before, $S$ determines an automorphism $S^\ast : \,
\mathcal{A}_W^\prime \to \mathcal{A}_W^\prime\,$.

The definitions for the last case $G = \mathrm{Sp}_n$ are the same as
for $G = \mathrm{O}_n$ but for the fact that the two bilinear forms
$s$ and $a$ exchange roles.

From here on we consider $S^\ast$ to be restricted to the functions
with domain $D_p^0 \times W_{11}\,$.
\begin{lem}
The homomorphism of superalgebras $Q^\ast : \, \mathcal{A}_W \to
\mathcal{A}_V^G$, when restricted to a homomorphism $Q^\ast$ taking
functions $D_p^0 \times W_{11} \to \wedge(W_1^\ast)$ to
$K_n$-equivariant functions $X_{p,\,n}^\prime \to \wedge(V_1^\ast)$,
decomposes as
\begin{displaymath}
    Q^\ast = P^\ast S^\ast \;.
\end{displaymath}
\end{lem}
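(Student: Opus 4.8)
The plan is to verify the identity on algebra generators. Each of the three maps in play is a homomorphism of graded-commutative algebras: $Q^\ast$ by the surjectivity proposition of Sect.~\ref{sect:pullback}, $P^\ast$ because it was built above as a composition $F_1 \mapsto F_2 \mapsto F_3 \mapsto F_4$ of homomorphisms (a pullback along the Cartan embedding, the bosonization $P_\perp^\ast$, a pullback along the linear isomorphism $V_\parallel \to W_1$, and the isomorphic $K_n$-equivariant extension), and $S^\ast$ by construction as an automorphism of $\mathcal{A}_W^\prime$; hence $P^\ast S^\ast$ is a homomorphism too. The source algebra -- holomorphic functions on $D_p^0 \times W_{11}$ with values in $\wedge(W_1^\ast)$, where $x$ is invertible on $D_p^0$ -- is topologically generated by the coordinate functions $x$, $x^{-1}$, $y$ together with the odd generators $\sigma,\tau$ (resp.\ $\sigma$ alone for $G = \mathrm{O}_n,\mathrm{Sp}_n$); since both $Q^\ast$ and $P^\ast S^\ast$ commute with the formation of convergent power series, it suffices to check equality on this finite generating set.

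Next I would parametrize the regular locus $X_{p,\,n}^\prime$ by $(g,k) \mapsto g\Pi k$ and use that both $Q^\ast F$ and $P^\ast S^\ast F$ are $K_n$-equivariant functions of $L \in X_{p,\,n}^\prime$ -- for $Q^\ast F = f|_{V_{0,\mathbb{R}}}$ this is the $G$-invariance of $f$, for $P^\ast S^\ast F$ it is the last step $F_3 \mapsto F_4$ -- so it is enough to evaluate the two sides at $L = g\Pi$ (i.e.\ $k = e$). Splitting the odd generators as $\zeta = \zeta_\parallel \oplus \zeta_\perp$, $\tilde\zeta = \tilde\zeta_\parallel \oplus \tilde\zeta_\perp$ according to $\mathbb{C}^n = U_p \oplus U_{n,p}$ (for $\mathrm{GL}_n$; for $\mathrm{O}_n,\mathrm{Sp}_n$ one uses the analogous identification $V_1 \simeq \mathrm{Hom}(U_q\,,\mathbb{C}^n)$ and its splitting), the quadratic map at $L = g\Pi$ becomes
\begin{displaymath}
    Q\big|_{L = g\Pi} = \begin{pmatrix} g g^\dagger & g\,\zeta_\parallel
    \\ \tilde\zeta_\parallel\, g^\dagger & \tilde\zeta_\parallel
    \zeta_\parallel + \tilde\zeta_\perp \zeta_\perp \end{pmatrix} \;,
\end{displaymath}
so $Q^\ast x = g g^\dagger$ (the Cartan embedding of $g K_p$), $Q^\ast \sigma = g\,\zeta_\parallel$, $Q^\ast \tau = \tilde\zeta_\parallel\, g^\dagger$ and $Q^\ast y = \tilde\zeta_\parallel \zeta_\parallel + \tilde\zeta_\perp \zeta_\perp$. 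On the other side $S^\ast$ fixes $x$, $x^{-1}$, $\sigma$, $\tau$, while tracing the chain $P^\ast$ through its construction gives $P^\ast x = g g^\dagger$, $P^\ast \sigma = g\,\zeta_\parallel$, $P^\ast \tau = \tilde\zeta_\parallel\, g^\dagger$ (the pullback along $V_\parallel \stackrel{\sim}{\to} W_1$, $v \mapsto gv$, evaluated at $L = g\Pi$) and $P^\ast y = \tilde\zeta_\perp \zeta_\perp$ (the bosonization $P_\perp^\ast$ of the $V_\perp$-block). Thus $Q^\ast$ and $P^\ast S^\ast$ agree on $x$, $x^{-1}$, $\sigma$, $\tau$, and on the one remaining generator the shift built into $S$ closes the gap:
\begin{displaymath}
    P^\ast(S^\ast y) = P^\ast\big(y + \tau\, x^{-1} \sigma\big)
    = \tilde\zeta_\perp \zeta_\perp + \big(\tilde\zeta_\parallel\,
    g^\dagger\big)\big(g g^\dagger\big)^{-1}\big(g\,\zeta_\parallel\big)
    = \tilde\zeta_\perp \zeta_\perp + \tilde\zeta_\parallel \zeta_\parallel
    = Q^\ast y \;,
\end{displaymath}
for $\mathrm{GL}_n$, and for $\mathrm{O}_n,\mathrm{Sp}_n$ the same computation with $S^\ast y = y + \sigma^{\mathrm{s}T} x^{-1}\sigma$, the supertranspose $\sigma^{\mathrm{s}T}$ implementing the $s$/$a$-pairing between $U_p$ and $U_q$. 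This proves $Q^\ast = P^\ast S^\ast$ on all generators, hence everywhere.

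The routine but genuinely laborious part -- and the one place where care is needed -- is the bookkeeping of identifications: the reshuffling $V_1 \simeq \mathrm{Hom}(\mathbb{C}^q \oplus (\mathbb{C}^q)^\ast, \mathbb{C}^n)$, the isomorphism $V_\parallel \simeq W_1$ with its precise $G_p$-action $v \mapsto gv$, and the $\mathbb{Z}_2$-graded sign conventions ($\sigma^{\mathrm{s}T} = -\sigma^{T}$ and the orderings fixing $\Omega_{V_\parallel}$ and $\Omega_{W_1}$) in the orthogonal and symplectic cases; one must also keep in mind that the restriction to $D_p^0$ is forced, since $S$ -- and hence $P^\ast S^\ast$ -- is only defined where the even block $x$ is invertible. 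Once these are pinned down the argument reduces to the short computation above, and in particular no measure- or volume-normalization constants enter, this being a statement purely about the algebra homomorphisms.
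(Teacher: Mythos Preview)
Your proof is correct and follows essentially the same approach as the paper's own argument: both reduce to checking the underlying quadratic maps $S \circ P = Q$ on the generators $x,\sigma,\tau,y$, split the fermionic variables into parallel and perpendicular parts, and observe that the shift $y \mapsto y + \tau x^{-1}\sigma$ built into $S$ exactly restores the cross term $\tilde\zeta_\parallel \zeta_\parallel$ missing from $P^\ast y = \tilde\zeta_\perp \zeta_\perp$. The only cosmetic difference is that the paper carries out the computation at a general regular $L$ using the projector $\Pi_L = L^\dagger(L L^\dagger)^{-1} L$, whereas you first invoke $K_n$-equivariance to work at the representative $L = g\Pi$; these are entirely equivalent.
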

\begin{proof}
Since the isomorphism $Q^\ast :\, W^\ast \stackrel{\sim}{\to}
\mathrm{S}^2 (V^\ast)^G$ determines $Q^\ast : \, \mathcal{A}_W \to
\mathcal{A}_V^G$, it suffices to check $Q^\ast = P^\ast S^\ast$ at
the level of the quadratic map $Q :\, V \to W$.

Let us write out the proof for the case of $G = \mathrm{GL}$ (the
other cases are no different). Recall that the quadratic map $Q : \,
V \to W$ in this case is given by
\begin{displaymath}
    Q \, :\,\, (L \oplus \tilde{L}) \oplus (K
    \oplus \tilde{K}) \mapsto \begin{pmatrix} L\, \tilde{L} & L\,
    \tilde{K} \\ K \tilde{L} &K \tilde{K} \end{pmatrix} \;.
\end{displaymath}
Now, fixing a regular element $(L\,,L^\dagger) \in X_{p,\,
n}^\prime\,$, we have an orthogonal decomposition
\begin{displaymath}
    \mathbb{C}^n = \mathrm{ker}(L) \oplus \mathrm{im}(L^\dagger)
\end{displaymath}
where $\mathrm{im}(L^\dagger) \simeq \mathbb{C}^p$ and
$\mathrm{ker}(L) \simeq \mathbb{C}^{n-p}$. Let $\Pi_L := L^\dagger (L
\, L^\dagger)^{-1} L$ denote the orthogonal projection $\Pi_L :\,
\mathbb{C}^n \to \mathrm{im}(L^\dagger)$. If we decompose $K$,
$\tilde{K}$ as
\begin{displaymath}
    K = K_\parallel(L) + K_\perp(L) \;, \quad K_\parallel(L) =
    K \, \Pi_L \;, \quad \tilde{K} = \tilde{K}_\parallel(L) +
    \tilde{K}_\perp(L)\;, \quad \tilde{K}_\parallel(L) = \Pi_L
    \, \tilde{K} \;,
\end{displaymath}
then our homomorphism $P^\ast$ is the pullback of algebras determined
by the map
\begin{eqnarray*}
    &&P \,:\,\, X_{p,\,n}^\prime \times V_1 \to W^\prime \;, \\
    &&((L\, , L^\dagger) , (K \oplus \tilde{K})) \mapsto
    \begin{pmatrix} L\, L^\dagger &L\, \tilde{K}_\parallel(L) \\
    K_\parallel(L) L^\dagger &K_\perp(L) \tilde{K}_\perp(L)
    \end{pmatrix} = \begin{pmatrix} L\, L^\dagger &L\,\tilde{K} \\
    K L^\dagger &K(\mathrm{Id}-\Pi_L) \tilde{K} \end{pmatrix} \;.
\end{eqnarray*}
When the second map $S : \, W^\prime \to W^\prime$ is applied to this
result, all blocks remain the same but for the $W_{11}$-block, which
transforms as
\begin{displaymath}
    K(\mathrm{Id} - \Pi_L) \tilde{K} \mapsto K(\mathrm{Id} - \Pi_L)
    \tilde{K} + (KL^\dagger) (L\,L^\dagger)^{-1} (L\, \tilde{K}) =
    K \tilde{K} \;.
\end{displaymath}
Thus $S \circ P$ agrees with $Q$ on $X_{p,\,n}^\prime \times V_1\,$,
which implies the desired result $Q^\ast = P^\ast S^\ast$.
\end{proof}
We now state an intermediate result en route to the proof of the
superbosonization formula. Let $f \in \mathcal{A}_V^G$ and $F \in
\mathcal{A}_W$ be related by $f = Q^\ast F$. We then do the following
steps: (i) start from formula (\ref{eq:5.1mrz}) for $\int
\Omega_{V_1} [P^\ast S^\ast F]\, \mathrm{dvol}_{V_{0, \mathbb{R}}}$ ;
(ii) transform the Berezin integral $\Omega_{V_1}[(P^\ast S^\ast F)(g
\Pi)]$ by Lemma \ref{lem:Om-para} for the part $\Omega_{V_\parallel}$
and equation (\ref{eq:Om-perp}) for $\Omega_{V_\perp}$; (iii) use
Cor.\ \ref{cor:BB-sector} to push the integral over $G_p / K_p$
forward to $D_p^0$ by the Cartan embedding; (iv) use $\mathrm{Det}^q
(g g^\dagger) = \mathrm{Det}^q(x)$. The outcome of these steps is the
formula
\begin{eqnarray}
    &&\int \Omega_{V_1}[f]\, \mathrm{dvol}_{V_{0,\mathbb{R}}}
    = 2^{(q-p)(n+m)} \pi^{\,qn} (2\pi)^{- pq(1+|m|)}
    \frac{\mathrm{vol}(K_n)}{\mathrm{vol}(K_{n,p-q})}
    \times \nonumber \\ &&\times \int_{D_p^0} \left( \int_{D_q^1}
    \Omega_{W_1} [S^\ast F(x\, ,y)]\,\mathrm{Det}^{p-n^\prime}(y)
    \, d\mu_{D_q^1}(y)\right) \mathrm{Det}^{q+n^\prime}(x)\,
    d\mu_{D_p^0}(x) \label{eq:5.8mrz} \;.
\end{eqnarray}
Let us recall once more that $n^\prime = n$ for $G = \mathrm{GL}$ and
$n^\prime = n/2$ for $G = \mathrm{O}$, $\mathrm{Sp}\,$.

\subsection{Superbosonization formula}
\label{sect:thm}

We are now in a position to reap the fruits of all our labors.
Introducing the notation $(S_x^\ast \, \mathrm{Det})(y) =
\mathrm{Det}(y + \tau x^{-1} \sigma)$ for $G = \mathrm{GL}$ and
$(S_x^\ast \, \mathrm{Det})(y) = \mathrm{Det}(y + \sigma^{\mathrm{s}
T} x^{-1} \sigma)$ for $G = \mathrm{O}$ and $G = \mathrm{Sp}\,$, we
note that the superdeterminant function $\mathrm{SDet}:\, D_p^0
\times D_q^1 \to \wedge(W_1^\ast)$ is given by
\begin{displaymath}
    \mathrm{SDet}(x\, ,y) = \frac{\mathrm{Det}(x)}
    {((S_x^\ast)^{-1} \mathrm{Det})(y)} \;.
\end{displaymath}
We define a related function $J : \, D_p^0 \times D_q^1 \to \wedge
%
%
(W_1^\ast)$ by
\begin{displaymath}
    J(x\, ,y) = \frac{\mathrm{Det}^q(x)\, \mathrm{Det}^{q-m/2}(y)}
    {((S_x^\ast)^{-1} \mathrm{Det}^{q-m/2-p})(y)} \;.
\end{displaymath}
\begin{thm}\label{thm:5.7}
Let $f : \, V_0 \to \wedge(V_1^\ast)^G$ be a $G$-equivariant
holomorphic function which restricts to a Schwartz function along the
real subspace $V_{0,\mathbb{R}}\,$. If $F :\, W_0 \to \wedge(
W_1^\ast)$ is any holomorphic function that pulls back to $Q^\ast F =
f\,$, then
\begin{eqnarray*}
    &&\int \widetilde{\Omega}_{V_1}[f]\, \mathrm{dvol}_{V_{0,\mathbb{R}}}
    = 2^{(q-p)m} \frac{\mathrm{vol}(K_n)}{\mathrm{vol}(K_{n,p-q})}
    \times \\ &&\times \int_{D_p^0} \int_{D_q^1} \widetilde{\Omega}_{W_1}
    [(J\cdot \mathrm{SDet}^{n^\prime}\cdot F)(x\,,y)]\,d\mu_{D_q^1}(y)\,
    d\mu_{D_p^0}(x)\;,
\end{eqnarray*}
where $n^\prime = n/(1+|m|) \ge p\,$, and $\widetilde{\Omega}_{V_1}
:= 2^{pn} (2\pi)^{-qn} \Omega_{V_1}$ and $\widetilde{\Omega}_{W_1} :=
(2\pi)^{ -pq (1+|m|)} \Omega_{W_1}$ are Berezin integral forms with
adjusted normalization.
\end{thm}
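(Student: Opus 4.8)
The plan is to prove Theorem \ref{thm:5.7} by assembling the pieces already laid down, starting from the intermediate formula (\ref{eq:5.8mrz}) and then only having to understand the inner integral over $D_q^1$ of $\Omega_{W_1}[S^\ast F(x,y)]\,\mathrm{Det}^{p-n'}(y)$. Since $S^\ast$ acts by the nilpotent shift $y \mapsto y + \sigma^{\mathrm{s}T}x^{-1}\sigma$ (or $y + \tau x^{-1}\sigma$ in the $\mathrm{GL}$ case), and the entries of $\sigma,\tau$ are odd, the argument $w := \sigma^{\mathrm{s}T}x^{-1}\sigma$ (resp.\ $\tau x^{-1}\sigma$) lies in $N_0 \otimes W_{11}$, where $N_0$ is the nilpotent even part of the Grassmann algebra generated by $\sigma,\tau$. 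Thus the inner integrand is of exactly the form to which Lemma \ref{lem:shift} applies, provided the factor $\mathrm{Det}^{p-n'}(y)$ is first absorbed: I would write $F(y + w)\,\mathrm{Det}^{p-n'}(y)$ and note that expanding in $w$ and integrating term by term against $d\mu_{D_q^1}$ is the same as applying Lemma \ref{lem:shift} to the analytic function $y \mapsto \mathrm{Det}^{p-n'}(y)\,F(y)$ with the caveat that the determinant power has to be carried along the shift. The cleanest route is: expand $S^\ast$ fully, so that the inner integral reads $\int_{D_q^1}\Omega_{W_1}[\,\mathrm{Det}^{p-n'}(y)\,F(y+w)\,]\,d\mu_{D_q^1}(y)$, then apply Lemma \ref{lem:shift} (with $m \to -m$, since the Howe-dual bilinear form on $U_q$ in the fermionic sector is swapped relative to the bosonic one, which is exactly the source of the $m \to -m$ convention noted after the normalization of $d\mu_{D_{\beta,q}^1}$).

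Concretely, after applying Lemma \ref{lem:shift} the inner integral becomes
\[
    \int_{D_q^1} \Omega_{W_1}\!\left[ \frac{\mathrm{Det}^{p-n'}(y)\,F(y)}
    {\mathrm{Det}^{q - m/2}(\mathrm{Id} - y^{-1}w)} \right] d\mu_{D_q^1}(y)\;,
\]
but here one must be careful: Lemma \ref{lem:shift} shifts $F$ only, so the honest statement is that $\int_{D_q^1} F(y+w)\,G(y)\,d\mu = \int_{D_q^1} F(y)\,G(y)\,\mathrm{Det}^{-(q-m/2)}(\mathrm{Id}-y^{-1}w)\,d\mu$ holds for $G(y) = \mathrm{Det}^{p-n'}(y)$ only if we also account for how $G$ is affected. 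The correct bookkeeping is to run Lemma \ref{lem:shift} with the combined analytic function $y\mapsto \mathrm{Det}^{p-n'}(y)F(y)$; then both the exponent $q-m/2$ from the divergence and the exponent $p-n'$ from the prefactor recombine, since the shift $y\mapsto y+w$ also acts on $\mathrm{Det}^{p-n'}(y)$ as $\mathrm{Det}^{p-n'}(y+w)=\mathrm{Det}^{p-n'}(y)\,\mathrm{Det}^{p-n'}(\mathrm{Id}+y^{-1}w)$. Keeping track of this reproduces, on the $y$-side, precisely the factor $((S_x^\ast)^{-1}\mathrm{Det}^{q-m/2-p})(y)$ appearing in the denominator of $J(x,y)$, and on the $x$-side the $\mathrm{Det}^q(x)$ and $\mathrm{Det}^{q-m/2}(y)$ numerator factors. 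This is the step I expect to be the main obstacle: matching the various determinant powers — the $n'$ from Corollary \ref{cor:BB-sector} and equation (\ref{eq:Om-perp}), the $q-m/2$ from Lemma \ref{lem:divergence}/\ref{lem:shift}, and the shifts induced by $S^\ast$ — so that they collapse into the compact form $J\cdot\mathrm{SDet}^{n'}$. It is bookkeeping, but it is the kind of bookkeeping where a sign of $m$ or a factor of $2$ is easy to misplace.

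The remaining work is purely combinatorial on the constants. Starting from (\ref{eq:5.8mrz}), I would substitute $\widetilde{\Omega}_{V_1} = 2^{pn}(2\pi)^{-qn}\Omega_{V_1}$ and $\widetilde{\Omega}_{W_1} = (2\pi)^{-pq(1+|m|)}\Omega_{W_1}$ to convert the raw Berezin forms into the normalized ones, and then collect all prefactors: the $2^{(q-p)(n+m)}\pi^{qn}(2\pi)^{-pq(1+|m|)}$ from (\ref{eq:5.8mrz}), the $2^{pn}(2\pi)^{-qn}$ and $(2\pi)^{pq(1+|m|)}$ from the tilde conversions, and verify they combine to the advertised $2^{(q-p)m}$ times the volume ratio $\mathrm{vol}(K_n)/\mathrm{vol}(K_{n,p-q})$. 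The $\pi^{qn}$ and $(2\pi)^{-qn}$ and $2^{pn}$ factors cancel in the expected way ($\pi^{qn}\cdot(2\pi)^{-qn}\cdot 2^{qn}=1$-type identities), and $2^{(q-p)(n+m)}\cdot 2^{pn}\cdot 2^{-qn}$-type arithmetic leaves $2^{(q-p)m}$; I would carry this out explicitly but it is routine. Finally I would remark that the inequality $n' \ge p$ is exactly the hypothesis ($n\ge p$ for $\mathrm{GL}$, $n\ge 2p$ for $\mathrm{O},\mathrm{Sp}$) under which Proposition \ref{prop:BB-sector} and Corollary \ref{cor:BB-sector} were proved, so that every intermediate integral converges — convergence of the outer $x$-integral being guaranteed by the Schwartz hypothesis on $f$ restricted to $V_{0,\mathbb{R}}$, which is preserved under all the algebraic manipulations since they act fiberwise in the odd variables and by pullback along $\psi$ and the Cartan embedding in the even ones.
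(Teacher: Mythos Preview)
Your proposal is correct and follows essentially the same approach as the paper: start from (\ref{eq:5.8mrz}), swap the Berezin and ordinary integrals, apply Lemma \ref{lem:shift} to undo the nilpotent shift $S^\ast$ in the inner $D_q^1$-integral, and then regroup the resulting determinant powers into $J\cdot\mathrm{SDet}^{n'}$ before checking constants. One small clarification: your $m\to -m$ remark is unnecessary, since Lemma \ref{lem:divergence} and Lemma \ref{lem:shift} are already stated for the fermion-fermion $D_q$ with the same sign convention $m=0,+1,-1$ for $\mathrm{GL},\mathrm{O},\mathrm{Sp}$ (the exponent you need is $q-m/2$, which is exactly what the proof of Lemma \ref{lem:shift} delivers).
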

\begin{proof}
We first observe that in the present context the formula of Lemma
\ref{lem:shift} can be written as
\begin{displaymath}
    \int_{D_q^1} (S^\ast F)(x\, ,y)\, d\mu_{D_q^1}(y) =
    \int_{D_q^1} \frac{F(x\, ,y)\, \mathrm{Det}^{q-m/2}(y)}
    {((S_x^\ast)^{-1} \mathrm{Det}^{q-m/2})(y)}\, d\mu_{D_q^1}(y)\;.
\end{displaymath}
Our starting point now is equation (\ref{eq:5.8mrz}). We interchange
the linear operations of doing the ordinary integral $\int_{D_q}
(...)\, d\mu_{D_q}$ and the Berezin integral $\Omega_{W_1}[...]$. The
inner integral over $y$ is then transformed as
\begin{displaymath}
    \int_{D_q^1} (S^\ast F)(x\, ,y)\, \mathrm{Det}^{p-n^\prime}(y)\,
    d\mu_{D_q^1}(y) = \int_{D_q^1} \frac{F(x\,,y)\,\mathrm{Det}^{q
    -m/2}(y)\, d\mu_{D_q^1}(y)}{((S_x^\ast)^{-1}
    \mathrm{Det}^{n^\prime-p+q-m/2})(y)} \;.
\end{displaymath}
The factor $((S_x^\ast)^{-1} \mathrm{Det}^{-n^\prime})(y)$ combines
with the factor $\mathrm{Det}^{n^\prime}(x)$ of the outer integral
over $x$ to give the power of a superdeterminant:
\begin{displaymath}
    ((S_x^\ast)^{-1} \mathrm{Det}^{-n^\prime})(y) \,
    \mathrm{Det}^{n^\prime}(x) = \mathrm{SDet}^{n^\prime}(x\,,y) \;.
\end{displaymath}
Then, restoring the integrations to their original order (i.e.,
Berezin integral first, integral over $y$ second) we immediately
arrive at the formula of the theorem.
\end{proof}
\begin{rem}
The function $J(x\,,y)$ is just the factor that appears in the
definition of the Berezin measure $DQ$ in Sect.\ \ref{sect:1.2}.
Using supermatrix notation, this is seen from the following
computation:
\begin{eqnarray*}
    J(x\,,y) &=&
    \frac{\mathrm{Det}^q(x)\, \mathrm{Det}^{q-m/2}(y)}
    {\mathrm{Det}^{q-m/2-p}(y - \tau\, x^{-1} \sigma)} =
    \frac{\mathrm{Det}^q(x)\, \mathrm{Det}^p(y-\tau\,x^{-1}\sigma)}
    {\mathrm{Det}^{q-m/2}(1 - y^{-1}\tau\, x^{-1} \sigma)} \\ &=&
    \frac{\mathrm{Det}^q(x)\, \mathrm{Det}^{p}(y-\tau\,x^{-1}\sigma)}
    {\mathrm{Det}^{-q+m/2}(1 - x^{-1}\sigma\, y^{-1} \tau)} =
    \frac{\mathrm{Det}^q(x-\sigma\,y^{-1}\tau)\,
    \mathrm{Det}^{p}(y-\tau\,x^{-1}\sigma)}
    {\mathrm{Det}^{m/2}(1 - x^{-1}\sigma\, y^{-1} \tau)} \;.
    \end{eqnarray*}
We also have adjusted the normalization constants, so that $\mathrm{
dvol}_{V_{0,\mathbb{R}}} \otimes \widetilde{\Omega}_{V_1}$ agrees
with the Berezin superintegral form $D_{Z,\bar{Z};\zeta, \tilde{
\zeta}}$ of Eq.\ (\ref{eq:1.1}), and $d\mu_{D_p^0}\, d\mu_{D_q^1}
\otimes \widetilde{\Omega}_{W_1} \circ J$ agrees with $DQ$ as defined
in Eqs.\ (\ref{eq:DQ-u}, \ref{eq:DQ-os}). Thus, assuming the validity
of Thm.\ \ref{thm:4.10} we have now completed the proof of our main
formulas (\ref{bosonize}) and (\ref{bos-other}). To complete the
proof of Thm.\ \ref{thm:4.10} we have to establish the normalization
given by Lemma \ref{lem:4.10}.
\end{rem}

\subsection{Proof of Lemma \ref{lem:4.10}}\label{sect:norm-int}

Lemma \ref{lem:4.10} states the value of the integral
\begin{displaymath}
    \int_{D_q^1} \mathrm{e}^{\mathrm{Tr}^{\,\prime} y}
    \, \mathrm{Det}^{-n^\prime}(y)\, d\mu_{D_q^1}(y)
\end{displaymath}
over the compact symmetric space $D_q^1\,$. To verify that statement,
we are now going to compute this integral by supersymmetric reduction
to a related integral,
\begin{displaymath}
    \int_{D_q^0} \mathrm{e}^{-\mathrm{Tr}^{\,\prime} x}
    \, \mathrm{Det}^{n^\prime +q}(x)\, d\mu_{D_q^0}(x) \;,
\end{displaymath}
over the corresponding non-compact symmetric space $D_q^0\,$. For
that purpose, consider
\begin{equation}\label{eq:Konst}
    C_{n,\,q} := \int_{D_q^0} \int_{D_q^1} \Omega_{W_1} [ (J \cdot
    \mathrm{SDet}^{n^\prime + q})(x\, ,y)] \, \mathrm{e}^{\mathrm{Tr}^{
    \,\prime} y - \mathrm{Tr}^{\,\prime} x} d\mu_{D_q^1}(y)\,
    d\mu_{D_q^0}(x) \;,
\end{equation}
(for each of the three cases $G = \mathrm{GL}, \mathrm{O},
\mathrm{Sp}$) and first process the inner integral:
\begin{eqnarray*}
    &&\int_{D_q^1} \Omega_{W_1} [ (J \cdot \mathrm{SDet}^{n^\prime +
    q}) (x\,,y)] \, \mathrm{e}^{\mathrm{Tr}^{\,\prime} y}\,
    d\mu_{D_q^1}(y)  \\ &=& \int_{D_q^1} \Omega_{W_1} \left[
    \frac{\mathrm{Det}^{n^\prime + 2q}(x) \, \mathrm{Det}^{q-m/2}(y)}
    {((S_x^\ast)^{-1} \mathrm{Det}^{n^\prime + q - m/2})(y)} \right] \,
    \mathrm{e}^{\mathrm{Tr}^{\,\prime} y} \, d\mu_{D_q^1}(y) \\
    &=& \mathrm{Det}^{n^\prime + 2q}(x) \int_{D_q^1} \Omega_{W_1}
    [S_x^\ast (\exp \circ \mathrm{Tr}^{\,\prime}) (y)] \,
    \mathrm{Det}^{-n^\prime}(y) \, d\mu_{D_q^1}(y) \;.
\end{eqnarray*}
Here, after inserting the definitions of $\mathrm{SDet}^{n^\prime +
q}$ and $J$ for $p = q\,$, we again made use of the formula of Lemma
\ref{lem:shift}, reading it backwards this time.

The next step is to calculate the Berezin integral $\Omega_{W_1}$ of
$S_x^\ast (\exp \circ \mathrm{Tr}^{\,\prime}) (y)$. By the definition
of the shift operation $S_x^\ast$ this is a Gaussian integral. Its
value is
\begin{displaymath}
    \Omega_{W_1} [S^\ast (\exp \circ \mathrm{Tr}^{\,\prime}) (y)] =
    \mathrm{e}^{\mathrm{Tr}^{\,\prime} y}\, \mathrm{Det}^q (x^{-1})
\end{displaymath}
in all three cases. Inserting this result into the above expression
for $C_{n,\,q}$ we get the following product of two ordinary
integrals:
\begin{displaymath}
    C_{n,\,q} = \int_{D_q^0} \mathrm{e}^{-\mathrm{Tr}^{\,\prime} x}
    \,\mathrm{Det}^{n^\prime + q}(x) \, d\mu_{D_q^0}(x) \times
    \int_{D_q^1} \mathrm{e}^{\mathrm{Tr}^{\,\prime} y} \,
    \mathrm{Det}^{-n^\prime}(y) \, d\mu_{D_q^1}(y) \;.
\end{displaymath}
The first one is known to us from Eq.\ (\ref{eq:3.2}), while the
second one is the integral that we actually want. The formula claimed
for this integral in Lemma \ref{lem:4.10} is readily seen to be
equivalent to the statement that $C_{n,\,q} = (2\pi)^{(1+|m|) q^2}$.
Thus our final task now is to show that $C_{n,\,q} = (2\pi)^{(1+|m|)
q^2}$. This is straightforward to do by the localization technique
for supersymmetric integrals \cite{SusyLoc}, as follows.

To get a clear view of the supersymmetries of our problem, let us go
back to our starting point: the algebra $\mathcal{A}_V^G$ of
$G$-equivariant holomorphic functions $V_0 \to \wedge(V_1^\ast)$ of
the $\mathbb{Z}_2$-graded vector space $V = V_0 \oplus V_1$ for $V_0
= \mathrm{Hom}(\mathbb{C}^n , \mathbb{C}^p) \oplus \mathrm{Hom}
(\mathbb{C}^p , \mathbb{C}^n)$ and $V_1 = \mathrm{Hom}(\mathbb{C}^n ,
\mathbb{C}^q) \oplus \mathrm{Hom}(\mathbb{C}^q , \mathbb{C}^n)$.
There exists a canonical action of the Lie superalgebra $\mathfrak
{gl}_{p|q}$ on $\mathbb{C}^{p|q}$, hence on $V \simeq \mathrm{Hom}
(\mathbb{C}^n,\mathbb{C}^{p|q}) \oplus \mathrm{Hom}( \mathbb{C}^{p|q}
, \mathbb{C}^n)$, and hence on the algebra $\mathcal{A}_V^G$. To
describe this $\mathfrak{gl}_{p|q}$-action on $\mathcal{A}_V^G$, let
$\{ E_i^a \}$, $\{ \tilde{E}_a^i \}$, $\{ e_i^b \}$, and $\{
\tilde{e}_b^i \}$ with index range $i = 1, \ldots, n$ and $a = 1,
\ldots, p$ and $b = 1, \ldots, q$ be bases of $\mathrm{Hom}(
\mathbb{C}^n, \mathbb{C}^p)$, $\mathrm{Hom} (\mathbb{C}^p,
\mathbb{C}^n)$, $\mathrm{Hom}(\mathbb{C}^n, \mathbb{C}^q)$, and
$\mathrm{Hom}(\mathbb{C}^q, \mathbb{C}^n)$, in this order. If $\{
F_a^i \}$, $\{ \tilde{F}_i^a \}$, etc., denote the corresponding dual
bases, then the odd generators of $\mathfrak{gl}_{p|q}$ (the even
ones will not be needed here) are represented on $\mathcal{A}_V^G$ by
odd derivations
\begin{displaymath}
    d_b^a = \varepsilon(f_b^i) \delta(E_i^a) + \mu(\tilde{F}_i^a)
    \iota(\tilde{e}_b^i) \;, \quad \tilde{d}_a^b = \varepsilon
    (\tilde{f}_i^b)\delta(\tilde{E}_a^i)-\mu(F_a^i)\iota(e_i^b)\;,
\end{displaymath}
where the operators $\varepsilon(f)$, $\delta(v)$, $\mu(f)$, and
$\iota(v)$ mean exterior multiplication by the anti-commuting
generator $f$, the directional derivative w.r.t.\ the vector $v$,
(symmetric) multiplication by the function $f$, and alternating
contraction with the odd vector $v$. Clearly, all of these
derivations are $G$-invariant (for $G = \mathrm{GL}_n\,$,
$\mathrm{O}_n\,$, $\mathrm{Sp}_n\,$) and have vanishing squares
$(d_b^a)^2 = (\tilde{d}_a^b)^2 = 0\,$. Using the coordinate language
introduced in Sect.\ \ref{sect:1.1} one could also write
\begin{displaymath}
    d_b^a = \zeta_b^i \frac{\partial}{\partial Z_a^i} + \tilde{Z}_i^a
    \frac{\partial}{\partial \tilde{\zeta}_i^b} \;,\quad \tilde{d}_a^b
    = \tilde{\zeta}_i^b \frac{\partial}{\partial \tilde{Z}_i^a} - Z_a^i
    \frac{\partial}{\partial \zeta_b^i} \;.
\end{displaymath}
It will be of importance below that the flat Berezin superintegral
form $\mathrm{dvol}_{V_{0,\mathbb{R}}} \otimes \Omega_{V_1}$ is
$\mathfrak{gl}_{p|q}$-invariant, which means in particular that
\begin{displaymath}
    \int \Omega_{V_1}[d_b^a f]\, \mathrm{dvol}_{V_{0,\mathbb{R}}}
    = \int \Omega_{V_1}[\tilde{d}_a^b f]\, \mathrm{dvol}_{V_{0,
    \mathbb{R}}} = 0
\end{displaymath}
for any $f \in \mathcal{A}_V$ with rapid decay when going toward
infinity along $V_{0,\mathbb{R}}\,$.

Superbosonization involves the step of lifting $f \in \mathcal
{A}_V^G$ to $F \in \mathcal{A}_W$ by the surjective mapping $Q^\ast :
\, \mathcal{A}_W \to \mathcal{A}_V^G$. Now, since $W = \mathrm{S}^2
(V)^G = \big( T^2(V) / I_-(V) \big)^G$ and the Lie superalgebra
$\mathfrak{gl}_{p|q}$ acts on $T^2(V)$ by $G$-invariant derivations
stabilizing $I_-(V)$, we also have a $\mathfrak{gl}_{p|q}$-action by
linear transformations $W \to W$. Realizing this action by
derivations of $\mathcal{A}_W$ we obtain a $\mathfrak{gl}_{p|
q}$-action on $\mathcal{A}_W\,$. In particular, there exist such
derivations $\mathcal{D}_b^a$ and $\tilde{\mathcal{D}}_a^b$ that for
every $F \in \mathcal{A}_W$ we have
\begin{displaymath}
    Q^\ast \mathcal{D}_b^a F = d_b^a Q^\ast F \;, \quad Q^\ast
    \tilde{\mathcal{D}}_a^b F = \tilde{d}_a^b Q^\ast F \;.
\end{displaymath}
(In other words, our homomorphism of algebras $Q^\ast :\, \mathcal
{A}_W \to \mathcal{A}_V^G$ is $\mathfrak{gl}_{p|q}$-equivariant.)

For any positive integers $p$, $q$, $n$ with $n^\prime \ge p$
consider now the Berezin superintegral
\begin{displaymath}
    I_{p,\,q}^n [F] := \int_{D_p^0} \int_{D_q^1} \Omega_{W_1}
    [ J \cdot \mathrm{SDet}^{n^\prime} \cdot F] \, d\mu_{D_q^1}
    \, d\mu_{D_q^0} \;,
\end{displaymath}
which includes our integral $C_{n,\,q}$ of interest as a special case
by letting $p = q$ and $F(x\, ,y) = \mathrm{SDet}^q(x\, ,y)\,
\mathrm{e}^{\mathrm{Tr}^{\,\prime} y - \mathrm{Tr}^{\,\prime} x }$.
\begin{lem}
The odd derivations $\mathcal{D}_b^a$ and $\tilde{\mathcal{D}}_a^b$
are symmetries of $F \mapsto I_{p,\,q}^n [F]$; i.e., the integrals of
$\mathcal{D}_b^a F$ and $\tilde{\mathcal{D}}_a^b F$ vanish,
\begin{displaymath}
    I_{p,\,q}^n [\mathcal{D}_b^a F] = I_{p,\,q}^n
    [\tilde{\mathcal{D}}_a^b F] = 0 \quad (a = 1, \ldots, p\,;
    ~ b = 1, \ldots, q) \;,
\end{displaymath}
for any integrand $F \in \mathcal{A}_W$ such that $Q^\ast F
\vert_{V_{0,\mathbb{R}}}$ is a Schwartz function.
\end{lem}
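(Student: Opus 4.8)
The plan is to deduce the vanishing of $I_{p,\,q}^n[\mathcal{D}_b^a F]$ and $I_{p,\,q}^n[\tilde{\mathcal{D}}_a^b F]$ from the $\mathfrak{gl}_{p|q}$-invariance of the \emph{flat} Berezin superintegral form, which has already been established. The first step is to recall that the chain of variable transformations carried out in Sect.\ \ref{sect:full-susy} (and culminating in the proof of Thm.\ \ref{thm:5.7}) identifies $I_{p,\,q}^n$, up to a nonzero $F$-independent constant $c_{n,p,q}\,$, with the flat Berezin superintegral over $V_{0,\mathbb{R}}\,$: for every $F \in \mathcal{A}_W$ whose pullback $Q^\ast F$ restricts to a Schwartz function along $V_{0,\mathbb{R}}\,$,
\begin{displaymath}
    I_{p,\,q}^n[F] = c_{n,p,q} \int \widetilde{\Omega}_{V_1}[Q^\ast F]\,
    \mathrm{dvol}_{V_{0,\mathbb{R}}} \;.
\end{displaymath}
The one point needing care is that I would invoke here only the \emph{proportionality}, not the precise value of $c_{n,p,q}\,$: the ingredients (\ref{eq:5.1mrz}), Lemma \ref{lem:Om-para}, Cor.\ \ref{cor:BB-sector}, together with the unnormalized pushforward identity (\ref{eq:4.2}) used in place of Thm.\ \ref{thm:4.10}, already produce the displayed relation with a nonzero $F$-independent constant, and they do not rely on Lemma \ref{lem:4.10}. (Using Thm.\ \ref{thm:5.7} with its explicit constant would be circular, since that constant was fixed via Lemma \ref{lem:4.10}, which in turn is to be proved by the localization argument that this lemma serves.)

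Next I would apply the displayed identity with $F$ replaced by $\mathcal{D}_b^a F$ (resp.\ $\tilde{\mathcal{D}}_a^b F$). By the $\mathfrak{gl}_{p|q}$-equivariance of $Q^\ast$ recorded above one has $Q^\ast \mathcal{D}_b^a F = d_b^a (Q^\ast F)$. Writing $f := Q^\ast F$, note that $f$ is $G$-equivariant and that $d_b^a$ is a $G$-invariant odd derivation, so $d_b^a f$ is again a holomorphic $G$-equivariant function on $V_0\,$; moreover $d_b^a$ is a first-order differential operator whose coefficients are anti-commuting generators or linear (hence polynomial) functions of the commuting coordinates, so $d_b^a f$ restricts to a Schwartz function along $V_{0,\mathbb{R}}$ whenever $f$ does (derivatives and polynomial multiples of Schwartz functions are Schwartz). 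Thus the proportionality identity applies to $\mathcal{D}_b^a F$ and gives
\begin{displaymath}
    I_{p,\,q}^n[\mathcal{D}_b^a F] = c_{n,p,q} \int
    \widetilde{\Omega}_{V_1}[d_b^a f]\, \mathrm{dvol}_{V_{0,\mathbb{R}}}\;.
\end{displaymath}

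Finally I would invoke the $\mathfrak{gl}_{p|q}$-invariance of the flat Berezin superintegral form $\mathrm{dvol}_{V_{0,\mathbb{R}}} \otimes \Omega_{V_1}\,$: since $\widetilde{\Omega}_{V_1}$ is a nonzero scalar multiple of $\Omega_{V_1}\,$, and since $\int \Omega_{V_1}[d_b^a g]\, \mathrm{dvol}_{V_{0,\mathbb{R}}} = 0$ for every $g \in \mathcal{A}_V$ with rapid decay toward infinity along $V_{0,\mathbb{R}}$ (exactly as recorded just before the statement of the lemma), taking $g = f$ yields $I_{p,\,q}^n[\mathcal{D}_b^a F] = 0\,$. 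The argument for $\tilde{\mathcal{D}}_a^b$ is verbatim the same with $d_b^a$ replaced by $\tilde{d}_a^b\,$. I expect the only genuine obstacle to be the bookkeeping in the first step — making sure the reduction of $I_{p,\,q}^n$ to the flat integral is carried out at the level of proportionality only, so that it is independent of the normalization constant whose determination is the ultimate purpose of the present lemma.
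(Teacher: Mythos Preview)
Your proof is correct and follows essentially the same route as the paper's own argument: use the superbosonization identity in reverse at the level of proportionality only (so as not to invoke the undetermined constant from Lemma~\ref{lem:4.10}), pull the odd derivation through $Q^\ast$ via $\mathfrak{gl}_{p|q}$-equivariance, and conclude from the $\mathfrak{gl}_{p|q}$-invariance of the flat Berezin form. Your version is in fact slightly more explicit than the paper's in checking that the Schwartz condition is preserved under $d_b^a$.
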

\begin{proof}
While some further labor would certainly lead to a direct proof of
this statement, we will prove it here using the superbosonization
formula of Thm.\ \ref{thm:5.7} in reverse. (Of course, to avoid
making a circular argument, we must pretend to be ignorant of the
constant of proportionality between the two integrals, which will
remain an unknown until the proof of Lemma \ref{lem:4.10} has been
completed. Such ignorance does not cause a problem here, as we only
need to establish a null result.) Thus, applying the formula of Thm.\
\ref{thm:5.7} in the backward direction with an unknown constant, we
have
\begin{displaymath}
    I_{p,\,q}^n [\mathcal{D}_b^a F] = \mathrm{const} \times
    \int \Omega_{V_1}[Q^\ast \mathcal{D}_b^a F]
    \mathrm{dvol}_{V_{0,\mathbb{R}}} \;.
\end{displaymath}
We now use the intertwining relation $Q^\ast \mathcal{D}_b^a = d_b^a
Q^\ast$ of $\mathfrak{gl}_{p|q}$-representations. The integral on the
right-hand side is then seen to vanish because the integral form
$\mathrm{dvol}_{V_{0 ,\mathbb{R}}}\otimes\Omega_{V_1}$ is
$\mathfrak{gl}_{p|q}$-invariant. Thus $I_{p,\,q}^n [\mathcal{D}_b^a
F]= 0\,$. By same argument also $I_{p,\,q}^n [\tilde{\mathcal{D}}_a^b
F] = 0\,$.
\end{proof}
Thus we have $2pq$ odd $\mathcal{A}_W$-derivations (or vector fields)
$\mathcal{D}_b^a$ and $\tilde{\mathcal{D}}_a^b$ which are symmetries
of $I_{p,\,q}^n\,$. We mention in passing that for the cases of $G =
\mathrm{O}_n$ and $G = \mathrm{Sp}_n$ there exist further symmetries
which promote the full symmetry algebra from $\mathfrak{gl}_{p|q}$ to
$\mathfrak{osp}_{2p|2q}\,$. This fact will not concern us here.

Let now $p = q\,$. Then there exists a distinguished symmetry
\begin{displaymath}
    \mathcal{D} := \mathcal{D}_a^a = \mathcal{D}_1^1
    + \mathcal{D}_2^2 + \ldots + \mathcal{D}_q^q \;,
\end{displaymath}
which still satisfies $\mathcal{D}^2 = 0\,$.
\begin{lem}
Viewed as a vector field on the supermanifold of functions $D_q^0
\times D_q^1 \to \wedge(W_1^\ast)$, the numerical part of $\mathcal
{D}$ vanishes at a single point $o \equiv (\mathrm{Id}, \mathrm{Id})
\in D_q^0 \times D_q^1\,$.
\end{lem}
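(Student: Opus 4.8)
The plan is to show that the odd derivation $\mathcal{D}=\sum_a\mathcal{D}_a^a$, regarded as a vector field on the supermanifold with body $D_q^0\times D_q^1$, has body (numerical part) vanishing only at $o=(\mathrm{Id},\mathrm{Id})$. The key is to trace through the composition $Q^\ast=P^\ast S^\ast$ and the explicit form of the odd generators $d_a^a$ on $\mathcal{A}_V^G$ given in coordinates by $d_b^a=\zeta_b^i\partial/\partial Z_a^i+\tilde Z_i^a\partial/\partial\tilde\zeta_i^b$. Since $\mathcal{D}_b^a$ is characterized by $Q^\ast\mathcal{D}_b^a F=d_b^a Q^\ast F$ and $W=\mathrm S^2(V)^G$, it suffices to compute the linear action of $d:=\sum_a d_a^a$ on the quadratic invariants, i.e.\ on $W^\ast$, and dualize to get a vector field on $W$; restricting to $D_p^0\times D_q^1$ and projecting onto the body gives $\mathcal{D}$.

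First I would write out the action of $d=\sum_a d_a^a$ on the generating quadratic invariants. Using the $\mathfrak{gl}_{p|q}$-action on $\mathbb{C}^{p|q}$, the odd generator $\mathcal{D}$ acts on the supermatrix $Q=\begin{pmatrix}x&\sigma\\\tau&y\end{pmatrix}$ (for $G=\mathrm{GL}$; similarly with the symmetry constraints for $\mathrm{O},\mathrm{Sp}$) by (anti)commutator-type transformations that move the even diagonal blocks $x,y$ and the odd off-diagonal blocks $\sigma,\tau$ into each other. Concretely, one finds that the body of $\mathcal{D}$ at a point $(x,y)$ — obtained by setting the anti-commuting coordinates to zero after applying $\mathcal{D}$ — is an even vector field whose value is (up to an overall constant) governed by $x$ and $y$ in such a way that it is $\partial$-conjugate to the difference $x-y$ or, more precisely, vanishes precisely when $x$ and $y$ both equal $\mathrm{Id}$. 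The cleanest way to see this is to observe that $\mathcal{D}^2=0$ makes $\mathcal{D}$ a homological vector field, and the fixed-point locus of its body is the common zero set of the even functions $\mathcal{D}x^{\vphantom a}_{\cdot}$ and $\mathcal{D}y^{\vphantom a}_{\cdot}$; on $D_q^0\times D_q^1$, the Hermitian positivity of $x$ and unitarity of $y$ force this locus to be the single point $o=(\mathrm{Id},\mathrm{Id})$.

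The main step — and the main obstacle — is the explicit computation of the body of $\mathcal{D}$ on $D_q^0\times D_q^1$ and the verification that its zero set is exactly $\{o\}$ rather than a larger submanifold. Here one uses that on $D_q^0$ the even block $x$ is positive Hermitian, on $D_q^1$ the even block $y$ is unitary (with the appropriate $\mathrm{Sym}_b$ constraint for $\mathrm{O},\mathrm{Sp}$), and that the body of $\mathcal{D}$ — after contracting the odd-odd terms — produces a bilinear expression of the schematic form $x\cdot(\text{something})$ and $(\text{something})\cdot y$ that can only vanish simultaneously when $x=y=\mathrm{Id}$. I would exploit the fact that $\mathcal{D}=\mathcal{D}_a^a$ is the trace of the $\mathfrak{gl}_{p|q}$-odd generators, so that at the body level it corresponds to a single diagonal flow; combined with the identification of $D_q^0\times D_q^1$ as a product of symmetric spaces with distinguished base point $o$ (which is the unique $K$-fixed point of the relevant torus action), this pins down the zero. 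It is worth noting that this lemma is exactly what is needed to invoke the localization theorem of \cite{SusyLoc}: having a symmetry $\mathcal{D}$ with $\mathcal{D}^2=0$ whose body vanishes at a single point reduces the computation of $C_{n,q}=I_{q,q}^n[\mathrm{SDet}^q\cdot\mathrm{e}^{\mathrm{Tr}'y-\mathrm{Tr}'x}]$ to a Gaussian integral over the tangent space $T_o(D_q^0\times D_q^1)$, yielding $C_{n,q}=(2\pi)^{(1+|m|)q^2}$ and hence Lemma \ref{lem:4.10}.
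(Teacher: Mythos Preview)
Your overall plan matches the paper's: write out $\mathcal{D}$ in coordinates on $W$, take the numerical part, and use that $x$ is positive Hermitian while $y$ is unitary to force $x=y=\mathrm{Id}$. But as written the proposal has a genuine gap and a conceptual slip. The slip is in your sentence ``the fixed-point locus of its body is the common zero set of the even functions $\mathcal{D}x_{\cdot}$ and $\mathcal{D}y_{\cdot}$'': since $\mathcal{D}$ is an odd derivation and $x,y$ are even, the elements $\mathcal{D}x$ and $\mathcal{D}y$ are \emph{odd}, and they vanish identically once you set the anti-commuting coordinates to zero. They carry no information about the zero locus of the body of $\mathcal{D}$. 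What does carry that information are the coefficients of the odd coordinate vector fields $\partial/\partial\sigma$ and $\partial/\partial\tau$ evaluated at $\sigma=\tau=0$; these are even functions of $(x,y)$, and their common zero set is the locus you want.

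The gap is that you never perform the computation that is the entire content of the lemma. ``One finds'' and ``of the schematic form'' are not enough here. The paper simply computes, for $G=\mathrm{GL}_n$,
\[
\mathcal{D}=\sigma_b^a\Bigl(\frac{\partial}{\partial x_b^a}+\frac{\partial}{\partial y_b^a}\Bigr)+(x_b^a-y_b^a)\,\frac{\partial}{\partial \tau_b^a}\,,
\]
obtained directly from the $\mathfrak{gl}_{q|q}$-action on $W\simeq\mathrm{End}(\mathbb{C}^{q|q})$ (no need to go through $Q^\ast=P^\ast S^\ast$). The numerical part is the second summand, $(x_b^a-y_b^a)\partial/\partial\tau_b^a$, which vanishes exactly when $x=y$ as matrices; on $D_q^0\times D_q^1$ this forces $x=y=\mathrm{Id}$ since a matrix that is both positive Hermitian and unitary is the identity. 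That single explicit line replaces all of your qualitative descriptions, and the argument for $\mathrm{O}_n,\mathrm{Sp}_n$ is identical under the appropriate symmetry constraints.
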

\begin{proof}
We sketch the idea of the proof for $G = \mathrm{GL}_n\,$. In that
case one verifies that $\mathcal{D}$ has the coordinate expression
\begin{displaymath}
    \mathcal{D} = \sigma_b^a \left( \frac{\partial}{\partial x_b^a} +
    \frac{\partial}{\partial y_b^a} \right) + \left( x_b^a - y_b^a
    \right) \frac{\partial}{\partial\tau_b^a} \;.
\end{displaymath}
The second summand, the numerical part of $\mathcal{D}$, is zero only
when the coordinate functions $x_b^a$ and $y_b^a$ are equal to each
other for all $a, b = 1, \ldots, q\,$. Since $D_q^1 = \mathrm{U}_q$
and $D_q^0$ is the set of positive Hermitian $q \times q$ matrices,
this happens only for $x = \mathrm{Id} \in D_q^0$ and $y =
\mathrm{Id} \in D_q^1\, $. The same strategy of proof works for the
cases of $G = \mathrm{O}_n\, , \mathrm{Sp}_n\,$.
\end{proof}
%
%

We are now in a position to apply the localization principle for
supersymmetric integrals \cite{SusyLoc}. Let $F \in \mathcal{A}_W$ be
a $\mathcal{D}$-invariant function which is a Schwartz function on
$D_q^0\,$. Choose a $\mathcal{D}$-invariant function $g_\mathrm{loc}
: \, D_q^0 \times D_q^1 \to \wedge(W_1^\ast)$ with the property that
$g_\mathrm{loc} = 1$ on some neighborhood $U(o) \subset V(o)$ of $o$
and $g_\mathrm{loc} = 0$ outside of $V(o)$. (Such ``localizing''
functions do exist.) Then according to Theorem 1 of \cite{SusyLoc} we
have
\begin{displaymath}
    I_{q,\,q}^n [F] = I_{q,\,q}^n [g_\mathrm{loc} \, F] \;,
\end{displaymath}
since $I_{q,\,q}^n$ is $\mathcal{D}$-invariant. (Although that
theorem is stated and proved for compact supermanifolds, the
statement still holds for our non-compact situation subject to the
condition that integrands be Schwartz functions.)

Taking $V(o)$ to be arbitrarily small we conclude that $F \to
I_{q,\,q}^n [F]$ depends only on the numerical part of the value of
$F$ at $o:$
\begin{displaymath}
    I_{q,\,q}^n [F] = \mathrm{const} \times \mathrm{num}(F(o)) \;.
\end{displaymath}
To determine the value of the constant for $G = \mathrm{GL}_n$ we
consider the special function
\begin{displaymath}
    F = \mathrm{e}^{- \frac{t}{2} (x_a^b x_b^a - y_a^b y_b^a +
    2 \sigma_a^b \tau_b^a)} \;.
\end{displaymath}
An easy calculation in the limit $t \to +\infty$ then gives $I_{q,
\,q}^n[F] = (2\pi)^{q^2} \mathrm{num}(F(o))$ due to our choice of
normalization for $d\mu_{D_q^0}$ and $d\mu_{D_q^1}\,$. The same
calculation for the cases of $G = \mathrm{O}_n\, , \mathrm{Sp}_n$
gives $I_{q,\,q}^n[F] = (2\pi)^{2 q^2} \mathrm{num}(F(o))$.

These considerations apply to the integrand in Eq.\ (\ref{eq:Konst})
with $\mathrm{num}(F(o)) = 1$. Thus we do indeed get $C_{n,\,q} =
(2\pi)^{ (1+|m|) q^2}$, and the proof of Lemma \ref{lem:4.10} is now
finished.

\section{Appendix: invariant measures}

In the body of this paper we never gave any explicit expressions for
the invariant measures $d\mu_{D_p^0}$ and $d\mu_{D_q^1}\,$. There was
no need for that, as these measures are in fact determined (up to
multiplication by constants) by invariance with respect to a
transitive group action, and this invariance really was the only
property that was required.

Nevertheless, we now provide assistance to the practical user by
writing down explicit formulas for our measures (or positive
densities) $d\mu_{D_p^0}$ and $d\mu_{D_q^1}\,$. For that purpose, we
will use the correspondence between densities and differential forms
of top degree. (Recall what the difference is: densities transform by
the absolute value of the Jacobian, whereas top-degree differential
forms transform by the Jacobian including sign.) Thus we shall give
formulas for the differential forms corresponding to $d\mu_{D_p^0}$
and $d\mu_{D_q^1}\,$. This is a convenient mode of presentation, as
it allows us to utilize complex coordinates for the complex ambient
spaces as follows.

Consider first the case of $G = \mathrm{GL}_n(\mathbb{C})$ where
$D_p^0 = \mathrm{Herm}^+ \cap \mathrm{End}(\mathbb{C}^p)$ and $D_q^1
= \mathrm{U} \cap \mathrm{End}(\mathbb{C}^q)$. Then for $r = p$ or $r
= q$ consider $\mathrm{End}(\mathbb{C}^r)$ and let $z_{c c^\prime} :
\, \mathrm{End}(\mathbb{C}^r) \to \mathbb{C}$ (with $c,c^\prime = 1,
\ldots, r$) be the canonical complex coordinates of $\mathrm{End}
(\mathbb{C}^r)$, i.e., the set of matrix elements with respect to the
canonical basis of $\mathbb{C}^r$. On the set of regular points of
$\mathrm{End}(\mathbb{C}^r)$ define a holomorphic differential form
$\omega^{(r)}$ by
\begin{displaymath}
    \omega^{(r)} = \mathrm{Det}^{-r}(z) \bigwedge_{c,\,
    c^\prime = 1}^r dz_{c c^\prime} \;,
\end{displaymath}
where $z = (z_{c c^\prime})$ is the matrix of coordinate functions.
By the multiplicativity of the determinant and the alternating
property of the wedge product, $\omega^{(r)}$ is invariant under
transformations $z \mapsto g_1 z\, g_2^{-1}$ for $g_1 , g_2 \in
\mathrm{GL}_r(\mathbb{C})$. The desired invariant measures (up to
multiplication by an arbitrary normalization constant) are
\begin{equation}
    d\mu_{D_p^0} \propto \omega^{(p)} \big\vert_{\mathrm{Herm}^+ \cap
    \mathrm{End}(\mathbb{C}^p)} \;, \quad
    d\mu_{D_q^1} \propto \omega^{(q)} \big\vert_{\mathrm{U} \cap
    \mathrm{End}(\mathbb{C}^q)} \;,
\end{equation}
where we restrict $\omega^{(r)}$ as indicated and reinterpret
$d\mu_{D_r^\bullet}$ as a positive density on the orientable manifold
$D_r^\bullet$ ($r = p , q$). For example, for $r = 1$ we have
$\omega^{(1)} = z^{-1} dz\,$. In this case we get an invariant
positive density $|dx|$ on the positive real numbers $\mathrm{Herm}^+
\cap \mathbb{C} = \mathbb{R}_+$ by setting $z = \mathrm{e}^x$ with $x
\in \mathbb{R}\, $, and a Haar measure $|dy|$ on the unit circle
$\mathrm{U} \cap \mathbb{C} = \mathrm{U}_1 = \mathrm{S}^1$ by setting
$z = \mathrm{e}^{\mathrm{i} y}$ with $0 \le y \le 2\pi$. Our
normalization conventions for the invariant measures $d\mu_{
D_r^\bullet}$ are those described in Sect.\ \ref{sect:1.2}.

We turn to the cases of $G = \mathrm{O}_n(\mathbb{C})$ and $G =
\mathrm{Sp}_n(\mathbb{C})$ and recall that the condition on elements
$M$ of the complex linear space $\mathrm{Sym}_b(\mathbb{C}^{2r})$ is
$M = t_b\, M^\mathrm{t} (t_b)^{-1}$. On making the substitution $M =
L \, t_b$ this condition turns into
\begin{displaymath}
    \begin{array}{ll} L = + L^\mathrm{t} &\text{~~for~~} b = s \;,
    \\ L = - L^\mathrm{t} &\text{~~for~~} b = a \;, \end{array}
\end{displaymath}
while the $\mathrm{GL}_{2r}(\mathbb{C})$-action on $\mathrm{Sym}_b
(\mathbb{C}^{2r})$ by twisted conjugation becomes $g . L = g L\,
g^\mathrm{t}$ in both cases. Define the coordinate function $z_{c
c^\prime} : \, \mathrm{Sym}_b (\mathbb{C}^{2r}) \to \mathbb{C}$ to be
the function that assigns to $M$ the matrix element of $M (t_b)^{-1}
= L$ in row $c$ and column $c^\prime$. We then have $z_{c c^\prime} =
z_{c^\prime c}$ for $b = s$ and $z_{c c^\prime} = - z_{c^\prime c}$
for $b = a\,$. As before let $z= (z_{c c^\prime})$ be the matrix made
from these coordinate functions (where the transpose $z^\mathrm{t} =
z$ for $b=s$ and $z^\mathrm{t}= -z$ for $b = a$). Then let top-degree
differential forms $\omega^{(r;\,b)}$ be defined \emph{locally} on
the regular points of $\mathrm{Sym}_b( \mathbb{C}^{2r})$ by
\begin{eqnarray*}
    &&\omega^{(r;\,s)} = \mathrm{Det}^{-r-1/2}(z)
    \bigwedge_{1 \le c \le c^\prime \le r} dz_{c c^\prime} \;, \\
    &&\omega^{(r;\,a)} = \mathrm{Det}^{-r+1/2}(z)
    \bigwedge_{1 \le c < c^\prime \le r} dz_{c c^\prime} \;.
\end{eqnarray*}
These are invariant under pullback by $L \mapsto g L\, g^\mathrm{t}$,
as the transformation behavior of $\mathrm{Det}^{-r \pm 1/2}$ is
contragredient to that of the wedge product of differentials in both
cases. We emphasize that this really is just a local definition so
far, as the presence of the square root factors may be an obstruction
to the global existence of such a form.

Now focus on the case of $G = \mathrm{O}_n(\mathbb{C})$. There,
restriction to the domains $D_{\delta,p}^0$ and $D_{\delta,\,q}^1$
gives the differential forms $\omega^{(p;\,s)} \big\vert_{\mathrm{
Herm}^+ \cap \mathrm{Sym}_s(\mathbb{C}^{2p})}$ and $\omega^{(q;\,a)}
\big\vert_{\mathrm{U} \cap \mathrm{Sym}_a (\mathbb{C}^{2q})}$. Both
of these are globally defined. Indeed, we can take the factor $L
\mapsto \mathrm{Det}^{-1/2}(L)$ in the first differential form to be
the reciprocal of the positive square root of the positive Hermitian
matrix $M = L\, t_s\,$, and the square root $L \mapsto \mathrm{Det}^{
1/2}(L)$ appearing in the second form makes global sense as the
Pfaffian of the unitary skew-symmetric matrix $L^\mathrm{t} = - L\,$.
Reinterpreting these differential forms as densities we arrive at a
$\mathrm{GL}_{2p}( \mathbb{R})$-invariant measure on $D_{\delta,p}^0$
and a $\mathrm{U}_{2q}$-invariant measure on $D_{\delta,\,q}^1$:
\begin{equation}
    d\mu_{D_{\delta,p}^0} \propto \omega^{(p;\,s)} \big\vert_{
    \mathrm{Herm}^+ \cap \mathrm{Sym}_s(\mathbb{C}^{2p})} \;, \quad
    d\mu_{D_{\delta,\,q}^1} \propto \omega^{(q;\,a)} \big\vert_{
    \mathrm{U} \cap \mathrm{Sym}_a(\mathbb{C}^{2q})} \;.
\end{equation}
Again, our normalization conventions for $d\mu_{D_{\delta,
r}^\bullet}$ are those of Sect.\ \ref{sect:1.2}.

In the final case of $G = \mathrm{Sp}_n(\mathbb{C})$ the roles of
$\omega^{(\bullet;\,s)}$ and $\omega^{(\bullet;\,a)}$ are reversed.
This immediately leads to a good definition of $d\mu_{D_{\varepsilon
, p}^0}$ for the non-compact symmetric space $D_{\varepsilon,p}^0\,$.
However, the remaining case of $D_{ \varepsilon,\,q}^1 = \mathrm{U}
\cap \mathrm{Sym}_s( \mathbb{C}^{2q})$ is problematic because there
exists no global definition of $\mathrm{Det}^{1/2}$ on the unitary
symmetric matrices. Thus the locally defined differential form
$\omega^{(q;\, s)}$ does \emph{not} extend to a globally defined form
on $D_{\varepsilon,\, q}^1\,$. (Please be advised that this is
inevitable, as the compact symmetric space $D_{\varepsilon,\,q}^1
\simeq \mathrm{U}_{2q} / \mathrm{O}_{2q}$ lacks the property of
orientability and on a non-orientable manifold any globally defined
top-degree differential form must have at least one zero and
therefore cannot be both non-zero and invariant in the required
sense.)

Of course $d\mu_{D_{\varepsilon,\,q}^1}$ still exists as a density on
the non-orientable manifold $D_{\varepsilon,\,q}^1\,$. The discussion
above is just saying that there exists no globally defined
differential form corresponding to $d\mu_{D_{\varepsilon,\, q}^1}\,$.
Locally, we have $d\mu_{D_{\varepsilon,\,q}^1} \propto \omega^{(q
;\,s)} \big\vert_{\mathrm{U} \cap \mathrm{Sym}_s (\mathbb{C}^{2q})}$.

\textbf{Note added in proof.} After submission of this paper, it was
brought to our attention that a bosonization formula for the
fermion-fermion sector with $\mathrm{U}_n$ symmetry had been
developed by Kawamoto and Smit \cite{kawamoto}. A supersymmetric
generalization was suggested in Ref.\ \cite{berruto}.

\end{document}